\newtheorem{theorem}{Theorem}
\newtheorem{lemma}{Lemma}
\newtheorem{cor}{Corollary}
\newtheorem{definition}{Definition}
\newcommand{\rc}[1]{{\color{red}{#1}}}
\newcommand{\bc}[1]{{\color{blue}{#1}}}
\newcommand{\gc}[1]{{\color{ForestGreen}{#1}}}
\definecolor{amber}{rgb}{0.8, 0.6, 0.0}
\newcommand{\be}{\begin{equation}}
\newcommand{\ee}{\end{equation}}
\newcommand{\bx}{\overline{\mathcal{B}}}
\newcommand{\nx}{\mathcal{N}}
\newcommand{\zz}{\mathbb{Z}}
\DeclareMathOperator{\Aut}{Aut}
\def\l@subsection#1#2{}
\def\l@subsubsection#1#2{}
\begin{document}

\title{Universal fault tolerant quantum computation in 2D without getting tied in knots}
\author{Margarita Davydova}
\affiliation{\footnotesize Walter Burke Institute for Theoretical Physics and Institute for Quantum Information and Matter,
 California Institute of Technology, Pasadena, CA 91125}
 \affiliation{\footnotesize IBM Quantum, T. J. Watson Research Center, Yorktown Heights, New York 10598, USA}
 \affiliation{\footnotesize Department of Physics, Massachusetts Institute of Technology, Cambridge, MA 02139, USA}
 
\author{Andreas Bauer}
\affiliation{\footnotesize Department of Mechanical Engineering, Massachusetts Institute of Technology, Cambridge, MA 02139, USA}
\author{Julio C. Magdalena de la Fuente}
\affiliation{\footnotesize Dahlem Center for Complex Quantum Systems, Freie Universit\"at Berlin, 14195 Berlin, Germany}
\author{Mark Webster}
\affiliation{\footnotesize Department of Physics \& Astronomy, University College London, London, WC1E 6BT, United Kingdom}
\affiliation{\footnotesize School of Physics, The University of Sydney, NSW 2006, Australia}
\author{Dominic J. Williamson}
\affiliation{\footnotesize School of Physics, The University of Sydney, NSW 2006, Australia}
\affiliation{\footnotesize IBM Quantum, IBM Almaden Research Center, San Jose, CA 95120, USA}
\author{Benjamin J.~Brown}
\affiliation{\footnotesize IBM Quantum, T. J. Watson Research Center, Yorktown Heights, New York 10598, USA}
\affiliation{\footnotesize IBM Denmark, Sundkrogsgade 11, 2100 Copenhagen, Denmark}

\begin{abstract}
We show how to perform scalable fault-tolerant non-Clifford gates in two dimensions by introducing domain walls between the surface code and a non-Abelian topological code whose codespace is stabilized by Clifford operators. We formulate a path integral framework which provides both a macroscopic picture for different logical gates as well as a way to derive the associated microscopic circuits. We also show an equivalence between our approach and prior proposals where a 2D array of qubits reproduces the action of a transversal gate in a 3D stabilizer code over time, thus, establishing a new connection between 3D codes and 2D non-Abelian topological phases. We prove a threshold theorem for our protocols under local stochastic circuit noise using a just-in-time decoder to correct the non-Abelian code.
\end{abstract}
\pacs{}

\maketitle
{
\small \tableofcontents
}

\section{Introduction}

A scalable quantum computer must have a universal set of fault-tolerant quantum logic gates. These gates require a practical scheme for their execution with quantum error correcting codes that have low-depth circuits for syndrome readout as well as an efficient decoding strategy~\cite{Terhal2015}. Topological codes~\cite{Kitaev1997, Dennis2002, Levin2004} are promising in this regard thanks to their local connectivity and efficient high-threshold decoding algorithms. Furthermore, their rich underlying physics makes them an excellent playground for designing new fault-tolerant logic gates~\cite{Kitaev1997, nayak2008non, Barkeshli2019symmetry}.

In this work, we show how to perform non-Clifford logic gates by interfacing surface codes with a topological code stabilized by Clifford operators.
The codespace of the latter code is the ground state space 
of the type-III twisted quantum double model~\cite{Hu_2013, Yoshida2016topological} which represents a \textit{non-Abelian} topological phase~\cite{Dijkgraaf_1990,Propitius_1995}.
We derive circuits for syndrome readout for this non-Abelian code, together with an efficient decoder that demonstrates a threshold. The logical information at the beginning and end of the protocol is encoded in copies of the surface code or an equivalent code in the same topological phase. 
A universal set of logic gates for the surface codes can therefore be completed using standard methods to perform Clifford gates in two dimensions~\cite{Dennis2002, Fowler2012, Horsman_2012,Litinski2019gameofsurfacecodes,Raussendorf2007,Bombin_2009,Brown2017poking,Bombin2006,yoshida2015gapped,Kubica2015,Potter17,Barkeshli23}.

Our results can be interpreted using a number of different perspectives, each with their own advantages. First, we can view certain logic gates as a code deformation between the surface code and the non-Abelian code. This mechanism can also be viewed as an instance of a gauging logical measurement. We discuss the code deformation and the gauging perspectives in Section~\ref{sec:CZ_minimal}. 
More generally, we can design a multitude of non-Clifford logic gates by changing boundary and domain wall configurations in spacetime using the path integral formalism. We present this picture in Section~\ref{section:loopsum}. Examples include logical $CCZ$ and $T$ gates as well as magic $CZ$- and $T$-state preparation.   In Section~\ref{section:examples}, we use the path integral approach to systematically derive microscopic circuits for the logical protocols.
Finally, we show a close relationship between the logic gates presented here and prior work on non-Clifford gates in two-dimensional codes~\cite{bombin2018,Brown2020universal}. There, the transversal non-Clifford gates of a three-dimensional code are turned into linear-depth protocols on a two-dimensional array of qubits. In Section~\ref{sec:3d-2d}, we show that this approach is a special instance that follows from our general results. Thus, both of the examples in Refs.~\cite{bombin2018,Brown2020universal} realize the same non-Abelian code during their intermediate steps, which reveals a new connection between 3D codes with non-Clifford transversal gates and 2D non-Abelian topological phases.

Unlike conventional stabilizer codes, non-Abelian codes require more sophisticated decoding techniques. 
We prove that our logic operations exhibit a threshold when the circuits are realized using a just-in-time decoder~\cite{bombin2018, Brown2020universal, Scruby_2022, scruby2025faulttolerantquantumcomputationdistillation} in Section~\ref{Sec:JustInTime}. Although the decoders we use are functionally similar to those presented in prior work, our results use just-in-time decoding in the context of non-Abelian codes. We also argue for fault tolerance in the end-to-end implementation of the full circuits that implement our logic gates.

Our results offer an alternative to magic state distillation~\cite{knill2004faulttolerantpostselectedquantumcomputation,PhysRevA.71.022316,meier2012magicstatedistillationfourqubitcode,Campbell_2017}. Non-Clifford gates and magic states are widely regarded to be the most resource-intensive components in two-dimensional quantum computing architectures at extremely low logical error rates, as well as in a number of other settings~\cite{Campbell_2017,BravyiKoenig,PhysRevA.91.012305,Webster2022}. Further development of our proposal might offer a reduction in the resource cost of a scalable quantum computer.

It is instructive to contrast our results with existing schemes for topological quantum computation by braiding anyons or holes of a non-Abelian phase~\cite{Kitaev1997, Freedman2002, nayak2008non, Laubscher2019}. 
First, the microscopic circuits for braiding-universal phases, such as the Fibonacci model, require deep syndrome extraction circuits~\cite{Bonesteel2012, PhysRevX.12.021012,Bseiso2024,Minev2024}. A round of syndrome extraction in our protocols has comparatively low depth. Furthermore, the circuits can be expressed using CNOT operations, single-qubit Pauli measurements, and non-Clifford $T$, $CS$ or $CCZ$ gates. Second, braiding-universal topological phases require sophisticated strategies for decoding and correction~\cite{Schotte2023, Dauphinais2016}.
The nilpotency of the TQD topological phase used in our protocols allows us to design straight-forward decoding strategies using a just-in-time decoder~\cite{bombin2018, Brown2020universal}.

We can also compare our scheme to proposals that prepare magic states by measuring macroscopic logical operators directly~\cite{Cui2015,Cong2017, Chamberland_2019,gidney2024magicstatecultivationgrowing}. In our work, we infer the value of high-weight logical operators using low-weight checks. This gives us a scalable way to identify and correct errors as the gate is conducted. In this context, our results can be viewed as a generalization of logic gates by lattice surgery~\cite{Horsman_2012,Cohen2022low,Williamson2024Gauging}, where the Clifford logical operators are measured using local operations to complete the universal gate set.

Our results offer a new point of view on universal quantum computation in two dimensions along with a variety of fault-tolerant implementations. We argue that our approach can be extended to other non-Abelian phases and more general qLDPC codes. This provides a new avenue to realize universal fault-tolerant quantum computation under geometric locality restrictions that are inherited from physical architectures.

\section{Sketch of the idea}
\label{sec:description}

The protocols in this paper all use a non-Abelian topological phase to perform non-Clifford gates on copies of the surface or toric code. Here, we give a high-level overview of the main idea with two expository examples.

We use the \emph{type-III twisted quantum double} (TQD)\footnote{That is, the twisted quantum double model associated with the type-III 3-cocycle $[\omega_{III}]\in H^3(\mathbb{Z}_2\times\mathbb{Z}_2\times\mathbb{Z}_2,U(1))$ given by $\omega_{III}((a,b,c),(a',b',c'),(a'',b'',c''))=(-1)^{ab'c''}$.}, which is related to three copies of the toric code phase by a 3-cocycle twist~\cite{Dijkgraaf_1990,Hu_2013}. This model represents the same topological phase as the quantum double of $D_4$~\cite{Propitius_1995, Kitaev1997, PhysRevB.95.035131}.
We show how to use domain walls to transfer logical information encoded in toric or surface codes~\cite{Kitaev1997, Bravyi1998, Dennis2002, Fowler2012} into the TQD and back again to perform non-Clifford gates. While information is encoded in the twisted quantum double model, we take advantage of its twist to introduce a relative phase factor in the encoded state to implement non-Clifford gates.

The first expository example, shown in Fig.~\ref{fig:protocol2}, realizes the non-Clifford unitary $\overline{CCZ}$ gate. The spacetime layout is based on that in Ref.~\cite{Brown2020universal}; however, we derive the action of the gate by considering the twisted quantum double as an intermediate two-dimensional phase. Fig.~\ref{fig:protocol2}(b) shows how the gate can be viewed as ``sweeping'' the third copy of the surface code (labeled green) past the other two (labeled red and blue). At the points in time where all three of the surface code regions overlap, we realize the twisted quantum double, which is responsible for a relative phase over the course of the operation.
Specifically, if and only if all three codes start in the logical-$1$ state, a $-1$ phase is acquired. As a result, the protocol applies the phase $(-1)^{{r} {g} {b}}$, which is the same as applying the logical $ \overline{CCZ}_{rgb}$ gate.

Throughout our work, it is helpful to view our logic operations using a 2+1-dimensional spacetime perspective, which we explore systematically in Sec.~\ref{section:loopsum}. 
In spacetime, the relative phase responsible for the logical action appears as a triple intersection between the membranes associated with the logical states.
The logical states of the toric codes are in correspondence with continuous nontrivial loops in space, which turn into continuous membranes when viewed in spacetime. Likewise, the states in the non-Abelian phase are defined by continuous loops (membranes in spacetime) that we label with their three respective colors. However, unlike in the toric/surface codes phase, every point in spacetime where all three types of membranes intersect (we call this a``triple intersection point'') acquires a $-1$ phase due to the 3-cocycle twist.
The spacetime configuration of topological codes and their domain walls and boundaries shown in Fig.~\ref{fig:protocol2}(a) is chosen such that for the all-$1$ input logical state, there is strictly an odd number of logical membrane intersections, which gives rise to the $\overline{CCZ}$ logic gate.
We explain this gate in detail in Sec.~\ref{section:loopsum}. In App.~\ref{subsec:loopsum-and-stabs} we give more details on the loop-sum picture.

\begin{figure}[!t]
\includegraphics[width=1.0 \columnwidth]{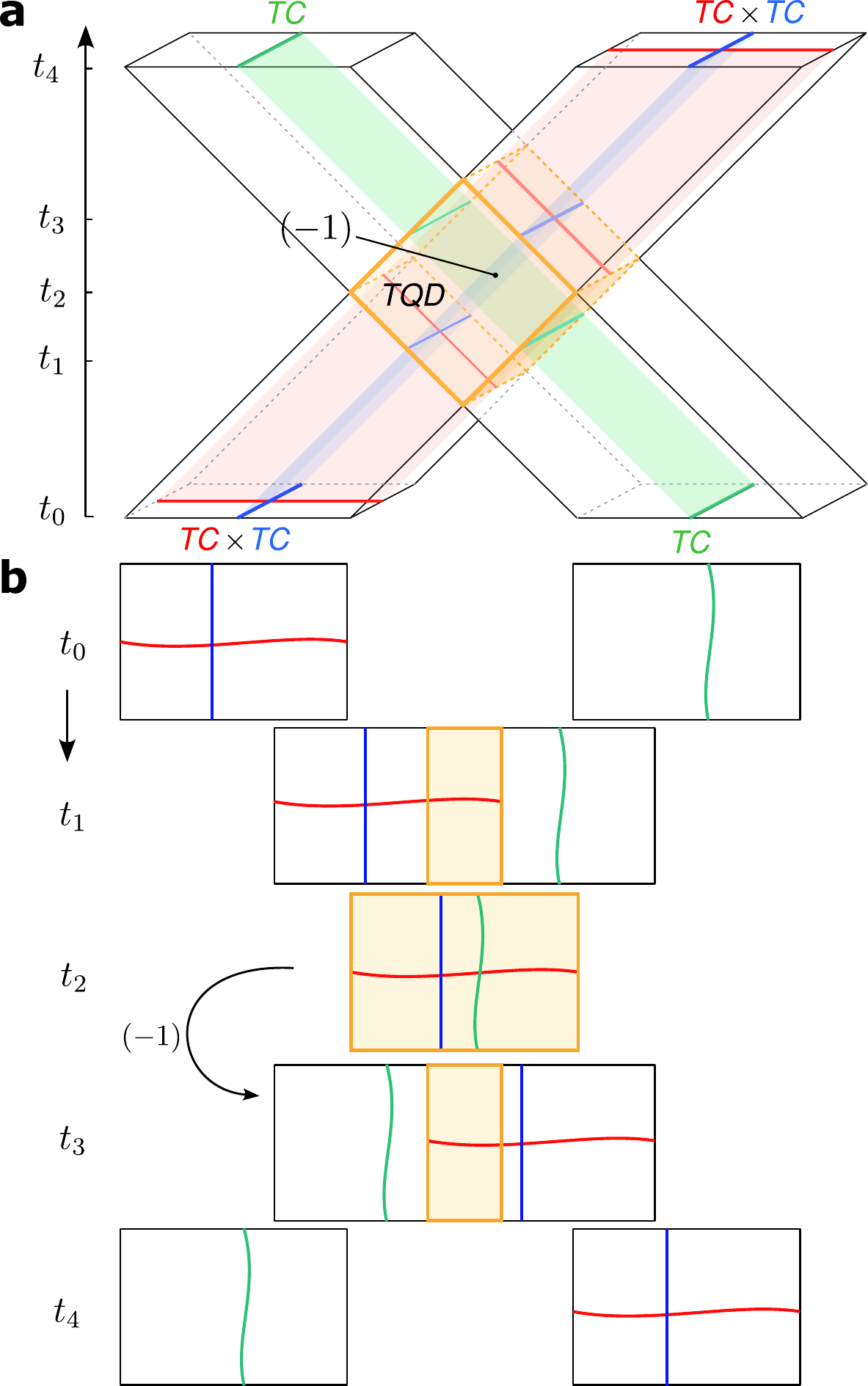}
\caption{(a) The spacetime picture of the protocol that fault-tolerantly implements the non-Clifford $\overline{CCZ}$ gate on three copies of the surface code in $O(d)$ time. 
The spacetime region containing the non-Abelian twisted quantum double phase is shaded in yellow. 
A particular boundary configuration of three surface codes is initialized at time $t_0$. Upon completion of the protocol, the state acquires a $-1$ phase only if all three surface codes are simultaneously in the $\ket{1}$ logical state, corresponding to  $\overline{CCZ}_{\rc{r} \gc{g} \bc{b}}$. 
(b) A step-by-step depiction of the protocol, which can be interpreted as ``sweeping'' one copy of the surface code past two others. The copies are coupled by a twist in the yellow-shaded region, where the twisted quantum double is realized. 
}
\label{fig:protocol2}
\end{figure}

We now turn to the second example, which is explained together with microscopic details in Sec.~\ref{sec:CZ_minimal}. It simply projects two copies of the toric code onto the twisted quantum double model on the torus before reversing the operation, as shown in Fig.~\ref{fig:protocol1}. The resulting operation can be used to prepare a magic state known as the $CZ$-state~\cite{Dennis2001toward,Gupta2024encoding}.

The protocol is defined with two copies of the toric codes as an input, where each code encodes two logical qubits. We label the two copies red and blue, and index the encoded qubits by the color and the orientation of their respective logical Pauli-$X$ string operator, i.e. $\{ \rc{rH},  \bc{bV}, \rc{rV}, \bc{bH}\}$.   Then, the protocol transforms from the toric code phase to the twisted quantum double phase and back, which performs a logical measurement, projecting the input state onto an eigenstate of the logical $\overline{CZ}_{\rc{rH},\bc{ bV}}\overline{CZ}_{\rc{rV} ,\bc{bH}}$ operator.  
Suppose that we start with the input state $|+\rangle_\rc{rH}|+\rangle_\bc{ bV}|0\rangle_\rc{rV}|0\rangle_\bc{ bH}$, then this measurement probabilistically prepares the state $|CZ\rangle_{\rc{rH}, \bc{ bV}}|0\rangle_\rc{rV}| 0\rangle_\bc{ bH}$ where the $CZ$-magic state $|CZ\rangle$ is an equal-weight superposition of the $+1$ eigenstates of the $\overline{CZ}$ operator: $\ket {\overline{CZ}} \propto \ket{00} + \ket{01} + \ket {10}$.

\begin{figure}[t]
\includegraphics[width=1.0 \columnwidth]{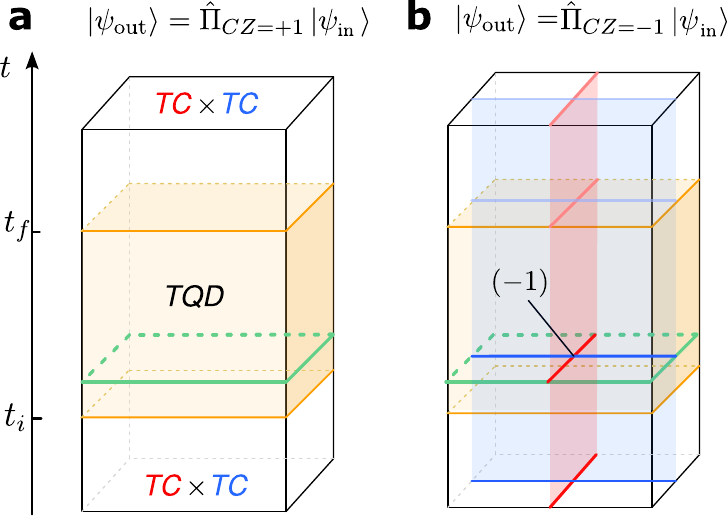}
\caption{Schematic spacetime diagram of the protocol that projects an input logical $4$-qubit state onto an eigenstate of the $\overline{CZ}_{\rc{rH}\bc{ bV}}\overline{CZ}_{\rc{rV} \bc{bH}}$ operator. The protocol transitions from two copies of the toric code to the non-Abelian twisted quantum double phase and back. 
The spacetime region with the twisted quantum double phase is shaded in yellow. 
The projection occurs because, inside the twisted quantum double phase, the product of all vertex stabilizers of the green color (in one time slice) counts the number of red-blue string intersections modulo 2. 
If the toric codes are initialized in the all-$\ket{++00}$ state, the protocol probabilistically prepares the ``$CZ$'' magic state.
}
\label{fig:protocol1}
\end{figure}

As with the first example, we can use the spacetime picture to see the action of the gate.
First of all, we notice that, with our choice of boundary conditions, a green membrane is supported at the $2D$ interface between the toric codes and the twisted quantum double phase, which gives rise to two possible nontrivial triple intersections.  One is between the qubits $\rc{rH}$, $\bc{ bV}$ and the green membrane degree of freedom which we label $\gc{g}$, while the second is between $\rc{rV}$, $\bc{ bH}$ and $\gc{g}$. One such intersection is shown in Fig.~\ref{fig:protocol1}. Once the state is in the twisted phase, the product of all the green vertex stabilizers on any timeslice produces the green membrane. Hence, measuring these stabilizers results in the projective measurement of the total number of triple intersections on a timeslice modulo 2. This number is given by the red-blue membrane intersections in the toric code state that enters the twisted phase, due to the geometry. The green membrane insertion at each step of the twisted quantum double phase can be thought of as initializing the auxiliary degree of freedom in the $|+\rangle_\gc{g}$ state, after which, a nontrivial phase factor $\overline{CCZ}_{\rc{rV},\bc{ bH},\gc{g}} \overline{CCZ}_{\rc{rH},\bc{ bV},\gc{g}}$ counting the parity of triple intersections is implemented, then the extra degree of freedom is measured in the logical Pauli-$X$ basis. The effect of this is the measurement of the logical operator $\overline{CZ}_{\rc{rH},\bc{ bV}}\overline{CZ}_{\rc{rV}, \bc{bH}}$.

Another way to understand this logical action is by noticing that the domain wall occurring at time $t_i$ when the protocol switches from two copies of the toric code to the twisted quantum double corresponds to \emph{gauging} a global symmetry of the red and blue copies of the toric codes by measurement. This symmetry has the logical unitary action of $\overline{CZ}_{\rc{rH}\bc{ bV}}\overline{CZ}_{\rc{rV} \bc{bH}}$. Gauging by measurements results in a projection onto the eigenstate of this symmetry. The second transition at $t_f$ is an ungauging operation, that reverts the system to the two copies of the toric code~\cite{Williamson2024Gauging}.

An important feature of our protocols is that the logical operations implemented by a certain spacetime configuration depend solely on the topological phase of each volume, as well as the phases of spacetime defects such as boundaries, domain walls, and corners, which are sometimes referred to as ``universality classes'', or ``super-selection sectors''.
This can be used to derive various microscopic implementations of the same logic gates, as these phases can have many different lattice realizations.
In this work, we focus on the description of the non-Abelian model as the twisted quantum double, which we use in the continuum closed-membrane picture for computing the logical action implemented by different spacetime configurations.
Various microscopic implementations representing the same topological phase are discussed in Sections~\ref{section:examples} and \ref{sec:3d-2d}.
In App.~\ref{sec:more_gates} we present more microscopic examples derived from gauging Clifford symmetries.

\section{Preparing a magic state on a torus}
\label{sec:CZ_minimal}

In this section, we discuss the protocol that is shown in Fig.~\ref{fig:protocol1}. 
We provide a succinct microscopic explanation that is centered around stabilizer and logical operators of the quantum error-correcting code that is transformed throughout the protocol. 
We then explain how this protocol can be viewed as a gauging logical measurement, which is a type of code deformation. 
In Sections~\ref{section:loopsum} and~\ref{section:examples} we introduce a comprehensive method that generalizes the example presented in this section to construct other logic-gate protocols and the associated circuits based on the topological path integral.  We also show a planar version of this protocol that operates on surface as opposed to toric codes in Sec.~\ref{section:loopsum}. 

\subsection{Stabilizers of the twisted quantum double} 
\label{sec:iiib}

In this example, we use a microscopic lattice realization of the non-Abelian topological phase that is needed for the protocol. This model shares the codespace (the ground state subspace) with a variant~\cite{Yoshida2016topological,Iqbal2024nonAbelian} of the type-III non-Abelian twisted quantum double~\cite{Hu_2013}. 
In the rest of the paper, we refer to this model simply as the ``twisted quantum double" (TQD), as it is the only twisted quantum double model that appears in this work. 
We discuss alternative microscopic realizations of the same non-Abelian phase in Secs.~\ref{section:examples} and~\ref{sec:3d-2d}, which are used for alternative circuits that have the same logical action. 

We start with a single copy of the (untwisted) toric code. 
We place qubits on the edges $e$ of a triangular lattice. 
The stabilizer group is generated by star and plaquette stabilizers associated to vertices $v$ and faces $f$ as follows:
\begin{equation}
 \label{eq:TC_stab_B}
A_v = \prod_{e: v \in \partial e } X_e,\quad B_{f} = \prod_{e \in \partial f} Z_e, 
\end{equation}
where $X_e, Z_e,$ are the standard Pauli matrices acting on a qubit at location $e$, and $\partial $ is the boundary map. 
The codestates of the toric code are the simultaneous $+1$ eigenstates of all stabilizer operators. 
The logical Pauli operators of the toric code are strings of Pauli operators that run along non-contractible cycles of the manifold, on the primal lattice for $Z$ and on the dual lattice for $X$. 
On a torus, we label them by the horizontal/vertical direction of winding, i.e. $\overline Z_{H/V}$ and $ \overline X_{H/V}$. 
With this definition, the nontrivial commutation relations are as follows, $\overline Z_{H} \overline X_{V} = -  \overline X_{V} \overline Z_{H}$,  $\overline Z_{V} \overline X_{H} = -  \overline X_{H} \overline Z_{V}$.

We now introduce the non-commuting stabilizer model that realizes the TQD topological phase. 
Informally, this model can be understood as three copies of the toric code that are coupled via a ``twist''.
When picturing the qubit configurations that appear in the ground state as closed-loop patterns, the twist can be understood as counting the number of triple intersections, as we discuss in Sec.~\ref{section:loopsum} and App.~\ref{subsec:loopsum-and-stabs}. 
Consider three overlapping triangular sublattices, which we color red ($r$), green ($g$), and blue ($b$), as shown in Fig.~\ref{fig:ToricStabilizers}. 
Accordingly, there is one qubit at each edge of each sublattice, marked by a dot. 
For the sublattice of color $c \in \{ r,g,b\}$, we label the vertices as $(v,c)$, edges as $(e,c)$, and plaquettes as $(f,c)$. 
The plaquette operators are the same as those of three individual toric codes placed each on a sublattice of an associated color. We denote them $\mathcal B_{f,c} = B_{f,c}$. The vertex terms are modified by the twist, and become Clifford stabilizers:
\begin{equation} \label{eq:TQD-star}
\begin{split}
\mathcal A_{v,c} &= \prod_{\partial e \ni (v,c)} X_{e} \prod_{i \in  \langle (v,c) \rangle} CZ_{i, i+1} \\
&\equiv \quad   A_{v,c}   \prod_{i \in  \langle (v,c) \rangle} CZ_{i, i+1}
\end{split}
\end{equation}
where the product of the $CZ$ phase operators is applied to the qubits of two complementary colors belonging to a hexagon surrounding the vertex $(v,c)$.
Both types of stabilizers are depicted in Fig.~\ref{fig:ToricStabilizers} for the blue sublattice, and those of the red and green colors are defined analogously.

\begin{figure}[t]
\includegraphics[width = \columnwidth]{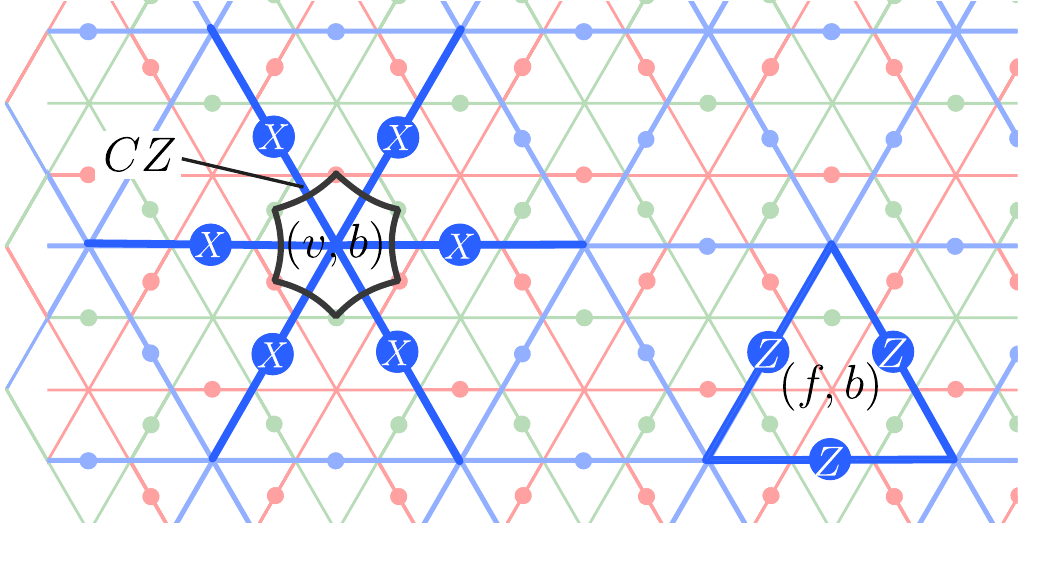}
\caption{Clifford vertex (left) and Pauli plaquette (right) stabilizers on the blue sublattice of the twisted quantum double. 
The stabilizers on the red and green sublattices are defined analogously.    \label{fig:ToricStabilizers}}
\end{figure}

In non-commuting stabilizer codes~\cite{Ni_2015,Magdalena_de_la_Fuente_2021,Webster2022xpstabiliser}, similarly to the usual stabilizer codes, the codestates can be defined as the the $+1$-eigenstates of all the star (vertex) and plaquette stabilizers.
In other words, the corresponding Hamiltonian 
\begin{equation}
    \mathcal{H} = - \sum_{v,c} \mathcal{A}_{v,c} - \sum_{f,c} \mathcal{B}_{f,c}\, ,
\end{equation}
is frustration-free.
However, these ``stabilizers'' do not commute in the full Hilbert space, rather they only commute 
in the $+1$ subspace of $\mathcal{B}_{f,c}$ operators, which we make more precise below.
The algebra of the stabilizers has a different structure to the analogous algebras that occur in commuting stabilizer models. This also applies to relations between the logical operators, which are discussed in the next subsection.

More specifically, the group commutator $[V,W] \coloneqq VWV^{-1}W^{-1}$ between Clifford vertex stabilizers $\mathcal A_{v,c}$ always lies in the subgroup $\mathcal{S}^Z$ of the stabilizer group generated by the $\mathcal{B}_f$ plaquette terms:
\begin{equation}
\left[\mathcal{A}_{v,c}, \mathcal{A}_{v',c'}\right] = \mathcal B_{f, c''} \mathcal B_{f', c''} \quad c'' \neq c, c',
\label{Eqn:StarCommutator}
\end{equation}
if  $c \neq c' \in \{ r,g,b\}$ and the operators $\mathcal{A}_{v,c}$ and $ \mathcal{A}_{v',c'}$ overlap, and is trivial otherwise. 
Here, $(f,c'')$ and $f'$ are the plaquettes of the third color whose centers coincide with $(v,c)$ and $(v',c')$, respectively. 
As we can see, the vertex operators commute in the subspace where all the plaquette stabilizers are $+1$, and anticommute in the presence of nearby excitations (associated with violated plaquettes). 

Finally, violations of stabilizer terms take the state out of the codespace.
Similar to the toric code, we call the point-like excitations associated with violations of the vertex stabilizers charges (which are Abelian) and those associated with the plaquette stabilizers fluxes (which are non-Abelian). 
The three charge (flux) generators are of red, green and blue color in correspondence with the color of the violated stabilizer, and we denote them as $e_{c}$ ($m_c$) where $c = \{ r,g,b\}$. We direct the reader to Secs.~\ref{section:loopsum},~\ref{section:examples} where this discussed in more detail as well as to App.~\ref{sec:excitations} for a brief review of anyons of the TQD model.

\subsection{Logical string operators}
\label{sec:logicals}

We now discuss the logical string operators of the twisted quantum double model, which are the operators that can generate rotations within the codespace. 
The explicit form of these operators, and the algebra they generate, is useful for understanding the logical action of the protocol. 
Like the stabilizers, we find it convenient to write down the logical operators in terms of standard toric code logical operators that are decorated with a twist.
The Pauli $Z$-type operators are the same as in the untwisted case, $\overline{\mathcal{Z}}_{H/V}^c \equiv \overline Z_{H/V}^c$.
Physically, we can think of these operators as transporting electric charges of their corresponding color around non-contractible loops of the torus.

We obtain complementary logical operators for the TQD by supplementing the logical operators $\overline{X}_{V/H}^c$ of the standard toric code with additional $CZ$ terms~\cite{Iqbal2024nonAbelian,Lyons2024}.
An example of one such operator, of color blue, is shown in Fig.~\ref{fig:ribbon}.
In order to specify the $CZ$ interaction terms of this operator we must order the qubits along a path around the torus. We choose an arbitrary starting point (in this case we set it to be the green qubit $g_1$, see Fig.~\ref{fig:ribbon}) and enumerate the green and red vertices in increasing order following the blue path around the torus. This allows us to write
\be
\bc{\overline{\mathcal{X}}^b_{H}} = \bc{\overline{{X}}^b_{H}} \left ( \prod_{\rc{r_i}} \prod_{\gc{g_j}: j\le i} CZ_{\gc{g_j},\rc{r_i} } \right ).
\ee
We find other logical operators by changing the color and orientation of this example.
This operator is implemented by a unitary circuit of depth linear in the length of the string, consisting of $CZ$ gates that are applied from each red qubit to all the green qubits to its left.

The logical operators of the twisted quantum double generate a non-Pauli (and non-Abelian) algebra. 
In fact, the algebra generated by all $\overline{\mathcal X}$ operators does not preserve the codespace of the TQD, i.e. some combinations of these operators map states out of the codespace.
We consider the commutator between an $X$-type logical operator and a vertex stabilizer with overlapping support. 
E.g., for a blue logical operator and a red stabilizer, we find
\begin{equation} \label{eq:str_algebra}
[\bc{\overline{\mathcal{X}}_{H/V}^b}, \rc{\mathcal A_{v,r}}] = \gc{\mathcal{B}_{f,g}},
\end{equation}
where the plaquette $(f,g)$ is located within the star associated with the vertex $(v, r)$. 
For each logical operator representative, there exists one vertex for which this commutation relation is different, namely:
\begin{equation} \label{eq:str_algebra1}
[\bc{\overline{\mathcal{X}}_{H/V}^b}, \rc{\mathcal A_{v,r}}] = \gc{\mathcal{B}_{f,g} \overline{\mathcal{Z}}_{H/V}^{g}},
\end{equation}
where the representative of the logical $\overline{\mathcal{Z}}_{H/V}^{g}$ is supported on the same lattice cells as the corresponding $\overline{\mathcal{X}}_{H/V}^b$.  
The special vertex exists due to the non-translation-invariant  logical operator shown in Fig.~\ref{fig:ribbon}. 
 In our example the special vertex appears at the ``start point'' of the loop. 
For example, for the blue logical operator in Fig.~\ref{fig:ribbon} the special red vertex operator is the one acting on qubits $r_1$ and $r_2$. 
The remainder of the commutation relations are the same under color permutations. 
Before concluding this subsection, let us define the charge parity operator that appears frequently throughout the remainder of this section:
\begin{equation}
\hat{C}_c = \prod_v \mathcal A_{v,c}.
\end{equation}
This is an operator counts the parity of charges of color $c$ on the torus. It follows from Eqn.~(\ref{eq:str_algebra}) that
\begin{equation} \label{eq:str_algebra2}
[\overline{\mathcal{X}}_{H/V}^c, \hat{C}_{c'}] = \overline{\mathcal{Z}}_{H/V}^{c''},
\end{equation}
where $c'' \neq c, c' \in \{ r,g,b\}$.

\begin{figure}[t]
\includegraphics[width = 0.9\columnwidth]{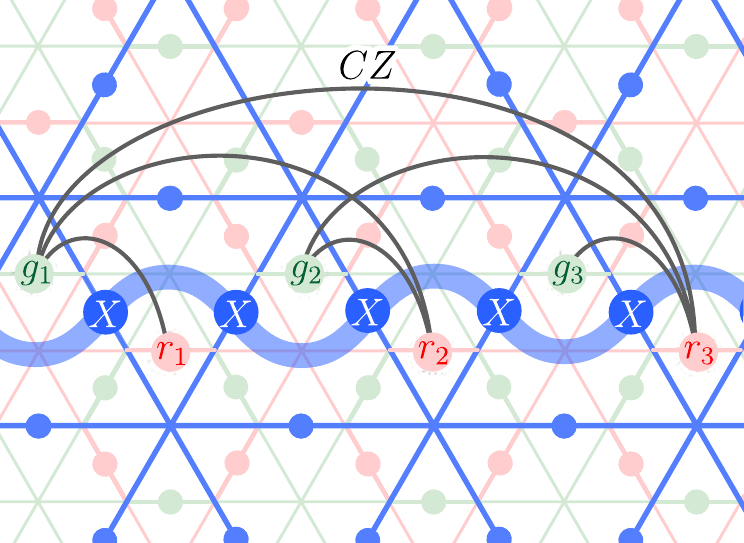}
\caption{A logical operator $\hat{X}_H^{b}$ for the twisted quantum double model. The gray lines depict $CZ$ gates applied between the qubits at their endpoints. The blue string operator is decorated with controlled-phase gates connecting qubits on the red and green sublattices.
\label{fig:ribbon}}
\end{figure}

\subsection{Codestates and charge sectors}
\label{sec:codestates}

The codestates of the TQD model are simultaneous $+1$ eigenstates of all stabilizer operators. 
Because of the non-Pauli algebra formed by the logical operators,
the codespace of the TQD on a torus (and the ground state space of the associated Hamiltonian model) is 22-dimensional, which reflects the non-additive nature of the code. 
A basis for the codespace can be generated by applying certain combinations of the $\overline{\mathcal{X}}$-type logical operators to the all-0 (`vacuum') logical state $\ket{\bm 0}$. The $\ket{\bm 0}$ state is defined to be a simultaneous $+1$ eigenstate of all $\mathcal Z^c_{H/V}$ operators in addition to the stabilizers. Up to normalization it can be written as
\be
\ket{\bm{0}} = \prod_{v,c}\frac{1 + \mathcal{A}_{v,c}}{2}\ket{0}^{\otimes n} ,
\ee
where $\frac{1 + \mathcal{A}_{v,c}}{2}$, $c \in \{ r,g,b\}$ is the projector onto the $+1$-eigenstate of the associated vertex operator. 
Of the 64 combinations of $\overline{\mathcal{X}}$-type string operators of three colors and three orientations, only 22
result in an eigenstate of all the vertex operators. 
To see this, we recast the product of all the vertex operators as\footnote{To show this, we use that $\prod_v \mathcal A_{v,c}$ must (i) be a diagonal unitary (ii) act trivially on the codespace and (iii) produce the commutation relations in Eq.~\eqref{eq:str_algebra}. 
One can verify that the solution \eqref{eq:str_algebra_1} satisfies all three conditions. 
Alternatively, one can show this directly as in Ref.~\cite{Iqbal2024nonAbelian}.} 
\begin{equation} \label{eq:str_algebra_1}
\hat{C}_c = (-1) ^{(1 - \overline{\mathcal Z}^{c'}_{,H})(1 - \overline{\mathcal Z}^{c''}_{V})/4 } (-1) ^{(1 - \overline{\mathcal Z}^{c'}_{V})(1 - \overline{\mathcal Z}^{c''}_{H})/4 } .
\end{equation}
The product of all vertex operators of color $c$ is $+1$ ($-1$) when the total parity of intersections of the $\overline{\mathcal{X}}$-type strings of two complementary colors is even (odd). 
This has a clear physical interpretation, as we recall that $\hat{C}_c$ counts the total parity of charges of color $c$.

To obtain an odd parity of charges on a manifold with periodic boundary conditions we use the fact that an intersection between two  $\overline{\mathcal{X}}$ operators of different colors contains an unpaired charge of the third color~\cite{Iqbal2024nonAbelian}. 
An odd total number of charges is thus possible in the states that are obtained from $\ket{\bm{0}}$ by applying a combination of $\overline{\mathcal{X}}$ operators of different colors and orientations such that the resulting total parity of intersections is odd. 
The right-hand side of Eq.~\eqref{eq:str_algebra_1} counts precisely the parity of these intersections.  

The 22 basis states for the codespace of the TQD on a torus are depicted below.
\be \nonumber
\includegraphics[width = 1\columnwidth]{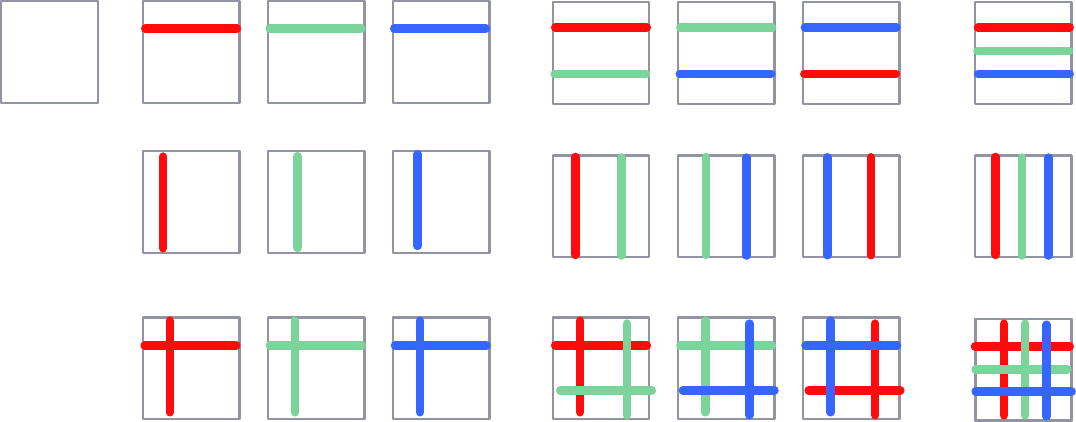}
\ee
Here, each colored line denotes the application of an $\overline{\mathcal X}$ operator of corresponding color and orientation. 
All of the above states contain an even number of intersections between strings of any pair of distinct colors, and there are 22 such combinations.

In what follows, we must also the case when our protocol obtains an odd charge parity subspace when we transform between the toric code model and the TQD. For the specific protocol that we consider in this section, odd parity of green charges can occur.
There are 6 basis states containing a single unpaired green charge $e_g$, depicted below.
\be \nonumber
\includegraphics[width = 0.75\columnwidth]{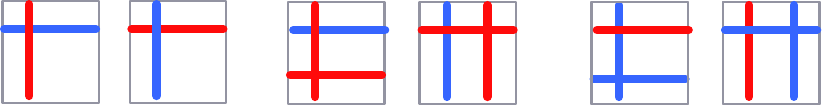}
\ee
These states have an odd total number of intersections between strings of red and blue color. 
This subspace is characterized by the  $-1$ eigenvalue of the $\hat{C}_g$ operator. 
The states with an odd number of $e_r$ or $e_b$ charges are obtained by appropriate color permutations. 

For completeness, we characterize all types of charge-parity sector that occur in the TQD model. 
There are 6 states that simultaneously support unpaired charges of two different colors. 
For example, the subspace with unpaired $e_r$ and $e_g$ is spanned by the states depicted below. 
\be \nonumber
\includegraphics[width = 0.75\columnwidth]{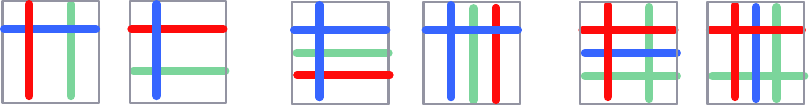}
\ee
Similar states occur for $(e_g,e_b)$ as well as $(e_b, e_r)$.

Finally, there are 6 states in the subspace with simultaneously unpaired $e_g, e_r$ and $e_b$ charges, depicted below. 
\be \nonumber
\includegraphics[width = 0.75\columnwidth]{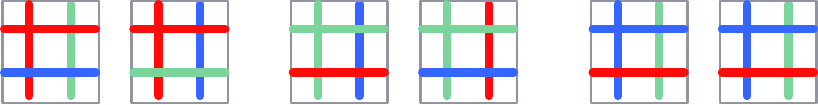}
\ee
In total, the classes above result in 
$$\underset{1}{ \underbrace{22}}  + \underset{ e_c}{ \underbrace{3 \times 6}} + \underset{(e_c,e_{c'})}{ \underbrace{3 \times 6}} +\underset{(e_r, e_g, e_b)}{ \underbrace{6}} = 64$$ 
combinations, where the braces indicate if there is an odd parity of certain number of charge types.

\subsection{The $\overline{CZ}$ measurement protocol}

We now describe the implementation of the simple measurement protocol and its logical action. We postpone the derivation of the protocol to Sec.~\ref{section:loopsum}.
For simplicity, we assume noisless operations; fault tolerance is discussed in Sec.~\ref{Sec:JustInTime}. 

The protocol is implemented as follows. 
We start with encoded states of the red and blue copies of the toric code on triangular sublattices of the corresponding color,
as shown in Fig.~\ref{fig:ToricStabilizers}. The qubits on the green sublattice are initialized in the $\ket{0}$ state. The initial stabilizer group on all qubits can be written as
\be
\mathcal{S}_1 = \langle \rc{ A_{v,r}}, \rc{ B_{f,r}},\bc{ A_{v,b}},\bc{B_{f,b}}, \gc{Z_{e,g}}, \gc{B_{f,g}},\gc{\overline{ Z}^g_H},\gc{\overline{Z}^g_V}  \rangle ,
\ee
where we have listed redundant operators to highlight that the state is a $+1$ eigenstate of the ${B^g_f} \equiv {\mathcal{B}^g_f}$ plaquettes, as well as the ${\overline{ Z}^g_{H,V}}$ operators.

With this initialization, we find that the code is stabilized by the red and blue vertex terms of the TQD ${\mathcal A_{v,r}}$ and ${\mathcal A_{v,b}}$, because the $CZ$ operators in these vertex terms always act on at least one green qubit in the $\ket{0}$ state. Thus, we present the stabilizer group $\mathcal{S}_1$ alternatively as follows
\be
\mathcal{S}_1 = \langle \rc{\mathcal A_{v,r}}, \rc{\mathcal B_{f,r}},\bc{\mathcal A_{v,b}},\bc{\mathcal B_{f,b}}, \gc{Z_{e,g}},  \gc{\mathcal B_{f,g}},\gc{\overline{\mathcal Z}^g_H},\gc{\overline{ \mathcal Z}^g_V} \rangle .
\ee
Thus, the only TQD stabilizers that have not been prepared are the ${\mathcal A_{v,g}}$ operators.

Next, we implement a transition to the TQD model by measuring its green vertex operators  ${\mathcal A_{v,g}}$. 
We explain why this measurement prepares the twisted quantum double in greater depth in the next section. 
After this measurement the stabilizer group becomes
\be
\mathcal{S}_2 = \langle \rc{\mathcal A_{v,r}}, \rc{\mathcal B_{f,r}},\bc{\mathcal A_{v,b}},\bc{\mathcal B_{f,b}},  (-1)^{m_v}\gc{\mathcal A_{v,g}} ,  \gc{\mathcal B_{f,g}}, \gc{\overline{\mathcal Z}^g_H},\gc{\overline{\mathcal Z}^g_V}  \rangle.
\ee
Here, we have included the $\pm 1$ measurement outcomes of ${\mathcal A_{v,g}}$ which are captured by $\mathbb{F}_2$-valued variables ${m_v}$.  
These random measurement outcomes specify a distribution of green Abelian charges. 
Upon adding the newly measured operators $ (-1)^{m_v}\gc{\mathcal A_{v,g}}$ to the stabilizer group, we only keep the $Z_g$-type operators from $\mathcal S_1$ that commute with all the newly measured terms, which are green plaquette terms and green logical-$Z$ operators. 
We emphasize that the stabilizer group $\mathcal S_{2}$ we have prepared describes the TQD in a $+1$ subspace of green $\overline{\mathcal Z}$ logical operators. 
Additionally, the product of all green vertex operators is the charge parity operator
\begin{equation} 
\label{eq:main-CZ}
\begin{split}
 \gc{\hat C_{g}} = (-1) ^{\frac{1 - \rc{\overline{\mathcal Z}_{r,H}}}{2}\frac{1 - \bc{\overline{\mathcal Z}_{b,V}}}{2} } \times (-1) ^{\frac{1 - \rc{\overline{\mathcal Z}_{r,V}}}{2}\frac{1 - \bc{\overline{\mathcal Z}_{b,H}}}{2} } ,
\end{split}
\end{equation}
that was introduced in Eq.~\eqref{eq:str_algebra_1}. 

Finally, the transition back to the red and blue toric code copies is achieved by decoupling the green qubits into $Z$-eigenstates by measuring on-site $Z$ operators, giving
\be
\begin{split}
\mathcal{S}_3 = \langle &\rc{\mathcal A_{v,r}}, \rc{\mathcal B_{f,r}},\bc{\mathcal A_{v,b}},\bc{\mathcal B_{f,b}},
\\
&(-1)^{m'_e} \gc{Z_{e,g}}, \gc{\mathcal B_{f,g}}, \gc{\overline{\mathcal Z}^g_H},\gc{\overline{\mathcal Z}^g_V}, \prod_v (-1)^{m_v} \gc{\hat C_g}  \rangle .
\end{split}
\ee
Here, we have added the newly measured operators to the stabilizer group and `kicked out' the stabilizers that do not commute with these measurements. 
The operator $\prod_v (-1)^{m_v} {\hat C_g}$ remains in the stabilizer group as it commutes with the individual Pauli-$Z$ operators (the Pauli-$X$ component  of ${\hat C_g = \prod_v \mathcal A_{v,g} }$ cancels upon multiplication). 

We have included redundant operators ${\mathcal B_{f,g}}, {\overline{\mathcal Z}^g_H},{\overline{\mathcal Z}^g_V,} $ in the stabilizer group $\mathcal{S}_3$ to explicitly show the constraints that measurement outcomes $m'_e$ of single-qubit $Z_{e,g}$ operators on green qubits have to obey. 
In fact, the syndromes must form closed loops that can be corrected by applying products of ${\mathcal {A}_{g,v}}$ operators to fill the regions enclosed by the loops.\footnote{Any correction that removes the closed loops on the green sublattice, but also commutes with the existing vertex and plaquette stabilizers, is valid. 
Products of operators $\mathcal{A}_{v,g}$ meet both of these requirements.}  Once the correction is performed,  the resulting state is stabilized by $\mathcal{A}_{v,r/b}$ and $A_{v, r/b}$ simultaneously, similar to the beginning of the protocol. 
In addition, the state on the green sublattice completely decouples from the state on the red and blue sublattices. 
This recovers the original pair of red and blue toric codes with stabilizer group
\be
\mathcal{S}_3 = \langle \rc{A_{v,r}}, \rc{ B_{f,r}},\bc{A_{v,b}},\bc{ B_{f,b}}, \gc{Z_{e,g}}, \prod_v (-1)^{m_v} \gc{\hat C_g}\rangle ,
\ee
which includes the initial stabilizer group $\mathcal{S}_1$ along with the $\prod_v (-1)^{m_v} {\hat C_g}$ operator.

We now determine the logical action of the protocol. 
It is convenient to work in an eigenbasis of the logical Pauli-$Z$ operators
$\{\rc{\overline Z_V^r}, \bc{\overline Z_H^b},\rc{\overline Z_H^r},\bc{\overline Z_V^b}\}$, as these operators are preserved over the transformation from the Abelian phase onto the TQD and back again. We label the associate states as $\{ \rc{rH}, \bc{bV},  \rc{rV}, \bc{bH}\}$.
On the other hand, upon completing the operation, the operator $ \prod_v (-1)^{m_v}  \hat C_g$ is included in the stabilizer group of the code. The inclusion of this operator in the stabilizer group represents a projection onto the $+1$ eigenvalue eigenspace of $\pm \hat C_g$. Importantly, the expression in  Eq.~\eqref{eq:main-CZ} can be written as the product of logical $\overline{CZ}$ operators
\begin{equation}  \label{eq:Cg-logical}
\begin{split}
 \gc{\hat C_{g}}\equiv  \overline{CZ}_{\rc{rH}, \bc{bV}}\overline{CZ}_{\rc{rV}, \bc{bH}}.
\end{split}
\end{equation}
Thus, the protocol measuring $\hat C_g$ projects the state onto the $\prod_v (-1)^{m_v}$ eigenstate of a product of $\overline{CZ}$ operators.

This logical measurement can be used to prepare magic states if we choose the input logical state to be $\ket{\rc{+_{rH}}\bc{+_{bV}}\rc{+_{rV}}\bc{+_{bH}}}$. 
The protocol prepares a magic state of the following form
{\small
\begin{align}
        &\ket{C_g = +1}_{\text{out}} = \frac{3}{\sqrt{10}}\ket{\overline{CZ}_{\rc{H}\bc{ V}}}\ket{\overline{CZ}_{\rc{V} \bc{H}}} + \frac{1}{\sqrt{10}}\ket{\rc{1}\bc{1}\rc{1}\bc{1}} ,\label{eq:mag1}
        \\
        &\ket{C_g = -1}_{\text{out}} =\frac{1}{\sqrt{2}}\ket{\overline{CZ}_{\rc{H}\bc{ V}}}\ket{\rc{1}\bc{1}} + \frac{1}{\sqrt{2}}\ket{\rc{1}\bc{1}}\ket{\overline{CZ}_{\rc{V}\bc{ H}}} ,\label{eq:mag2} 
\end{align}
}
where 
\be
\ket{\overline{CZ}} = \frac{1}{\sqrt{3}} \left ( \ket{00} + \ket{01} + \ket {10} \right ) ,
\ee
is the $CZ$-magic state \cite{Dennis2001toward, Gupta2024encoding}.  
Both of the output states  of this preparation protocol, Eqs.~\eqref{eq:mag1} and \eqref{eq:mag2}, are magic (non-stabilizer) states.  
In particular, one can use either of them to probabilistically recover the  $CZ$- state in its original form by measuring the pair of qubits ($rH$,$bV$) in the Pauli-$Z$ basis and verifying we obtain the correct outcome such that the remaining state is a $CZ$-state.

We verify the magic state preparation protocol in a code in the \href{https://github.com/m-webster/XPFpackage/blob/main/CCZ2D/CCZ2D.ipynb}{linked} Jupyter notebook by representing the stabilizers of the twisted quantum double phase as $XS$ operators using the embedded code technique from Ref.~\cite{Webster2023diagLO} and simulating measurements of the $XS$ operators using the technique set out in Ref.~\cite{Webster2022xpstabiliser}.

The protocols we present in this paper can be seen as generalized code deformations~\cite{Vuillot_2019, Brown2020universal,Cohen2022low} beyond Pauli stabilizer codes~\cite{Ni2015, Webster2022xpstabiliser}.
Such a code deformation can be formulated as a projective measurement on a stabilizer code $\mathcal{S}$ that is induced by measuring into the stabilizers of a new code $\mathcal{S}'$ whose stabilizers include a logical operator of $\mathcal{S}$. 
In the example above, measuring a Clifford operator prepares a magic state.
This is followed by a reverse code deformation that restores the original code in a magic state, as the measurements that implement this reverse transformation commute with the logical operator that was measured.

\subsection{Relation to gauging}
\label{sec:GaugingMsmnt}

The transitions between the Abelian and the non-Abelian topological codes in the protocol in this section 
correspond to  \emph{gauging} and \emph{ungauging} the $\mathbb Z_2$ symmetry implemented by the operator in Eq.~\eqref{eq:main-CZ} on a pair of toric codes. 
This point of view allows us to explain the protocol purely in terms of universal, emergent anyon data.
Here, we briefly discuss how to understand this in terms of the gauged symmetry group and how to identify the associated logical operator.
This is discussed further in App.~\ref{sec:gauging_CZ} and \ref{sec:more_gates}, including microscopic lattice model details that go beyond the example presented in this section.

Both gauging and ungauging transitions can be described by gapped domain walls between a ``gauged'' phase and the original ``ungauged'' phase~\cite{Williamson2016ungauging,kubica2018ungauging}.
They can be described as \emph{condensation transitions} in which certain non-trivial defects before condensation are identified with the ``vacuum'' after condensation, meaning that after condensation these defects can be freely removed or inserted into the state via local operators~\cite{Bais2009Condensate}.
Gauging a finite global symmetry group $G$ corresponds to the condensation of $G$-domain walls, which are one-dimensional defects separating regions in which the symmetry was applied from regions where it was not applied. 
Ungauging refers to the opposite process in which Rep($G$) bosons, point-like defects at the endpoints of deconfined strings, are condensed~\cite{Barkeshli2019symmetry}.

When $G$ is an Abelian group, gauging $G$ can be implemented via local adaptive quantum circuits~\cite{Williamson2020a,Tantivasadakarn2021LRE}. 
For this reason, the subclass of domain walls that are induced by gauging an Abelian symmetry automatically yield microscopic  circuits for the associated logical gates.
For example, in this section, $G\simeq \zz_2$ is the anyon-permuting symmetry of a pair of toric codes (see App.~\ref{sec:gauging_CZ}) that corresponds to a finite-depth logical unitary $\overline{CZ}_{{rH}, {bV}}\overline{CZ}_{{rV}, {bH}}$ in the code.
Gauging this symmetry implements a projective measurement of the corresponding Clifford logical operator, which we then use to prepare magic states.
More generally, all protocols in this work in which the domain wall between the TQD phase and toric code phases lies perpendicular to the time direction can be regarded as gauging domain walls.

The approach of gauging symmetries of a code to obtain a logical action has the advantage of being readily applicable to general qLDPC codes~\cite{Breuckmann2021} with symmetries. Gauging measurements have been discussed in this context in Ref.~\cite{Williamson2024Gauging}, where they were applied to symmetries of quantum codes to implement measurements of logical Pauli operators.   
However, when gauging a Clifford symmetry, the stabilizers after gauging are Clifford operators and are not guaranteed to commute outside the codespace. 
In this work, we establish fault tolerance for a class of Clifford gauging measurements on topological codes, see Sec.~\ref{Sec:JustInTime}. 
A general analysis of the fault tolerance of deformed qLDPC codes whose description is beyond the Pauli stabilizer formalism is an open problem.

\section{Non-Clifford logical operations from the TQD path integral}
\label{section:loopsum}

In this section, we present a systematic approach to non-Clifford logical operations from interfacing Abelian and non-Abelian topological phases via configurations of topological defects in spacetime, using a path integral description of topological phases.
We first review the TQD model, including its Euclidean spacetime path integral in the presence of defects, and then derive a whole family of protocols that operate similarly to the one introduced in Sec.~\ref{sec:CZ_minimal}. These include $\overline{CZ}$ logical measurement,  $\overline{XS}$ logical measurement (which prepares the $T$-magic state) as well as $\overline{CCZ}$ and $\overline{T}$ unitary gates.

This section aims at deriving the global, logical, properties of the protocols which follow from the universal features of topological phases and their defects. To calculate these global properties we first discuss microscopic continuum descriptions of the relevant phases and their properties. In the following Section~\ref{section:examples}, we derive the microscopic circuit implementations of the protocols discussed in this section.

\subsection{Global topology vs. microscopic implementation}

We start by discussing the difference between two distinct but related aspects of the protocols.
The \emph{global topology} of a protocol corresponds to a configuration in spacetime where each region of spacetime is filled by a topological phase (which can be trivial) and the interfaces between these regions host domain walls between adjacent topological phases (this further extends to interfaces of interfaces and so on, see Refs.~\cite{Aasen2020TDN,williamson2023spacetime}). For example, an interface to the trivial phase corresponds to a gapped boundary. Examples of distinct global topologies have already appeared in the text, in Fig.~\ref{fig:protocol1} and Fig.~\ref{fig:protocol2}.

In this work, we focus on protocols that specifically involve copies of the toric code and TQD phases in 3-dimensional spacetime volumes, gapped boundaries, and domain walls between these phases, as well as line and point defects.
The global topology alone suffices to determine the logical operation performed by a protocol. Given a fixed global topology, the overall logical action is independent of the microscopic details.

On the other hand, we define a \emph{microscopic lattice implementation} to be a specific recipe used to implement a global topology.  
It specifies an implementation of the topological phases, boundaries, domain walls, and their interfaces, by an explicit, geometrically local quantum circuit with global classical communication and feed-forward.
There are two remarks regarding different microscopic implementations of a given global topology. 
First, there are different microscopic implementations of the same emergent topological phase in the continuum. 
For example, the Dijkgraaf-Witten gauge theories based on $D_4$ with no twist, and $\mathbb{Z}_2^3$ with a type-III cocycle twist are equivalent. 
Second, there are different ways to realize a microscopic description of a topological phase on a lattice of qubits (for example, the TQD phase can be implemented either via the model in Ref.~\cite{Hu_2013} or in Ref.~\cite{Yoshida2016topological}).
Such microscopic implementations can be obtained from fixed-point path integral representations of the topological phases, which we discuss
in Section~\ref{subsec:cupproduct-tqd}.
Alternatively, microscopic implementations of protocols that implement logical measurements can be obtained via the gauging logical measurement procedure, see Section~\ref{sec:GaugingMsmnt} and Appendix~\ref{sec:gauging_CZ}.

\subsection{Circuits, path integrals, and the closed-membrane picture in spacetime}

\subsubsection{Fixed-point path integral}

Here, we formulate a spacetime picture of the relevant topological phases. The well-known loop sum picture for the relevant ground states  (see Appendix~\ref{subsec:loopsum-and-stabs}) follows naturally from this point of view, as we discuss below. We then show how to use the path integral approach for the calculation of the logical action implemented by a given global topology.

Any non-chiral topological phase can be described in spacetime through a \emph{fixed-point path integral}~\cite{Dijkgraaf_1990,turaev1992state,Koenig_2010,sahinoglu2016tensor,Bauer_2022,Bauer2023,Bauer2024}. We define the path integral using a cellulation of spacetime where we place a variable on each of its cells. The path integral is a sum over discrete configurations $\vec {c}$ of these variables where each term in the sum is given by a product over local weights $\omega$. Schematically,
\begin{equation}\label{eq:pathintegral_schematic}
Z=\sum_{\substack{\vec c\ }}
\prod_{i} \omega_i(\vec c)\, ,
\end{equation}
where $i$ is a local label in spacetime (usually labeling 3-cells)  and $\omega_i(\vec c)$ only depends on the configuration $\vec{c}$ in the vicinity of location $i$. For the path integrals considered here, the weight $\omega_i(\vec c)$  is either a complex phase or is equal to zero to impose specific local constraints on the configuration (such as closed-membrane constraints).

The value $Z$ has an important meaning when it is evaluated on a lattice with an \emph{input or output} state boundary. The input state boundary is usually placed at the earliest time of the protocol, and the output one is placed at the latest time.
At such boundaries, there is no summation over the variables. For each configuration of variables on the state boundary $\vec c_{\text{b}}$, we obtain a separate amplitude $Z(c_{\text{b}})$. This type of boundary is distinct from a \emph{physical boundary} which is discussed in Sec.~\ref{sec:boundaries-DWs-corners}. 
We use the evaluation of the path integral with an input or output state boundary  $b$ to define a \emph{boundary state} $\ket{\psi_{\mathrm{b}}}$  that is an input or output state for the corresponding boundary type. The computational-basis coefficients are the amplitudes, namely $\bra{\vec c_{\mathrm{b}} }\ket{\psi_{\mathrm{b}}} = Z(\vec c_{\mathrm{b}})$.

To motivate the path integral approach, we briefly explain how the topological fixed-point path integrals are related to circuits consisting of unitary gates and measurements~\cite{Bauer2023,Bauer2024}. The full discussion follows in Sec.~\ref{section:examples}. 
It is relatively straightforward to turn a path integral with input and output state boundaries into a circuit of unitary gates and $+1$-postselected measurements, as the latter are simply projection operators.
For this, we assign a time direction in the spacetime cellulation for the path integral.
We turn the path-integral variables located at the same time, but different space, coordinates into the configurations of qubits at a corresponding time in the circuit.
We then implement the local constraints and phases in the path integral given by weights $\omega_i(\vec c)$ through projectors onto configurations that satisfy these constraints, as well as gates whose matrix elements are equal to the associated complex phases, respectively.
Finally, we can replace the projectors (or the postselected measurements) with actual measurements.
For this, we observe that the circuit with some of the measurement outcomes equal to $-1$ corresponds to a path integral with insertion of line-like defects~\cite{Bauer2023,Bauer2024} at appropriate locations. 

We now describe the fixed-point path integrals that are used in this work.
The first path integral is the one for the toric code phase. A $\mathbb Z_2\simeq \{0,1\}$-valued variable is associated to every edge of a three-dimensional spacetime cellulation.  
The toric code path integral is an equal-weight sum over all configurations with an even number of $1$ variables around each face (that is, there are weights that are nonzero only for configurations that satisfy this constraint).
This constraint is, in fact, the \emph{closed-membrane constraint}: because of it, the Poincar\'e dual faces on each $1$ variable form a closed-membrane pattern on the Poincar\'e dual cellulation.
The path integral is thus
\begin{align}
Z
&= \sum_{  \vec{c}} \prod_{f} \delta\Big({\sum_{\partial f} \vec{c} = 0 \operatorname{mod} 2}\Big)\, 
\\
&= \sum_{\vec c: \text{ closed membranes}} 1 
\label{eq:TC-pathintegral}
\end{align}
where the second line is a schematic representation. 
Any input and output state for this path integral is an equal-weight superposition of all closed-loop patterns, giving rise to the picture presented in Appendix~\ref{subsec:loopsum-and-stabs}.

The second path integral is that of the TQD phase~\cite{Hu_2013}. It is a sum over all closed-membrane configurations of three different colors, $r$, $g$, and $b$, where a $-1$ weight is associated to each triple intersection of three different colored membranes in spacetime, visually depicted below. 
\begin{equation}
\label{eq:triple_membrane_intersection}
\vcenter{\hbox{\includegraphics[width = 0.4\columnwidth]{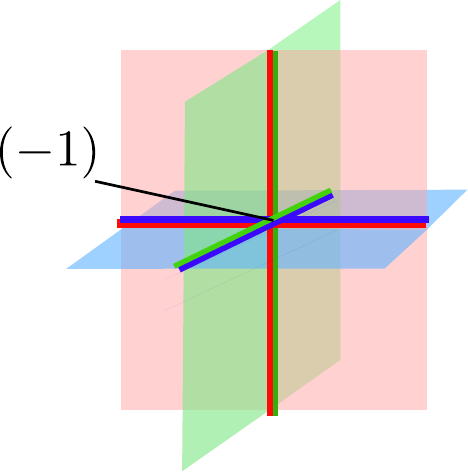}}}
\end{equation}
In the figure above and in the rest of this section, it suffices to use continuous pictures to represent the membranes and lines in spacetime since we are focusing on global topology and global properties of the protocols.
Schematically, the path integral is of the form~\cite{Bauer2024}
\begin{equation}
Z_{TQD}=\sum_{\vec c_r, \vec c_g, \vec c_b: \  \text{ closed membranes}} (-1)^{\text{\# $rgb$ triple intersections}}
\, .
\end{equation}
For the input and output states, one obtains the three-colored closed-loop configurations forming the ground states of the TQD discussed in Appendix~\ref{subsec:loopsum-and-stabs}. The history of these spatial closed-loop patterns evolving through time results in three-colored closed-membrane patterns in spacetime.
Similarly, a spatial triple-crossing event, as shown in Fig.~\ref{fig:TQD_loops}(c), corresponds to a triple-membrane intersection in spacetime, as shown in Eq.~\eqref{eq:triple_membrane_intersection}.

\subsubsection{Charge and flux defects}
\label{subsec:chargeandflux}
In order to relate the path integral picture to circuits, we need to consider the path integrals for relevant phases after the insertion of \emph{defects}.
The defects in the toric code and the TQD phases correspond to either \emph{charge} or \emph{flux} worldlines.

We start by discussing flux defects, in the absence of any charge defects. 
In spacetime, flux defects are worldlines where the membranes of the path integral terminate, and accordingly come in red, green, and blue colors. Below, we show such a worldline of red color as well as a section at a fixed time $t$.
\begin{equation}
\vcenter{\hbox{\includegraphics[width = 0.6\columnwidth]{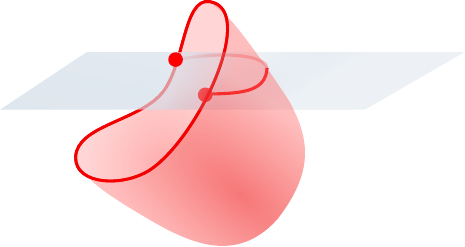}}}
\;
\end{equation}
From the timeslice section, depicted above in gray, we obtain the picture in space. In space, the fluxes are points where the loops of one color are not closed but terminate:
\begin{equation}
\vcenter{\hbox{\includegraphics[width = 0.18\columnwidth]{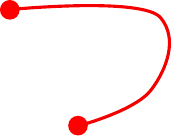}}}
\;
\end{equation}

The path integral with flux defects inserted is modified in comparison to the usual $Z_{TQD}$ by summing over membranes that terminate at the flux worldlines, instead of only the closed ones. 
For example, the path integral with red flux worldline $f$ in the absence of any charge worldlines is:
\begin{equation}
Z_{TQD, \text{ flux }f} = \sum_{\substack{\text{closed $b$,$g$ membr.}\\\text{$r$ membr. terminating at $f$}}} (-1)^{\text{\# $rgb$ triple int.}}\, .
\end{equation}
Microscopically, fluxes are points or lines on the Poincar\'e dual lattice in space as well as in spacetime, that is, they pierce perpendicularly through the faces of the primal lattice.

The insertion of a flux worldline in the TQD phase qualitatively alters the properties of the path integral: after insertion, the number of triple intersection points is no longer invariant under local deformations of the closed-membrane pattern.\footnote{Technically, the resulting action with a defect is not gauge-invariant, i.e., it is not a function of the cohomology class.} 
When a blue-green intersection line is deformed across a red flux worldline, this creates or removes a triple intersection, i.e., a local deformation yields a factor of $-1$.
This implies that the path integral may depend on the geometric shape of the flux worldlines, rather than their topology (more precisely, their cohomology class) alone, as it should normally be for topological phases. 
Hence, the continuously deformable closed-loop or membrane picture is not applicable in the presence of nontrivial flux worldline configurations.

We next discuss what happens in the presence of charge defects with no flux defects. Charges also come in red, green, and blue color.
In spacetime, the charge defects correspond to closed worldlines, where an additional $-1$ weight appears in the path integral for every intersection of a worldline with membranes of the same color. 
\begin{equation}
\vcenter{\hbox{\includegraphics[width = 0.6\columnwidth]{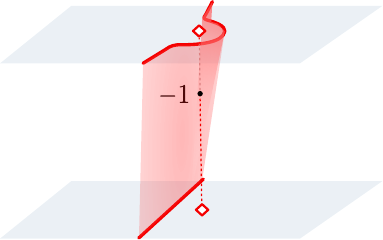}}}
\;
\end{equation}
In the spatial loop-sum picture, charge defects are points (shown as hollow diamonds) such that the states in the superposition have a relative $-1$ for configurations related by a loop crossing this point:
\begin{equation}
\vcenter{\hbox{\includegraphics[width = 0.45\columnwidth]{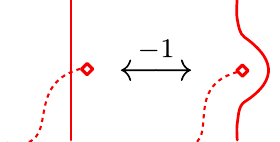}}}
\;
\end{equation}

The only modification to the path integral that is needed in order to add charge defects is to account for the possible $-1$ factor from charge-membrane crossings along every charge worldline. 
For example, adding a red charge worldline $\ell$ leads to the following path integral,
\begin{align}
Z_{TQD, \text{charge }\ell} = \sum_{\text{closed $rgb$ membr.}} (-1)^{\text{\# $rgb$ triple int.}}
\nonumber \\
\times (-1)^{\text{\# red charge $\ell$-membrane int.}}\, .
\end{align}
Charge defects of other colors are described similarly. 
Since the number of charge-membrane intersections is invariant under local deformations (when the fluxes are absent), the entire path integral is still gauge invariant when only charge defects are present.
In fact, the charge worldlines correspond to the Abelian anyons of the TQD. See Appendix~\ref{sec:excitations} for a short review of the anyons of the TQD.

Finally, we discuss what happens if we include both fluxes and charges into the path integral. As mentioned earlier, the presence of flux worldlines ruins the gauge invariance of the path integral (i.e. its invariance under local deformations of the closed-membrane pattern). Gauge invariance, however, ensures that charge worldlines must always be closed (otherwise the path integral evaluates to $0$). As a consequence, if there are flux worldlines, then the charge worldlines \emph{can now terminate on the flux worldlines}. More precisely, a charge worldline of color $c$ can terminate on the flux worldlines of the other two colors $c' \neq c$. In addition, in a circuit realization, any configuration of such charge worldlines that terminate on the associated flux worldlines is equally probable.  This has important consequences for decoding the protocols in the presence of noise, see Sec.~\ref{Sec:JustInTime}.

\subsection{Boundaries, domain walls, and corners of the TQD} \label{sec:boundaries-DWs-corners}

We now briefly discuss the boundaries and the corners of the TQD phase as well as the domain walls between the TQD phase and the toric code. All of these spacetime defects are important for constructing the desired logical actions. 

The most important type of defects in our protocols are \emph{domain walls} that interface the TQD model with copies of the toric code. 
In our fault-tolerant circuits that implement logical protocols, these domain walls can, for example,  appear on a timelike slice such as in Fig.~\ref{fig:protocol1}, on a diagonal plane in spacetime such as in Fig.~\ref{fig:protocol2}, or at a fixed spatial location. 
Several of the global topologies we consider also use domain walls between the TQD model and the vacuum, i.e., \emph{gapped boundaries}. 
A number of the global spacetime topologies we consider include \emph{corners}, which are 1-dimensional interfaces between domain walls or boundaries. 
We present specific realizations of these defects in the closed-loop and closed-membrane picture, below, without explicitly referring to a microscopic lattice. 
Explicit circuits that include corner defects can be derived from this picture by mapping loop configurations onto lattice cocycles.

\subsubsection{Boundaries of the TQD}
\label{sec:pi_boundaries}

The boundaries of non-chiral topological orders in $(2+1)$-dimensions are well understood.
They can be classified into topological superselection sectors, labeled by a finite number of inequivalent Lagrangian algebra objects~\cite{Kitaev2012Models,Kong2014Anyon,Magdalena_de_la_Fuente_2023,beigi2011quantum}.
For example, the toric code has two different boundary phases. 
The rough and smooth boundaries introduced in Ref.~\cite{Bravyi1998} are specific microscopic implementations of the distinct boundary phases. 
For the non-Abelian type-III TQD considered here, there are $11$ distinct topological boundaries~\cite{Bullivant2017,Lan_2015}. 
We briefly describe all $11$ boundaries in the closed-membrane picture without reviewing their full classification. 

Each boundary is determined by the way membranes of different colors are allowed to terminate, and the boundary amplitudes of the path integral. The terminating membranes form a proper subgroup of $\mathbb{Z}_2^3$, and we label the associated boundary by angle brackets containing the generators of this subgroup that are allowed also determine the charge anyon worldlines that are allowed to terminate on each boundary. Due to the non-trivial twist, not all subgroups are allowed.

The 11 distinct boundaries of the TQD are:
\begin{itemize}
\item [(1)] $\langle\varnothing\rangle\equiv \langle\rangle$ boundary, where no membranes are allowed to terminate.  All three generating charge anyon worldlines, and their combinations, are allowed to terminate on this boundary. 
\end{itemize}
This is similar to the rough boundary of each of the three copies of the toric code.

\begin{itemize}
\item [(2,3,4)] $\langle c\rangle$ boundary, where only the membranes of color $c \in \{ r,g,b\}$ are allowed to terminate. 
Only charge anyon worldlines of complementary colors are allowed to terminate on this boundary.\footnote{If we allowed a $c$-colored line to terminate at the boundary, the parity of line-membrane intersections of color $c$ would not be invariant under local deformations.}
\end{itemize}
For three copies of toric codes, this is similar to the boundary that is a smooth for the copy of color $c$ and rough for the copies labeled by other colors.

\begin{itemize}
\item [(5,6,7)] $\langle c_1,c_2\rangle$ boundary, where membranes of colors $c_1 \neq c_2 \in \{ r,g,b \}$ are allowed to terminate.
Only charge anyon worldlines of the third color are allowed to terminate on this boundary.
\end{itemize}
This is similar to a rough-smooth-smooth boundary for three copies of the toric code. In fact, for three copies of the toric code, there is one additional boundary for this subgroup that includes non-trivial boundary weights. 
However, for the TQD, these boundaries with and without the additional weights are phase equivalent.

\begin{itemize}
\item [(8,9,10)] $\langle c_1 c_2\rangle$ boundary, where only the membranes of colors $c_1 \neq c_2\in \{ r,g,b \}$ must terminate simultaneously along the same line (and cannot terminate individually).
Charge worldlines of colors $c_1$ and $c_2$ are allowed to terminate simultaneously on the same point on this boundary, as well as individual anyon worldlines of the third color.
\end{itemize}
For three copies of toric code, this corresponds to a folding boundary between the copies of colors $c_1$ and $c_2$, and a rough boundary for the third copy. 
\begin{itemize}
\item [(11)] $\langle rg,rb \rangle$ boundary, where any pair of membranes of different colors is allowed to terminate simultaneously, but none of them individually. The charge worldlines of all three colors are allowed to terminate simultaneously (but no individual worldlines or pairs are allowed to). 
\end{itemize} 
This boundary requires additional weights, namely a factor of $i$ at each intersection between an $rg$ and an $rb$ termination lines on the boundary.  In space, this weight is associated to a process of exchanging two pair-termination-points on the boundary. This is illustrated and explained in more detail in Appendix~\ref{app:boundaries-DWs-corners}.

\subsubsection{Domain walls between the TQD and toric codes}

Our protocols act on logical states of toric/surface codes, and hence we require interfaces between the toric code and the TQD phases to transfer the logical information into and out of the non-Abelian phase. 
Similar to gapped boundaries, domain walls are described by ways in which membranes can terminate, as well as additional weights on the domain wall.

The domain walls we consider are equivalent to gapped boundaries of a stack consisting of the TQD and toric codes, as the toric codes can be folded onto the same side of the domain wall as the TQD. 
These domain walls are fully classified, however, the complete list of possibilities is too numerous to discuss in full here and so we restrict our attention to domain walls that are relevant for our purposes.

First, we consider the relevant domain walls between a single toric code (which we label with the color purple, denoted by `$p$') and the TQD.
We focus exclusively on domain walls that we can use to transfer all logical information from the toric codes to the TQD. For the types of domain walls used in our protocols, this is ensured by the fact that each type of toric code membrane is forced to terminate simultaneously with some nontrivial TQD membrane.\footnote{
Instead of simultaneous termination, a toric code membrane could be coupled to a TQD membrane via a non-trivial weight on the membrane.
} We note that such domain walls exist between one single toric code, or a pair of toric codes, and the TQD, but not between three toric codes and the TQD.

The termination pattern is given by a subgroup of $\mathbb Z_2^4$, where the first three $\zz_2$ factors\footnote{In this paper, we use $\zz_2$ to refer to the group $ \{ 0,1\}$ with addition modulo two, and therefore, use the notation interchangeably with $\mathbb F_2$ in several places.} correspond to the TQD and the last $\zz_2$ factor corresponds to the toric code.
The domain walls that we use in our protocols are the following:
\begin{itemize}
\item $\langle cp\rangle$ domain wall where $c \in \{ r,g,b \}$. Only the membrane of color $c$ on the TQD side can terminate simultaneously with the toric code membrane $p$ along a common termination line. 
\end{itemize}
\begin{itemize}
\item $\langle c_1p,c_2\rangle$ domain wall with $c_1 \neq c_2 \in \{ r,g,b \}$, where the membrane of color $c_1$ from TQD side can only terminate simultaneously with the $p$ membrane of the toric code, and additionally, the membrane of color $c_2$ can freely terminate. 
\end{itemize}

Our protocols also use the domain walls between the TQD and two copies of the toric code.  In this case, we color the membranes of the copies of toric code purple ($p$) and yellow ($y$). The relevant domain walls are:
\begin{itemize}
\item  $\langle c_1p,c_2y\rangle$  domain wall with $c_1 \neq c_2 \in \{ r,g,b \}$, where TQD membranes of color $c_1$ can terminate simultaneously with the purple toric code membranes, and TQD membranes of color $c_2$ can terminate simultaneously with yellow toric code membranes.
\item  $\langle rgp, rby\rangle$ domain wall, which is derived from the $\langle rg,rb\rangle$ boundary of the TQD by forcing the $rg$ and $rb$ termination lines to coincide with the termination lines of the $p$ and $y$ toric codes, respectively.
\end{itemize}

\subsubsection{Corners of the TQD} \label{subsub:corners}

In some of the logical protocols, the types of corners used play an important role. For this reason, we now review the types of corners that can occur in the TQD phase. 

In a global topology with multiple different boundary conditions, there are also corners, which are lines in spacetime that correspond to interfaces between boundaries.
For each pair of boundaries, the corners can be fully classified into topologically distinct superselection sectors\footnote{
The distinct corners are the irreducible blocks of a *-algebra corresponding to a 2-dimensional path integral obtained by compactifying the bulk into a thin slab with the two boundaries at the top and bottom~\cite{Kitaev2012Models,Bridgeman2018Fusing,Bridgeman2019b,Bridgeman2019a,Magdalena_de_la_Fuente_2023}.
}.
The corners we consider simply interface termination lines on the boundaries on either side.
For example, a corner between the $\langle rg,rb\rangle$ and the $\langle r,g\rangle$ boundary may interface a $rg$ termination line on one side with a separate $r$ and $g$ termination line on the other side (see \eqref{fig:G16} in Appendix~\ref{app:boundaries-DWs-corners} and the surrounding discussion).
In general, there exist different superselection sectors of corners between the same boundary types, which are distinguished by assigning different weights to the points on the corner where termination lines are interfaced.
For the configurations we consider, there are only two different weights that differ by a factor of $-1$, so the choice of corner only changes the protocols by an unimportant single-qubit logical $Z$ operator. 
Specifically, the weights are always $\pm 1$ (and we choose the ``trivial'' $+1$ weight), except for the corners between the $\langle rg,rb \rangle$ and $\langle g,b \rangle$ boundaries, where the weights are $\pm\omega$ with $\omega = e^{-i \pi/4}$ (we choose the $+\omega$ weight).
Note that the $\omega$ weight of this corner is the source of the ``non-Cliffordness'' (magic) in the protocols for the logical $\overline{XS}$ measurement (that can be used for magic state preparation) and the $\overline{T}$ gate. 

Note that the TQD phase also allows for more intricate corners where we sum over additional closed-loop configurations inside the corner worldline, but we do not make use of these. 
We refer the reader to Appendix~\ref{app:boundaries-DWs-corners} for a more complete discussion of the corners. 

Finally, the 0-dimensional defects in spacetime (such as the endpoints or meeting points of the corners) can be identified with the ground states for the spatial configuration obtained by taking the intersection of the spaceitme with a (sufficiently small) 2-sphere surrounding the point.
If the according ground state space has dimension 2 or higher, then this means that part of the logical information is not encoded globally but can be accessed by operators acting locally at this point.
For the global protocol to be fault-tolerant, the ground space dimension thus must be one for every spacetime point, in which case local operators can only gives rise to irrelevant global prefactors.

\subsection{Logical operations}
\label{sec:global_topologies}

In this section, we construct different global topologies that result in specific logical operations. For each global topology, we label each 3-volume by the topological phases that are either the TQD, a single toric code (TC), or two copies of the toric code (TC$\times$TC). We further specify the types of domain walls and boundaries in a given geometry using the notation introduced above. We also designate the state boundaries with input and output states, which we label as logical input ($i$) and output $(o)$ states of the protocol and ensure that they always belong to the toric code phase. We also ensure that the distance between any two points in spacetime, where flux or charge defects termination can affect the logical state, is at least $d$ -- the desired distance of the code. An example of such a pair of points is any pair of points on two distinct boundaries where the same type of flux or charge can terminate.  
For each labeled global topology, we then derive its logical action using the closed-membrane picture for the path integrals.

\subsubsection{Computing the logical action }

First, we explain how to compute the logical operation corresponding to a global topology in general, and then we consider specific examples.
We start by determining the generating cohomology classes of string configurations on the input and output states of the input toric or surface codes, which are in one-to-one correspondence with the logical states in the Pauli-$Z$ basis.
We label the $i$-th input or output state boundary as $I_i$ or $O_i$, respectively. 
The total input and output cohomology classes are then of the form 
\begin{align*}
    \sum_i \alpha_i I_i && \text{ and} && \sum_i \beta_i O_i ,
\end{align*}
 for some $\alpha_i, \beta_i\in \zz_2$, respectively. 
They form a basis for the input and output vector spaces for the logic operation, which we denote by $\ket\alpha$ and $\bra\beta$, respectively. 
Here, we have defined $\zz_2$-valued vectors $\alpha = (\alpha_1, \alpha_2, \dots)$ and $\beta = (\beta_1, \beta_2, \dots)$.
The input and output states that we use in our protocols are placed on either tori (containing the toric code copies), rectangular surface code patches, or surface code patches that are folded into triangles, which are equivalent to the color code~\cite{Kubica2015unfolding}. 

We use the path integral to define an operator $F$ whose matrix elements $\bra{\beta}F\ket{\alpha}$ determine the logical operation applied between input and output states.
To compute these matrix elements, we first use the global topology to determine a set of generating cohomology classes of ``bulk'' membranes in the TQD and toric code phases, which we label $B_j$. These can be thought of as the path along which the logical information can continuously flow between the input and output states.
The bulk cohomology classes are thus $\sum_j \gamma_j B_j$ for some $\gamma_j\in \zz_2$, and we keep track of them via an $\zz_2$-valued vector $\gamma =  (\gamma_1, \gamma_2, \dots)$.  By restricting $B_j$ to the $i$-th input state boundary, we obtain $\mathcal I_{i,j}\in \zz_2$. We define $\mathcal O_{i,j}$ analogously for the restriction of $B_j$ to the $i$-th output state boundary. Both $\mathcal I$ and $\mathcal O$ are thus $\zz_2$-valued matrices.

Finally, for every bulk cohomology class we further determine the weight associated to this closed-membrane configuration by the path integral which we denote $Z( \gamma)$.
We then use these weights to evaluate the desired matrix elements:
\begin{equation}
\bra{\beta}F\ket{\alpha}
= \sum_{\gamma:  \mathcal I \gamma=\alpha, \mathcal O \gamma=\beta} Z( \gamma)\, .
\label{eq:loopsoup_unitary_logic}
\end{equation}

In some cases, the operator $F$ is not unitary, but rather a partial isometry. This occurs when it is not possible to find a correction that maps the corresponding circuit with an arbitrary set of observed measurement outcomes to the $+1$-postselected circuit. In this case, the classical measurement outcomes contain information about the logical input states. 
The resulting logical operation is a quantum channel rather than a unitary.
In our protocols,  this scenario occurs if there exist non-trivial cohomology class generators $C_i$ for charge worldlines that go between input and output state boundaries. 
The different charge cohomology classes are $\sum_i \epsilon_i C_i$ for some $\epsilon_i\in \zz_2$ and $\epsilon = (\epsilon_1, \epsilon_2, \dots)$.
In this case, the path integral weights also depend on $\epsilon$, and we denote them $Z^\epsilon(\gamma)$. 
The resulting logical operation is a measurement defined by a collection of operators $F^\epsilon$, where $\epsilon$ corresponds to the logical measurement outcome. 
The matrix elements of these operators are 
\begin{equation}
\label{eq:loopsoup_general_logic}
\bra{\beta}F^\epsilon\ket{\alpha}
= \sum_{\gamma: \mathcal I\gamma=\alpha, \mathcal O\gamma=\beta} Z^\epsilon(\gamma)\, .
\end{equation}

\subsubsection{$\overline{CCZ}$ unitary gate}
For our first example, we consider the protocol realizing the unitary $\overline{CCZ}$ gate that was briefly described in Sec.~\ref{sec:description} (and Appendix~\ref{sec:more_gates}). The corresponding global topology is shown below. 
\begin{equation} 
\label{eq:loopsoup_ccz_protocol}
\vcenter{\hbox{\includegraphics[width = 0.8\columnwidth]{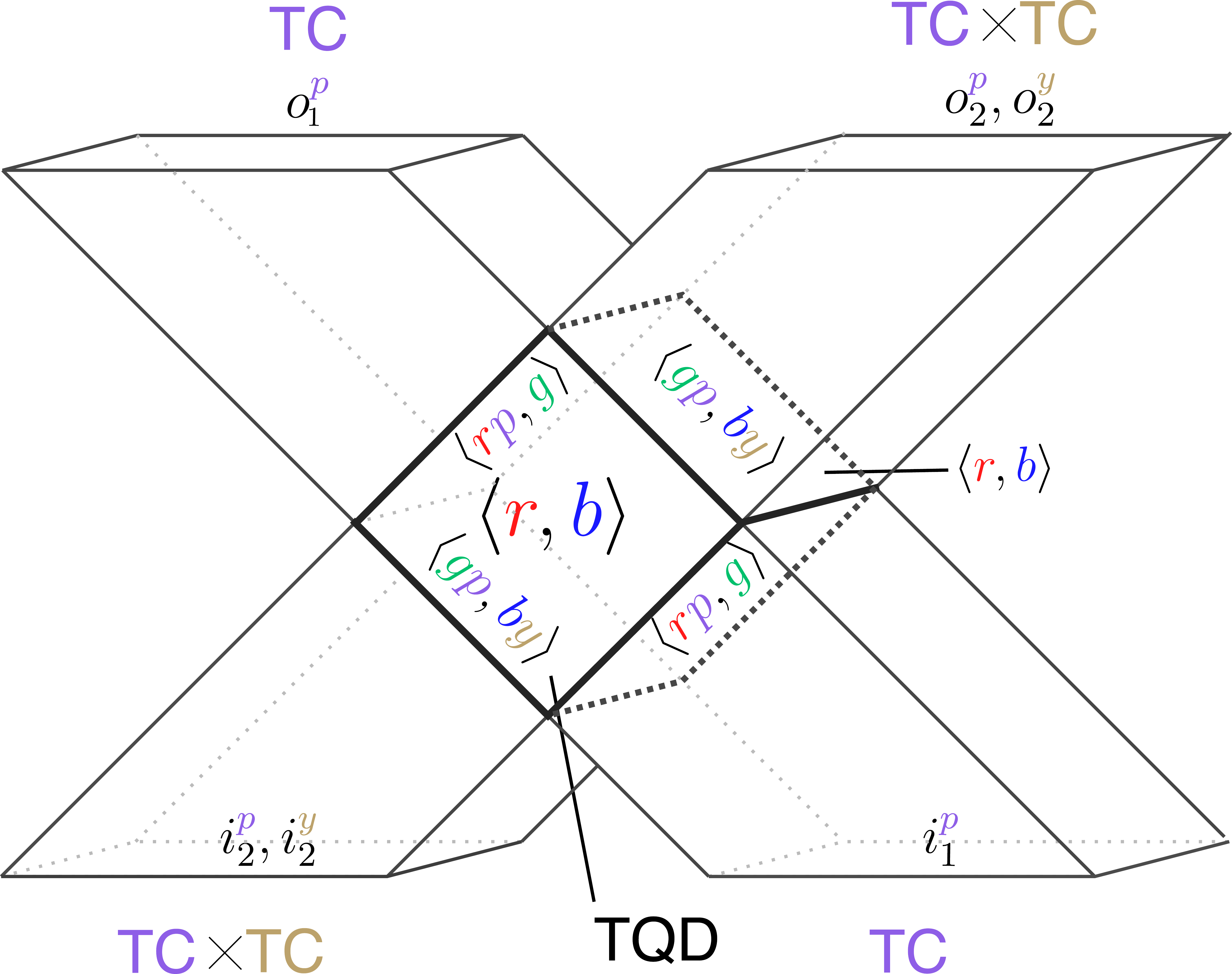}}}
\end{equation}
Here, and all the protocols in this section, we orient the figures such that the time direction used to derive a circuit flows up the page.  
The global topology above is built from a cube containing the TQD phase and regions containing either one or two copies of the toric code phase.
Two of the opposing faces of the TQD cube are interfaces with cubes that contain a single copy of the toric code phase.
Another pair of opposing faces of the TQD cube are interfaces with cubes containing two copies of the toric code. 
The other two faces of the TQD cube correspond to gapped boundaries. 

Each region containing (copies of) the toric code has a face that is an input or an output state boundary labeled by $i$ or $o$, correspondingly. 
There is a subscript on the input or output label enumerating the inputs and outputs, and a superscript corresponding to the copy of the toric code phase  that is labeled by the color $p$ or $y$. 
We omit the boundary labels on the faces of the regions containing (copies of) the toric code phase. 
These are either $\langle \rangle$, $\langle p\rangle$, or $\langle y\rangle$, corresponding to rough and smooth surface code boundary conditions in the associated copy. 
The boundary labels are chosen such that all spatial slices define surface codes with rough and smooth boundaries that produce logical operators consistent with the figures below.

There are three generating input classes and three generating output cohomology classes, corresponding to logical states of the individual surface code patches. 
These are labeled $I_2^p$, $I_2^y$, $I_1^p$, and $O_2^p$, $O_2^y$, $O_1^p$, following the labeling in Eq.~\eqref{eq:loopsoup_ccz_protocol}.
There are three bulk cohomology class generators%
\begin{equation}
\begin{gathered}
B_g= \raisebox{-0.4\height}{ \includegraphics[width = 0.15\columnwidth]{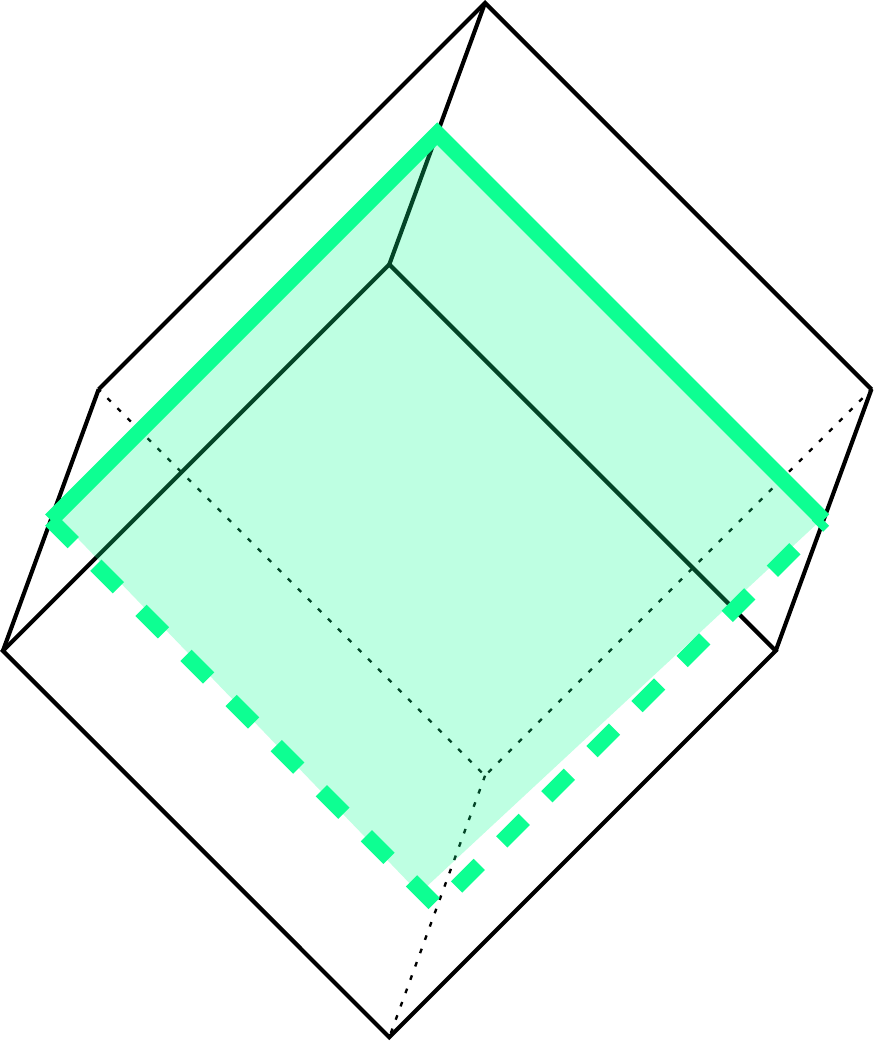}}\, , \, B_b=\raisebox{-0.4\height}{\includegraphics[width = 0.15\columnwidth]{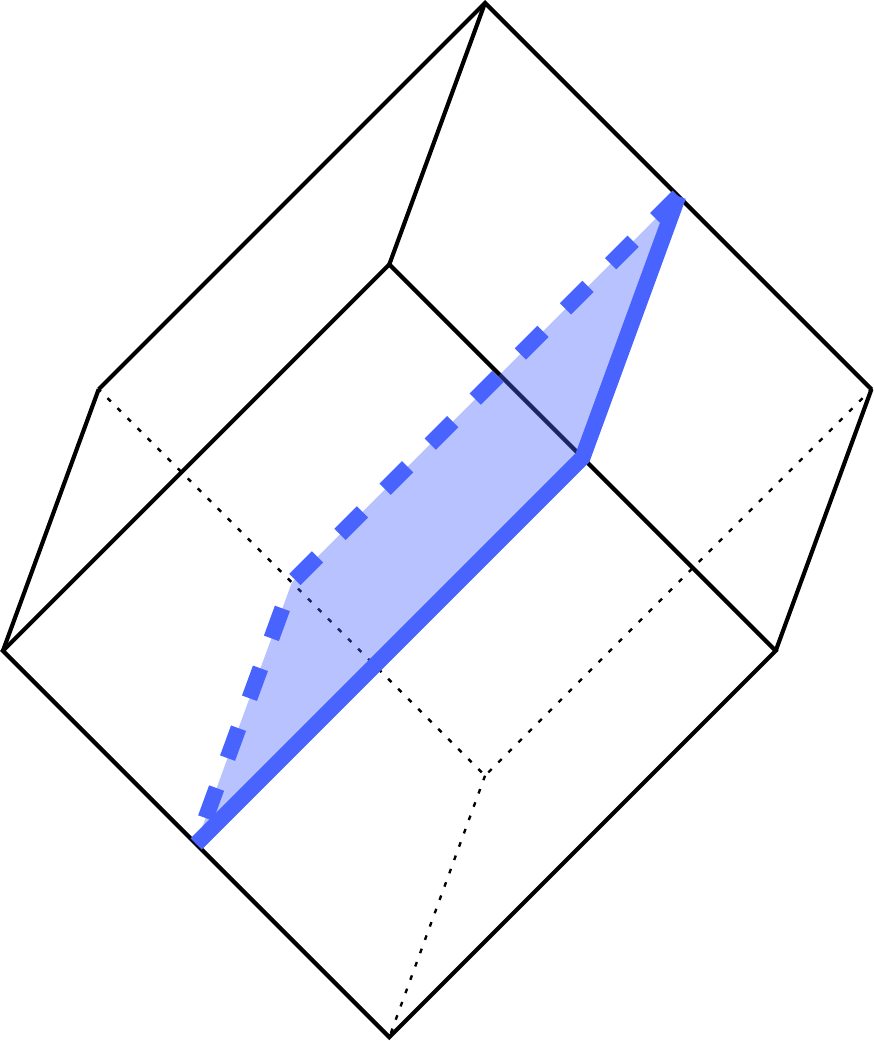}}\, ,\, B_r=\raisebox{-0.4\height}{\includegraphics[width = 0.15\columnwidth]{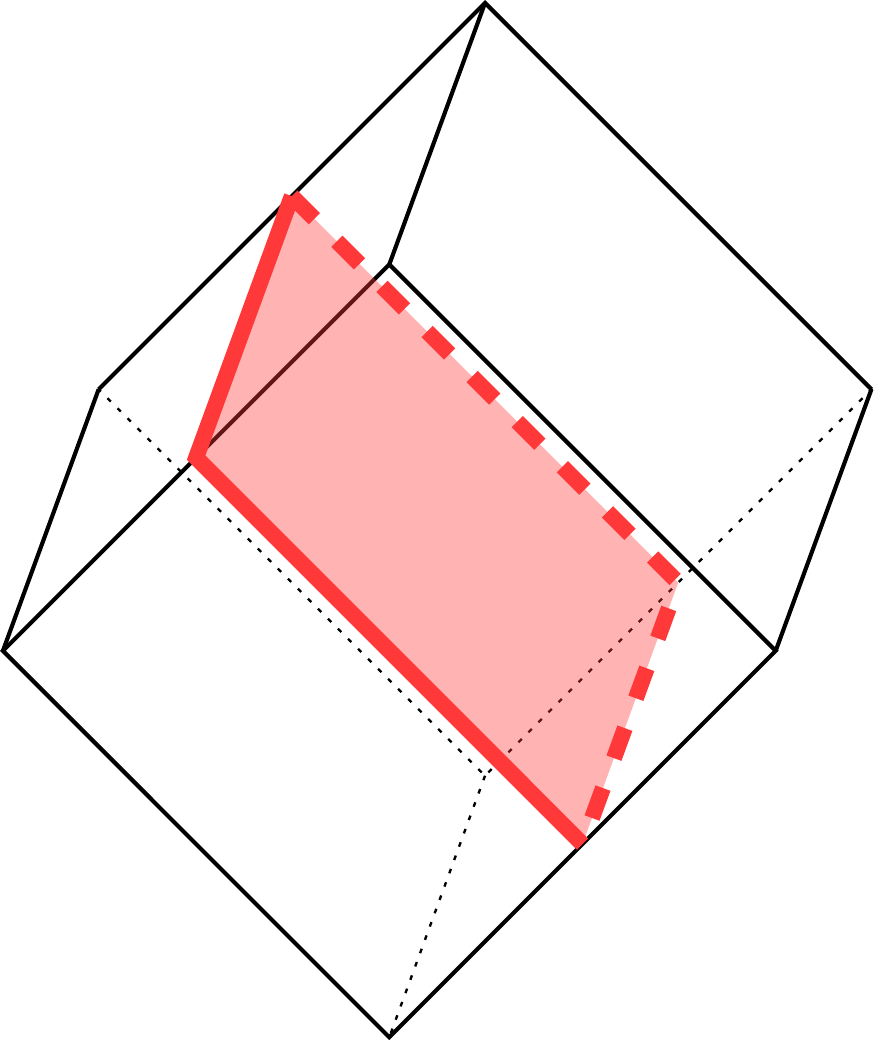}}\, .\\
\end{gathered}
\end{equation}
In this example, there are no nontrivial cohomology classes of charge anyon worldlines. 
Hence, the resulting logical operation is unitary.
From the above figures, we find the following compatibility matrices between bulk and boundary cohomology classes to be 
\begin{equation}
\label{eq:loopsoup_ccz_restriction}
\mathcal I =
\begin{pmatrix}
& I_1^p & I_2^p & I_2^y\\
B_g & 0 & 1 & 0\\
B_b & 0 & 0 & 1\\
B_r & 1 & 0 & 0
\end{pmatrix}
,\
\mathcal O=
\begin{pmatrix}
& O_1^p & O_2^p & O_2^y\\
B_g & 0 & 1 & 0\\
B_b & 0 & 0 & 1\\
B_r & 1 & 0 & 0
\end{pmatrix}
.
\end{equation}
The weight obtained from the path integral as a function of bulk cohomology classes is
\begin{equation}
Z(\gamma)=(-1)^{\gamma_g\cdot \gamma_b\cdot \gamma_r}\, .
\end{equation}
The nontrivial weight is depicted below
\begin{equation}
Z((1,1,1))
=
Z\Big(
\raisebox{-0.4\height}{
\includegraphics[width = 0.2\columnwidth]{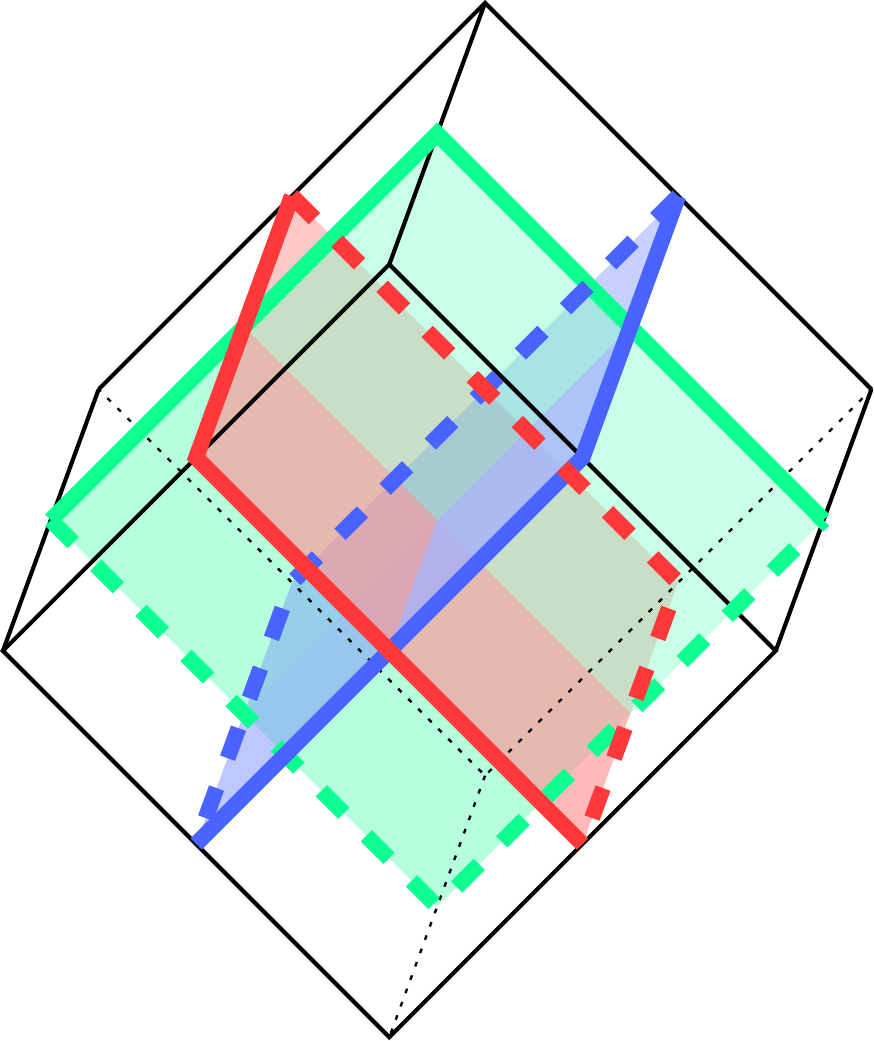}
}
\Big)
=
-1\, ,
\end{equation}
where membranes of all three colors are present with different orientations, which results in an odd number of triple intersection points. 
At this point we have described the necessary data. 
We now evaluate the sum in Eq.~\eqref{eq:loopsoup_unitary_logic}. 
For $\alpha \neq \beta$, the sum is over the empty set and yields zero, and if $\alpha=\beta$, the sum is over a single bulk cohomology class determined by $\gamma=\mathcal I^{-1}\alpha=\mathcal O^{-1}\beta$. This gives the expression:
\begin{equation}
\bra{\beta}F\ket{\alpha} = \delta_{\alpha_1^p,\beta_1^p}\delta_{\alpha_2^p,\beta_2^p}\delta_{\alpha_2^y,\beta_2^y} (-1)^{\alpha_1^p\cdot \alpha_2^p\cdot \alpha_2^y}\, .
\end{equation}
Which coincides with the $CCZ$ logic gate.

Below we depict a deformation of this protocol:%
\begin{equation}
\vcenter{\hbox{\includegraphics[width = 0.55\columnwidth]{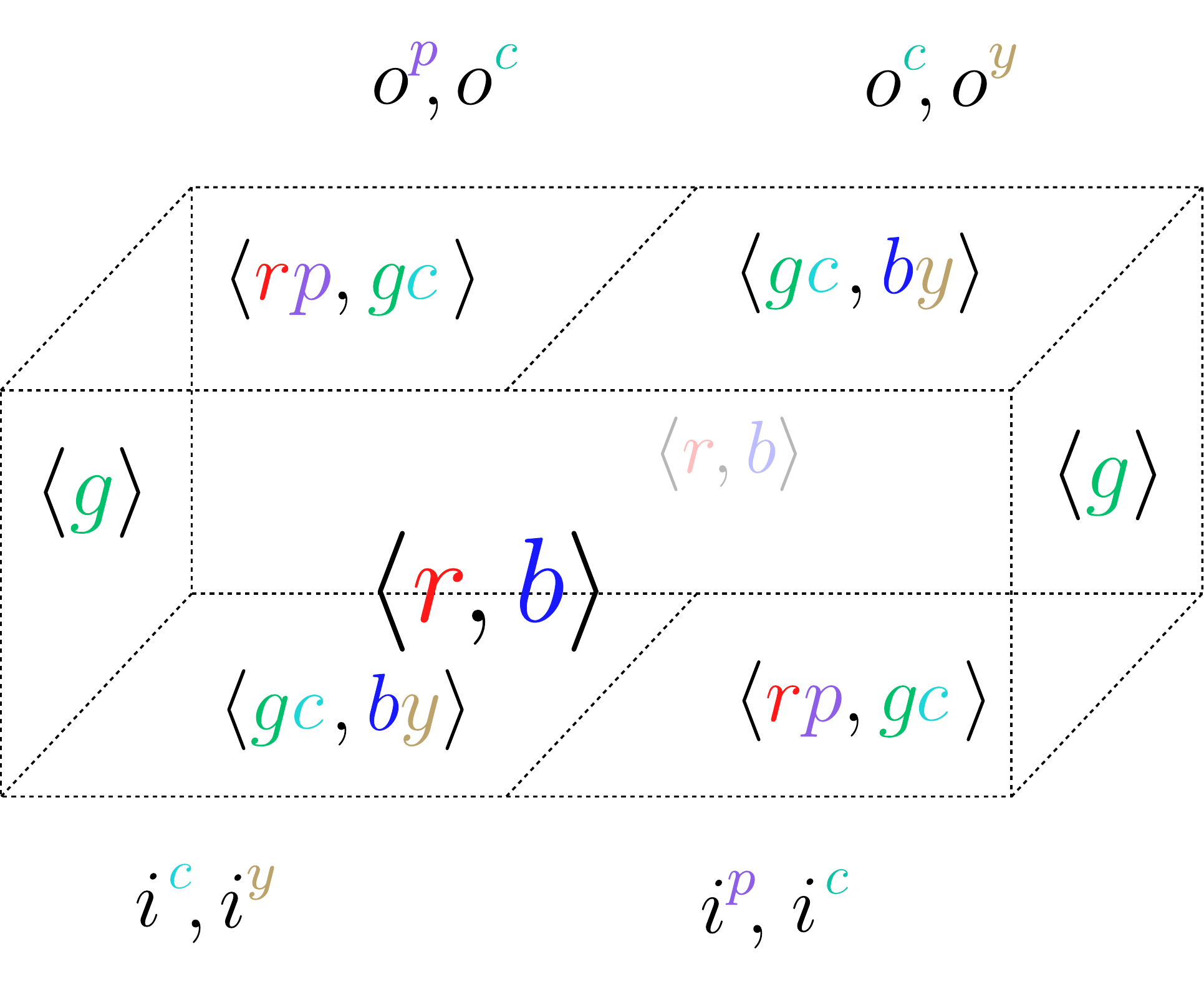}}}
\label{fig:CCZ-alt}
\end{equation}
where we did not show the regions containing the toric code copies (we only show the labels of the corresponding input and output state boundaries; the regions can be added by analogy with Fig.~\eqref{eq:loopsoup_ccz_protocol}). We added a $`c'$-colored copy of the toric code, where $`c'$ stands for ``cyan''. 
Compared to Fig.~\eqref{eq:loopsoup_ccz_protocol}, we introduced additional $\langle g\rangle$ boundaries such that the domain walls between the TQD and toric code are purely spatial.
These additional boundary sections do not change the logical action of the protocol, and the analysis is the same as in Fig.~\eqref{eq:loopsoup_ccz_protocol}.
For example, the bulk cohomology class $\gamma_g=\gamma_b=\gamma_r=1$ with $Z(\gamma)=-1$ is shown below.
\begin{equation}
\vcenter{\hbox{\includegraphics[width = 0.5\columnwidth]{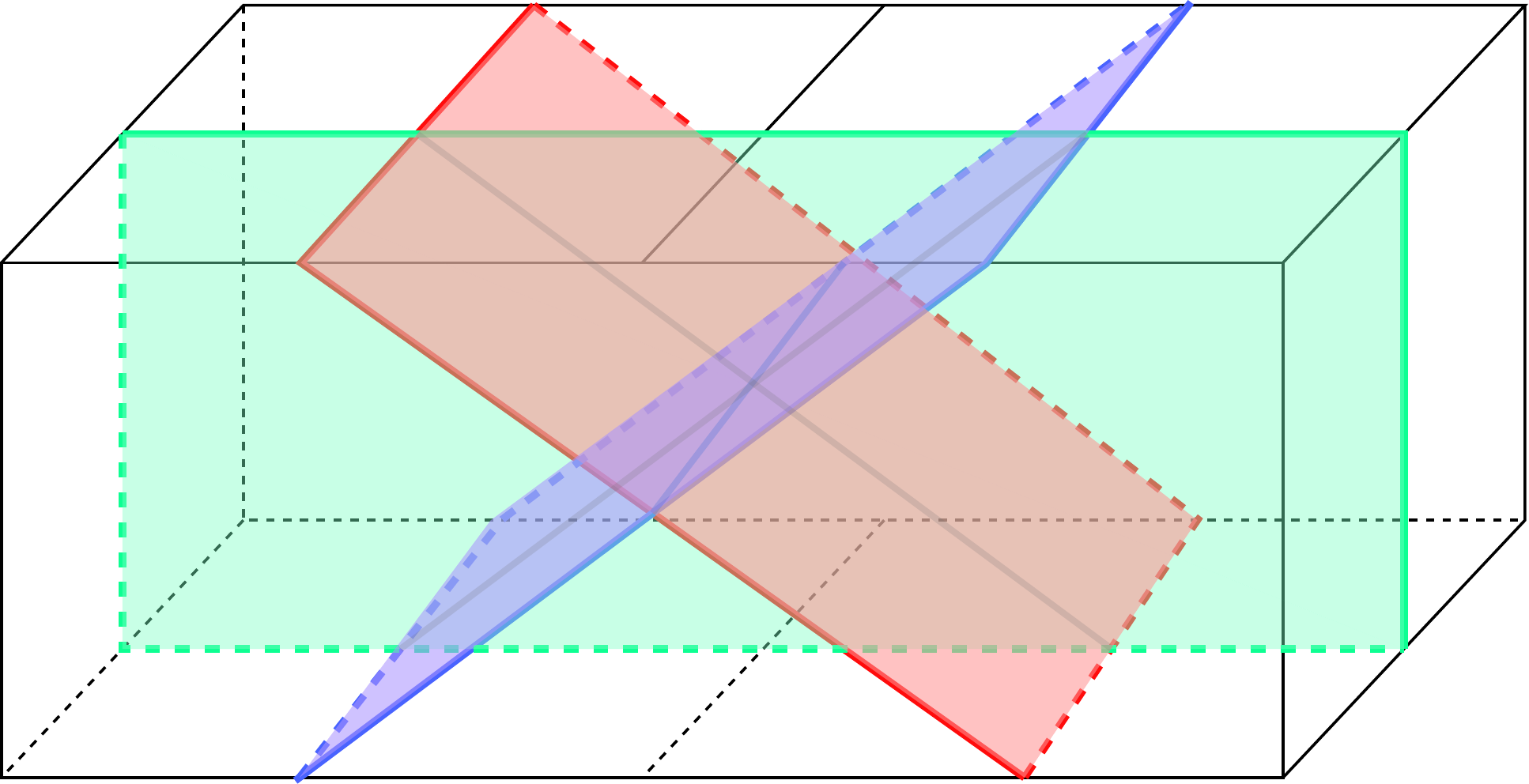}}}\;
\end{equation}

\subsubsection{Logical $\overline{CZ}$ measurement} \label{sec:referenced-example}
The next example we consider corresponds to a global topology that yields a logical $CZ$ measurement. 
This is a variant of the example in Secs.~\ref{sec:description}, and~\ref{sec:CZ_minimal}, with open spatial  boundary conditions. See also the example in Appendix~\ref{sec:more_gates}. 

\begin{equation}
\label{eq:czmeasure_global_protocol}
\vcenter{\hbox{\includegraphics[width = 0.5\columnwidth]{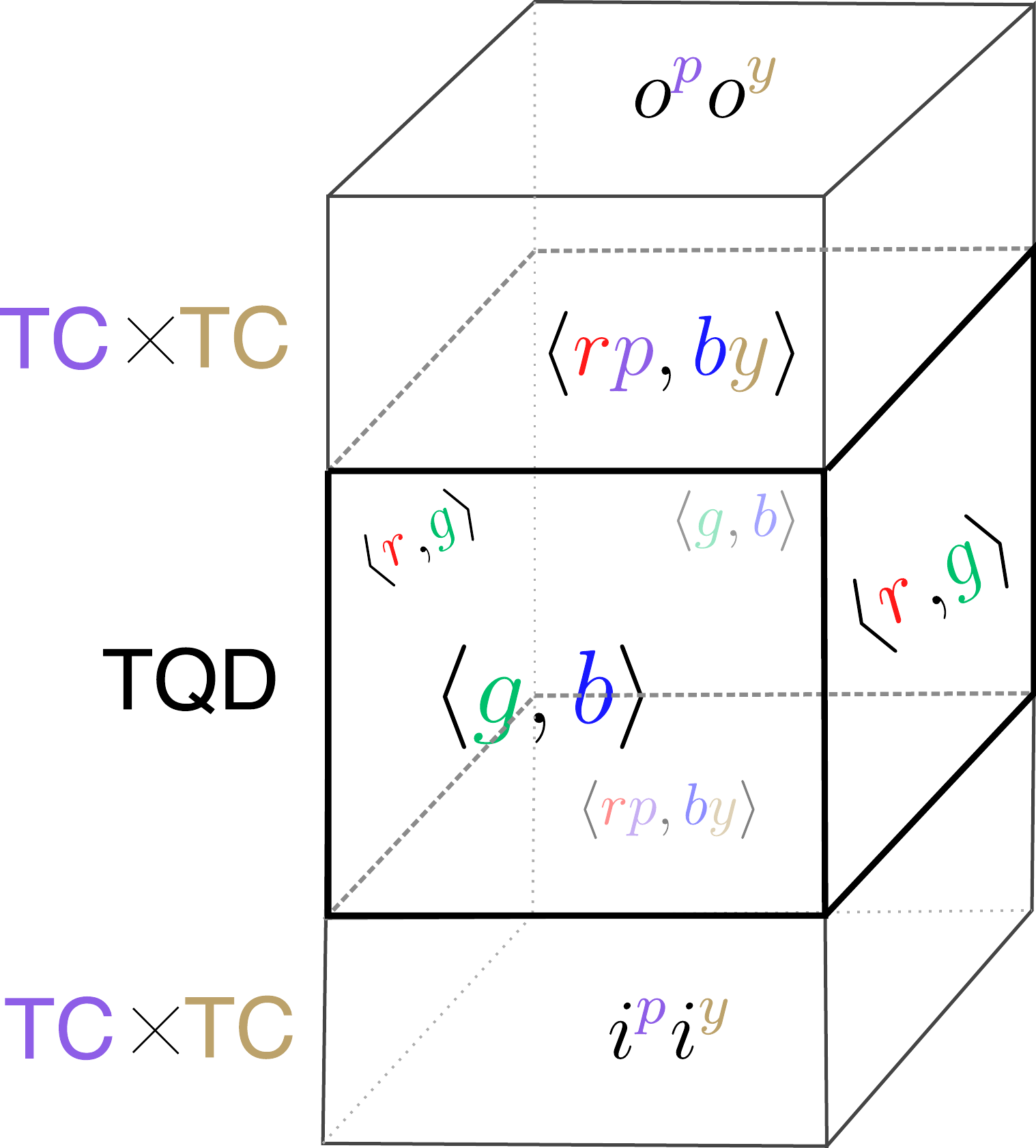}}}\;
\end{equation}
The cube in the middle of this configuration hosts the TQD phase which interfaces two copies of the toric code above and below it.  
The spatial sections of the regions hosting the toric code phase can be chosen to either be tori (as in Sec.~\ref{sec:CZ_minimal}) or rectangles (in which case the initial and final codes are the surface codes). In the latter case, we choose the spatial boundary conditions such that they are consistent with the figures below. We only consider the case with input and output surface codes here; the case with input and output toric codes can be calculated analogously. 

There are two generating input and output cohomology classes $I^p$, $I^y$ and $O^p$, $O^y$, which we label as in Eq.~\eqref{eq:czmeasure_global_protocol}.
There are three generating bulk cohomology classes:
\begin{equation}
\begin{gathered}
B_r=\raisebox{-0.42\height}{ \includegraphics[width = 0.15\columnwidth]{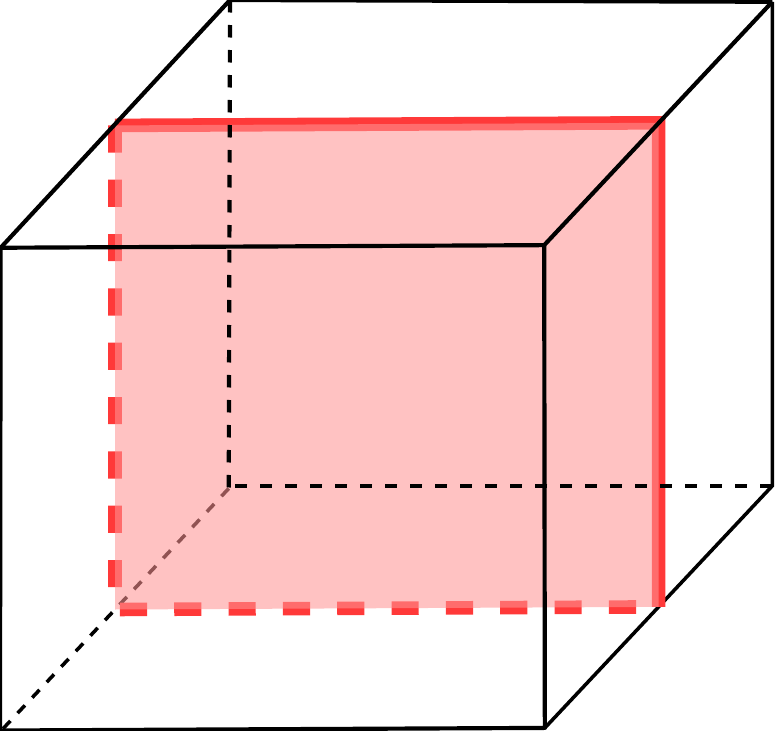}}\,,\, B_b=\raisebox{-0.42\height}{ \includegraphics[width = 0.15\columnwidth]{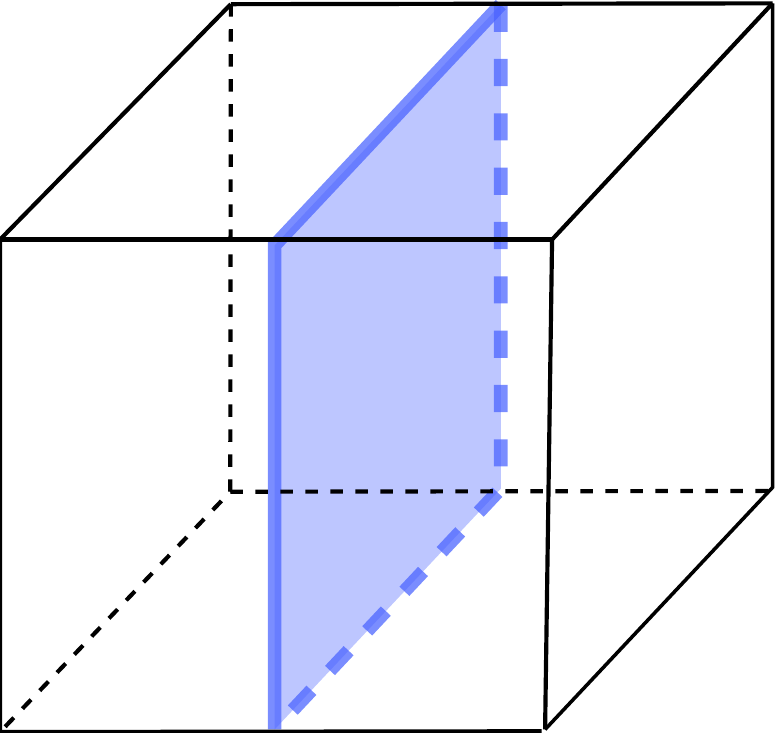}}\,,\, B_g=\raisebox{-0.42\height}{ \includegraphics[width = 0.15\columnwidth]{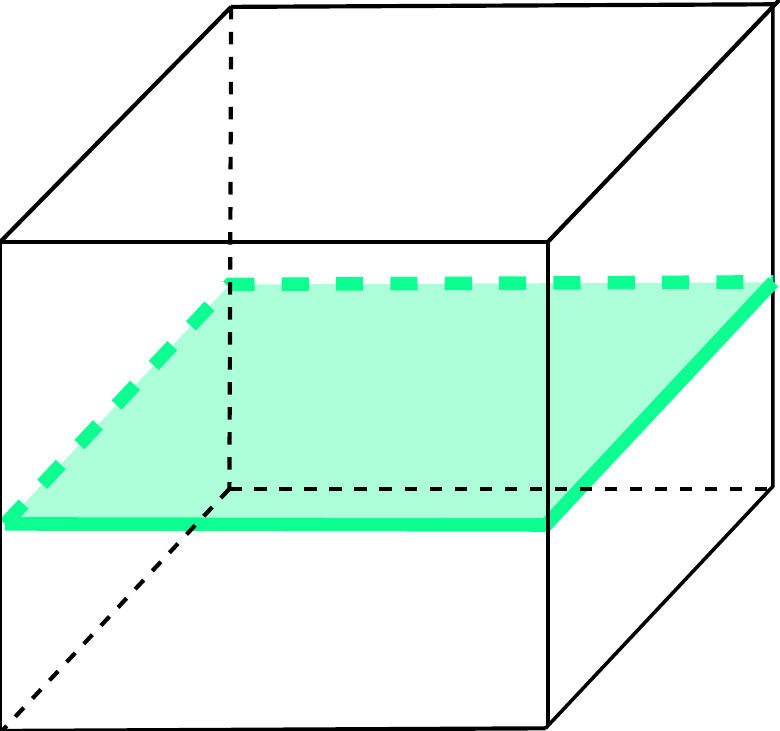}}\, . \\
\end{gathered}
\end{equation}
The restriction of the bulk cohomology classes to the input and output is described by
\begin{equation}
\label{eq:loopsoup_czmeasure_restriction}
\mathcal I =
\begin{pmatrix}
& I^p & I^y\\
B_g & 0 & 0\\
B_b & 0 & 1\\
B_r & 1 & 0
\end{pmatrix}
\, ,\quad
\mathcal O=
\begin{pmatrix}
& O^p & O^y\\
B_g & 0 & 0\\
B_b & 0 & 1\\
B_r & 1 & 0
\end{pmatrix}
\, .
\end{equation}
There is also a single nontrivial bulk charge-line cohomology class generator:
\begin{equation}
C_g = 
\raisebox{-0.4\height}{
\includegraphics[width = 0.18\columnwidth]{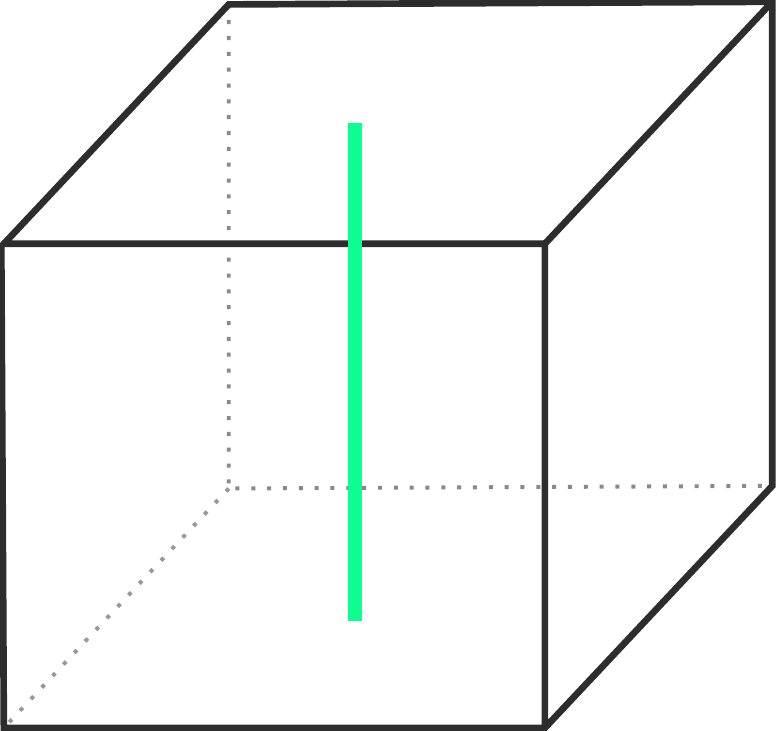}} \, ,
\end{equation}
so the cohomology classes are of the form $\epsilon_g \cdot C_g$ for ${\epsilon_g\in \zz_2}$. The weights of the bulk cohomology classes are
\begin{equation}
Z^{\epsilon_g}(\gamma)= (-1)^{\gamma_r\cdot \gamma_b\cdot \gamma_g + \epsilon_g \cdot \gamma_g}\, .
\end{equation}
We now evaluate the sum in Eq.~\eqref{eq:loopsoup_general_logic}.
For $\alpha\neq \beta$, there is no $\gamma$ such that $\mathcal I\gamma=\alpha$ and $\mathcal O\gamma=\beta$.
So the sum is over the empty set and yields zero.
On the other hand, if $\alpha=\beta$, the sum is over two cohomology classes, namely $\gamma_r=\alpha^p=\beta^p$, $\gamma_b=\alpha^y=\beta^y$, and $\gamma_g\in \{0,1\}$.
For $\epsilon_g=0$ and $\alpha^p=\alpha^y=1$, the two summands cancel.
The same happens for $\epsilon_g=1$ and $\alpha^p=0$, or $\alpha^y=0$.
In all other cases, both summands are $+1$.
Finally, we obtain
\begin{equation}
\bra{\beta}F^{\epsilon_g}\ket{\alpha} = \delta_{\alpha^p,\beta^p}\delta_{\alpha^y,\beta^y} \delta_{\epsilon, \alpha^p\cdot \alpha^y}\, .
\end{equation}
For $\epsilon_g = 0$ and $\epsilon_g = 1$, these are the matrix elements of the two projection operators associated to a projective measurement of the observable $\overline{CZ}$ with measurement outcome $(-1)^{\epsilon_g}$.

\subsubsection{$\overline{T}$-state perparation}
\label{sec:T-state-macroscopic}

The global topology that we use to implement a non-unitary logical ``$XS$ measurement'' operation is shown below. See also the example in Appendix~\ref{sec:more_gates}. 
\begin{equation}
\vcenter{\hbox{\includegraphics[width = 0.45\columnwidth]{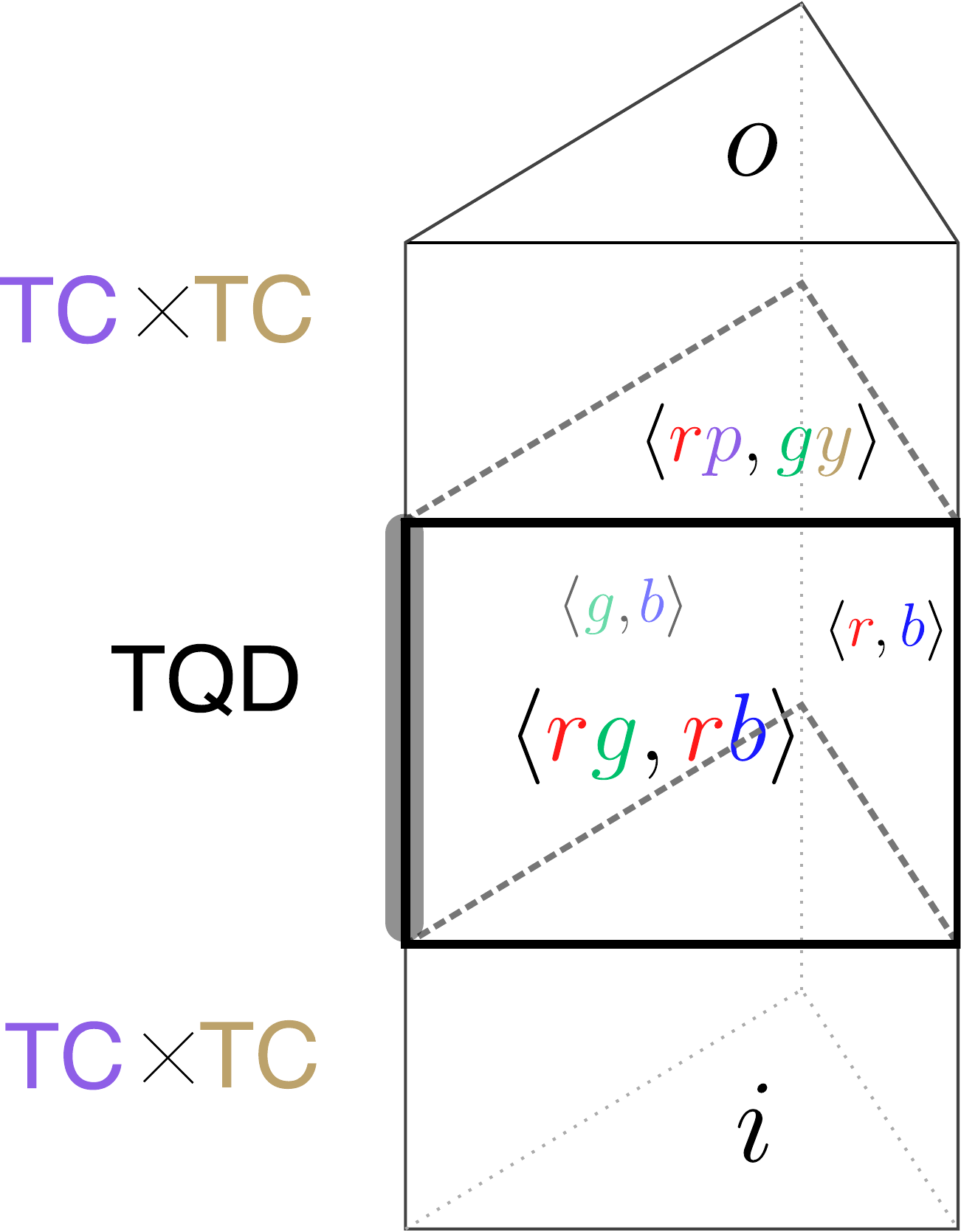}}}
\end{equation}
Here, the triangular prism in the middle supports TQD, with the regions below and above containing two copies of the toric code phase each.  The ``$+ \omega$'' corner between $\langle rg,rb \rangle$ and $\langle g,b \rangle$ boundaries is shown highlighted in dark grey. 
The surface codes are placed on a triangle with the ``fold'' boundary conditions (the ``fold'' corresponds to the $\langle py\rangle$ boundary placed on the front face), equivalent to a single copy of a surface code folded into a triangle. This is analogous to the color code triangle~\cite{Kubica2015unfolding}. The boundary conditions are otherwise chosen to be consistent with both the membranes that can pass into the TQD and condense on the boundaries there, and the figures below.

There is a single generating input, and output, cohomology class, which we label $I$, and $O$, respectively:
\begin{equation}
\begin{gathered}
I=\raisebox{-0.3\height}{\includegraphics[width = 0.17\columnwidth]{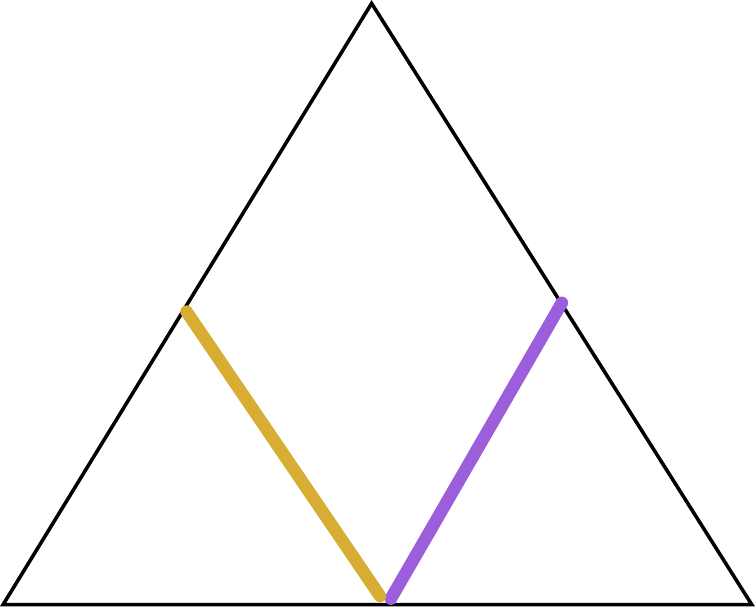}}\,,\quad O=\raisebox{-0.3\height}{ \includegraphics[width = 0.17\columnwidth]{fig/loopsoup_xsmeasure_input}}\, .\\
\end{gathered}
\end{equation}
There are two generating bulk cohomology classes
\begin{equation}
\begin{gathered}
B_0=\raisebox{-0.3\height}{\includegraphics[width = 0.15\columnwidth]{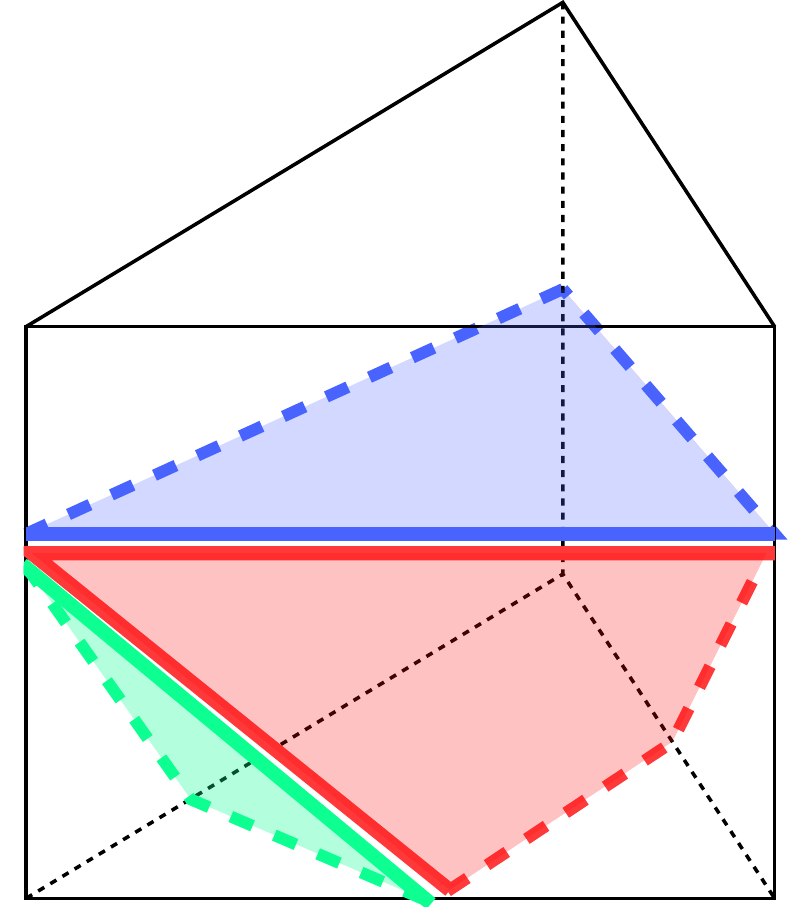}}\,,\quad B_1=\raisebox{-0.3\height}{ \includegraphics[width = 0.15\columnwidth]{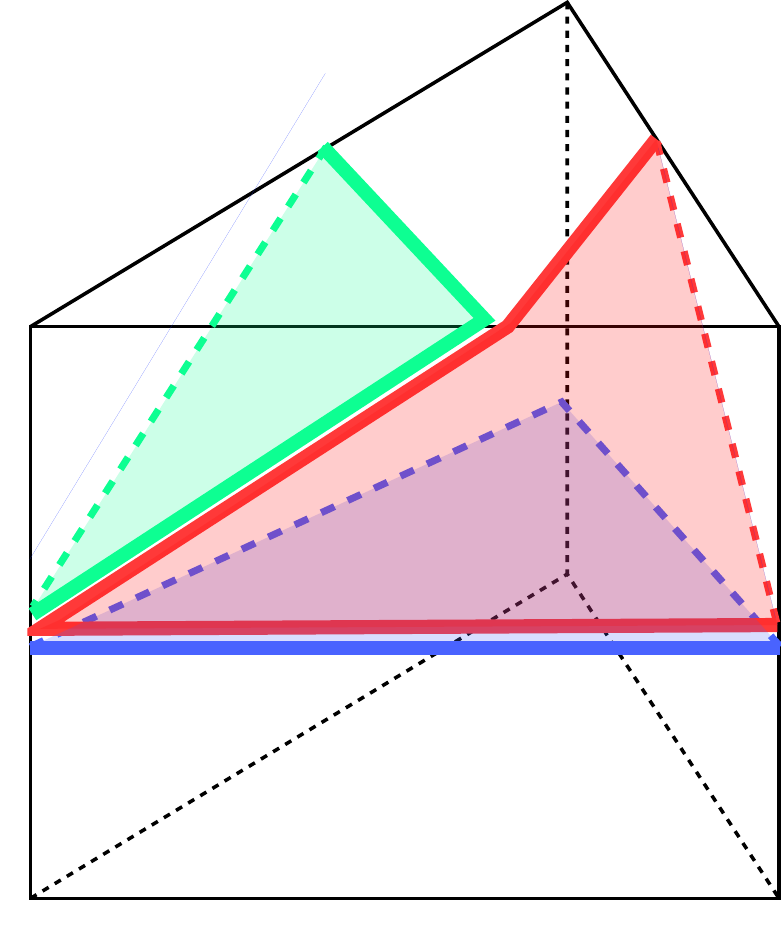}}\, .\\
\end{gathered}
\end{equation}
The restriction maps from the bulk cohomology classes to the input and output state cohomology classes are
\begin{equation}
\label{eq:loopsoup_xsmeasure_restriction}
\mathcal I =
\begin{pmatrix}
& I\\
B_0 & 1\\
B_1 & 0
\end{pmatrix}
\, ,\quad
\mathcal O =
\begin{pmatrix}
& O\\
B_0 & 0\\
B_1 & 1
\end{pmatrix}
\, .
\end{equation}
There is a single nontrivial charge worldline cohomology class generator in the bulk 
\begin{equation}
C_b = \raisebox{-0.3\height}{\includegraphics[width = 0.15\columnwidth]{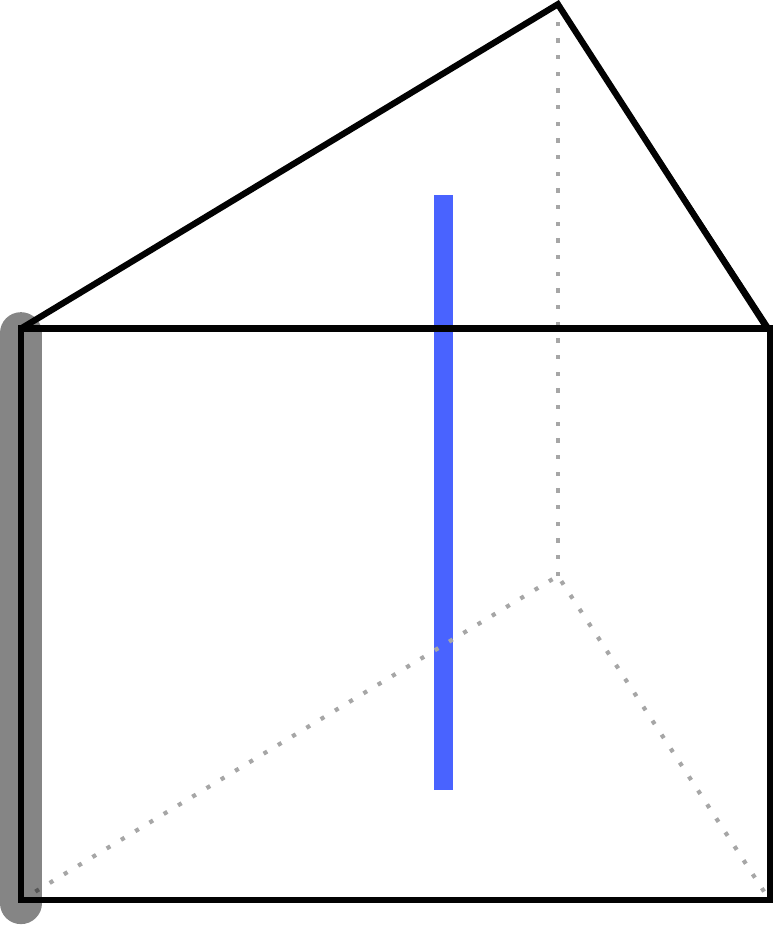}} \, .
\end{equation}
The weights of the bulk cohomology classes are
\begin{equation}
Z^{\epsilon_b}(\gamma)= \omega^{ \gamma_0} \overline{\omega} ^{\gamma_1}(-1)^{\epsilon_b\cdot (\gamma_0+\gamma_1)}\, .
\end{equation}
The factor $\omega^{ \gamma_0}$ ($\overline{\omega} ^{\gamma_1}$)  appears due to the termination of the membranes corresponding to $B_0$ ($B_1$) at the `$+ \omega$' corner between $\langle rg, rb \rangle$ and $\langle g,b \rangle$ boundaries (which is highlighted in dark grey in the figure), following the discussion in Subsec.~\ref{subsub:corners}. We now evaluate the sum in Eq.~\eqref{eq:loopsoup_general_logic}. 
For every $\alpha$ and $\beta$, we sum over only one bulk cohomology class, $\gamma_0=\alpha, \gamma_1=\beta$.
Hence, we obtain the matrix elements 
\begin{equation}
\bra{\beta}F^{\epsilon_b}\ket{\alpha} =  \omega^{\alpha}\overline\omega^{\beta} (-1)^{\epsilon_b\cdot (\alpha+\beta)}\;,
\end{equation}
or in matrix notation
\begin{equation}
F^{0} = \begin{pmatrix}1&\omega\\\overline\omega&1\end{pmatrix}\;,\qquad
F^{1} = \begin{pmatrix}1&-\omega\\-\overline\omega&1\end{pmatrix}
\end{equation}
For $\epsilon_b = 0,1$ these are precisely the projection operators onto the eigenstates of the $\overline{XS}$ operator, which are the $T^\dagger$-magic states with additional relative phase $(-1)^\epsilon_b$. Therefore, we colloquially call this action the ``logical  $\overline{XS}$ measurement''.

\subsubsection{$\overline{T}$ unitary gate}
The final example we present in this section corresponds to a global topology that implements the $\overline{T}$ gate.  This protocol is closely related to the $2+1$-dimensional non-Clifford gate in Ref.~\cite{bombin2018}.
\begin{equation}
\label{eq:loopsoup_t_protocol}
\vcenter{\hbox{\includegraphics[width = 0.4\columnwidth]{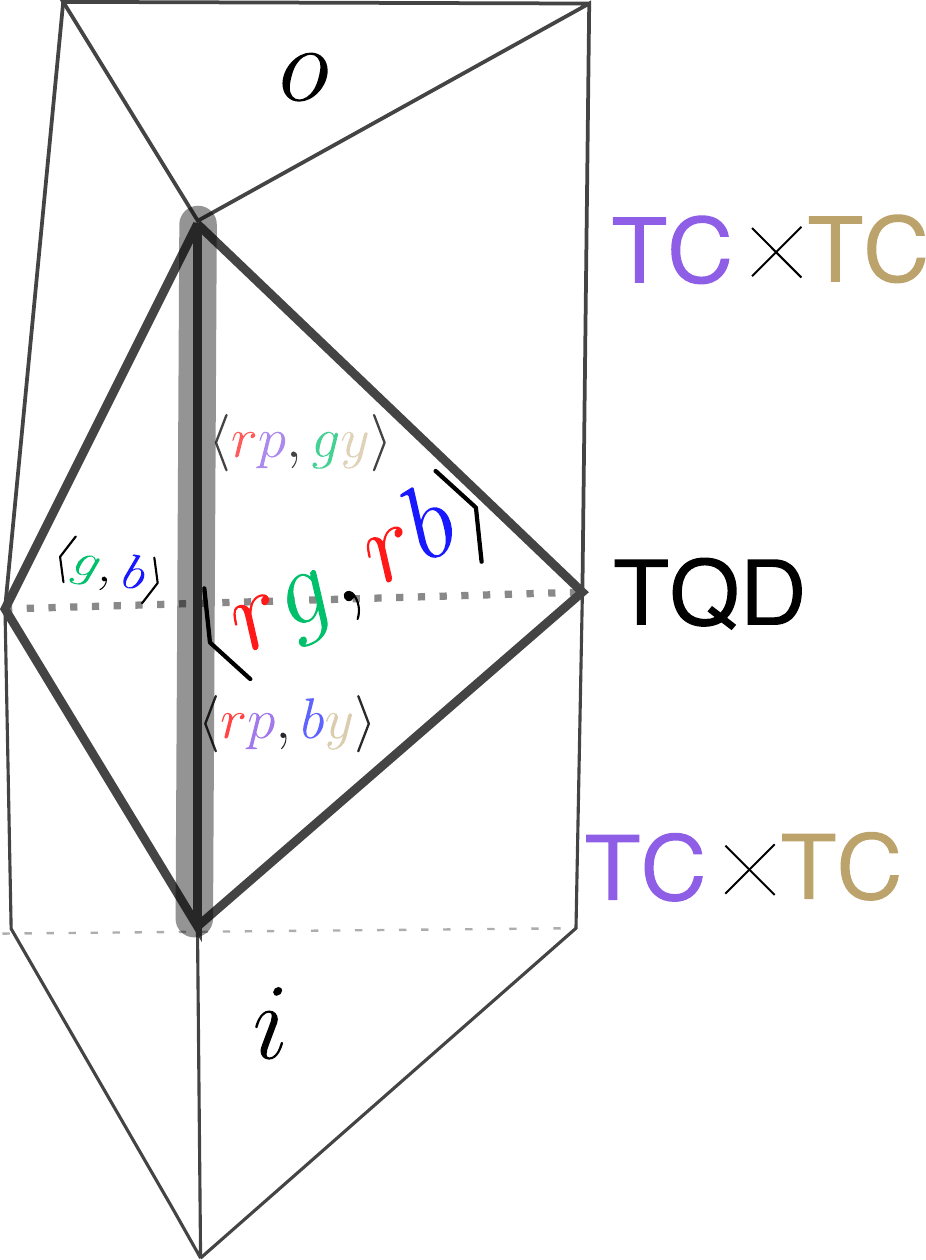}}}
\end{equation}
The tetrahedron in the middle contains the TQD phase, which is connected to two double-toric code phases across two of its triangular faces.  The `$+ \omega$' corner between $\langle rg,rb \rangle$ and $\langle g,b \rangle$ boundaries is shown highlighted in dark color. 
The toric codes have color-code-like boundary conditions, consistent with the membranes that pass into the TQD and condense on boundaries there. 
There is a single generating input, and output, cohomology class which we denote by $I$, and $O$, respectively.
There is also a single generating bulk cohomology class
\begin{equation}
B = \raisebox{-0.35\height}{\includegraphics[width = 0.23\columnwidth]{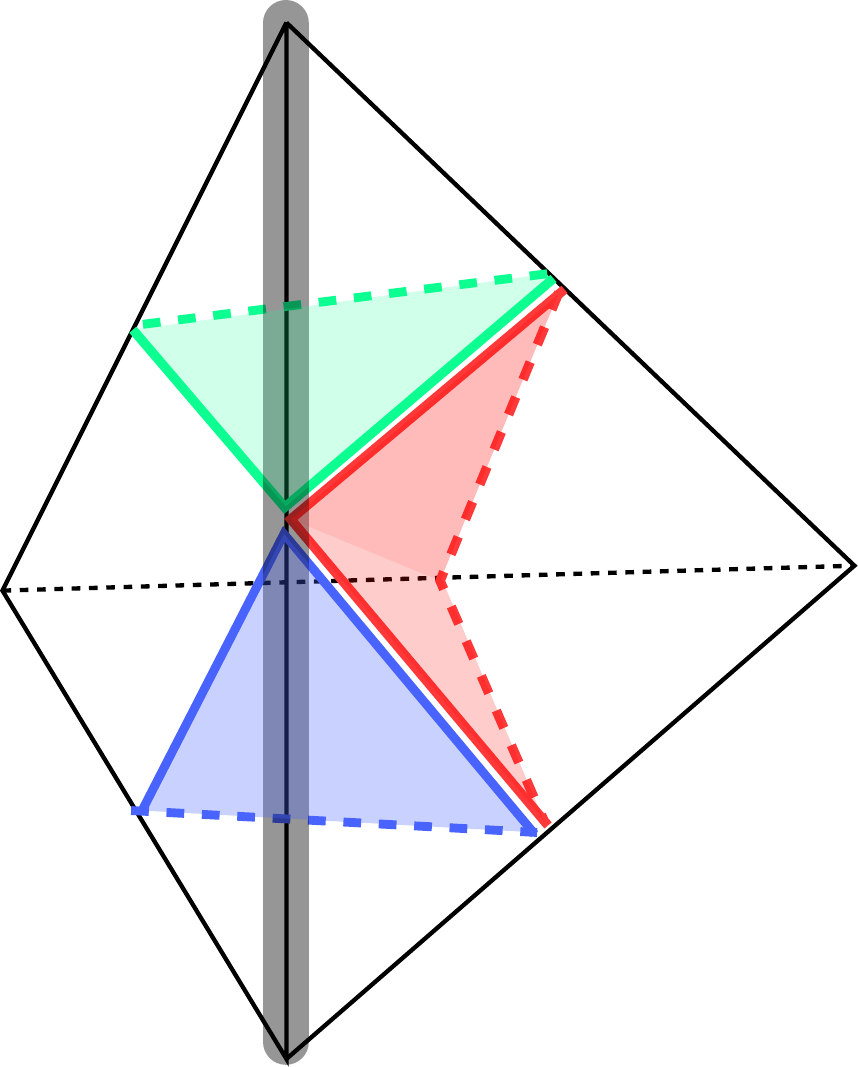}} \, .
\end{equation}
The restriction of the bulk cohomology class to the state boundary is simply
\begin{align}
\label{eq:loopsoup_t_restriction}
\mathcal I =
\begin{pmatrix}
 & I\\
B & 1
\end{pmatrix}
\, ,&&
\mathcal O =
\begin{pmatrix}
 & O\\
B & 1
\end{pmatrix}
\, .
\end{align}
There are no cohomologically nontrivial charge-worldlines. 
The weights of the bulk cohomology classes are determined by whether there is a membrane termination at the `$+ \omega$' corner between $\langle rg, rb \rangle$ and $\langle g,b \rangle$ boundaries (highlighted in dark grey in the figure):
\begin{equation}
Z(\gamma)= \omega^\gamma\, .
\end{equation}
We now evaluate the sum in Eq.~\eqref{eq:loopsoup_general_logic}.
If $\alpha\neq\beta$, due to Eq.~\eqref{eq:loopsoup_t_restriction} the sum is over the empty set and thus is zero.
If $\alpha=\beta$, the sum is over a single bulk cohomology class, $\gamma=\alpha=\beta$.
Finally, we obtain the matrix elements
\begin{equation}
\bra{\beta}F\ket{\alpha} = \delta_{\alpha,\beta}\omega^\alpha\;,
\end{equation}
which can be represented as the matrix
\begin{equation}
F = \begin{pmatrix}1&0\\0&\omega\end{pmatrix}\, .
\end{equation}
Hence, the logical operation is the unitary $\overline{T}$ gate.

A simpler variation of this configuration with the same topology is shown below. 
\begin{equation}
\label{eq:TGateSplit}
\vcenter{\hbox{\includegraphics[width = 0.25\columnwidth]{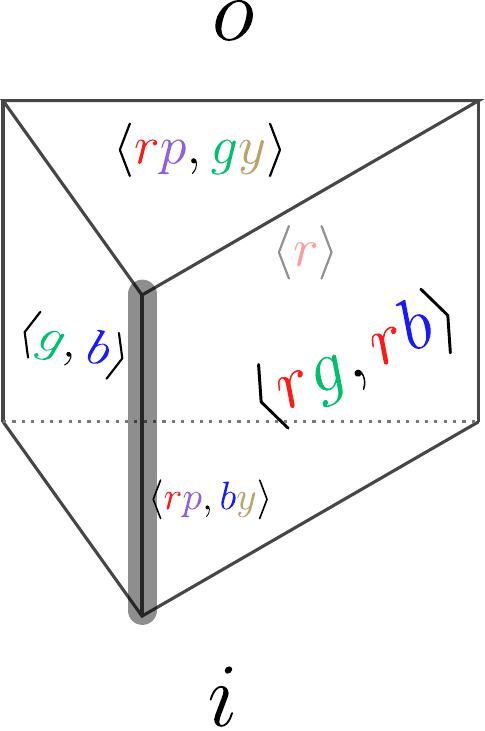}}}
\end{equation}
Here, we do not explicitly show the regions hosting copies of the toric code phase.  
This geometry is obtained by deforming the geometry shown in Eq.~\eqref{eq:loopsoup_t_protocol} and splitting the back edge into a pair of edges separated by the additional $\langle r \rangle$ boundary in the back, followed by shifting one to the bottom and one to the top.
The single generating bulk cohomology class $\gamma=1$ is shown below. 
\begin{equation}
B = \raisebox{-0.55\height}{\includegraphics[width = 0.22\columnwidth]{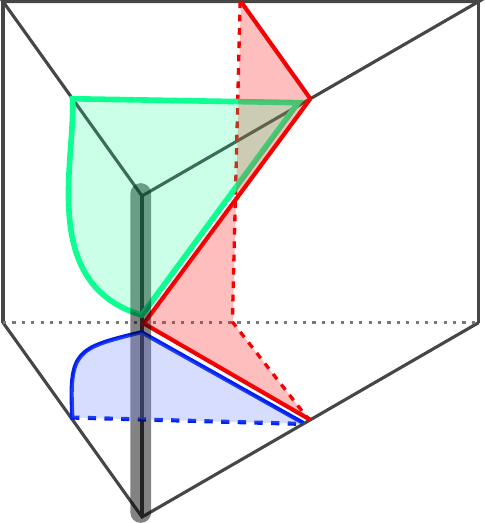}} 
\end{equation}
Following the analysis above, this results in a logical $\overline{T}$ gate. 

\section{Circuit implementations from the path integral}
\label{section:examples}

In this section, we discuss microscopic circuits for spacetime geometries that result in non-Clifford logical operations. We discuss a systematic approach to derive circuits realizing logical protocols on arbitrary lattices (or spacetime cellulations) that can be used to obtain various microscopic implementations of the same protocol. The approach naturally gives rise to specific circuits with a given ordering of unitary and measurement operations. 
The circuits we describe in this section can be brought into a form where they consist of repeated measurements of the Clifford and Pauli stabilizers (for example, such as the circuit considered in Section~\ref{sec:CZ_minimal}) by applying local circuit equivalences and grouping operations.
However, the same method can also be used to derive Floquet-code or MBQC-like circuits as opposed to ones that repeatedly measure the same stabilizers. For example, one can obtain a circuit for the TQD phase that looks like three copies of the honeycomb Floquet code coupled with additional $CCZ$ gates, see Ref.~\cite{Bauer2024}.

We first explain how to derive circuits from the path integral formalism using a specific example, which we refer to as the ``cup product'' implementation. We then briefly discuss other examples (for example, making the connection to the model in Sec.~\ref{sec:CZ_minimal}) and alternative ways to derive and understand these circuits. In addition, Appendix~\ref{sec:more_gates} shows more microscopic examples which are discussed in a language similar to Sec.~\ref{sec:CZ_minimal}.

\subsection{``Cup product'' microscopic lattice realization of the TQD} \label{subsec:cupproduct-tqd}

We first explain how to derive the circuit from the path integral description for the simplest protocol that has trivial logical action, namely the one that simply realizes the TQD phase without boundaries or domain walls (i.e. on a torus).  After that, we explain how to derive the circuits for nontrivial logical protocols corresponding to specific spacetime geometries.

\subsubsection{Preliminaries: circuits for the toric code phase}
 \label{app:pathint}

We start by briefly reviewing how to obtain the circuit for the toric code from the path integral framework (see Appendix~\ref{app:pathint} as well as Refs.~\cite{Bauer2023,Bauer2024a} for a more comprehensive discussion). 
Recall that the path integral for the toric code can be written as
\begin{align}
Z&=\sum_{ \text{confs. } \vec{c}}\ \ \prod_{\text{faces $f$}} \omega_f(\vec c)
\nonumber \\
&= \sum_{  \vec{c}} \prod_{f} \delta\Big({\sum_{\partial f} \vec{c} = 0 \operatorname{mod} 2}\Big)\, .
\label{eq:toric_code_path_integral}
\end{align}
That is, the weights of the path integral 
$w_f(\vec {c} ) = \delta\Big({\sum_{\partial f} \vec{c} = 0 \operatorname{mod} 2}\Big)$, 
enforce the constraint that the variables on the edges $\partial f$ around the boundary of each plaquette $f$ must have even total parity. 
Here, we describe the circuit on a square lattice obtained from the path integral on a cubic lattice.  The approach, however, can be used for arbitrary spacetime cellulations and for other topological phases~\cite{Bauer2024}.

We first turn a path integral with input and output state boundaries into a circuit of unitaries and projector operators (i.e. $+1$-postselected measurements).  Define the time direction to be along one of the axes of the cubic lattice. We call the edges that are parallel to the time direction \emph{timelike} and the other edges \emph{spacelike}. 
Then, applying the projectors $\frac12(1+XXXX)$ for every vertex $v$ and $\frac12(1+ZZZZ)$ for every horizontal face $f$ at each timestep realizes the same action between input and output states as the expression for the path integral, upon appropriate identification between $\zz_2$ variables and qubits (shown in Appendix~\ref{appendix:pathint}). 

Next, we replace the projectors with measurements. The $-1$ measurement outcome of an $XXXX$-operator at vertex $v$ at time $t$ corresponds to a factor of $(-1)^{\text{variable on a $t$-edge}}$ in the path integral, which is the same as inserting the charge worldline at the timelike edge ($t$-edge). The $-1$ measurement outcome of the $ZZZZ$-plaquette operator replaces the parity-even constraint with the parity-odd constraint on that plaquette in the path integral. This means that there is a flux defect worldline going through this plaquette in the path integral. 
In practice, for every vertex (plaquette) measurement we add an ancilla qubit at the associated vertex (plaquette).
At each time step, this ancilla is initialized in the $\ket+$ ($\ket0$) state, then coupled to the qubits participating in the measurements via CNOT gates, and finally measured in the Pauli-$X$ (Pauli-$Z$) basis. The $-1$ outcomes then record the locations of the worldlines of respective defects in spacetime.

\begin{figure}[t]
\includegraphics[width = 0.9\columnwidth]{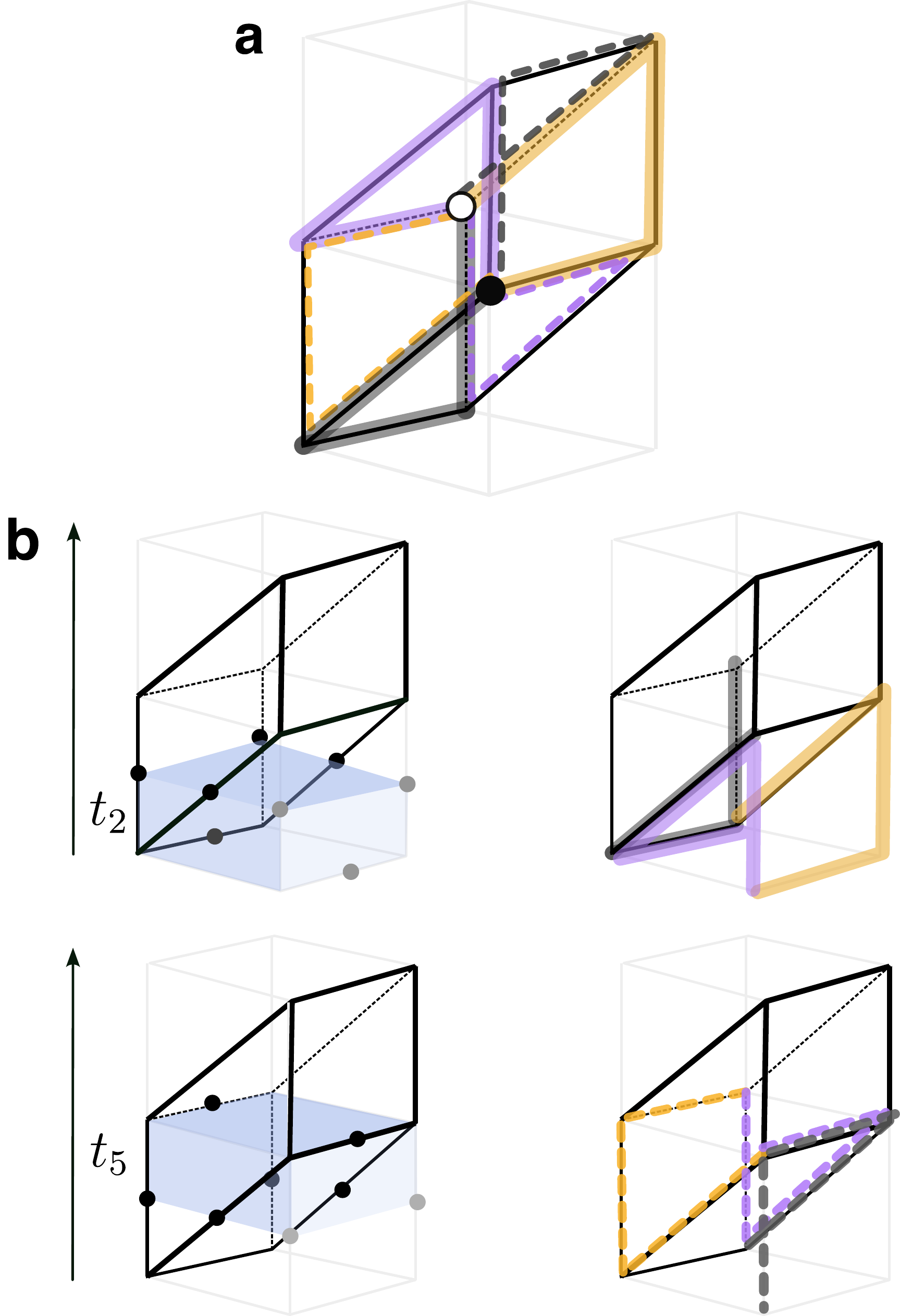}
\caption{ (a) Figure used to define the cup product on a slanted cube in spacetime that appears in the path integral of the TQD phase. The time arrow points upwards. (b) Two timesteps labeled $t_2$ and $t_5$
of the circuit during which we implement the cup-product weights of the TQD path integral. The spacetime regions on the left that are shaded in blue denote the degrees of freedom that correspond to the qubits that are ``live'' in the circuit at the corresponding timestep. On the right, the paths that are supported only on the ``live'' degrees of freedom at a given time step are shown. Some of the paths belong to neighboring slanted cubes. 
\label{fig:derivation-cupproduct}}
\end{figure}

Finally, we can also derive the circuit for the slanted version of the cubic cellulation (shown in Fig.~\ref{fig:derivation-cupproduct}). This circuit is obtained by splitting the timestep $t$ into an appropriate number of substeps. At each substep, the gates in the circuit described above can only implement those weights (here, the parity constraints) of the path integral that involve variables that are currently ``active'', that is, represented by qubits.
Finally, the ancilla qubits are measured, which completes the full period.  The circuit is periodic in time, and each period is subdivided into 5 steps. The step $t_0$ corresponds to single-qubit $m_{X}$ ($m_Z$) ancilla measurements followed by $\ket+$ ($\ket0$) initialization. The circuit is equivalent to the one shown in Fig.~\ref{fig:lattice-cupproduct} if one skips steps $t_2$ and $t_5$, assuming that there is one data qubit per edge and an ancilla qubit per vertex and per plaquette shown as white dots.

\subsubsection{Circuits implementing the TQD phase}
\label{square-TQD-sec}
A very natural microscopic representation of the type-III TQD path integral can be obtained from the action of the according Dijkgraaf-Witten model~\cite{Dijkgraaf_1990} in terms of \emph{cup products}. Informally speaking, the cup product expression precisely counts the number (modulo 2) of triple intersections of membranes of three different types by evaluating this number in each 3-cell and summing over the entire spacetime.

This microscopic representation is defined on a 3-cellulation of spacetime with three $\mathbb Z_2=\{0,1\}$ variables at each edge, which can host three copies of the toric code path integral in Eq.~\eqref{eq:toric_code_path_integral}, which we label in red, green, and blue. 
For the TQD phase, we add a non-trivial weight (or ``twist'') to the path integral using the cup product.
Let $a_r$, $a_g$, and $a_b$ denote the configurations or red, green, and blue toric code variables. For a given configuration, the expression 
\begin{equation}
\label{eq:tqd_pathintegral_weight}
    \prod_{\text{3-cells $v$}} (-1)^{(a_r\cup a_g\cup a_b)(v)}
\end{equation}
counts the total parity of triple intersections between all three colors (see Refs.~\cite{Dijkgraaf_1990,Chen2023higher}, and Appendix~A of Ref.~\cite{Bauer2024}). Then the path integral can be written as: 
\begin{equation} \label{eq:path-integral-cupproduct}
Z = \sum_{\substack{\text{closed membranes} \\ a_r, a_g,  a_b  }} \prod_{\text{$v$}} (-1)^{(a_r\cup a_g\cup a_b)(v)}\, .
\end{equation}
Similar to the toric code above, we again consider a skewed cubic spacetime lattice as shown in Fig.~\ref{fig:derivation-cupproduct}. Note that the skew is not strictly needed, but it makes the resulting circuit more uniform and better parallelized. Then, for a given cube with coordinate $v$, we can evaluate the expression $(a_r\cup a_g\cup a_b)(v)$ following Fig.~\ref{fig:derivation-cupproduct}~(a):
In each cube, there are 6 different paths of edges from the vertex marked in black to the vertex marked in white, shown as black, purple, and yellow, dashed or solid. For each path $p$, we number the edges $(0_p,1_p,2_p)$. Then, we have
\begin{equation} \label{eq:cupproduct-expression}
    (a_r\cup a_g\cup a_b)(v) = \sum_p a_r(0_p) a_g (1_p) a_b(2_p).
\end{equation}
Intuitively, formulas for the cup product can be obtained by shifting three sublattices slightly into a generic position with respect to another, such that the intersection between the membranes becomes unambiguous.
Note that in the example in Section~\ref{sec:CZ_minimal}, the three toric codes are already defined on three separate lattices, such that the intersections are unambiguous.
The vertices marked black and white in Fig.~\ref{fig:derivation-cupproduct} correspond to a choice of a \emph{branching structure} on the cube, that is, a choice of local coordinate system that is needed to define the shift direction.
For further details about microscopic formulas for cup products, we refer the reader to Appendix~A of Ref.~\cite{Bauer2024}, as well as Refs.~\cite{Steenrod1947, Chen2023higher}.

\begin{figure}[t]
\includegraphics[width = \columnwidth]{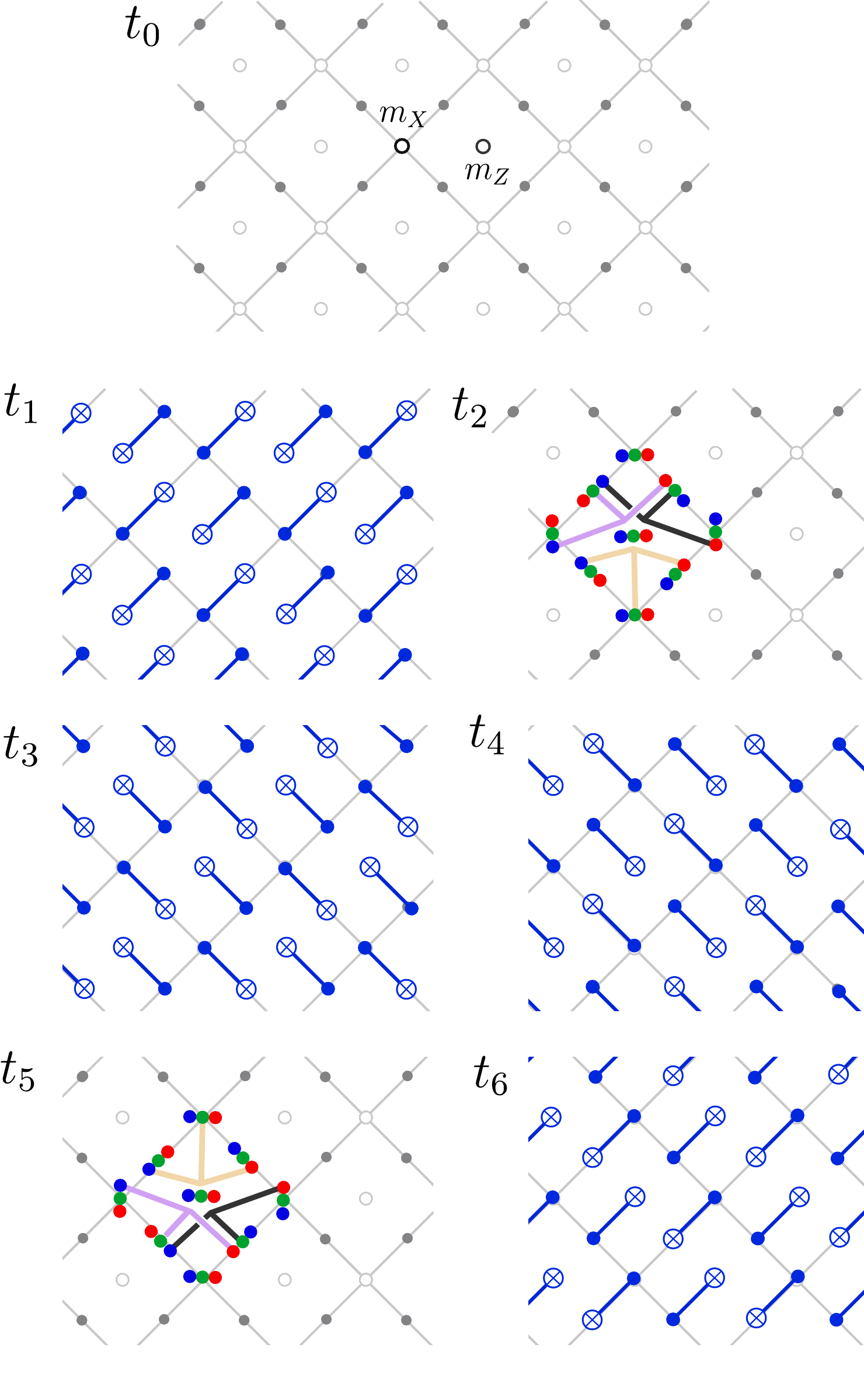}
\caption{ 7-step circuit that implements TQD phase on a torus while detecting the presence of charge and flux excitations. At time $t_0$ all ancilla qubit, shown as hollow circles, are measured in either single-qubit $X$ (labeled as $m_X$) or $Z$ basis (labeled as $m_Z$), and then reset. 
Yellow, purple and black-colored hyperedges at times $t_2$ and $t_5$ show the $CCZ$ gates; each hyperedge ends on a triple of qubits of three different colors on which the physical $CCZ$ gates act. The colors are chosen in correspondence with the paths in Fig.~\ref{fig:derivation-cupproduct}.  This action is repeated on each square plaquette in a translationally-invariant way. CNOT gates are shown at times $t_1, t_3, t_4$ and $t_6$ in blue and are depicted in the standard way. \label{fig:lattice-cupproduct}}
\end{figure}

Finally, we briefly explain the derivation of the circuit for the TQD phase (see Ref.~\cite{Bauer2024} for a more in-depth discussion). The circuit is shown in Fig.~\ref{fig:lattice-cupproduct} and is obtained by considering sections of the spacetime lattice at each timestep, as illustrated in Fig.~\ref{fig:derivation-cupproduct}~(b). The correspondence between the data and ancilla qubits and the path integral variables is analogous to that for the toric code example in the previous subsection. There are three data qubits per edge (labeled by red, green and blue colors) shown as a single dark dot at $t_0$, and three ancilla qubits per vertex and plaquette shown as white dots. If we exclude the steps $t_2$ and $t_5$, the resulting circuit implements precisely three decoupled toric code copies on the square lattice, where the step $t_0$ corresponds to single-qubit $m_X$ or $m_Z$ ancilla measurements. The additional steps $t_2$ and $t_5$ implement the nontrivial phases in the path integral of the twisted phase~\eqref{eq:path-integral-cupproduct}. 

Fig.~\ref{fig:derivation-cupproduct}~(b) illustrates how steps $t_2$ and $t_5$ realize Eq.~\eqref{eq:path-integral-cupproduct}. The left-hand side shows the time step and the 
``active'' path-integral variables that are represented by data and ancilla qubits at this point in time.
The right-hand side shows the paths entering the cup product in Eq.~\eqref{eq:cupproduct-expression} that are supported only on these
active variables.
Note that the paths shown do not all correspond to the marked cube, but some also originate from the cube below.
Each of these paths $p$ corresponds to a multiplicative term of the form $(-1)^{a_r(0_p)a_g(1_p)a_b(2_p) }$ in the path integral. The presence or absence of a membrane at a given edge corresponds to the qubit being in $\ket{0}$ or $\ket{1}$ state, respectively. Therefore, the phase $(-1)^{a_r(0_p)a_g(1_p)a_b(2_p) }$ can be implemented by a $CCZ_{0_p,1_p,2_p}$ gate acting between red, green and blue qubits at specified locations. Finally, each such $CCZ$ gate for a single square plaquette is shown in Fig.~\ref{fig:lattice-cupproduct} at corresponding timesteps. Each gate is colored in correspondence to the path it was obtained from (i.e. as the path in Fig.~\ref{fig:derivation-cupproduct}(b) at the corresponding time). The qubits that each of these gates couples are shown explicitly for clarity. Each square plaquette contains 6 paths for each period of circuit operation, three of which are realized at time step $t_2$ and the other three at time step $t_5$.

\subsection{Full circuits for logical protocols}

Having discussed the derivation of the circuit  realizing the TQD phase, we are now ready to discuss the derivation of circuits for the protocols that implement nontrivial logical operations. These are the circuits corresponding to the global topologies presented in Sec.~\ref{sec:global_topologies}. For this, we need to discuss how to implement the
boundaries of the TQD phase and its domain walls with the toric code.
These elements are then combined to realize a specific spacetime protocol. We explain this by considering an example of the $\overline{CZ}$ measurement protocol, which is an open boundary condition variant of the example in Sec.~\ref{sec:CZ_minimal}. 

We first discuss boundaries. For each copy of the toric code phase, there is only the rough ($\langle \rangle$) and the smooth ($\langle p \rangle$ or $\langle y \rangle$) boundary. 
There is also a``fold'' boundary that joins two copies of the toric code ($\langle py \rangle$). 
A rough boundary is realized by defining a continuous boundary line on the primal lattice and removing the qubits on the ``vacuum'' side, including the boundary itself. 
The circuit with a boundary is obtained by removing the measurement and unitary operations that formerly involved the removed qubits. A smooth boundary is realized similarly, except that the boundary is now defined on the dual lattice. For the $\langle py \rangle$ boundary, we start with a pair of smooth boundaries instead and add a $ZZ$ measurement that `forces' termination of a matching type of membrane on each boundary site whenever a membrane terminates there.  

For the TQD phase, the only  boundaries that appear in the protocols are of $\langle r,g \rangle$  and $\langle rg , rb\rangle$ type (or the analogous boundaries under color permutation). The $\langle r,g \rangle$  boundary is a smooth boundary on the red and green sublattices and rough boundary on the blue sublattice. The $\langle rg , rb\rangle$ boundary is realized by first considering a smooth boundary of each color and then adding a $ZZZ$ measurement on each boundary site that ``glues'' the termination lines of any pair of membranes.
In addition, the path integral includes a complex-phase weight for crossings of termination lines as shown in Eq.~\eqref{eq:eighthroot} of Appendix~\ref{app:boundaries-DWs-corners}.
In the microscopic model, this additional weight can be implemented via a cup product as
\begin{equation} \label{eq:boundary-path-integral}
\prod_{\text{$f$}} i^{( a_r\cup a_b)(f)}\, .
\end{equation}
Here, $f$ runs over all boundary faces, and $a_g$ and $a_b$ are the configurations of green and blue path-integral variables assigned to the boundary edges. The microscopic expression for this cup product
is similar to the one for the bulk in Fig.~\ref{fig:derivation-cupproduct}~(a):
we again consider the two different paths $p_1$, $p_2$ of edges inside the boundary face that go from one vertex to an opposite vertex, as shown in yellow and purple in the following picture for the boundary face shaded gray:
\begin{equation}
\includegraphics[width = 0.4\columnwidth]{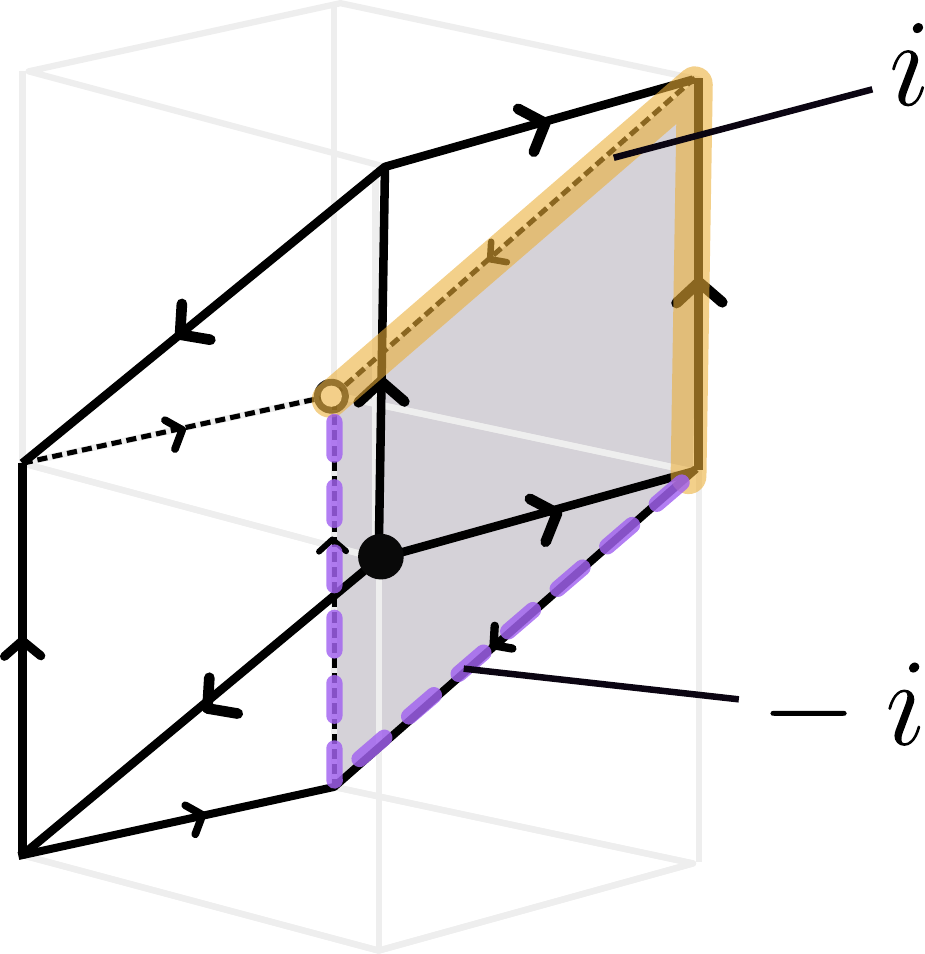}
\end{equation}
The cup product on $f$ is then given by
$$(a_r\cup a_b)(f) = a_r(0_{p1}) a_b(1_{p1}) - a_r(0_{p2}) a_b(1_{p2}),$$ 
that is, the purple path $p_2$ gets an extra factor of $-1$.\footnote{
Note that $-1$ signs are also present in the cup product formula on the cube in Eq.~\eqref{eq:cupproduct-expression}, but they do not matter there since $(-1)^{-a} = (-1)^a$. 
} 
This, the additional path integral weights are of the form $i^{a_r(0_{p1}) a_b(1_{p1})}$, and can be implemented in the circuit by adding physical $CS$ and $CS^\dagger$ gates at steps $t_2$ and $t_5$, acting on the green and blue qubits which represent the corresponding spacetime edges.

Next, we discuss the domain walls between the toric code copies and TQD phases that appear in the protocols. 
These are the $\langle rp, by \rangle$ domain wall and its color permutations.
Two different orientations of this domain wall have appeared in the global topologies above.  
One is timelike, meaning that it occurs at a fixed instant of time which corresponds to globally switching from one topological phase to the other. The second orientation is tilted, which corresponds to a domain wall that moves in space as time progresses. This happens, for example, in the $\overline{CCZ}$ and $\overline{T}$-gate protocols. 
We first address the timelike $\langle rp, by \rangle$ domain wall between a pair of toric codes and the TQD phase. 
In spacetime, this corresponds to the purple and yellow lattices of the toric codes becoming red and blue, which is simply a relabeling of colors. 
For this reason, in many examples we use the color assignment from the TQD phase to label the toric codes as well. 
For the third (green) color, the domain wall is the same as the rough boundary. To realize it in the circuit, we simply prepare the green data and plaquette ancilla qubits in the $\ket{0}$ state and the green vertex qubits in the $\ket {+}$ state at the end of the circuit realizing the toric code phases and continue to run the circuit for the TQD phase. Similarly, to realize such a domain wall in reverse, between the TQD phase and two copies of the toric code phase, we simply measure the green data qubits in the $Z$ basis and perform the usual measurements of the green ancilla qubits. 
Next, we discuss the realization of the tilted domain wall. To move such a domain wall from the toric codes phase to the TQD phase, we initialize a new set of green data qubits of the third color in state $\ket {0}$ in a thin slice, and implement the toric codes circuit on one side of the new domain wall and the TQD circuit on the other side. Similarly, to move such a domain wall from the TQD to the side of the toric codes phase side, we instead measure out green data qubits in $Z$ basis. 

We now discuss the corners. The majority of corners appearing in our protocols are not associated with adding any new weights in the path integral. Microscopically, such corners can be obtained by naively letting two types of microscopic boundary conditions meet at the corner.
An exception is the ``$\omega$'' boundary between the $\langle rg,rb\rangle$ and the $\langle g,b\rangle$ boundaries, which requires the addition of $\omega= e^{-i \pi/4}$ phase weights.
These $\omega$ weights are implemented in the circuit by additional $T$ gates. Apart from these weights, a straightforward blending between the two boundary conditions produces the correct circuit for this type of corner.

\begin{figure}[t]
\includegraphics[width = \columnwidth]{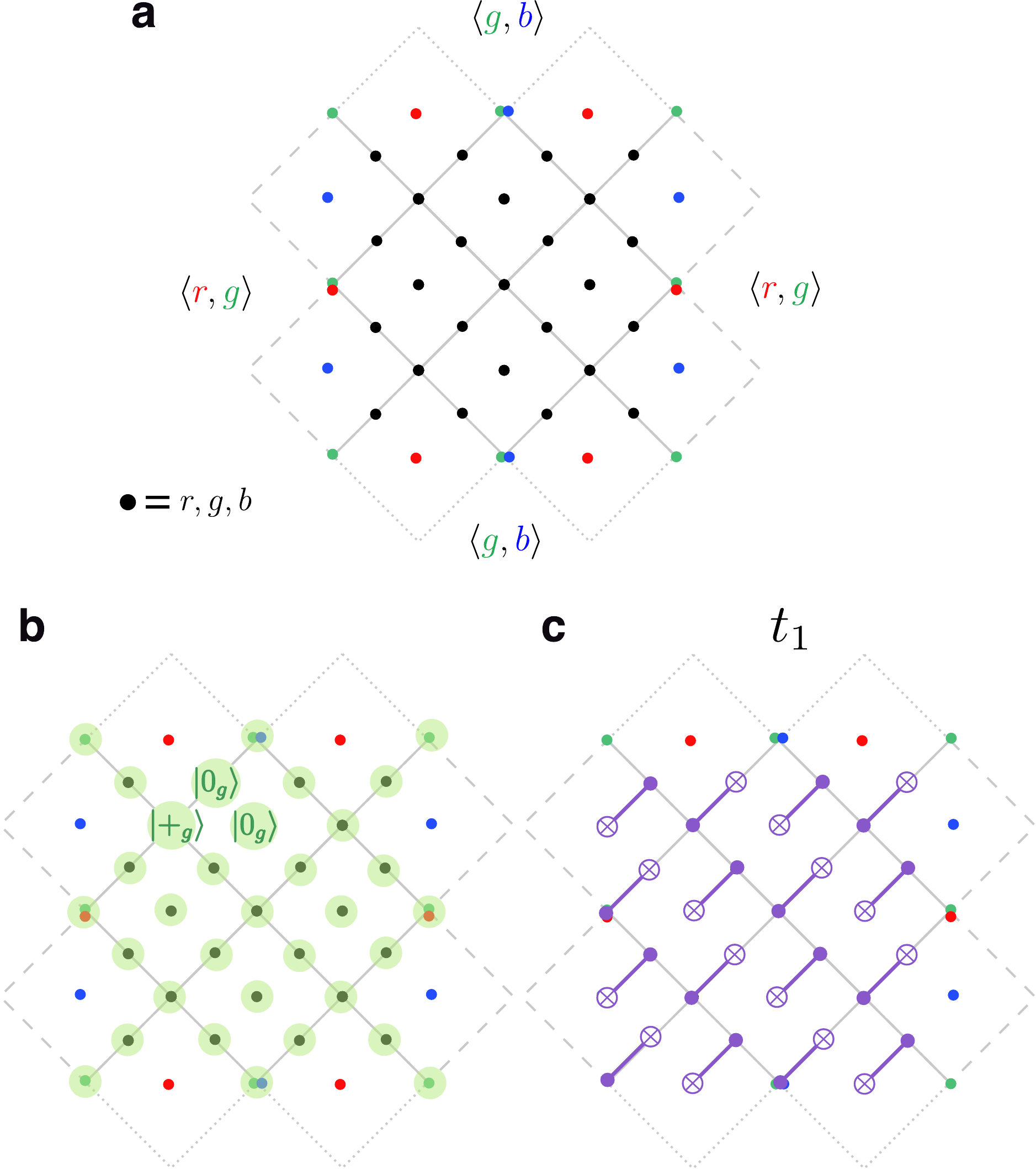}
\caption{ (a) Qubit layout for the logical $\overline{CZ}$ measurement on a pair of surface codes of distance 4, which matches the transient TQD phase and its boundary conditions shown in Fig.~\ref{eq:czmeasure_global_protocol}.
(b) The locations of green qubits that are initialized or measured out to realize the domain wall between the TQD and the toric codes phases. (c) An example demonstrating how the circuit shown in Fig.~\ref{fig:lattice-cupproduct} without boundaries is altered near the boundaries.  \label{fig:protocol-cupproduct}}
\end{figure}

This completes our summary of how to derive circuit implementations for all types of spacetime defects that appear in the global geometries presented in Sec.~\ref{sec:global_topologies}. When appropriately combined together, this produces circuits implementing the associated logical operations by design. The measurement outcomes of vertex ancilla qubits are stored while those of plaquette ancillas are used for the operation of the just-in-time decoder, discussed in Sec.~\ref{Sec:JustInTime}. 

Finally, we present an example of the circuit for the logical $\overline{CZ}$ measurement introduced in Subsec.~\ref{sec:referenced-example}, where the bulk of the TQD phase is implemented as in Fig.~\ref{fig:protocol-cupproduct}. The entire lattice is shown in Fig.~\ref{fig:protocol-cupproduct}~(a), where every black dot represents a triple of a $r$, $g$ and a $b$ qubit, and every colored dot represents an individual qubit of that color. First, the syndrome extraction circuit is run for the $r$ and $b$ copies of the toric codes phase (we have relabeled $p$ to $r$ and $y$ to $b$) with open boundary conditions, which implement rotated surface code patches with distance 4. This circuit is analogous to the one in Fig.~\ref{fig:lattice-cupproduct} with steps $t_2$ and $t_5$ omitted, except that the application of CNOT gates is appropriately truncated at the boundary as shown for one timestep in Fig.~\ref{fig:protocol-cupproduct}~(c). Then, the domain wall from two copies of the toric code to the TQD is realized by initializing the green qubits as shown in Fig.~\ref{fig:protocol-cupproduct}~(b) (all locations involved in the initialization are highlighted in green). Next, we run the circuit realizing the TQD phase with boundary conditions shown in panel (a) according to the appropriately truncated circuit from Fig.~\ref{fig:lattice-cupproduct} for a number of rounds. Finally, we switch from the TQD phase to the pair of surface codes by measuring out the green vertex qubits in the $X$ basis and the green data and plaquette qubits in the $Z$ basis, similar to Fig.~\ref{fig:protocol-cupproduct}~(b). Overall, this performs a logical $\overline{CZ}$ measurement between the logical qubits of the red and blue copy of the surface code.

\subsection{Other microscopic circuit examples}

\subsubsection{Transversal counting of triple intersections}

Defining the three toric code path integrals on different 3-cellulations, it is possible to obtain a circuit where $CCZ$ operations act on triples of qubits at the same location. For example,  three such superimposed 3-cellulations are described in Ref.~\cite{Vasmer2019three}. In this case, while the three cellulations are different, the centers of the edges of each cellulation coincide; in the end, we have three qubits per site that are labeled three different colors.  
In this case, there is one triple intersection weight at every edge center $c$ which takes the simple form
\begin{equation}
(-1)^{a_r(c) a_g(c) a_b(c)}\, .
\end{equation}
Using the approach described in this section, one can turn this path integral into a circuit results in one where the $CCZ$ gates act transversally. The associated stabilizer group appears in Appendix~\ref{sec:3d-2d-toric-codes}. Additionally, the protocol in Ref.~\cite{Brown2020universal} where the original $\overline{CCZ}$ protocol was introduced (where the TQD phase is featured only implicitly) can be viewed as a circuit realizing this path integral. 

\subsubsection{Microscopic lattice from Sec.~\ref{sec:CZ_minimal} and its variations} 

It is possible to also relate the protocol presented in Sec.~\ref{sec:CZ_minimal} to the TQD path integral.\footnote{We remark that a more straightforward to obtain the lattice models shown in Fig.~\ref{fig:ToricStabilizers} and Fig.~\ref{fig:lattice-examples-1} is by gauging an SPT with the type-III 3-cocycle~\cite{Yoshida2016topological} or by directly gauging a global $\mathbb Z_2$ symmetry corresponding to a logical $CZ$ gate of a pair of toric codes, which we describe in Appendix~\ref{sec:gaugingungauging}.} 
Namely, the circuit of alternating $+1$-post-selected plaquette and vertex measurements yields a TQD path integral on three superimposed spacetime lattices.
The plaquette stabilizers implement the parity constraints at the faces of the spacetime lattices, whereas the vertex terms implement the triple-intersection weights at the volumes.
These volumes are diamond-shaped, with a ``bottom'' and ``top'' half that both consist of the triangles surrounding the according vertex in the spatial triangulation.
If we replace projectors with measurements, these measurements in addition detect charge and flux defects. Re-interpreting this model through the lens of the path integral framework is useful, because the path integral approach makes the fault tolerance of the protocol much more apparent  and gives systematic ways to construct full logical circuits, which requires deriving boundaries, domain walls, and corners appearing in them. This can be used to derive circuits for all the logical protocols in Sec.~\ref{section:loopsum} where the TQD phase is implemented via the model from Sec.~\ref{sec:CZ_minimal}.

We now turn to a similar, but simpler, version of the Clifford stabilizer model from Sec.~\ref{sec:CZ_minimal}.  In Fig.~\ref{fig:lattice-examples-1}~(a) we show its stabilizers on the same lattice; the only difference to the lattice in Fig.~\ref{fig:ToricStabilizers} is that the Clifford circuit attached to the vertex stabilizers now consists of two $CZ$ gates only. In Fig.~\ref{fig:lattice-examples-1}~(b) we show why the new vertex stabilizer counts the same loop statistics. For this, we make use of the dual lattice, which is depicted along with the original vertex stabilizer in Fig.~\ref{fig:dual_stab}. 
By slightly shifting the vertices of the red and green dual lattice with respect to one another, we see that there are precisely two locations with possible red-green intersections inside a given blue loop. For each location, a $CZ$ gate detects the presence of such an intersection.
This model has the same properties as the one in Sec.~\ref{sec:CZ_minimal}. However, the stabilizers have weights 3 and 8 as opposed to 3 and 12, respectively, which could be beneficial for practical implementations. 

\begin{figure}[t]
\includegraphics[width = 0.9\columnwidth]{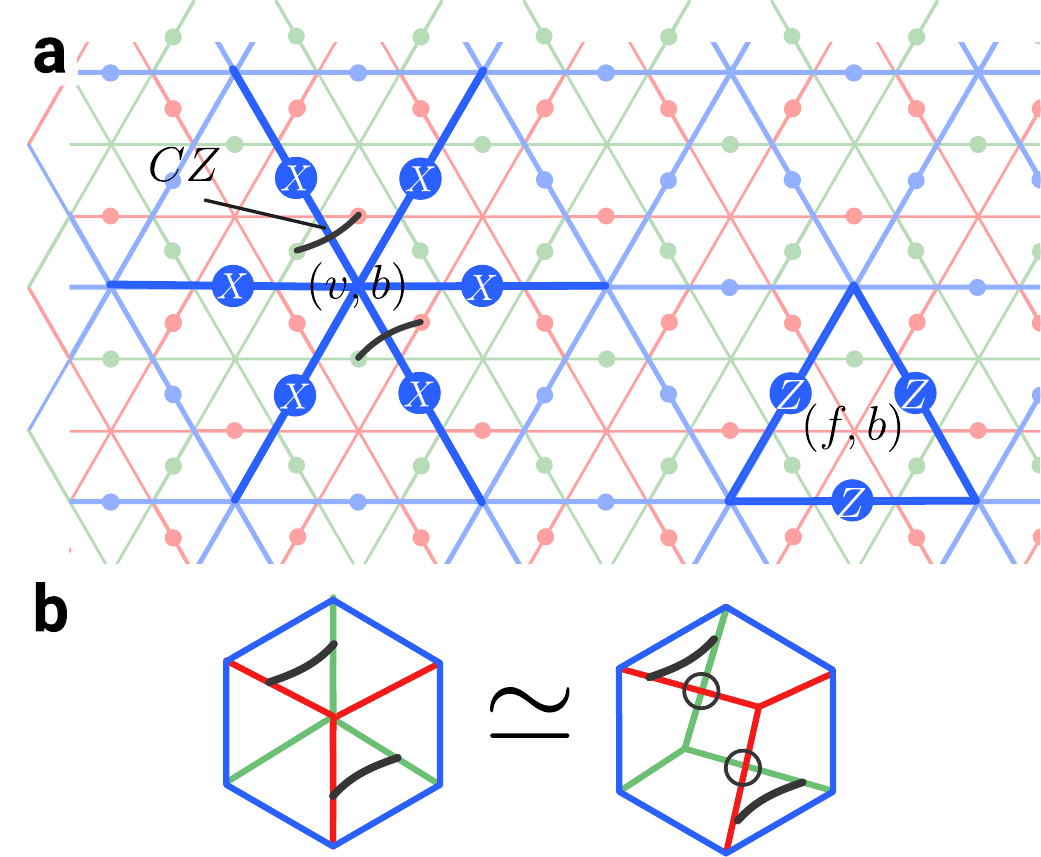}
\caption{ (a) A version of Clifford vertex (left) and Pauli plaquette (right) stabilizers on the blue sublattice of the twisted quantum double on three triangular lattices. 
The stabilizers on the red and green sublattices are defined analogously.  (b) Detection of red-blue loop crossings inside a blue loop on a dual lattice and $CZ$ operators in the associated vertex stabilizer. The possible intersections of red and green loops are resolved on the right and are circled.   \label{fig:lattice-examples-1}
}
\end{figure}

\section{The non-Abelian phase and 3D stabilizer codes}
\label{sec:3d-2d}

In this section, we discuss the connection between the appearance of the two-dimensional non-Abelian TQD model in our protocols and 3D stabilizer codes with finite-depth non-Clifford gates~\cite{Bombin2007topological, Bombin_2015, bombin2018transversal, Vasmer2019three, Brown2024}. 3D codes that admit a fault-tolerant non-Clifford gate can give rise to classes of 2+1D protocols for logical non-Clifford gates~\cite{bombin2018,Brown2020universal}.
In this section, we exemplify this procedure and relation to the TQD phase starting from the 3D color code.\footnote{In fact, the resulting \emph{color path integrals} are also described by a cohomology theory that is equivalent to, but not the same, as cellular (co)homology~\cite{bauer2025planar}.}
The resulting protocols have the appeal that they are based on physical $T$ rotations and local Pauli measurements may be well suited for many hardware architectures.
Specifically, we show how a state in the TQD phase can be prepared starting from a thin slab of 3D code, highlighting the connection between 3D codes and 2D non-Abelian phases.
In Appendix~\ref{sec:3d-2d-toric-codes}, we present an analogous derivation based on three copies of the three-dimensional toric code instead, which results in a different lattice realization of the same TQD model.

\subsection{2+1D circuits from 3+0D measurement-based protocols}
Refs.~\cite{bombin2018,Brown2020universal} introduced two-dimensional protocols that fault-tolerantly realize non-Clifford gates via code switching between a thin 3D code, that admits a transversal non-Clifford gate, and a 2D code. 
To ensure the resulting protocols are fault tolerant, the \textit{dimensional jumping} procedure~\cite{Raussendorf2005, bombin2016dimensionaljumpquantumerror} that transforms between the two codes must be performed repeatedly over $O(d)$ time steps, where $d$ is the spatial code distance of the 2D code. In this section, and in Appendix~\ref{sec:3d-2d-toric-codes}, we show that the non-Abelian TQD model is in fact realized as an intermediate state during both proposals. This points to an equivalence between these examples and the framework we have developed to perform non-Clifford gates using the TQD as an intermediate phase.

Topological stabilizer codes in three dimensions can be prepared in constant time using single-shot error correction~\cite{Bombin2015single, bombin2016dimensionaljumpquantumerror}.
This property can be used to perform single-shot magic-state preparation on the two-dimensional boundaries of the three-dimensional code. 
For example, we summarize below the procedure to prepare a $CCZ$ magic state in constant time using three copies of the three-dimensional toric code:
\begin{enumerate}
    \item Prepare the $\ket{+++}$ logical state of three copies of 3D toric codes in a cube with open boundary conditions such that they admit a transversal $\overline{CCZ}$ gate~\cite{Vasmer2019three}. For this, first initialize the qubits on the lattice that hosts three copies of the 3D toric code in the $\ket{+}^{n}$ state.
    \item   Measure the $Z$ (face) stabilizers of the 3D toric code copies and correct for errors. 
    This is done fault tolerantly with only a single round of measurements using single-shot error correction~\cite{Bombin2015single}. This results in a logical $\ket{+++}$ state.
    \item Apply the constant-depth transversal $CCZ$ gate between the three 3D toric code copies. This prepares a logical magic state $CCZ\ket{+++}$ of the total 3D code.
    \item For each of the copies of the 3D toric code, measure all qubits in the $X$ basis, except those on designated boundaries such that a planar surface code remains. We can use these measurements to reliably correct $Z$ errors in the resulting two-dimensional code such that the residual error is small and local~\cite{bombin2016dimensionaljumpquantumerror}.
\end{enumerate}

The constant depth execution of the protocol means that the example is readily expressed as a post-selected three-dimensional tensor network.
Furthermore, we can interpret the evaluation of this tensor network as summing over all ``qubit configurations'' on internal edges that fulfill the constraints imposed by the local circuit elements and adding weights, depending on the local configuration. It can therefore be interpreted as a \emph{path integral}, in the sense of Eq.~\eqref{eq:pathintegral_schematic}.
The fact that we obtain a scalable family of such networks with macroscopic fault distance renders this tensor network \emph{topological}.\footnote{More specifically, any error that is only supported on a 3D ball-like region in spacetime does not affect the logical action of the post-selected network~\cite{Bauer2024}.}
Using local equivalences rules of the tensors involved, we can interpret the tensor network either as a 3+0D protocol or a 2+1D protocol.
This can be understood as trading one spatial dimension in the 3D code involved in the above protocol for a time direction.
Moreover, the tensor network can be seen as representing a 2+1D topological phase that is equivalent to the phase of the non-Abelian TQD.

In the remainder of this section, we explore the equivalence between 3D codes and the non-Abelian TQD further, invoking a 2D perspective.
We show how a thin 3D code that admits a transversal non-Clifford gate in the bulk can be used to prepare a non-Abelian state on its 2D boundary.
These types of states appear as intermediate states in the protocols from Refs.~\cite{bombin2018, Brown2020universal}, hence explaining that these protocols, in fact, involve a non-Abelian phase.

\subsection{Overview:  the non-Abelian TQD from a 3D Abelian code}

Here, we discuss the relationship between the non-Abelian TQD model in two dimensions and the 3D color code. 
The 3D color code, like three copies of the 3D toric code,\footnote{The 3D color code is equivalent to three copies of the 3D toric code under local unitary operations and addition of disentangled auxiliary qubits~\cite{Kubica2015unfolding}.} can be viewed as realizing a $\zz_2^{\times 3} = \zz_2 \times \zz_2 \times \zz_2$ topological gauge theory.
The logical operators in these codes are related to the nontrivial winding of fluxes and charges of the underlying gauge theory. Symmetries among these topological defects can be implemented by locality-preserving unitaries and correspond to fault-tolerant logical gates of the associated codes. 
The $\zz_2^{\times 3}$ gauge theory in three spatial dimensions admits a global symmetry that implements a logical non-Clifford gate.
For the 3D color code, this symmetry is realized by a transversal $T^{(\dagger)}$ gate and in three copies of toric or surface code it is realized by a transversal $CCZ$ gate acting between the individual codes~\cite{Bombin2007topological,Vasmer2019three}.

To explain the relationship between a three-dimensional slab and a two-dimensional model, we consider a thin slab whose two-dimensional section is a torus, i.e. a manifold with $\mathbb{T}^2\times [0,1]$ topology. 
The slab is shown in Fig.\,\ref{fig:thin3D-TQFT}, and it has two boundaries which we label \textit{top} and \textit{bottom}.
We start with smooth boundary conditions on both boundaries. 
This results in a model that is equivalent to three copies of the 2D toric code. 
Applying the non-Clifford symmetry to the lower half of the state, including the bottom boundary, creates a two-dimensional domain wall defect between the two boundaries.
At the same time, it ``twists'' boundary conditions of the bottom boundary.
As a final step we enforce smooth boundary conditions on the bottom boundary by measuring out the bottom qubits in a suitable single-qubit basis.
This condenses bare fluxes below the domain wall defect.
This creates a model that, when regarded as a two-dimensional model, is equivalent to a non-Abelian TQD.

We now turn the above procedure into an explicit microscopic protocol that prepares a non-Abelian code associated with the TQD. The same procedure occurs in the bulk during the 2+1D protocol obtained from a 3+0D circuit, as we discussed above. The example presented in this section is based on the 3D color code. We present another example based on three copies of the 3D toric code in Appendix~\ref{sec:3d-2d-toric-codes}.

\begin{figure}[t]
    \centering
    \includegraphics[width=\columnwidth]{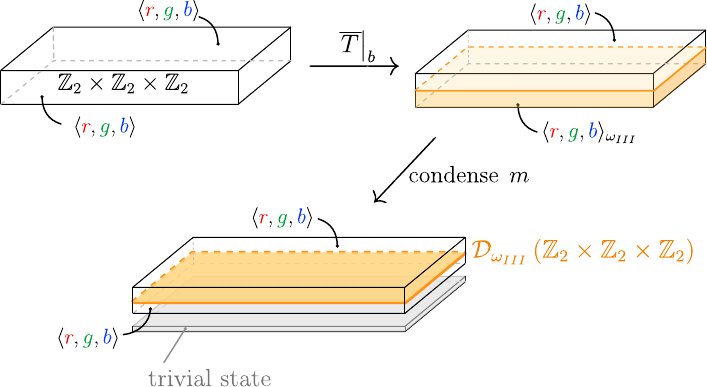}
    \caption{
    The topological quantum field theory picture for the initialization of a non-Abelian TQD from a 3+1-dimensional topological $\zz_2^{\times 3}$ gauge theory defined on a slab with appropriate boundary conditions.
    The outcome is shown at the bottom, displaying a thin slab with the defect configuration that leads to a quasi-2D state equivalent to the TQD model. The top and bottom boundaries are smooth, i.e. flux membranes of all three colors can condense on them (labeled $\langle r,g,b\rangle$), and there is a domain wall defect associated to a non-Clifford symmetry of the bulk indicated in yellow.
    In the limit where the thickness of the slab becomes infinitesimally small, this becomes equivalent to a two-dimensional state of the non-Abelian TQD.
    Proceeding from the top left to the bottom we show how to create this configuration by transversal gates and measurements:
    we start with a thin slab of a 3D code with two opposing smooth boundaries.
    Applying the non-Clifford symmetry $\overline{T}\big|_b$ of the model to the bottom half of the slab creates a domain wall and also ``twists'' the bottom boundary condition, which now becomes $\langle r,g,b\rangle_{\omega_{III}}$.
    Finally, we enforce smooth boundary condition again on the bottom boundary.
    In the stabilizer models that realize this topological theory, this can be achieved by local measurements and adaptive corrections.
    }
    \label{fig:thin3D-TQFT}
\end{figure}

The protocol is summarized in the following steps:
\begin{enumerate}
    \item Initialize a thin slab of 3D code that admits a transversal non-Clifford gate in the bulk, with appropriate boundary conditions, ``smooth'' on both top and bottom boundary.
    \item Apply the non-Clifford gate to the lower half of the slab including the bottom boundary.
    \item Measure out the qubits along the bottom boundary in a suitable single-qubit basis, enforcing smooth boundary conditions again.
\end{enumerate}
These steps are depicted in Fig.~\ref{fig:thin3D-TQFT}.

After these steps, a thin layer of qubits around the top boundary realizes a two-dimensional code that, as we show, is equivalent to the non-Abelian TQD considered in this paper. 

If step 2 (applying the non-Clifford gate in half of the slab) is skipped, we obtain an untwisted version of the 2D model which corresponds to three copies of the 2D toric code. We start by considering this example, as it helps to determine the topological phase of the twisted model below.

\subsection{Non-Abelian TQD preparation from the 3D color code}\label{sec:colorcode}

We define the three-dimensional color code~\cite{Bombin2007topological} with qubits on the vertices of a cubic lattice. 
The color code has two types of Pauli stabilizers associated to the four-colorable brickwork cells shown in Fig.~\ref{Fig:ColorCode}; volume stabilizers $S^X_c$ and face operators $S^Z_f$.
Volume stabilizers are the product of Pauli-$X$ operators acting on the vertices of cell $c$, i.e., $S^X_c = \prod_{v\in  c} X_v$. Faces $f$ interface pairs of adjacent cells, $f = c' \cap c $, and are associated with pairs of colors. We define  $S^Z_f = \prod_{v\in f} Z_v$\footnote{We abuse notation to use labels $c$ and $f$ to index cells and faces, respectively, and also to label the set of qubits associated to cell $c$ and face $f$.}.

\begin{figure}
\includegraphics{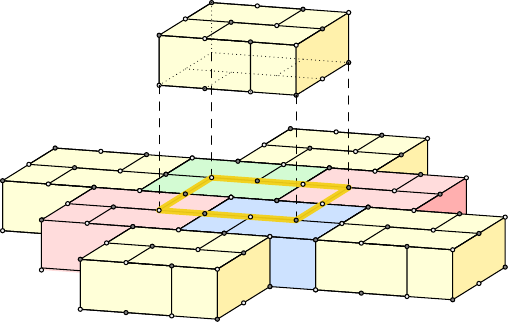}
\caption{Four-colorable 3-cells of the color code brickwork lattice organized with qubits on the vertices of a cubic lattice. \label{Fig:ColorCode}}
\end{figure}

The 3D color code admits a transversal non-Clifford gate of the form
\begin{equation}
\overline{T} = \prod_{\textrm{even } v}T_v \prod_{\textrm{odd } v}T_v^\dagger,
\label{Eqn:TransversalT}
\end{equation}
where $T_v = \exp(i \pi Z_v / 4)$ and $v = (x,y,z)$ is even (odd), if their canonical coordinates on the cubic lattice sum to an even (odd) number, i.e., $x+y+z $ is even (odd)~\cite{Bombin_2015, Kubica2015}.
As a consequence, the code also admits Clifford stabilizers obtained from conjugating the Pauli stabilizers with the non-Clifford logical. First, we define the operator
\begin{equation}
    W_v = \overline{T} X_v \overline{T}^\dagger.
\end{equation}
which is equal to  $e^{i \pi/4}X_v S^\dagger$ if $v$ is on the even sublattice and $e^{-i \pi/4} X_v S$ for the odd sublattice. Then we define
\begin{equation}
S^{W}_c = \overline{T} S^X_c \overline{T}^\dagger.
\end{equation}
They obey the group commutation relation
\begin{equation}
 [S^{W}_t, S^X_b] = S^Z_{f=t\cap b},
\end{equation}
showing that they commute on the codespace.
Outside of the ``flux-free'' subspace, defined by $S^Z_f = +1$, the (Clifford) stabilizers do not necessarily commute.

\subsubsection{Initializing an Abelian 2D state from 3D color code}
\label{sec:2d-abelian-colorcode}
We start with a 3D slab of thickness 1 with all-smooth boundary conditions (i.e.~flux membranes of all three colors can terminate on the top and bottom boundaries). A slab with such boundary conditions is supported on a single layer of bricks\footnote{This can be obtained by considering a 3D color code supported in infinite space and then measuring out the qubits everywhere, apart from one layer of bricks, in the single-qubit $X$ basis.} as shown in Fig.~\ref{Fig:ColorCode} and the qubits on the bottom and top of the layer of bricks are naturally split into the \textit{bottom layer} $b$ and the qubits on the \textit{top layer} $t$.

First, consider what happens if we ``collapse'' the state in the three-dimensional slab onto the top boundary without introducing the non-Clifford domain wall defect. For this, we measure out the bottom layer in the single-qubit Pauli-$X$ basis, which leaves us with a state supported on the top layer only.
The stabilizers for the two-dimensional state that we obtain this way are shown in Fig.~\ref{Fig:ThinCode}. 

The post-measurement state is in the phase of three copies of the 2D toric code.
To see this, we explicitly construct the basis of excitations for this model and verify their statistics. We assign colors to the three copies $r$, $g$ and $b$. The excitations of a single toric code are known as electric charges $e$ and magnetic fluxes $m$. We assign the $e_\mathbf{u}$ excitations a single color that corresponds to their layer $\mathbf{u} = r, g, b$. Here, an excitation of a given color lies at a violated brick stabilizer. In Fig.~\ref{Fig:ThinCode} we show an operator that creates a pair of $e_r$ charges. A yellow electric charge 
$ e_\mathbf{y} = e_r \times e_g \times e_b $
is the fusion product of the electric charges.

\begin{figure}[t]
\includegraphics[width = 0.9\columnwidth]{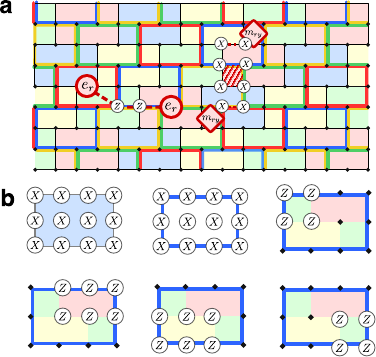}
\caption{
(a) The 4-colorable ($r,g,b$ and $y$) brickwork lattice obtained by ``collapsing'' a thin three-dimensional slab of 3D color code onto its top boundary. Fig.~\ref{Fig:ColorCode} shows two layers of 3-cells in 3D whose intersections with one of the boundaries give two types of bricks that we label ``bottom'' (shaded in solid colors) and ``top'' (shown with bold colored outlines) that come in 4 colors. Qubits are shown as black dots and form a square lattice.  The faces of this lattice are weight-6 and weight-4 rectangles that are colored in pairs of colors corresponding to the intersecting top and bottom bricks. The cell and face stabilizers are shown in panel (b). Additionally, panel (a) shows red charge and flux excitations and their hopping operators. \label{Fig:ThinCode}}
\end{figure}

Magnetic fluxes are assigned a color pair $m_{\mathbf{uv}}$. Electric charges $e_\mathbf{w}$ braid non-trivially with magnetic charges $m_\mathbf{uv} $ if either $\mathbf{w} = \mathbf{u}$ or $\mathbf{w} = \mathbf{v}$. We can associate the magnetic charge $\mathbf{uy}$ to the $\mathbf{u}$-colored toric code since $m_\mathbf{uy}$ braids non-trivially with the electric charge of the corresponding toric code $e_\mathbf{u}$ but trivially with electric charges $e_\mathbf{v}$ with $\mathbf{v}\not=\mathbf{u}, \mathbf{y}$ (recalling that $e_\mathbf{y}$ is the fusion product of the electric charges of the three toric code layers). Magnetic charges have a non-trivial fusion product 
$ m_\mathbf{uv} = m_\mathbf{uw} \times m_\mathbf{vw} $
where all colors $\mathbf{u}$, $\mathbf{v}$ and $\mathbf{w}$ are distinct.

\subsubsection{Initializing a non-Abelian state from 3D color code}

For this particular microscopic model, we are free to choose whether to apply the $\overline T$ symmetry to the top or bottom half in our non-Abelian phase preparation protocol before measuring out the bottom qubits.\footnote{The equivalence can be seen from the fact that the non-Clifford gate is diagonal in the computational basis and hence commutes with the multi-qubit $Z$ measurements.
This allows to commute the transversal gate through the measurement showing that within the full protocol, either bottom or top qubits can be measured out after the application of the non-Clifford gate.
This is in line with the perspective taken in Refs.~\cite{Brown2020universal, bombin2018} where the protocols are interpreted as ``measuring through'' a 3D state to which a transversal gate was applied.} 
We choose to apply it to the bottom half of the slab, denoting it $\eval{\overline{T}}_{b}$. It acts nontrivially on the bottom layer of qubits before we measure them out in the $X$ basis.
The resulting effect on the top qubits is the same as if we instead measured the bottom qubits directly in the basis defined by $W_v$.

Assuming that all measurement outcomes are $+1$ the post-measurement stabilizer group follows directly from the stabilizer group of the thin 3D code.
For other measurement outcomes the stabilizers of the state differ only by their signs.
The $Z$ stabilizers that overlap the bottom qubits are removed from the stabilizer group. 
The two-dimensional state after measurement is stabilized by $S^Z_f$ operators on faces $f$ supported on the top boundary.
To determine the operators that stabilize the state, consider first how the volume operators of a thin slab of the 3D color code are transformed by~$\eval{\overline{T}}_{b}$.
Conjugating the $S^X_c$ and $S^{W}_c$ stabilizers transforms them to ``mixed'' Pauli and Clifford stabilizers
\begin{subequations}
\begin{align}
    S^X_c &\stackrel{\eval{\overline{T}}_b}{\longmapsto} \widetilde{S}^X_c =\prod_{v\in c\cap b} W_v \prod_{v\in c\cap t} X_v ,\\
    S^W_c &\stackrel{\eval{\overline{T}}_b}{\longmapsto} \widetilde{S}^W_c = \prod_{v\in c\cap b} X_v \prod_{v\in c\cap t} W_v.
\end{align}    
\end{subequations}
The measurement in the $X_q$ basis only commutes with $\widetilde{S}^W_c$, and the $X$-type volume stabilizers of the 3D color code that are supported on the top boundary only. 
Therefore, the post-measurement state is supported by the latter as well as the volume stabilizers of type~$\prod_{v\in c\cap t} W_q$.

\begin{figure}[t]
\includegraphics[width = 0.87\columnwidth]{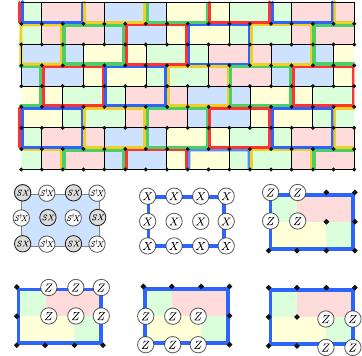}
\caption{ Non-commuting Clifford stabilizers for the color code-based microscopic lattice realization of the TQD phase. Below the figure, we show an example of blue brick stabilizers $S_b^X$ and $S_b^W$, as well as some of the Pauli-$Z$ face stabilizers.  \label{fig:lattice-colorcode}}
\end{figure}

The resulting stabilizers are shown in Fig.~\ref{fig:lattice-colorcode}. 
The lattice shows two types of bricks: ``top-layer'' bricks and ``bottom-layer'' bricks where, in the figure, top-layer bricks are outlined in bold and the bottom-layer bricks are shaded. 
One can see that the lattice of top-layer bricks is a spatial translation from the lattice of bottom-layer bricks on the square lattice of vertices. The top-layer bricks host the $X$-type top boundary cell-stabilizers of the 3D color code, the bottom-layer bricks support the volume stabilizers of the $\prod_{v\in c\cap t} W_q$ type. Each face supports a $Z$-type stabilizer.

\subsubsection{The phase of the non-Abelian code}
We now identify the non-Abelian phase of the Clifford stabilizer model obtained in the previous section by viewing it as a model obtained from gauging a particular anyon-permuting $\zz_2$ symmetry in two copies of the toric code.
This shows that the model hosts non-Abelian anyons precisely equivalent to those of the type-III twisted quantum double of $\zz_2^{\otimes 3}$, see Sec.\,\ref{sec:gauging_CZanyons}.

First, we construct a Pauli model with a suitable $\zz_2$ symmetry.
We start with the two-dimensional Abelian version of the model discussed in Sec.~\ref{sec:2d-abelian-colorcode}, which is equivalent to three copies of the toric code.
Then we condense a single type of boson~\cite{kesselring2022anyon} to obtain a model equivalent to two copies of the toric code.
We choose to condense the $e_{\vb{g}}$ anyons by projecting onto the $+1$ eigenspace of $Z\otimes Z$ operators on green edges -- edges that connect green bricks.
The stabilizer group for the resulting state is then obtained by adding $Z\otimes Z$ hopping operators to it and removing the operators that do not commute with them. The remaining operators are generated by the $Z$ stabilizers of the uncondensed model and all $X$ stabilizers except for the green cells since these anticommute with the enforced hopping operators. 
We illustrate the condensed model in Fig.\,\ref{Fig:NewColorCode}.

\begin{figure}
\includegraphics[width = 0.85\columnwidth]{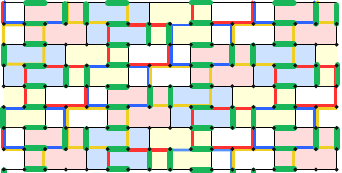}
\caption{The lattice supporting the stabilizer code is locally equivalent to the color-code model. For each red, blue and yellow colored brick, it has the same Pauli-$X$ volume stabilizers as in the uncondensed model shown in Fig.~\ref{Fig:ThinCode}. Similarly, it has all the Pauli-$Z$ face stabilizers of the uncondensed model. The green edges shown in the picture support $Z \otimes Z$ stabilizers. 
A transversal implementation of the logical $S$ gate is achieved by applying $S = i ^{|1\rangle\!\langle 1|}$ to all the qubits on the even sublattice and $S^\dagger$ to all the qubits on the odd sublattice, which is the same as the operator shown in Eq.~\ref{eq:symmetry_green_decomposition}. 
\label{Fig:NewColorCode}}
\end{figure}

The condensed model has a global Clifford symmetry of the form
\begin{align}\label{eq:symmetry_green_decomposition}
    \overline{XS} = \prod_{c\text{ green}} S_c.
\end{align}
where $S_c$ are both the Pauli and Clifford green volume stabilizers of the uncondensed model.
To see that this symmetry preserves the codespace, we recall that if two qubits are stabilized by $Z\otimes Z$ they are also stabilized by any operator of the form $D\otimes D^\dagger$, where $D$ is a diagonal single-qubit unitary.
We use $D = S = i^{\ketbra{1}}$ (or its conjugate on the odd sublattice). In addition, one can explicitly verify that $\overline{XS}$ commutes with the $X$ stabilizers up to a $Z$ stabilizer.
Taken together, we find that $\overline{XS}$ preserves the codespace and acts as a non-trivial logical Clifford operator of order 2 on the condensed code and hence realizes a non-trivial $\zz_2$ symmetry of the condensed code.

Following the general procedure of gauging Abelian on-site symmetries (which we review in more detail in Appendix~\ref{app:onsite-gauging} and \ref{sec:gaugingungauging}), we find that this $\zz_2$ symmetry is gauged by projecting the code space onto the $+1$ eigenspace of each of the (Clifford) stabilizers $\mathcal{S}_c$ associated to the green bricks.
Alternatively, we can measure these operators after ensuring that all $Z$-stabilizers have $+1$ eigenvalues and then applying a suitable correction.
By construction, this gauging procedure produces a state that hosts non-Abelian anyons characterized by the twisted quantum double, as explained in Appendix~\ref{sec:gauging_CZ}.

\subsection{Linking charges in terms of non-Abelian statistics}

Having established this connection between 3D codes and the TQD in spacetime it becomes instructive to investigate the common features that emerge in these two perspectives. In fact, the triple intersection of membranes in spacetime that is instrumental for our protocols (see Sec.~\ref{section:loopsum}) is analogous to the triple intersection resulting in non-Clifford gates in 3D topological codes~\cite{BravyiKoenig,Bombin_2015, bombin2018transversal, zhu2024nonclifford}. Let us explore the analogy more closely at the level of error propagation in 3D codes that occurs upon the application of the non-Clifford gate~\cite{bombin2018transversal}.

Applying a transversal non-Clifford gate to a 3D code with a flux loop excitation gives rise to a superposition of electric charges running along the flux loop~\cite{Yoshida2015topological,bombin2018transversal}.
Because of this the \emph{linking charge} phenomenon can occur in such 3D codes~\cite{bombin2018transversal,Scruby_2022_1}, wherein two non-trivially interlinked flux loops of appropriate color must exchange an odd number of charges upon the application of a transversal gate.

\begin{figure}[t]
 \includegraphics[width= \columnwidth]{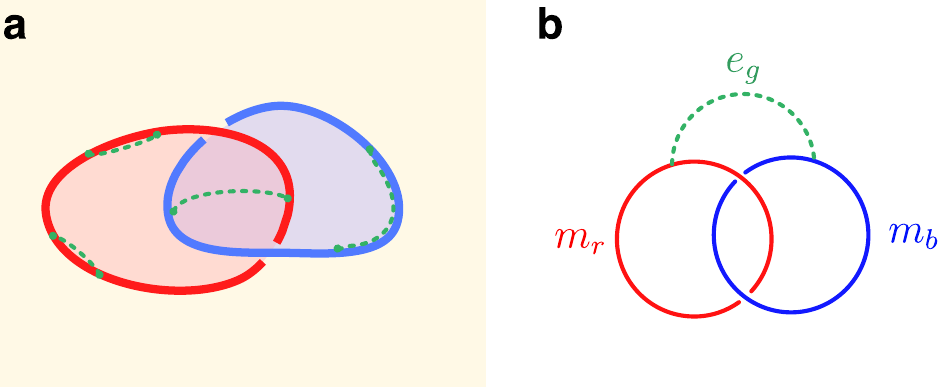}
\caption{ (a) Linking charge phenomenon as defined in Ref.~\cite{bombin2018transversal} in 3D code equivalent to three copies of the toric code or the color code. The red and blue flux loops are linked. When the transversal non-Clifford gate is applied to the three-dimensional volume (depicted by the yellow shading), a superposition of charges of the complementary colors along each flux loop is generated. Furthermore, an odd number of charges of the third, green, color are exchanged between the linked flux loops with certainty. This is analogous to the anyons of the non-Abelian TQD shown in~(b), where braiding red and blue non-Abelian anyons $m_r$ and $m_b$ exchanges a green charge $e_g$.
}
\label{fig:linking-charge}
\end{figure}

There is a close connection between 3D codes and 2+1D phases, and we find that, indeed, the linking charge phenomenon is mirrored in the anyon data of the non-Abelian TQD phase.
The flux worldlines measured in 2+1D can be understood as certain non-Abelian anyon worldlines\footnote{Note that the flux measurement in our protocols is not a measurement of \textit{simple} anyons but still, observing a $-1$ outcome projects onto a sector of the Hilbert space which hosts a combination of non-Abelian anyons.} on which charge worldlines can terminate.
On this basis, the fact that the total charge parity among linked flux anyons is fixed when the non-Clifford gate is applied can be described from the fusion and anyon data of the non-Abelian anyon theory.
From the 2+1D perspective, the charge exchange between linked flux anyon worldlines is related to the braiding data of the TQD anyon model: the $S$-matrix entry for two non-Abelian flux anyons of different types (see Appendix A of Ref.~\cite{Iqbal2024nonAbelian}), corresponding to two linked flux loops that are not connected by any charge strings, is $0$.
As a consequence, if we braid two fluxes, they must exchange an odd number of Abelian charge anyons during the process.
In Fig.~\ref{fig:linking-charge}, we show an explicit example demonstrating the analogy between the charge excitations in the 3D code and a $2+1$-dimensional braiding diagram of the non-Abelian anyons in the TQD.

\section{Fault tolerance and just-in-time decoding}
\label{Sec:JustInTime}

In this section, we discuss how to ensure that the protocols are fault tolerant by introducing just-in-time decoding and proving the existence of a threshold under local stochastic noise.

\subsection{Informal explanation}

The measurements in the circuit that implements the TQD phase are divided into two classes, one of them corresponding to the charge worldlines (which are associated with those of the Abelian charge anyons) and the other corresponding to the flux worldlines. Both charges and fluxes come in red, green, and blue colors.  For simplicity, in our initial discussion we assume that the syndrome-extraction circuit has the property that starting from a noise-free initial state, no $-1$ measurement outcomes would appear in the absence of the noise. In addition, if we start in a state with some distribution of fluxes and run a noiseless circuit, the flux wordlines continue indefinitely in the time direction (unless error correction is performed). All examples considered in Sec.~\ref{section:examples} have this property, but it does not apply to 3+0D, measurement-based quantum computing (MBQC)-like~\cite{Raussendorf2005} and Floquet-like circuits~\cite{Hastings_2021,Bauer2023,Davydova_2024}. While the decoding strategy and the fault tolerance proof remain the same for these other kinds of protocols, they possess an extra element of complexity. We defer a discussion of decoding such protocols to Subsec.~\ref{sec:decoding-MBQC}.

A subtle point is that configurations of fluxes in our circuits do not correspond to specific configurations of non-Abelian excitations. Instead, a state containing some configuration of fluxes contains some superposition of non-Abelian excitations. We focus specifically on correcting flux defects (as opposed to non-Abelian anyons themselves) in our decoding strategy as these describe more naturally what happens in the circuits.

\begin{figure*}[t]
\includegraphics[width=0.9 \textwidth]{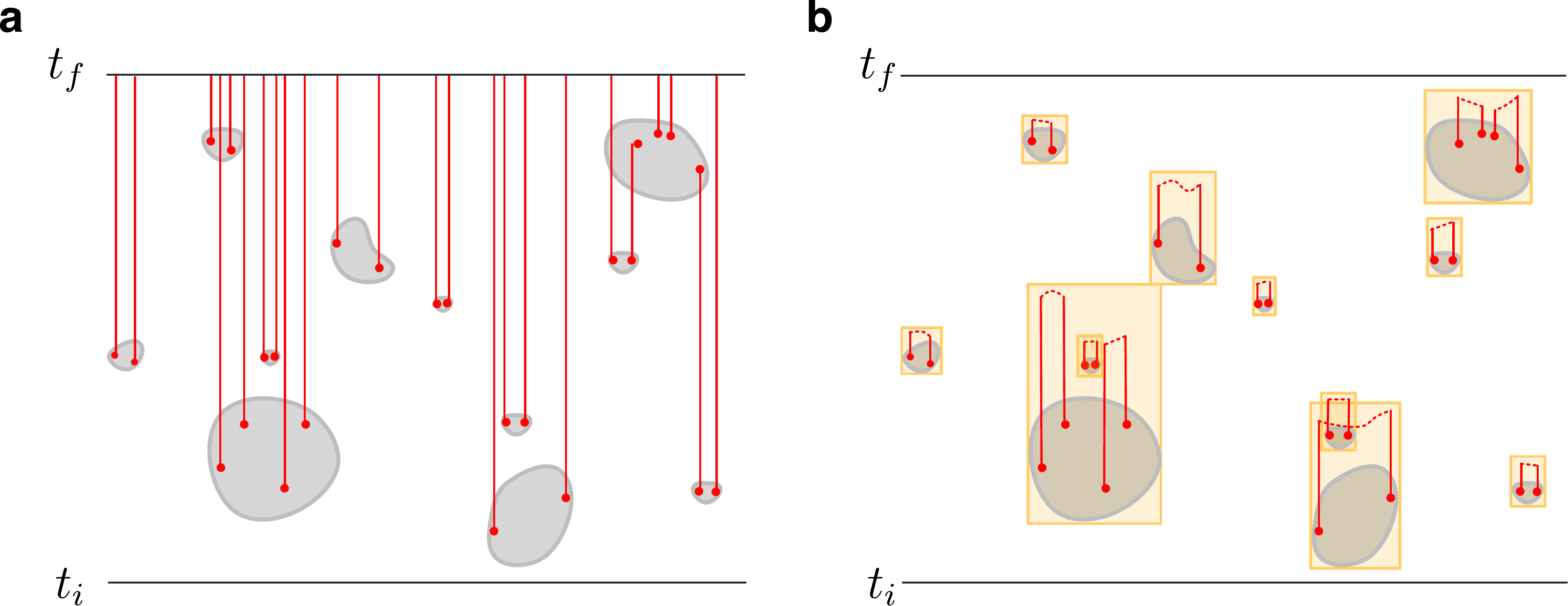}
\caption{
(a) Schematic illustration of the worldlines of flux defects (red lines) created due to noise (contained in gray regions) in the circuit implementation of the non-Abelian TQD, not to scale.  (b) The same spacetime history when operating with the just-in-time decoder, which pairs up the defects as soon as the correction can be deduced reliably. 
Each error cluster together with the associated flux worldlines is contained in some box (in yellow), and these boxes form a sparse distribution.
}
\label{fig:noise-onthefly}
\end{figure*}

The just-in-time decoding in the flux sector is motivated as follows. If we simply collect the measurement data for an extended period of time while in the TQD phase, assuming local stochastic noise as well as measurement errors, a schematic picture of a typical scenario is shown in Fig.~\ref{fig:noise-onthefly}(a). Due to the noise (occurring in gray regions), fluxes (shown in red) appear in the spacetime picture. Their worldlines continue in a timelike direction (shown by vertical red lines) and continue indefinitely in the absence of correction or unless new noise occurs in their path. After a long time (e.g. by $t_f$), 
many flux worldlines accumulate.
For Abelian models, this would not be a problem: by considering the spacetime history of measurements, we determine the most-likely homology class of the flux worldlines in the resulting state and apply the correction at the end; only the resulting homology class matters. This is because in Abelian models all defect worldlines are freely deformable.
However, in the non-Abelian TQD this is not the case. On a flux wordline of a given color, the charge worldlines of two other colors can freely terminate (see Sec.~\ref{subsec:chargeandflux}). Therefore, in the presence of a large density of fluxes (which happens if we simply leave the fluxes be without the corrections), we also end up with an arbitrary configuration of charge wordlines in a large spacetime region, which is no longer correctable.

If we instead apply corrections \emph{just-in-time} (see Fig.~\ref{fig:noise-onthefly}), i.e., we attempt to remove the flux worldlines as we carry out the protocol, it is possible to retain fault tolerance. Such a just-in-time decoding algorithm uses the available measurement history up to the current time and applies the best correction possible based on available information. As a result, the effect of the noise is bounded by a set of spacetime boxes of a certain size, each proportional to the size of the noise cluster.
This is shown by yellow boxes in Fig.~\ref{fig:noise-onthefly}(b). Because of this containment, the state outside of these boxes corresponds to the noiseless TQD phase. The protocol fails if the flux configurations are not contained in sufficiently isolated boxes, which, as we show below, only happens at a large enough noise rate, i.e. above the threshold. 

Thus, the idea behind an appropriate just-in-time decoding scheme is this: we need to close off the flux worldlines that are created due to the noise as soon and as reliably as possible. However, if the decoder applies the correction as soon as the flux is detected, we risk adding large flux worldlines that accidentally connect measurement errors. If the correction is applied too late, the flux worldlines can proliferate as in the example in Fig.~\ref{fig:noise-onthefly}(a). The resolution is to apply the correction as soon as we are confident enough that the detection event (which we also call the noise syndrome) is not a measurement error. As a rule of thumb, a pair of flux worldlines can be closed when the spatial separation between detection events is of the same order as the time these flux worldlines have been observed for. 

The just-in-time decoding scheme discussed here is based on the idea first proposed by Bombin~\cite{bombin2018}, and follows more closely the variant proposed by Brown~\cite{Brown2020universal} (also see Ref.~\cite{Scruby_2022}). Our discussion provides a slightly different explicit proof and makes a more direct relation to topological phases. 

When it comes to error correction of charges (that correspond to Abelian anyons), we can simply store the measurement outcomes of associated stabilizers and process them at the end of the protocol. One difference with the usual decoding of Abelian codes is that detection events of charge measurements can not only be caused by noise but also by the flux worldlines themselves.
Nevertheless, below we show that the resulting distribution must be correctable below the threshold error rate.

Finally, when it comes to the fault tolerance of the whole protocol, we need to take into account additional factors, such as (i) initialization and readout, (ii) dealing with boundaries and domain walls and (iii) extracting logical measurement outcomes from the physical measurement outcomes (for example, in the logical $\overline{CZ}$ measurement protocol). In addition, the spacetime geometry has to be chosen such that the shortest nontrivial and undetectable operator has a size $O(d)$, where $d$ is the spatial distance of the codes used in the computation.

In this work, we assume instantaneous and perfect global classical computation and communication. It would be interesting to explore whether a classically local solution for just-in-time error correction. This would enable classically local universal quantum computation with topological codes in three dimensions or less, which is currently an open problem~\cite{balasubramanian2024localautomaton2dtoric}. 

\subsection{Fault tolerance proof}

\subsubsection{Proof setting}

In this subsection, we assume a circuit whose $+1$-post-selected execution corresponds to a fixed-point path integral for the TQD phase on a torus, such as presented in Secs.~\ref{section:examples}. For such a circuit, the $-1$ measurement outcomes correspond to flux and charge worldlines in the path integral.
As discussed above, we consider the simplified case where the measured flux worldlines can only propagate in the time direction, and discuss the more general case in Subsec.~\ref{sec:decoding-MBQC}.
We assume a noise-free initial state with periodic boundary conditions (i.e.~on a spatial torus of a size $L \times L$). We first show the existence of the threshold in this setting and argue the existence of the threshold for complete logical protocols in the following subsection.

We call the endpoints of charge and flux worldlines in spacetime the noise \emph{syndromes} (there are charge and flux syndromes, correspondingly). This must include the effect of the just-in-time corrections which themselves produce flux worldlines. 
We assume a general noise model (where we label a noise realization as $\widetilde{E}$) that can consist of arbitrary local quantum channels acting on arbitrary qubits at given locations in spacetime as well as faulty gates and measurement outcomes. For the noise realization $\widetilde{E}$, we call $E$ the set of all spacetime points that experience error under a given noise realization. The probability distribution determining the error model are specified later on. Here, we use that for each noise realization and each circuit trajectory, the measurements in the circuit yield a specific distribution of flux and charge worldlines. 

To perform error correction in the flux sector, the circuit is run while simultaneously implementing corrections determined by the just-in-time decoder. The charge-detecting measurement outcomes are stored and processed at the end of the protocol. For each timestep $t$, the decoder determines the set of correction strings $C_t$ (which split into red, green and blue classes for the corresponding sublattices) and passes them to the circuit. 
The circuit then applies string operators that insert flux worldlines along $C_t$, at a fixed time $t$, that close off some of the measured flux worldlines.
For all protocols considered in this paper, these string operators are simply strings of Pauli-$X$ operators.
From the perspective of Section~\ref{sec:CZ_minimal}, as well as circuits in Sec.~\ref{section:examples}, such a Pauli-$X$ string indeed removes the flux defects at its endpoints because it anticommutes with the associated plaquette measurements. 
Furthermore, a Pauli-$X$ string creates a random collection of Abelian charges along its support. 
These charges are detected by the protocol, and can be corrected by the RG decoder at the end of the protocol so long as the clusters containing flux worldlines are sufficiently small (see Sec.~\ref{subsec:chargeandflux}).

\subsubsection{Error model and its properties}

We assume a noise model wherein different error realizations $\widetilde E$ occur with different probabilities $P(\widetilde E)$.
Each noise realization corresponds to adding or replacing some of the quantum channels or measurements in the noise-free circuit with a faulty channel or measurement.
We denote the set of spacetime locations of the faulty channels by $E$.

Further, we assume that the probability of an error configuration is suppressed in its size, or more precisely, that the probability of a fixed set of channels being faulty decays exponentially with the size of this set.
\begin{definition}
    We say that the noise model is $p$-bounded if for any noise realization $\widetilde E$, the probability of a set of spacetime points $A$ entering $E$ is bounded as 
    \begin{equation}
        \mathbb{P}( A \in E) = \sum_{\widetilde E: A\subset E} \mathbb{P}(\widetilde E) \leq p^{|A|}.
    \end{equation}
\end{definition}
In the literature, this is also called local stochastic noise with rate $p$. This noise definition is more general than the local i.i.d. noise and also includes correlated noise.

An important property of this noise model is that there exists a cluster decomposition of the spacetime distribution of noise. We follow the discussion in Ref.~\cite{Bravyi2013quantum} (where the cluster decomposition is called a chunk decomposition); see also Refs.~\cite{gacs1983reliable,Gacs_2001,çapuni2021reliableturingmachine} for the original introduction of the method by G\'acs.  
The cluster decomposition is defined as follows. Suppose that we fix some constant $Q>0$. We call any point in $E$ a level-0 minimal cluster. A subset of $E$ is a level-$n$ minimal cluster if it is a union of two level-$(n-1)$ minimal clusters and its diameter is at most $Q^n/2$; it has precisely $2^n$ points.  We call $E_{\geq n}$ the union of all minimal level-$n$ clusters and define the set of all strictly level-$n$ minimal clusters as $F_{ n} = E_{\geq n} \setminus E_{\geq n+1}$. The minimal cluster decomposition of the error set exists if $E = F_0 \cup F_1 \cup \dots \cup F_m$.
We call an $R$-connected component of a set $E$ a subset that cannot be further partitioned into two subsets separated in distance for at least $R$. We use the following statements that have been proven in Ref.~\cite[Prop. 7]{Bravyi2013quantum}.

\begin{lemma}\textnormal{\textbf{(Existence and properties of the cluster decomposition, Ref.~\cite{Bravyi2013quantum}).}} \label{lemma:clustering}
For a $p$-bounded noise model with $p \leq p_c = 1/(3 Q)^6$ and $Q \geq 6$, the following statements hold:~\footnote{The proof in Ref.~\cite{Bravyi2013quantum} is technically written for the i.i.d. noise; however, it directly generalizes to $p$-bounded noise, see also Refs.~\cite{gacs1983reliable,Gacs_2001,çapuni2021reliableturingmachine}. } 
\begin{enumerate}
    \item The set of spacetime errors can be decomposed into minimal clusters as
    \begin{equation} \label{eq:dec}
        E = F_0 \cup F_1 \cup \dots \cup F_{n_{\mathrm{max}}}
    \end{equation}
    $n_{\mathrm{max}} = \lfloor \log_Q L/2 \rfloor-1$, with probability 
    \begin{equation} \label{eq:decomposition}
    \mathbb{P} \geq 1 - C \left ( \frac{p}{p_c}  \right )^{O(L^\eta)},
    \end{equation} 
    where $C$ is some positive constant and $\eta = 1/\log_2 Q$. 
    \item We refer to the $Q^n$-connected components of $F_n$ as \emph{level-$n$ clusters}.  Any level-$n$ cluster $M$ of $F_n$ has diameter at most $Q^n$ and is separated from $E_n \setminus M$ (but not from $E \setminus M$) by at least $\frac{1}{3} Q^{n+1}$.
\end{enumerate}   
\end{lemma}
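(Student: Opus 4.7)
The plan is to adapt the G\'acs-style renormalization-group argument as carried out in Ref.~\cite{Bravyi2013quantum}, proceeding in three stages: fixing the recursive construction, proving the geometric separation claim by induction, and controlling the failure probability via a doubly exponential tail estimate.

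First, I would formalize the recursive construction. Setting $E_{\geq 0}=E$, at each step $n\geq 1$ I declare a pair of minimal level-$(n-1)$ clusters to form a minimal level-$n$ cluster whenever their union has diameter at most $Q^n/2$; define $E_{\geq n+1}$ to be the union of the points contained in some such pair, and $F_n = E_{\geq n}\setminus E_{\geq n+1}$. The diameter bound in part (2) is then built into the merging rule, and the separation statement follows by contradiction: if a point $x\in F_n$ lay within $\tfrac{1}{3}Q^{n+1}$ of some other point of $E_n$, then the two associated minimal level-$n$ clusters would have combined diameter at most $Q^n+\tfrac{1}{3}Q^{n+1}\leq \tfrac{1}{2}Q^{n+1}$ (using $Q\geq 6$), forming a valid level-$(n+1)$ cluster and contradicting $x\in F_n$.

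Second, the probabilistic core is a recursive bound on $q_n$, the probability that a fixed spacetime location lies in some minimal level-$n$ cluster. The base case $q_0\leq p$ is exactly $p$-boundedness. For the inductive step, the presence of a minimal level-$n$ cluster at a given location requires two minimal level-$(n-1)$ clusters within a ball of diameter $Q^n/2$; a union bound over the $O(Q^{nD})$ relative positions (with $D=3$ the spacetime dimension), together with $p$-boundedness applied to the disjoint base-error sets, yields a recursion of the form $q_n\leq c\, Q^{nD}\, q_{n-1}^2$. Unwinding this recursion and absorbing the geometric factors into the definition of $p_c$ produces the doubly exponential tail $q_n\leq (p/p_c)^{2^n}$. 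The specific threshold $p_c=(3Q)^{-6}$ records the $D=3$ spacetime dimensions, the factor of $3$ inherited from the $\tfrac{1}{3}Q^{n+1}$ separation, and the combinatorial overhead of enumerating the merging pairs. A union bound over the $O(L^D)$ possible anchor points of a minimal level-$(n_{\max}+1)$ cluster then gives
\begin{equation}
\mathbb{P}(\text{decomposition fails}) \leq L^D\, q_{n_{\max}+1} \leq L^D (p/p_c)^{2^{n_{\max}+1}},
\end{equation}
and substituting $n_{\max}=\lfloor \log_Q(L/2)\rfloor-1$ gives $2^{n_{\max}+1}=\Theta(L^{1/\log_2 Q})=\Theta(L^\eta)$, which is part (1).

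The main obstacle is quantitative bookkeeping: keeping the combinatorial constants in the $q_n$ recursion tight enough that the stated threshold $p_c=(3Q)^{-6}$ (rather than some larger negative power of $Q$) genuinely suffices, while simultaneously resolving the nondeterminism in which pairs to merge at each stage (a greedy, closest-pair rule is the natural choice) in a way that is consistent with the separation bound in part (2). A related subtlety is ensuring the induction is well-posed at the boundary between stages — every point of $E$ must belong to a unique $F_n$ with $n\leq n_{\max}$ or witness a decomposition failure, so that the disjoint-union structure of Eq.~\eqref{eq:dec} is manifest and the $2^n$-point count per minimal level-$n$ cluster is preserved throughout.
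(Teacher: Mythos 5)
Your proposal is correct and follows exactly the argument the paper relies on: the paper does not prove this lemma itself but imports it from Ref.~\cite{Bravyi2013quantum}, Prop.~7 (the ``chunk decomposition''), noting only in a footnote that the i.i.d.\ argument there carries over to $p$-bounded noise. Your recursive merging rule, the contradiction argument for the $\tfrac{1}{3}Q^{n+1}$ separation (two half-diameters $Q^n/2+Q^n/2$ plus the gap staying below $\tfrac{1}{2}Q^{n+1}$ for $Q\geq 6$), and the doubly exponential recursion $q_n \leq c\,Q^{nD} q_{n-1}^2$ yielding $q_n \leq (p/p_c)^{2^n}$ and hence the $O(L^\eta)$ exponent are precisely the ingredients of that cited proof.
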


\subsubsection{Just-in-time algorithm}
\label{subsec:jit_algorithm}
We now introduce a specific implementation of just-in-time decoding, which we call the just-in-time RG (renormalization group) decoder.  This decoder is not designed to be highly performant in practice, rather it allows us to rigorously prove the existence of a threshold.  We discuss how to perform more efficient practical decoding in Subsec.~\ref{sec:practical-decoding}. For most of this subsection, we focus on the just-in-time decoding in the flux sector and later address the charge sector. To simplify our proof,  we use the usual RG decoder~\cite{Bravyi2013quantum} for the charge sector at the end of the protocol. 
 
We assume that the circuit is associated with spacetime lattices corresponding to possible locations of flux worldlines, charge worldlines and qubits.
We use a rectangular coordinate system on the triple of lattices and the box metric $\ell_\infty$ to determine the distance between points in spacetime: namely, for $x = (x_1,x_2,x_t)$ and $y = (y_1,y_2,y_t)$ call $d(x,y) = \max(|x_1-y_1|,|x_2-y_2|,|x_t-y_t|)$. The flux-detecting measurements performed by the circuit at each time $t$ are stored in an array $m_t$. At each step, the decoder uses this together with the information about all corrections administered up to time $t$ to determine the syndrome at time $t$ that we call $\mathcal M_t$. It includes the measured flux worldline endpoints (excluding the apparent ``termination'' of flux worldlines at the current time $t$) minus the endpoints of the correction strings. 

Below we present the RG just-in-time decoding algorithm.
At each time step $t$, the input to the decoder is the set of flux measurements in the circuit $m_t$. The output of the decoder at each timestep $t$ is the collection of strings $C_{t}$ which the circuit turns into a (Pauli-$X$) correction operator $P_{t}$ that is then immediately applied. In-between two time steps the decoder memorizes some information, namely a set of spacetime points $\Sigma_t$, which is the set of syndromes that are unmatched at time $t$. We define $\partial C_{t}$ to be the endpoints of the string $C_{t}$.

We use the notation `$\oplus$' to denote the symmetric difference between sets. Define  $k_m$ to be the largest integer satisfying $2^{k_m} < L/2$. 

\begin{algorithm}[H]
  \caption{Just-in-time RG decoder}
  \label{alg:greedy}
  At $t=0$, initialize empty $\Sigma_0$. At each timestep $t \geq 0$
   \begin{algorithmic}[1]
   \State Determine $\mathcal M_t$ from  $m_t$ and  $ C_{t-1}$. 
   \State Update $\Sigma_t \rightarrow \Sigma_{t}\oplus \mathcal{M}_t$;
   \State Initialize $C_{t}$ to be empty and the set $s = \Sigma_t$.
   \State  For each $0 \leq k \leq k_m$:
   \begin{itemize}
       \item[3.1:] remove all points $u = (u_1,u_2,u_t)$ in $s$, s.t. $t - u_t < 2^k$; 
       \item [3.2:] for each $u \in s$, if  $u' \in s$ exists s.t. $d(u,u') \leq 2^k$, find any shortest path for the (Pauli-$X$) correction string $C^{(k),i}_{t}$ matching their spatial coordinates.\footnote{We consider different $u$ and $u'$ ``greedily'' in an arbitrary order. The outcome of the decoder depends on the order, but our proof of the threshold for this decoder is order-independent.}  Update $C_{t} \rightarrow C_{t} \cup C^{(k),i}_{t}$ and remove $u, u'$ from $s$ and from $\Sigma_t$;
   \end{itemize}
   \State Pass the string $C_{t}$ to the circuit for correction;
   \end{algorithmic}
\end{algorithm}

We call the index $k$ in the algorithm above the \emph{level} of the operation of the decoder.  Any pair of points $(u,u')$ that is removed from $\Sigma_t$ due to step $3.2$ is said to be ``removed by pairing''. The array of current syndromes $\Sigma_t$ captures exclusively the noise syndromes up to time $t$ that have not been paired yet.

\begin{figure}[!t]
\includegraphics[width= \columnwidth]{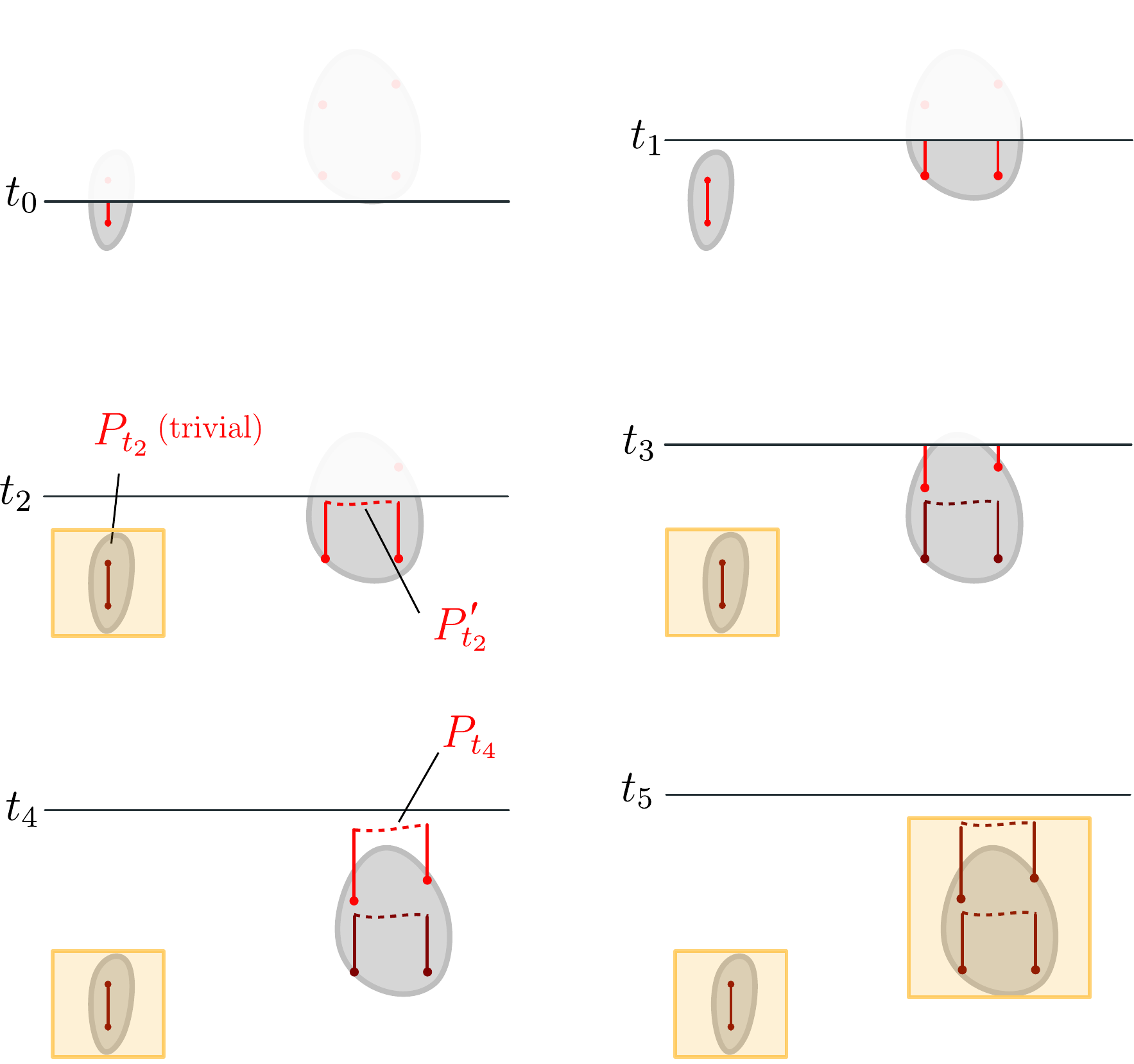}
\caption{ Schematic example of operation of a circuit with the just-in-time RG decoder in the presence of two noise clusters. The cluster on the left produces a measurement error, corresponding to a short timelike open flux worldline. The decoder avoids determining the correction until time $t_2$, when both syndromes have been measured and their distance from the current time slice is large enough (larger than their separation but smaller than the separation from other clusters). The correction $P_{t_2}$ is trivial (no operator is applied). The yellow box shows the smallest box containing both the noise and the correction.
The cluster on the right produces four syndromes which could be due to a pair of string-like errors or a pair of measurement errors. One can directly verify that in either case, the corrections $P'_{t_2}$ and $P_{t_4}$ would close the flux configuration.
}
\label{fig:algorithm-illustration}
\end{figure}

Informally, the algorithm works as follows.  At each point in time, the unpaired syndromes are stored in $\Sigma_t$ (updated at step 1 of the algorithm). Then, for each scale $2^k$, the decoder inspects the history of all unpaired syndromes in spacetime up to the current point in time. The decoder only matches a syndrome if it knows a size-$2^k$ region around it in spacetime. Namely, if a syndrome was detected in less than $2^k$ timesteps in the past, it is set aside and not matched at this timestep. Thanks to this property, the measurement errors (which can temporarily violate the parity of syndromes while the time slice is still going through the error cluster) are not incorrectly matched to errors that belong to a different cluster. Instead, the decoder waits to determine which cluster the syndrome must belong to. For all points that pass the test, the decoder proceeds with greedy pairing, as long as the points are close at a given scale. The operation of the decoder illustrating these points is shown schematically in Fig.~\ref{fig:algorithm-illustration}. We call this scheme a ``RG decoder'' as it operates scale-by-scale similarly to the RG decoder of Bravyi and Haah~\cite{Bravyi2013quantum}, and matches syndromes greedily whenever it can be done reliably.

\subsubsection{Fault tolerance of the just-in-time scheme}

\noindent To prove the existence of a threshold, we assume that we start with a noiseless codestate of the TQD model and then we run the noisy circuit implementation of the TQD phase together with the just-in-time decoder for $T$ timesteps (in the interval of times $[0,T]$), and then run a noiseless version of the same circuit together with the same decoder for another $T_1 \sim O(d)$  timesteps (in the interval of times $[T,T+T_1]$), where $d = L$ is the spatial distance of the code. We explain below how the existence of a threshold extends to full circuits implementing logical protocols.

Here, we state the threshold theorem for the just-in-time scheme in our current idealized setting, which says that under a $p$-bounded noise model, the logical error rate of a circuit (together with the just-in-time decoder) implementing the TQD phase on a $L\times L$ torus decays exponentially in $L^\eta$ (for some constant $\eta > 0$) for noise rate below a threshold.

\begin{theorem} \label{theorem:main}
Consider an error-correcting circuit that implements the TQD model on an $L \times L$ torus together with the just-in-time RG decoder under a $p$-bounded noise model as discussed in this section. Let $\mathcal{C}_{\widetilde{E}}$ be this channel implementing the circuit under a given noise realization $\widetilde{E}$, let $\mathcal{E}$ be the encoding channel into an error-free logical state for the circuit (where the encoded states belong to the 22-dimensional codestate of the TQD model, see Sec.~\ref{sec:codestates}), and $\mathcal{D}$ be the decoding channel consisting of (1)  running the just-in-time flux error correction error-free for the time $O(L)$\footnote{From Lemma~\ref{theorem:onthefly-respects} that we show below, it follows that after running the error-free just-in-time decoding for an extra $O(L)$ time, all the flux errors must have been corrected.} and (2) performing the charge correction error-free using regular RG decoder.

Then, there exists a threshold with a lower bound $p_c = \frac{1}{(3 Q)^6}$ with $Q = 253$, such that for $p< p_c$  we have:
\begin{equation}\label{eq:logical-failure}
\sum_{\widetilde{E}} \mathbb{P}(\widetilde{E})\, \mathcal{D} \circ \mathcal{C}_{\widetilde{E}} \circ \mathcal{E}= (1-p_L)\mathcal{E} +p_L \widetilde{\mathcal E}
\end{equation}
for some (trace-preserving) channel $\widetilde{\mathcal E}$, and the logical error probability is bounded as
\begin{equation}
p_L\leq C \left (\frac{p}{p_c} \right )^{ O(L^\eta)}
\end{equation}
for some constant $C>0$.

\end{theorem}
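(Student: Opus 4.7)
The plan is to prove the theorem by combining the noise-cluster decomposition of Lemma~\ref{lemma:clustering} with a containment property for the just-in-time RG decoder, and then treating the residual charge sector with a standard (non-causal) RG decoder as in Ref.~\cite{Bravyi2013quantum}. First, I would invoke Lemma~\ref{lemma:clustering} to argue that, outside a set of noise realizations of total probability at most $C(p/p_c)^{O(L^\eta)}$, the spacetime error set admits a level decomposition $E = F_0 \cup F_1 \cup \dots \cup F_{n_{\max}}$ in which every level-$n$ cluster has diameter at most $Q^n$ and is separated by at least $\tfrac13 Q^{n+1}$ from $E_{\geq n} \setminus M$. This is the one probabilistic ingredient; the rest of the argument is deterministic, conditional on such a decomposition existing.

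The crucial technical step is a \emph{cluster-containment lemma}: for every level-$n$ minimal cluster $M$, the flux worldlines produced by the noise in $M$ together with the correction strings that Algorithm~\ref{alg:greedy} generates in response to $M$ are contained in a spacetime box $B(M)$ of diameter $O(Q^n)$ enclosing $M$, and the boxes $B(M)$ are pairwise disjoint. I would prove this by induction on $n$. At level zero, individual faulty gates or measurements create at most $O(1)$-size syndromes. For the inductive step, I would show that the ``wait $2^k$ timesteps before attempting a scale-$2^k$ match'' rule in Step~$3.1$ guarantees that by the time the decoder operates at scale $2^k \sim Q^n$, all syndromes of $M$ have been registered in $\Sigma_t$, while the $\tfrac13 Q^{n+1}$ separation ensures no syndrome from any other level-$\leq n$ cluster lies within distance $2^k$ of $M$. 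Hence the greedy pairing at step~$3.2$ matches $M$'s syndromes only among themselves, using shortest-path strings of length $O(Q^n)$, and completes at a level below the one at which the scale would reach a neighboring cluster. The extra noiseless run of length $T_1 = O(L)$ at the end lets all remaining unpaired syndromes be matched at a level $k$ with $2^k \gg \mathrm{diam}(M)$ but $2^k \ll L$.

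Once containment is established, a logical flux error requires a homologically nontrivial uncorrected flux worldline, which necessarily spans distance $L$ on the torus; by containment this can only occur if some $B(M)$ has diameter $\gtrsim L$, which is ruled out by the successful cluster decomposition. For the charge sector, I would observe that after the full just-in-time protocol the residual charge syndromes consist of (i) the original charge endpoints of the noise and (ii) the charges induced along the Pauli-$X$ correction strings of the decoder; both are supported inside $\bigsqcup_M B(M)$. Thus the effective charge error distribution is itself $p'$-bounded in the sense of Lemma~\ref{lemma:clustering} with $p'$ polynomially related to $p$, since the total correction-string length attributed to $M$ is $O(|M|\cdot Q^n)$. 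A standard RG decoder run offline on this charge syndrome then corrects all Abelian charges below threshold, yielding the stated bound on $p_L$ after summing over $\widetilde{E}$.

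The main obstacle I expect is the cluster-containment induction, specifically the interplay between the causal, one-sided nature of just-in-time decoding (which must commit to corrections before seeing the full spacetime history) and the possibility that measurement errors within a cluster temporarily displace $\Sigma_t$ from the true underlying flux pattern, so that a ``premature'' match could accidentally bridge two different clusters. Verifying that the delay in Step~$3.1$ together with the geometric separation $\tfrac13 Q^{n+1}$ suffices at every level $n$, while simultaneously controlling the timelike extent of $B(M)$ so that it only grows as $O(Q^n)$, is the delicate balancing act that fixes the admissible value of $Q$ (and hence of $p_c$).
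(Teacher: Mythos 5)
Your overall architecture matches the paper's: probabilistic cluster decomposition, a deterministic containment statement for the just-in-time decoder proved by induction on cluster level, a homological argument for the flux sector, and an offline RG decoder for the charges. However, there are two genuine gaps.

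First, your cluster-containment lemma asserts that the boxes $B(M)$ are \emph{pairwise disjoint}, and your inductive step claims the $\tfrac13 Q^{n+1}$ separation keeps syndromes of ``any other level-$\leq n$ cluster'' away from $M$. Both claims overreach: the decomposition of Lemma~\ref{lemma:clustering} only separates a level-$n$ cluster from $E_{\geq n}\setminus M$, i.e.\ from clusters of level \emph{at least} $n$; lower-level clusters may sit arbitrarily close to $M$, and after fattening by $O(Q^n)$ the resulting boxes can and do overlap. The correct inductive statement is that lower-level clusters have already been removed by pairing at lower decoder levels before the level-$k_n$ operations act (this is what Lemma~\ref{lemma:removal-of-isolated}(a) provides), not that they are geometrically far. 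Because the fattened boxes overlap, one still has to regroup them into a genuinely sparse collection before the homological argument applies; this is the role of the linking and linked-group analysis (Definitions~\ref{def:linking}--\ref{def:groups} and Lemmas~\ref{lemma:linking}--\ref{lemma:clustering2}), which is entirely absent from your plan and is where the specific value $Q=253$ is actually fixed.

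Second, your charge-sector argument fails as stated. You claim the induced charge error distribution is $p'$-bounded with $p'$ polynomially related to $p$. But a single level-$n$ noise cluster, occurring with probability roughly $p^{2^n}$, forces correction strings of total length $\Omega(Q^n)$ along which charges are deposited with $O(1)$ density; a $q$-bounded distribution would then require $q^{\Omega(Q^n)} \gtrsim p^{2^n}$, i.e.\ $q \gtrsim p^{(2/Q)^n}\to 1$ up to constants, so no useful $q$-boundedness holds. The paper avoids this entirely: conditional on the original noise admitting a cluster decomposition, the charge syndrome is \emph{deterministically} confined to the sparse set of linked groups (Theorem~\ref{theorem:onthefly-respects} plus Lemma~\ref{lemma:fixed-point}(ii)), and Lemma~\ref{lemma:condition_b} shows the RG decoder succeeds on any syndrome so confined. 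You would need to replace your probabilistic reduction with this geometric argument to close the proof.
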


We prove this theorem at the end of this subsection. The rest of the subsection develops the tools necessary for proving it.

We now look at some fixed error realization  $\widetilde E$.
When $\mathcal{D}  \circ   \mathcal{C}_{\widetilde E} \, \circ  \mathcal{E} = \mathcal{E}$, the outcome of the circuit with decoder is the same as the initial (clean) encoded state, thus, we say that no logical error has occurred.

We assume that for an error $\varepsilon$ contained in a box $\mathcal{B}$ with diameter $b$, its syndromes obtained after measurements in circuits are generally contained in a $\gamma$-neighborhood of $\mathcal{B}$ (including $\mathcal{B}$),  
where $\gamma \geq 0$ is a system size-independent constant which depends on the details of the circuit. We can always rescale the lattice such that $\gamma = 1$, which we assume below.

The following Lemma provides sufficient criteria that ensure that no logical error occurs.

\begin{lemma} \label{lemma:fixed-point}
Consider the channel $\mathcal{C}_{\widetilde{E}}^{\vec f}$ corresponding to a trajectory of the circuit under a given noise realization $\widetilde{E}$ with measurement outcomes for the flux and charge worldlines $W_f$ and $W_c$, as well as the union of all flux correction strings $C_f$ and for the charge correction strings $C_c$, which we collectively denote as $\vec f = (W_f,W_c,C_f,C_c)$.\footnote{One can also think about particular trajectory with fixed $\vec f$ as a trajectory with specific measurement outcomes being post-selected for. }  The following statements hold with probability 1:
\begin{itemize}
    \item [(i)] The union of flux worldlines and their corrections terminate in the vicinity of errors: $\partial (W_f + C_f)\subseteq \nx_1(E)$, where $\partial$ denotes the boundary map. 
    \item [(ii)] The union of charge worldlines and their corrections terminate in the vicinity of errors and fluxes:   $\partial(W_c + C_c)\subseteq \nx_1( E\cup W_f\cup C_f)$.
    \item [(iii)] Connected components of errors are flux-neutral. Namely, consider any connected component  $X$ of $\nx_1(E)$. Then, $\partial(W_f + C_f)\cap X$ contains an even number of points.
    \item[(iv)] A box fully containing one or several connected components of regions that can create charges (i.e. containing noise, flux worldlines, and flux corrections) is charge-neutral\footnote{Alternatively, we define an ``unlinked'' connected component as a connected component whose flux worldlines do not form links with flux worldlines belonging to other connected components.}. Namely, consider any part $Y$ of $\nx_1(E \cap W_f \cap C_f)$ that is contained on the inside of some 2-sphere and disconnected from the rest. Then, $\partial(W_c + C_c)\cap Y$ contains an even number of points.
\end{itemize}
In addition, the following statement is true:
\begin{itemize}
    \item [(v)] There is no logical fault, namely
\begin{equation}
\mathcal{D} \circ  \mathcal{C}_{\widetilde E}^{\vec f} \, \circ \mathcal{E} = \mathcal{E} ,
\end{equation}
if the following two conditions hold:
\begin{itemize}
\item [(a)] the set $\nx_1(E\cup W_f\cup C_f)$ is topologically trivial, i.e. it is contained on the inside of some 2-sphere in the 3D spacetime;
\item [(b)]  for any closure $\widetilde{C}$ of $W_c+C_c$ inside of $\nx_1(E\cup W_f\cup C_f)$, $\widetilde{C}+W_c+C_c$ is homologically trivial.
\end{itemize}
Note that in (b), it is always possible to find $\widetilde{C}$ due to (iii), and the homology of $\widetilde{C}+W_c+C_c$ is independent on the choice of $\widetilde{C}$ if (a) already holds.
\end{itemize}
\end{lemma}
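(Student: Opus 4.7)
The plan is to treat parts (i)--(iv) as local ``conservation laws'' of the TQD path integral that follow from the design of the circuit and the decoder, and then to use them in (v) via a global topological argument.

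For (i), the noise-free $+1$-postselected circuit has no flux endpoints at all (flux worldlines only propagate in time), so any endpoint of $W_f$ must be traceable to a noise event and lies within the $\gamma=1$ neighborhood $\nx_1(E)$. The just-in-time correction strings $C_f$ are chosen in step 3.2 of Algorithm~\ref{alg:greedy} as shortest paths between points of the running unpaired-syndrome set $\Sigma_t\subseteq \partial W_f$, so $\partial(W_f+C_f)\subseteq \partial W_f\subseteq \nx_1(E)$. Part (ii) follows analogously, invoking the TQD-specific rule from Sec.~\ref{subsec:chargeandflux} that a charge worldline of color $c$ can terminate on a flux worldline of color $c'\ne c$: the endpoints of $W_c$ sit either on noise locations or on $W_f\cup C_f$, and the RG correction $C_c$ only pairs detected charge syndromes. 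Parts (iii)--(iv) are $\mathbb{Z}_2$ parity statements; the plan is to check locally that each individual faulty channel creates flux (resp.\ charge) endpoints in pairs---reflecting that the noiseless path integral sums over \emph{closed} membranes and \emph{closed} charge worldlines modulo flux insertions---and to sum this parity over all faults in a connected component $X\subseteq \nx_1(E)$ (resp.\ a disconnected piece $Y$ contained in a 2-sphere), using that correction strings contribute only paired endpoints.

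The main work is (v). Under hypothesis (a), there exists a topologically trivial spacetime 3-ball $\Omega\supseteq \nx_1(E\cup W_f\cup C_f)$. My plan is to factor the noisy trajectory as the noise-free TQD path integral everywhere outside $\Omega$, combined with an operator localized to $\Omega$ determined by the defect configuration inside. Outside $\Omega$ no defects are present, so the circuit realizes the fixed-point path integral and agrees with the ideal encoded evolution. Inside $\Omega$, (iii) forces every flux component to close off within $\Omega$ (since its endpoints lie in $\nx_1(E)\subset \Omega^\circ$), and by (ii)--(iv) the charge worldlines inside $\Omega$ can be completed to a cycle $W_c+C_c+\widetilde C$ whose homology class in $\Omega$ is precisely the object constrained by hypothesis (b).

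The part I expect to be the main obstacle is the final step: showing rigorously that the operator localized in $\Omega$ is trivial on the code. In the non-Abelian TQD, flux worldlines are not freely deformable, so one cannot naively ``sweep them away'' inside $\Omega$ even when they are contractible. My resolution would invoke the uniqueness-of-ground-state property of the TQD path integral on a small 3-ball noted in Sec.~\ref{subsub:corners}: the evaluation of the path integral on $\Omega$ with boundary conditions matching the noise-free circuit on $\partial \Omega$ lives in a one-dimensional space, so any consistent internal flux configuration evaluates to a $c$-number. Combined with (b), which forces the Abelian charge contribution to be homologically trivial and hence to bound a 2-chain that acts as the identity on the code, this yields $\mathcal{D}\circ \mathcal{C}_{\widetilde{E}}^{\vec f}\circ \mathcal{E} = \mathcal{E}$, proving (v).
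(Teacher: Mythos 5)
You should first be aware that the paper does not actually prove this lemma: immediately after the statement it says that these properties ``follow from the fixed-point properties (namely, zero correlation length) of the path integral that are inherited by the circuit,'' explicitly declines to show this for its circuits, and defers to Refs.~\cite{Bauer2024,Bauer2024a} for proofs of analogous statements. Your sketch is therefore a reconstruction rather than a rederivation, and it is a faithful one: the route you take for (i)--(iv) (endpoints of worldlines are traceable to faults because the noise-free postselected circuit realizes the closed-membrane path integral, corrections only cancel or pair endpoints, and parity is conserved locally) and for (v) (split spacetime into a ball $\Omega$ containing all defects plus a defect-free exterior, then argue the ball contributes only a scalar) is exactly the mechanism the paper gestures at, including its footnoted remark that any error supported on a 3D ball-like region does not affect the logical action of the postselected network.

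Two cautions. First, in (i) your intermediate claim $\partial(W_f+C_f)\subseteq\partial W_f$ is only correct after you make explicit the cancellation bookkeeping: a correction string applied at time $t_c$ terminates a flux worldline, so the new endpoint of $W_f$ at time $t_c$ and the endpoint of $C_f$ coincide and cancel in the symmetric difference, leaving only the fault-created endpoints; as literally written the inclusion reads as if $\partial C_f\subseteq\partial W_f$ pointwise, which is not what happens. Second, and more importantly, the step you correctly flag as the main obstacle in (v) --- that the operator localized in $\Omega$ is a $c$-number even though flux worldlines in the TQD are not freely deformable --- is precisely the content that the paper itself does not establish and outsources to the path-integral references. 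Your appeal to one-dimensionality of the ground space on a small 2-sphere is the right invariant to invoke, but to make it a proof you would need to show that the specific circuit (with its measurement outcomes reinterpreted as defect insertions and its adaptive corrections) really does evaluate to the fixed-point path integral with those defects, and that the ball evaluation with matching boundary conditions is nonzero (so that the trajectory occurs with nonvanishing amplitude and the channel is trace-preserving after summing over $\vec f$). Neither you nor the paper supplies that argument here, so your proposal should be regarded as a correct outline at the same level of rigor as the paper's own justification, not as closing the gap the paper leaves open.
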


Intuitively, statement (v) says that a logical fault can only occur if the error together with the correction is topologically (or homologically) non-trivial.
The lemma also takes into account the fact that due to the twist in the path integral, charge worldlines can end on flux worldlines, see Sec.~\ref{subsec:chargeandflux}.
These properties follow from the fixed-point properties (namely, zero correlation length) of the path integral that are inherited by the circuit, and of its realization with the flux and charge worldlines, discussed in Section~\ref{subsec:chargeandflux}. We find that they hold for any of the circuits proposed in this paper, but we will not explicitly show this. For explicit proofs of analogous statements, we refer the reader to Refs.~\cite{Bauer2024,Bauer2024a}.

We start by showing an auxiliary result; the decoder that runs up to a high enough level fully corrects an error cluster in complete isolation from any other errors while increasing its diameter by at most a certain fraction.  A subset $A \in E$ of the error set $E$ is called \emph{isolated} from other errors by distance $R$ if $d(A, E \setminus A) \geq R$. We call the box $\mathcal{B}_{\text{is}}$ an isolation region of set $A$ if it includes the $R$-neighborhood of the smallest box $\mathcal{B}$ containing $A$ and does not include any points from $E$.

\begin{lemma} \label{lemma:removal-of-isolated}
    Assume some fixed $0 < n < n_{\mathrm{max}}$, and $Q > 24$ and that we are given a noise realization that contains an error cluster $F$ that fits in a box $\mathcal{B}$ sized $Q^n$ and is isolated from any other error by at least $\frac{1}{3} Q^{n+1}$. Denote its syndromes by $\sigma_F$ and assume that $\mathcal{B}$ starts at time $t_0$. We denote the isolation region of this error in spacetime as $\mathcal{B}_{\mathrm{is}}$.
    
    We continuously run the operation of the circuit just-in-time decoder (Alg.~\ref{alg:greedy}) for all $k$ up to the smallest $k$ such that $2^k \geq  Q^n+2$. We call this value $k_n$.
    \begin{itemize}
        \item[(a)] At each timestep $t \geq t_0$, by the end of operation of level-$k_n$ decoder, the set $s \cap \mathcal{B}_{\mathrm{is}}$ will be empty, and
        \item[(b)] The syndromes $\sigma_F$ will be removed from $\Sigma_t$ by the time $t = t_0 + 3 Q^n + 5$. The union of all corrections determined by level-$k$ operation of the just-in-time decoder $ C^{(k)}$ with $k \leq k_n$ together with $F$ is contained in a $(2 Q^n +5)$-neighborhood of $\mathcal B$.
    \end{itemize}
\end{lemma}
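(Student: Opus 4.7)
The plan is to exploit two competing length scales: the diameter $Q^n$ of the error cluster and the isolation distance $\tfrac{1}{3}Q^{n+1}$, with the decoder's level-$k_n$ pairing radius $2^{k_n}$ slotted strictly between them. First I would establish the key geometric facts. By Lemma~\ref{lemma:fixed-point}(i), all syndromes $\sigma_F$ lie in $\nx_1(\mathcal{B})$, so any two are at $\ell_\infty$-distance at most $Q^n+2$, and by Lemma~\ref{lemma:fixed-point}(iii), $|\sigma_F|$ is even. From the definition of $k_n$, one has $Q^n+2 \le 2^{k_n} < 2(Q^n+2)$, hence $2^{k_n} \le 2Q^n+3$, and for $Q>24$ a short calculation gives $2^{k_n} < \tfrac{1}{3}Q^{n+1}-2$, which is the separation between $\sigma_F$ and any syndrome produced outside $\mathcal{B}_{\mathrm{is}}$.

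Next, I would argue by induction on the level $k\le k_n$ and on the timestep $t$ that no pairing ever crosses $\partial \mathcal{B}_{\mathrm{is}}$: the pairing radius $2^k\le 2^{k_n}$ is strictly less than the separation just established, and correction endpoints produced at previous times inherit the same separation because they must lie within a pairing radius of a syndrome already confined to $\mathcal{B}_{\mathrm{is}}$. A crucial consequence is that the unpaired residual inside $\mathcal{B}_{\mathrm{is}}$ always remains within a single box of diameter $\le Q^n+2$, hence inside a single level-$k_n$ pairing neighborhood.

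For (a), fix $t\ge t_0$ and inspect $s$ at the start of the level-$k_n$ sweep. The subset of $s\cap \mathcal{B}_{\mathrm{is}}$ passing the time filter has even cardinality (parity preserved by pairwise greedy matching and by the even parity of $|\sigma_F|$), and every pair of these points is within $2^{k_n}$; greedy matching at level $k_n$ then exhausts them regardless of the order chosen, so $s\cap \mathcal{B}_{\mathrm{is}}=\emptyset$ afterwards. For (b), the latest appearance time of a $\sigma_F$-syndrome is at most $t_0+Q^n+1$, so the time filter $t-u_t \ge 2^{k_n}$ is satisfied simultaneously for every $\sigma_F$-syndrome once $t\ge t_0+Q^n+1+2^{k_n}$; using $2^{k_n}\le 2Q^n+3$ this is achieved by $t_0+3Q^n+4 < t_0+3Q^n+5$. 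Combined with (a), $\sigma_F$ is then removed from $\Sigma_t$. The containment claim is routine: every correction string has spatial extent $\le 2^{k_n}\le 2Q^n+3$, is inserted at a time $\le t_0+3Q^n+5$, and has endpoints in $\nx_1(\mathcal{B})$, so the union of $F$ with all correction strings fits in the $(2Q^n+5)$-neighborhood of $\mathcal{B}$.

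The main obstacle is the inductive bookkeeping in the second step: one must verify that $\Sigma_t\cap \mathcal{B}_{\mathrm{is}}$ truly keeps its even parity across all timesteps and levels, and that correction endpoints placed at earlier times never creep outside $\mathcal{B}_{\mathrm{is}}$. Both hinge on the strict inequality $2^{k_n} < \tfrac{1}{3}Q^{n+1}-2$, which is the reason the numerical condition $Q>24$ enters the statement.
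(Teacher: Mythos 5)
Your proposal follows essentially the same route as the paper's proof: the same separation inequality $2^{k_n}\le 2Q^n+4<\tfrac{1}{3}Q^{n+1}-2$ (valid for $Q>24$) rules out cross-pairing with syndromes outside $\mathcal{B}_{\mathrm{is}}$, the same evenness and timing arguments give removal of $\sigma_F$ by $t_0+3Q^n+5$, and the same shortest-path bound gives the $(2Q^n+5)$-neighborhood containment. The one imprecise step is your claim that the subset of $s\cap\mathcal{B}_{\mathrm{is}}$ surviving the level-$k_n$ time filter always has even cardinality: the syndromes of $F$ appear at times spread over a window of width up to $Q^n+2$, so during a transient interval (when only some of them have aged past $2^{k_n}$) this subset can be odd and one point survives the level-$k_n$ sweep unpaired; the paper sidesteps this by noting that for $t\le t_0+Q^n$ every point of $\sigma_F$ fails the filter, and in any case the transient leftover is harmless since its age is still below $2^{k_n+1}$, so it is filtered out again at level $k_n+1$ before any higher-level pairing could mismatch it.
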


\begin{proof}
    First, we notice that the syndromes $\sigma_F$ must be contained in a $1$-neighborhood of $\mathcal B$, which we call $\mathcal{B}_{\mathrm{syn}}$. In addition, their number must be even (since any isolated error must be neutral, see Lemma~\ref{lemma:fixed-point}(iii)). 

    As we discussed below Alg.~\ref{alg:greedy}, $\Sigma_t$ at each time step only contains unpaired noise syndromes. Consider first $t \in [t_0, t_0 + Q^n]$. At every such time step, a given point $u$ in $\Sigma_t$ must be either left with no action due to step 3.1 (i.e.~removed from $s$ and passed directly to $\Sigma_{t+1}$) or paired with another point in $\sigma_F$ and removed (i.e.~removed from $s$ but not passed to $\Sigma_{t+1}$). Any point $u \in \sigma_F$ cannot be paired with any point $v$ outside of $\sigma_F$ because  $d(u,v) \geq \frac{1}{3}Q^{n+1} - 2 > 2^{k_n}$ (the latter is true for $Q > 24$~\footnote{The way to see this is to use that $k_n$ is the smallest integer satisfying $2^{k^n} \geq Q^n + 2 $. This means that $Q^n + 2  \geq 2^{k_n-1}$ or $2Q^n + 4  \geq 2^{k_n}$. We then require $\frac{1}{3} Q^{n+1} - 2  > 2 Q^n + 4  \geq 2^{k_n} $ for any $n \geq 0$.}). This shows (a) for $t \in [t_0, t_0 + Q^n]$.

    The total correction $C^{k}_{t}$ for each such timestep $t \in [t_0, t_0 + Q^n]$ must be contained in the $1$-neighborhood of the box $\mathcal{B}$. This is because for any $u, u' \in \sigma_F$ that are paired, the shortest path between their spatial coordinates must lie in the spatial shadow of $\mathcal{B}$ by definition of $\mathcal B$; this path placed in spacetime at time $t+1$ for $t \in [t_0, t_0 + Q^n]$ must be in 1-neighborhood of $\mathcal{B}$.

    Consider now $t >  t_0 + Q^n$, i.e. the times later than the latest time coordinate in $\sigma_F$. There must be an even number from points in $\sigma_F$ left in $\Sigma_t$ at the beginning of each time step. At the time $t = t_0 + Q^n + 2^{k_n}$, any point $u \in \sigma_F \cap \Sigma_t$ is closer to any remaining point $v \neq u$, $v \in \sigma_F \cap \Sigma_t$, since $\max_{u,v \in \sigma_F} d(u,v) \leq Q^n + 2 $ and $2^{k_n} \geq Q^n +2 $. In addition, it cannot be removed from $s$ by any level $k < k_n$ of the operation of the decoder. Therefore, the algorithm must remove all the points by pairing at this time at the latest. Because of the shortest matching, the totality of corrections $C^{(k)}$ matching the points in $\sigma_F$ must belong to a $(2^{k}  + 1)$-neigborhood of $\mathcal{B}$. Due to $2 Q^n + 4  \geq 2^{k}$ we obtain the statement of (b). 

    In addition, at each timestep in the paragraph above, any point in $s \cap \mathcal{B}_{\mathrm{is}}$ must be either removed by step 3.1 or paired up by the time $t = t_0 + Q^n + 2^{k_n}$. This completes statement (a) as well. 
\end{proof}

As a direct consequence of the Lemma above, for any noise realization obeying cluster decomposition, in a circuit that is followed by a noiseless just-in-time decoder, all the flux clusters will be corrected by the time $2 Q^{n_{\max}}+5$. Since $Q^{n_{\max} + 1} < L/2$, for large enough $Q$, $2 Q^{n_{\max}}+5 < L/2$. Thus, if we run noiseless decoding for another $T_1=O(L)$ time, the state is guaranteed to return to some clean logical state in the flux sector. 

Now, we show that the just-in-time decoder is able to correct an error realization in spacetime that obeys cluster decomposition with large enough $Q$.

\begin{theorem}\textnormal{\textbf{(The effect of the just-in-time 
RG decoder on noise clusters).}} \label{theorem:onthefly-respects}
     Given a noise trajectory $E$ that obeys cluster decomposition from Lemma~\ref{lemma:clustering} with $Q  > 24$, assume running the circuit with the just-in-time decoder under assumptions of this section. Then, the points in each cluster are paired only with the points within the same cluster (thus, there is a well-defined correction associated with each cluster), and the following holds.
     \begin{itemize}
         \item [(a)] By time $T + T_1$ the syndromes for each cluster $F_{n,i}$ for every $0 \leq n \leq n_{\mathrm{max}}$ are removed by pairing.
         \item [(b)] The correction associated with each cluster $F_{n,i}$ is contained inside a $(2 Q^n +5)$-neighborhood of $F_{n,i}$.
     \end{itemize}
\end{theorem}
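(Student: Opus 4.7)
The plan is to proceed by strong induction on the cluster level $n\in\{0,1,\ldots,n_{\max}\}$, with inductive hypothesis that for every $m<n$ and every level-$m$ cluster $F_{m,j}$, Algorithm~\ref{alg:greedy} has already paired the syndromes of $F_{m,j}$ only with other points of $F_{m,j}$, with the associated correction strings confined to a $(2Q^m+5)$-neighborhood of $F_{m,j}$, by time $t_{m,j}:=t_0(F_{m,j})+3Q^m+5$. The base case $n=0$ is immediate from Lemma~\ref{lemma:removal-of-isolated} applied to a single-point cluster. In the inductive step, the goal is to show that at the moment the level-$k_n$ loop in Algorithm~\ref{alg:greedy} examines $\sigma_{n,i}$, namely at $t=t_0(F_{n,i})+Q^n+2^{k_n}$, the cluster $F_{n,i}$ is \emph{effectively isolated} in $s$, so that Lemma~\ref{lemma:removal-of-isolated} applies to $F_{n,i}$ verbatim and gives both conclusions (a) and (b) at level $n$.

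Verifying effective isolation amounts to ruling out three possible sources of interfering syndromes lying within box distance $2^{k_n}$ of $\sigma_{n,i}$ in $s$: (i) syndromes of other level-$n$ clusters; (ii) syndromes of higher-level clusters $F_{m,k}$ with $m>n$; and (iii) residual syndromes of lower-level clusters $F_{m,k}$ with $m<n$. For (i) and (ii), the fact that $F_{n,i}$ is not absorbed into any level-$(n+1)$ minimal cluster forces its distance from every other level-$\geq n$ minimal cluster to be at least $Q^{n+1}/2-Q^n\geq \tfrac{1}{3}Q^{n+1}$; for $Q\geq 25$ this strictly exceeds $2^{k_n}\leq 2Q^n+3$ even after accounting for the $\pm 1$ slack between errors and their syndromes, so no such point lies in the pairing reach. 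For (iii), any lower-level cluster $F_{m,k}$ whose syndromes have aged enough to appear in $s$ at scale $2^{k_n}$, i.e.\ with $t_0(F_{m,k})\leq t-2^{k_n}$, has by the inductive hypothesis been cleared from $\Sigma$ by its own deadline $t_{m,k}\leq t-2^{k_n}+3Q^m+5 < t$, using $3Q^m+5<Q^n+2\leq 2^{k_n}$, which holds for $m<n$ and $Q$ large enough.

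The main obstacle is the interlocking of spatial and temporal scales across decoder levels: at every timestep all levels $k=0,\ldots,k_m$ are processed in sequence, so a priori a lower-level cluster could still hold unpaired syndromes in the same spatial patch at the moment level $k_n$ acts. The resolution comes from ordering events in wall-clock time rather than in decoder level: the waiting delay $2^{k_n}$ built into Step~3.1 of Algorithm~\ref{alg:greedy} is long enough that every spatially nearby lower-level cluster completes its own shorter clearing cycle before $F_{n,i}$ becomes visible to the level-$k_n$ loop, and simultaneously short enough that no other level-$\geq n$ cluster can drift within the pairing radius. Once effective isolation is established, Lemma~\ref{lemma:removal-of-isolated}(b) delivers part (b) of the present theorem, while the deadline $t_0(F_{n,i})+3Q^n+5$ supplied by part (a) of that lemma is bounded above by $T+T_1$ provided the noiseless tail satisfies $T_1=\Omega(Q^{n_{\max}})=O(L)$, yielding part (a). The constant $Q=253$ appearing in Theorem~\ref{theorem:main} is chosen large enough that all of the above strict inequalities hold simultaneously, with room for the additional constants that arise in the full fault-tolerance analysis.
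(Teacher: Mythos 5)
Your proposal is correct and follows essentially the same route as the paper: induction on the cluster level, reducing each level-$n$ cluster to the isolated-cluster setting of Lemma~\ref{lemma:removal-of-isolated} by combining the $\tfrac{1}{3}Q^{n+1}$ separation from Lemma~\ref{lemma:clustering} (for same- and higher-level clusters) with the inductive hypothesis that lower-level clusters are cleared before they can interfere. Your treatment of case (iii) via explicit wall-clock deadlines is somewhat more detailed than the paper's appeal to Lemma~\ref{lemma:removal-of-isolated}(a), but the underlying mechanism and the resulting bounds are the same.
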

\begin{proof}
    We first show that at each timestep $t$, the syndromes of each 
    error cluster $F_{n,i}$,
    which we denote $\sigma_{F_{n,i}}$, are operated upon by the just-in-time decoder at levels $k \leq k_n$ (i.e. the smallest $k$ such that $2^k \geq  Q^n+2$) as if this
    cluster
    was completely isolated from any other errors by $\frac{1}{3} Q^{n+1}$. Once we show this, we can use the statement (b) in Lemma~\ref{lemma:removal-of-isolated} from which the statement of the Theorem follows straightforwardly. 

    We show this by induction. Each
    level-0 cluster
    $F_{0,i}$ is just a single error that is isolated from any other error by at least $\frac 1 3 Q$. In this case, Lemma~\ref{lemma:removal-of-isolated} applies. From  Lemma~\ref{lemma:removal-of-isolated}(a), we additionally have that, at every time step, by the smallest level of decoder operation $k_0: 2^{k_0} \geq 3 $, the syndromes in $\sigma_{F_{0,i}}$ that are contained in $s$ must be removed from it. From the proof of the Lemma, we also have that any in $\sigma_{F_{0,i}}$ can only be paired with another syndrome in this set. This completes the induction base.

    Next, assume that the assumption holds for some $n$. This means that all syndromes of errors in $\cup_{i, m<n} F_{m,i}$ have been treated according to the assumption and, by Lemma~\ref{lemma:removal-of-isolated}(a), their syndromes must be removed independently of the rest of the error clusters. This means that the decoder can be effectively assumed to operate on a decomposition $\cup_{n< s < n_{\mathrm{max}}} F_s$. Consider connected components $F_{n+1,i}$, which we can now assume to be isolated from any other error whose syndromes enter in $s$ by at least $\frac{1}{3} Q^{n+2}$. Applying Lemma~\ref{lemma:removal-of-isolated}, we obtain the statement of the Theorem.
\end{proof}

We denote the $R$-neighborhood of a box $A$ as $\mathcal{N}_{R}(A)$ which we define to be a set that includes the points in the $R$-neighborhood of $A$ as well as $A$ itself. For any given constant $\mu$, we also define $\mathcal{N}_{\mu,n} \equiv \mathcal{N}_{\mu Q^n}$.

We now show that if the decoder ``spreads'' each
cluster
$F_{n,i}$ of the error by at most some factor $\alpha Q^n$ (from Theorem~\ref{theorem:onthefly-respects}, for the just-in-time decoder in this section, we can choose $\alpha = 7$, which is a conservative overestimate\footnote{Setting $\alpha = 7$ is a vast overestimate as for $n > 1$, $\alpha = 2 + \frac{5}{Q}$ is sufficient. One can obtain tighter bounds on $Q$ and the error threshold if we assume the spreading of connected error clusters by $\alpha' Q^n + \mu'$ as opposed to simply setting it to be $\alpha Q^n$. The proofs extend relatively straightforwardly to this case. }, we can consider the set of all ``fattened'' boxes and find a decomposition that is effectively sparse if $Q$ is large enough. 

Calling $\bx_{n,i}$ each $\alpha Q^n$-``fattened'' cluster  $F_{n,i}$, we now have a set of points in spacetime $\bx = \bigcup_{n = 0;i}^{n = n_{\mathrm{max}}}{\bx_{n,i}}$ to which all flux worldlines in the spacetime volume of the circuit are constrained. We also refer to $\bx_{n,i}$ as a ``box'' where it does not cause ambiguity. Some of the fattened clusters can now overlap, see Fig.~\ref{fig:sparsity}. Nevertheless, we now show that $\bx$ admits a sparse decomposition. To this end, we first introduce the following definitions.  

\begin{definition}[\textbf{Linking}] \label{def:linking}
    Consider a union of sets $\bx = \bigcup_{n = 0;i}^{n = n_{\mathrm{max}}}{\bx_{n,i}}$. We say that the box $\bx_{s,i}$ is directly linked to the box $\bx_{m,j}$  if
    \begin{equation}
        d(\bx_{s,i}, \bx_{m,j}) \leq 70 Q^{\min(s,m)}.
    \end{equation}
    (if two boxes overlap, we define the distance between them to always be 0). We refer to the condition above as the existence of an $s$-link.

    We say that the box $\bx_{s,i}$ is (indirectly) linked to the box $\bx_{m,j}$ if there exists a chain of links that starts at the box $\bx_{s,i}$ and ends at the box $\bx_{m,j}$. If such a chain of links does not exist, we say that $\bx_{n,i}$ and $\bx_{m,j}$ are unlinked.
\end{definition}

The parameter $70$ was chosen such that the two following lemmas hold.

\begin{definition}[\textbf{Linked groups}] \label{def:groups}
    We define a linked group $G_{n,i}$ as a union of boxes that consists of at least one box and is obtained as recursively adding every box that is linked to any box that is already in the group, where $n$ is the highest level of a box in the group (we show below that there is a unique highest-level box per group). In other words, $G_{n,i}$ is a connected component of the set $\bx$ under linking.

    For any given $\bx_{n,i}$ define the according $s$-candidate group $C_{n,i}^{s}$ with $s \leq n$ as the union of all $\bx_{m,j}$ with $m \leq s$ that are directly or indirectly linked to $\bx_{n,i}$ via chains of links that do not contain any boxes of level larger than $s$. 
\end{definition}

Examples of a cluster group and a candidate group are shown in Fig.~\ref{fig:sparsity}. By definition, the union of all fattened boxes can be decomposed into cluster groups: $\bx = \bigcup_{n = 0; i}^{n = n_{\max}} G_{n,i}$.
In the following Lemma, we establish some properties of cluster groups that are instrumental to then showing that their union, in fact, provides a sparse cluster decomposition of $\bx$.
 
\begin{figure}[!t]
\includegraphics[width= \columnwidth]{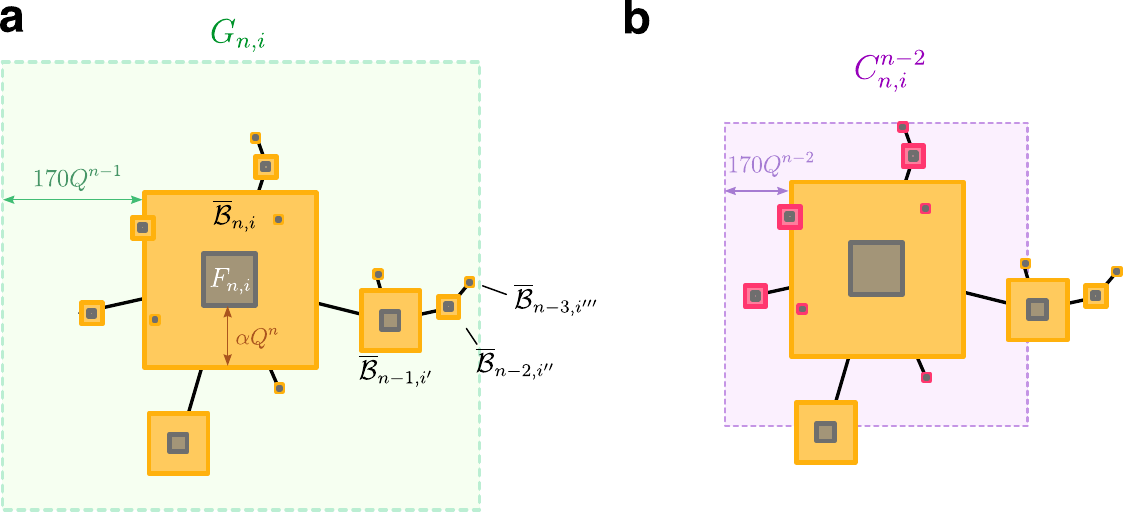}
\caption{ Schematic illustrating linking of ``fattened'' clusters (shown in orange) into (a) a linked group $G_{n,i}$ and (b) $(n-2)$-candidate group $C^{n-2}_{n,i}$ (not to scale). The boxes that are linked are connected by a black line. The group $G_{n,i}$ contains all the boxes linked to $\bx_{n,i}$. The boxes of level $s \leq n-2$ that are linked to $\bx_{n,i}$ via chains that do not contain $m >n-2$ boxes are shown in red in (b); together, these form an  $(n-2)$-candidate group.
}
\label{fig:sparsity}
\end{figure}

\begin{lemma}\textnormal{\textbf{(Linking properties).}} \label{lemma:linking}
Consider a set $E$ obeying the cluster decomposition $E = \bigcup_{n \leq n_{\max}}F_n$ from Lemma~\ref{lemma:clustering} for $Q > 252$.  For each cluster $F_{n,i}$, define the associated ``fattened'' cluster $\bx_{n,i}$ as the smallest box fitting $\alpha Q^n$-neighborhood of $F_{n,i}$.   Now consider the set of points
$\bx = \bigcup_{n = 0;i}^{n = n_{\mathrm{max}}}{\bx_{n,i}}$. Under  definitions~\ref{def:linking},~\ref{def:groups} introduced above, the following properties hold:
\begin{itemize}
    \item [(1)] Consider three boxes $\bx_{n,i}$, $\bx_{m,j}$ and $\bx_{s, \ell}$ with $m \geq n$    
    and $s<m,n$. If the pairs $\bx_{n,i}$, $\bx_{s, \ell}$ and $\bx_{m,j}$, $\bx_{s, \ell}$ are directly linked, then $\bx_{n,i}$ and $\bx_{m,j}$ must be directly linked.
    \item[(2)] Any two boxes same-level boxes $\bx_{n,i}$ and $\bx_{n,j}$ with $i \neq j$ cannot be linked directly. They also cannot be linked indirectly unless there exists a higher-level box $\bx_{m, \ell}$ to which they both are linked. 
\end{itemize}

In addition, the groups of boxes have the following properties:
\begin{itemize}
    \item [(3)] Each linked group $G_{n,i}$ contains a unique highest-level box $\bx_{n,i}$. 
    \item[(4)] For any given box $\bx_{n,i}$ and $s \leq n$, the candidate group $C^s_{n,i}$ fits in the $170 Q^s$-vicinity of the box:
    \begin{equation}
     C_{n,i}^s \subseteq \nx_{170,s} (\bx_{n,i}).
    \end{equation} 
\end{itemize}
\end{lemma}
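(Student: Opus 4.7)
\noindent\emph{Proof plan.} The key geometric inputs are that each fattened box $\bx_{n,i}$ has $\ell_\infty$-diameter at most $(1{+}2\alpha)Q^n=15\,Q^n$ (with $\alpha=7$), and by Lemma~\ref{lemma:clustering}(2) the underlying same-level clusters $F_{n,i},F_{n,j}$ are separated by at least $\tfrac{1}{3}Q^{n+1}$, so their fattenings are separated by at least $\tfrac{Q-6\alpha}{3}Q^{n}=\tfrac{Q-42}{3}Q^{n}$. I would first establish (1) by the triangle inequality in the box metric,
\begin{equation*}
d(\bx_{n,i},\bx_{m,j})\leq d(\bx_{n,i},\bx_{s,\ell})+\mathrm{diam}(\bx_{s,\ell})+d(\bx_{s,\ell},\bx_{m,j})\leq 155\,Q^{s},
\end{equation*}
which is bounded by $70\,Q^{\min(n,m)}$ for $Q\geq 3$ because $s<\min(n,m)$. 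The first half of (2) is then the observation that $\tfrac{Q-42}{3}Q^{n}>70\,Q^{n}$ is precisely the hypothesis $Q>252$.

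The central tool for the remaining items is a shortest-chain argument. In any shortest linking chain $A_0,A_1,\ldots,A_k$, no interior box $A_r$ can be a \emph{strict} local minimum of the level sequence, for otherwise property (1) would directly link $A_{r-1}$ and $A_{r+1}$, shortening the chain and contradicting minimality. Consecutive levels are also distinct by the first half of (2). For the second half of (2), take a shortest chain $A_0=\bx_{n,i},\ldots,A_k=\bx_{n,j}$ and assume for contradiction that every intermediate level is $\leq n$. Starting from the forced drop $\mathrm{lev}(A_0)=n>\mathrm{lev}(A_1)$, the no-strict-local-min rule and the distinctness of adjacent levels inductively force $\mathrm{lev}(A_{i-1})>\mathrm{lev}(A_i)>\mathrm{lev}(A_{i+1})$ at every interior $i$, so $n=\mathrm{lev}(A_0)>\mathrm{lev}(A_1)>\cdots>\mathrm{lev}(A_k)=n$, which is impossible. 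Hence some $A_r$ has level $>n$, and the two endpoints are both linked to it through subchains. Property (3) is then an immediate corollary, since two highest-level boxes in the same group would otherwise demand a strictly higher-level box inside the group.

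For (4), the same propagation applied to a shortest chain from $\bx_{n,i}$ to any $\bx_{m,j}\in C^{s}_{n,i}$ (intermediates restricted to level $\leq s$) yields strictly decreasing levels $s\geq\mathrm{lev}(A_1)>\mathrm{lev}(A_2)>\cdots>\mathrm{lev}(A_k)=m$, so $\mathrm{lev}(A_i)\leq s-i+1$ and $k\leq s+1$. The distance from any point of $A_k$ to $\bx_{n,i}$ is then bounded by a convergent geometric sum of link lengths and intermediate diameters,
\begin{equation*}
\sum_{i=0}^{k-1}70\,Q^{\mathrm{lev}(A_{i+1})}+\sum_{i=1}^{k}15\,Q^{\mathrm{lev}(A_i)}\leq 85\,Q^{s}\cdot\tfrac{Q}{Q-1},
\end{equation*}
which is comfortably below $170\,Q^{s}$ once $Q>252$. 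This gives $C^{s}_{n,i}\subseteq\nx_{170,s}(\bx_{n,i})$.

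The main obstacle I anticipate is the bookkeeping in the shortest-chain induction of (2) --- in particular confirming that the $k=2$ case is already ruled out by combining (1) with the first half of (2), and that ``no strict local minimum plus distinct consecutive levels'' really does force monotonically decreasing levels from a boundary drop. Once this is set up cleanly, (3) is immediate and (4) reduces to the geometric-sum calculation above.
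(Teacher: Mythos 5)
Your proposal is correct and follows essentially the same route as the paper: the triangle inequality for (1), the $\tfrac{1}{3}Q^{n+1}-2\alpha Q^{n}>70\,Q^{n}$ separation bound for the first half of (2), a chain argument combining (1) with the no-direct-same-level-link fact for the second half of (2) and for (3), and the same geometric sum $85\sum_{m\le s}Q^{m}<170\,Q^{s}$ for (4). Your bookkeeping is in fact marginally more careful than the paper's in two places --- you include the $\mathrm{diam}(\bx_{s,\ell})$ term in the triangle inequality of (1) (the paper writes only $2\times 70\,Q^{s}$), and your ``shortest chain has no strict local minimum, hence levels decrease monotonically'' formulation makes the induction in (2) and (4) cleaner than the paper's appeal to irreducible chains --- but the substance is identical.
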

\begin{proof}
    (1):
    Assume that $\bx_{s,\ell}$ is directly linked to both $\bx_{n,i}$ and $\bx_{m,j}$.
    This means:
    \begin{equation}
        d(\bx_{s,\ell},\bx_{n,i}) \leq 70 Q^s, \quad d(\bx_{s,\ell},\bx_{m,j}) \leq 70 Q^s.
    \end{equation}
    By the triangle inequality this means that $\bx_{n,i}$ and $\bx_{m,j}$ must be not too far from each other:
    \begin{equation}
        d(\bx_{n,i},\bx_{m,j}) \leq 2\times 70 Q^s < 70 Q^{n}.
    \end{equation}
    The second inequality holds for $Q>2$, and shows that $\bx_{n,i}$ and $\bx_{m,j}$ are linked, proving statement (1).

    (2): First of all, we show that a direct link between $\bx_{n,i}$ and $\bx_{n,j}$ is impossible. We establish the following lower bound on the distance between the boxes:
    \begin{equation}
    \begin{split}
           d(\bx_{n,i}, \bx_{n,j}) &\geq  d(F_{n,i}, F_{n,j})  - 2 \alpha Q^n 
           \\
           &\geq \frac{1}{3} Q^{n+1} - 2 \alpha Q^n > 70 Q^n  .
    \end{split}
    \end{equation}
    the last inequality is true for $\alpha = 7$, $Q > 252$. Now consider the case when  $\bx_{n,i}$ and $\bx_{n,j}$ are connected via a chain of direct links, which is irreducible in the sense that only the boxes that neighbor in the chain are linked directly, but no other pair in the chain is. On the contrary to the statement, let us assume that the chain only contains intermediary boxes of the level $s\leq n$. There must exist a box $\bx_{s,\ell}$ with $s < n$ that directly links to two other boxes whose level is larger than $s$, which contradicts (1) by the irreducibility assumption. This shows (2).

    Property (3) follows straightforwardly from (2). By definition, any pair of boxes in $G_{n,i}$  are (indirectly) linked. Any highest-level box $\bx_{n,j}$ in $G_{n,i}$ cannot be linked by (2), to any other box of the same level. Thus, it must be the only box of this level contained in $G_{n,i}$. 

    Finally, we show the last property (4). Consider any $\bx_{n,i}$ and the associated $s$-level candidate group $C^s_{n,i}$. 
    By properties (1) and (2), the linking in the candidate group from $\bx_{n,i}$ to any lower-level box within the group must go monotonically down in levels. Therefore, in the worst-case scenario, a single chain of links from the box $\bx_{n,i}$ goes all the way from $s$ to $0$ while each next box in $\{ \bx_{m,i_{m}}\}_{m = s.. 0}$ is directly linked to the previous one. The length these occupy together with the spacing due to the links can be bounded as
    \begin{equation}
    \begin{split}
        \mathcal{L} &\leq 70 Q^s + \mathrm{diam}(\bx_{s,i_s}) + 70 Q^{s-1} + \mathrm{diam}(\bx_{s-1,i_{s-1}}) + ... 
        \\
        &+ 70 + \mathrm{diam}(\bx_{0,i_0})   .
    \end{split}
    \end{equation}
    Using that $\mathrm{diam}(\bx_{s,i_s}) = (2 \alpha + 1)Q^s = 15 Q^s$, we arrive at 
      \begin{equation}
    \begin{split}
        \mathcal{L} \leq (70+ 15) \frac{Q^{s-1}-1}{Q-1} < 170 Q^s,
    \end{split}
    \end{equation}
    which is true for the chosen range of $Q$. Thus, since all link chains together with respective boxes must be within $\mathcal{L} < 170 Q^s$ distance away from $\bx_{n,i}$, the entire candidate group must be in its appropriate vicinity $C_{n,i}^s \subseteq \nx_{170,s} (\bx_{n,i})$, which gives statement (4). 
\end{proof}
Finally, we are ready to show that the set $\bx$ admits a sparse decomposition (into sets of points that are linked groups of clusters). 

\begin{lemma}\textnormal{\textbf{(Sparse decomposition of fattened clusters).}} \label{lemma:clustering2}
Under the assumptions of Lemma~\ref{lemma:linking}, for $Q > 252$, the decomposition of $\bx$ into linked groups,
\begin{equation}
    \bx = \bigcup_{n = 0; i}^{n = n_{\max}} G_{n,i}
\end{equation}
is sparse. That is, there exists a constant $\beta = 20$ such that
\begin{itemize}
    \item[(a)] each group is bounded in size:
    \begin{equation}
        \mathrm{diam}(G_{n,i}) < \beta Q^n, 
    \end{equation} and
    \item[(b)] the separation between any given group and all the groups that are same level or higher is at least three times the size of this group, namely, for all $G_{m,j}$ with $m\neq n$, $m \geq n$, we have
    \begin{equation} \label{eq:cond-b}
        d\left ( G_{n,i},G_{m, j}\right ) > 3 \beta Q^n.
    \end{equation}
\end{itemize}
\end{lemma}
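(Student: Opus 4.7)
The plan is to prove (a) via the confinement of each linked group to a narrow neighborhood of its unique top-level box from Lemma~\ref{lemma:linking}(3)--(4), and then deduce (b) by a two-step triangle-inequality argument that uses both the non-linking distance between top-level representatives of distinct groups and the tight confinement just established.

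For part (a), property (3) of Lemma~\ref{lemma:linking} guarantees that $G_{n,i}$ has a unique highest-level box $\bx_{n,i}$, so every other box in the group has level at most $n-1$ and therefore belongs to the candidate group $C_{n,i}^{n-1}$. Property (4) then gives $G_{n,i}\subseteq \nx_{170,n-1}(\bx_{n,i})$. Combining this with $\operatorname{diam}(\bx_{n,i})\le (2\alpha+1)Q^{n}=15\,Q^{n}$ yields
\begin{equation*}
\operatorname{diam}(G_{n,i})\le 15\,Q^{n}+2\cdot 170\,Q^{n-1}=\left(15+\tfrac{340}{Q}\right)Q^{n},
\end{equation*}
which is strictly less than $\beta Q^{n}=20\,Q^{n}$ for $Q>252$.

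For part (b), let $\bx_{n,i}$ and $\bx_{m,j}$ be the top-level representatives of two distinct groups with $m\ge n$. Because these boxes lie in different groups they are not linked, so the direct-link threshold from Definition~\ref{def:linking} already forces $d(\bx_{n,i},\bx_{m,j})>70\,Q^{n}$ (for $m=n$ this is exactly property (2) of Lemma~\ref{lemma:linking}). The key step is to propagate this separation to every box in $G_{m,j}$: for any $\bx_{t,k}\in G_{m,j}$, property~(4) places $\bx_{t,k}$ within $170\,Q^{t}$ of $\bx_{m,j}$, so a triangle inequality through $\bx_{t,k}$ gives
\begin{equation*}
d(\bx_{n,i},\bx_{t,k})\ge d(\bx_{n,i},\bx_{m,j})-d(\bx_{t,k},\bx_{m,j})-\operatorname{diam}(\bx_{t,k})\ge 70\,Q^{n}-185\,Q^{t}
\end{equation*}
for $t\le n-1$, while for $t\ge n$ the direct non-linking bound $d(\bx_{n,i},\bx_{t,k})>70\,Q^{n}$ already suffices. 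Taking the worst case over $t$ gives $d(\bx_{n,i},G_{m,j})\ge (70-185/Q)\,Q^{n}$. Finally, by the containment established in part (a), every point of $G_{n,i}$ lies within $170\,Q^{n-1}$ of $\bx_{n,i}$, so one more triangle inequality yields
\begin{equation*}
d(G_{n,i},G_{m,j})\ge \left(70-\tfrac{185}{Q}-\tfrac{170}{Q}\right)Q^{n}=\left(70-\tfrac{355}{Q}\right)Q^{n},
\end{equation*}
which exceeds $3\beta Q^{n}=60\,Q^{n}$ as soon as $Q>36$, and in particular for $Q>252$, giving \eqref{eq:cond-b}.

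The main obstacle is that low-level boxes of $G_{m,j}$ could a priori be very close to $\bx_{n,i}$ since the direct non-linking bound only scales as $Q^{\min(n,t)}$, which is small for small $t$; the resolution, supplied by Lemma~\ref{lemma:linking}(4), is that any such low-level box is pinned to $\bx_{m,j}$ to within a correspondingly small radius $170\,Q^{t}$, so the large separation $d(\bx_{n,i},\bx_{m,j})>70\,Q^{n}$ is inherited, up to lower-order corrections in $1/Q$, by $d(\bx_{n,i},\bx_{t,k})$.
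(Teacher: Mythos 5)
Your part (a) is correct and is essentially identical to the paper's argument: uniqueness of the top-level box from Lemma~\ref{lemma:linking}(3), containment $G_{n,i}=C^{n-1}_{n,i}\subseteq \nx_{170,n-1}(\bx_{n,i})$ from Lemma~\ref{lemma:linking}(4), and the resulting bound $15Q^n+340Q^{n-1}<20Q^n$.

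Part (b), however, has a genuine gap in the case of a box $\bx_{t,k}\in G_{m,j}$ with $t<n$ and $m>n$. You assert that Lemma~\ref{lemma:linking}(4) places such a box within $170\,Q^{t}$ of the top box $\bx_{m,j}$. That is not what the lemma says: property (4) bounds the $s$-candidate group $C^{s}_{m,j}$ by $\nx_{170,s}(\bx_{m,j})$, and a level-$t$ box of $G_{m,j}$ is only guaranteed to lie in $C^{m-1}_{m,j}$, because its chain to $\bx_{m,j}$ may pass through intermediate boxes of every level between $t$ and $m$. Concretely, a monotone chain $\bx_{m,j}\to\bx_{m-1,\cdot}\to\dots\to\bx_{t,k}$ can carry $\bx_{t,k}$ a distance of order $85\,Q^{m-1}$ from $\bx_{m,j}$, vastly exceeding $170\,Q^{t}$. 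Since $Q^{m-1}\geq Q^{n}$ when $m>n$, the only bound your route actually yields, $70\,Q^{n}-170\,Q^{m-1}-\dots$, can be negative, so the propagation step collapses exactly where it is needed. (For $m=n$ your argument survives with the corrected radius $170\,Q^{n-1}$, which is why the flaw is easy to overlook.)

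The paper closes this case differently: rather than anchoring the low-level box to the top box of its group, it anchors it to the \emph{lowest-level} box $\bx_{s,\ell}$ with $s\geq n$ on its chain. By minimality of $s$, the sub-chain below $\bx_{s,\ell}$ contains only boxes of level $<n$, so $\bx_{t,k}\in C^{n-1}_{s,\ell}\subseteq\nx_{170,n-1}(\bx_{s,\ell})$; and $\bx_{s,\ell}$, being unlinked from $\bx_{n,i}$, satisfies $d(\bx_{n,i},\bx_{s,\ell})>70\,Q^{n}$. Two triangle inequalities then give $d(G_{n,i},\bx_{t,k})\geq 70\,Q^{n}-2\times 170\,Q^{n-1}>60\,Q^{n}$ for $Q>252$. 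You need this anchoring device (or an equivalent one) to make part (b) go through.
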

\begin{proof}
Consider any cluster group $G_{n,i}$ whose highest-level box is $\bx_{n,i}$. By Lemma~\ref{lemma:linking} it follows that $ G_{n,i} \equiv C^{n-1}_{n,i}$. 
From Lemma~\ref{lemma:linking}(4), we have $  G_{n,i} \subseteq \nx_{170, n-1}(\bx_{n,i})$, and thus,
\begin{equation}
\begin{split}
     \mathrm{diam}(G_{n,i}) &\leq \mathrm{diam}(\nx_{170, n-1}(\bx_{n,i})) 
     \\
     & \leq (2 \alpha + 1)Q^n + 2 \times 170 Q^{n-1}  \\
     &= 15 Q^n + 340 Q^{n-1} < \beta Q^n ,
\end{split}
\end{equation}
which is true for $\beta = 20$ and $Q > 252$.  This shows (a). 

We now show (b). Specifically, we show that any box 
$$\bx_{m,j} \in G_{\geq n} \setminus G_{n,i} \bigcup_{\substack{s \geq n; \ell \\ (s,\ell) \neq (n,i)}}G_{s, \ell}$$ 
is further from $G_{n,i}$ than $3 \beta Q^n$. Any such box $\bx_{m,j}$ is either of the same or greater level $m \geq n$  or smaller level  $m<n$ than the considered group $G_{n,i}$, and is not linked to any of the boxes in  $G_{n,i}$ by assumption.  We consider these cases separately and show that the distance between $G_{n,i}$ and any of such boxes is greater than $3 \beta Q^n$, which is equivalent to (b).

In the case $m \geq n$, consider $d(G_{n,i} , \bx_{m \geq n, j})$. By Lemma~\ref{lemma:linking}(4) and since $\bx_{n,i}$ and $\bx_{m \geq n ,j}$ must be unlinked by assumption, we have 
\begin{equation}
\begin{split}
      d(G_{n,i} , \bx_{m \geq n, j}) &\geq d(\bx_{n,i}, \bx_{m \geq n,j}) - 170 Q^{n-1}  
      \\
      &\geq 70 Q^n -  170 Q^{n-1} > 3 \beta Q^n ,
\end{split}
\end{equation}
which is true for $\beta = 20$ and $Q > 252$. This shows the statement (b) for $m \geq n$.

If $m < n$, there must exist some $\bx_{s,\ell} \in G_{\geq n} \setminus G_{n,i}$ with $s \geq n$ that must be linked to it otherwise $\bx_{m,j}$ would be in a lower-level group and would not be contained in $G_{\geq n} \setminus G_{n,i}$. We choose $\bx_{s,\ell}$ to be any lowest-level box satisfying $s \geq n$ that is linked to $\bx_{m,j}$. By assumption that we chose the box of the smallest such level, $\bx_{m,j}$ must be within its associated candidate group $C^{n-1}_{s, \ell}$ \footnote{Namely, if $\bx_{m,j}$ is in any $q$-candidate $C^q_{s,\ell}$ with $q\geq n$, but not in $C^{n-1}_{s, \ell}$, then it must be linked to some other $\bx_{\geq n, \ell'}$, which contradicts the assumption that $\bx_{s,\ell}$ is the smallest-level box of level $\geq n$ that it $\bx_{m,j}$ linked to}. By Lemma~\ref{lemma:linking}(4), $\bx_{m,j} \in C^{n-1}_{s, \ell} \subseteq \nx_{170,n-1}(\bx_{s,\ell})$, and thus
    \begin{equation}
\begin{split}
      d(G_{n,i} , \bx_{m < n, j}) &\geq d(G_{n,i}, \nx_{170,n-1}(\bx_{s \geq n,\ell}))  
      \\ 
      &\geq d(\bx_{n,i}, \nx_{170,n-1}(\bx_{s \geq n,\ell}))  - 170 Q^{n-1}
      \\
      &\geq d(\bx_{n,i}, \bx_{s \geq n,\ell}) - 2 \times 170 Q^{n-1}
      \\
      &\geq 70 Q^n -2 \times 170 Q^{n-1} > 3 \beta Q^n.
\end{split}
\end{equation}
The inequality before last holds because  $\bx_{n,i}$ and $\bx_{s \geq n,\ell}$ must be unlinked, otherwise they would belong to the same group.  The last inequality holds because $\beta = 20$ and $Q > 252$. 
This shows (b) for $m < n$ and completes the proof of the Lemma. 
\end{proof}

The Lemma above gives the following Corollary, which shows that some of the criteria for the absence of logical errors must be satisfied when $Q> 252$:
\begin{cor} \label{lemma:condition_a}
    Condition (a) in Lemma~\ref{lemma:fixed-point} for the absence of logical failure is fulfilled for any error configuration $E$ obeying a cluster decomposition.
\end{cor}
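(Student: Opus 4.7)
The plan is to chain Theorem~\ref{theorem:onthefly-respects} with the sparse decomposition of Lemma~\ref{lemma:clustering2} so that $\nx_1(E\cup W_f\cup C_f)$ splits into a disjoint union of unit-thickened linked groups, each small enough to be enclosed by a 2-sphere in the spacetime slab $T^2 \times [0,T+T_1]$.

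First, I would use Theorem~\ref{theorem:onthefly-respects}(b) to confine the just-in-time flux worldlines and their corrections. The correction spread associated to each cluster $F_{n,i}$ is bounded by $2Q^n + 5$, which is in turn majorized by $\alpha Q^n = 7Q^n$ for all $n \geq 0$ and $Q > 252$. Hence $W_f \cup C_f$ lies inside the fattened boxes $\bx_{n,i}$, and therefore $E \cup W_f \cup C_f \subseteq \bigcup_{n,i} \bx_{n,i} = \bigcup_{n,i} G_{n,i}$. Taking unit neighborhoods,
\begin{equation}
\nx_1(E \cup W_f \cup C_f)\ \subseteq\ \bigcup_{n,i}\nx_1(G_{n,i}).
\end{equation}

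Next, I would invoke Lemma~\ref{lemma:clustering2}(b): distinct groups $G_{n,i}$ and $G_{m,j}$ are separated by at least $3\beta Q^{\min(n,m)} \geq 3\beta = 60 > 2$ in the $\ell_\infty$ metric (swapping the roles of the two groups so that the lower-level one is the reference), so the unit neighborhoods $\nx_1(G_{n,i})$ are pairwise disjoint. Each connected component of $\nx_1(E \cup W_f \cup C_f)$ therefore lies inside $\nx_1(G_{n,i})$ for exactly one $(n,i)$. Combining with the diameter bound $\mathrm{diam}(G_{n,i}) < \beta Q^n$ from Lemma~\ref{lemma:clustering2}(a), and using $n \leq n_{\max}$ with $Q^{n_{\max}+1} < L/2$, the diameter of every such neighborhood is bounded by $\beta L/(2Q) + 2$, which is well below the shortest non-contractible cycle of $T^2$. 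Each component thus fits inside a contractible 3-ball that can be enclosed by a 2-sphere, establishing condition (a) of Lemma~\ref{lemma:fixed-point}.

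The main work is conceptual rather than calculational: one must interpret ``topologically trivial'' componentwise, since $\nx_1(E\cup W_f\cup C_f)$ typically comprises many disconnected pieces distributed across the torus and cannot be enclosed by a single 2-sphere. The inequality $3\beta Q^n \gg 2$ between the group separation and the unit thickening is precisely the technical input that makes this componentwise reading consistent with the global notion of topological triviality, and it is independent of the group level, so no finer control of the hierarchy is needed.
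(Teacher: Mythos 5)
Your proposal is correct and follows essentially the same route as the paper: the paper's proof likewise chains Theorem~\ref{theorem:onthefly-respects} with Lemma~\ref{lemma:clustering2} to conclude that $\nx_1(E\cup W_f\cup C_f)$ inherits a sparse cluster decomposition whose largest cluster is small compared to the system size, and is therefore confined to isolated, topologically trivial boxes. Your version simply makes explicit the bookkeeping (the bound $2Q^n+5\le 7Q^n$, disjointness of unit neighborhoods, and the componentwise reading of topological triviality) that the paper leaves implicit.
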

\begin{proof}
By Theorem~\ref{theorem:onthefly-respects} and Lemma~\ref{lemma:clustering2} that $\nx_1(E\cup W_f\cup C_f$ obeys a cluster decomposition as well.
Any sparse cluster decomposition where the size of the largest cluster is smaller than half of the system size is confined to isolated boxes and thus topologically trivial.
\end{proof}

\subsubsection{Fault tolerance in the Abelian charge sector}

For the Abelian charge sector, we simply record the syndromes until the end of the protocol and decode them using the usual RG decoder at the end of the circuit.  We now show that the RG decoder succeeds if the noise obeys a cluster decomposition and the just-in-time decoder fattens the noise clusters such that the outcome is sparse:

\begin{lemma}
\label{lemma:condition_b}
 
      Under assumptions of this section, given any noise realization $\widetilde E$ that obeys cluster decomposition with $Q > 252$, the RG decoder of Ref.~\cite{Bravyi2013quantum} that is run at the end of a circuit with just-in-time decoding, succeeds. Namely, condition (v.b) in Lemma~\ref{lemma:fixed-point} holds (along with condition (v.a) which was shown to hold in the previous subsection).
   
\end{lemma}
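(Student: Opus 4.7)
The plan is to reduce the claim to the standard threshold proof for the RG decoder of Ref.~\cite{Bravyi2013quantum}, by showing that the Abelian charge syndromes inherit a sparse cluster decomposition and that each cluster is charge-neutral, so that no pairing the RG decoder performs ever crosses between distinct clusters.

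First I would localize the charge syndromes. By Lemma~\ref{lemma:fixed-point}(ii) the endpoints of $W_c$ sit inside $\nx_1(E \cup W_f \cup C_f)$, and by Theorem~\ref{theorem:onthefly-respects} the flux worldlines and just-in-time corrections attached to every level-$n$ error cluster $F_{n,i}$ fit inside $\nx_{2Q^n+5}(F_{n,i}) \subset \bx_{n,i}$. Hence the entire charge-syndrome pattern lies inside $\bigcup_{n,i}\bx_{n,i}$, which already obeys the sparse decomposition established in Lemma~\ref{lemma:clustering2}. Moreover, part (b) of that lemma guarantees that every linked group $G_{n,i}$ is separated from all same-level or higher-level groups by more than $3\beta Q^n$, so $G_{n,i}$ together with a small neighborhood is contained on the inside of a 2-sphere disconnected from every other error activity. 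Applying Lemma~\ref{lemma:fixed-point}(iv) to this enclosing region then forces the number of $W_c$ endpoints inside $G_{n,i}$ to be even, i.e.\ every linked group carries a charge-neutral syndrome.

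With these two ingredients the RG decoder can then be run at successive scales $Q^0, Q^1, \ldots$: at each scale the algorithm only attempts to pair syndromes within a single linked group, because the inter-group separation dominates any scale at which pairing within a group can still occur. The resulting correction $C_c$ is therefore confined to an $O(Q^n)$-neighborhood of each $\bx_{n,i}$, so that $W_c + C_c$ is supported on a disjoint union of topologically trivial 3-balls, one per linked group. Any closure $\widetilde{C}$ of $W_c + C_c$ inside $\nx_1(E \cup W_f \cup C_f)$ must differ from $W_c + C_c$ only within these balls, and since every 1-cycle inside a ball is null-homologous, $\widetilde{C} + W_c + C_c$ is homologically trivial, which is precisely condition (v.b) of Lemma~\ref{lemma:fixed-point}.

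The main obstacle is bookkeeping: one must check that the extra $O(Q^n)$ widening added by the RG decoder at each scale still respects the $3\beta Q^n$ separation demanded by Lemma~\ref{lemma:clustering2}(b). This is handled either by a mild readjustment of the constants $\alpha$, $\beta$, $Q$ or by redefining the fattened boxes $\bx_{n,i}$ from the outset to absorb the decoder's expected spread, exactly as was done for the flux sector; after this adjustment the Bravyi--Haah correctness argument applies verbatim to the charge sector.
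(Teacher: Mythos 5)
Your proposal is correct and follows essentially the same route as the paper: localize the charge syndromes to the fattened clusters $\bx$ via Lemma~\ref{lemma:fixed-point}(ii) and Theorem~\ref{theorem:onthefly-respects}, invoke the sparse decomposition of Lemma~\ref{lemma:clustering2}, and then run the Bravyi--Haah RG argument group-by-group so that each correction stays confined to a topologically trivial neighborhood of its linked group, giving condition (v.b). Your explicit appeal to Lemma~\ref{lemma:fixed-point}(iv) for charge-neutrality of each linked group, and your flagged constant-bookkeeping for the decoder's spread, are points the paper leaves implicit (it asserts the correction fits in a $1$-neighborhood of each isolated group and relies on the factor-$3$ separation in Lemma~\ref{lemma:clustering2}(b)), but they do not change the structure of the argument.
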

\begin{proof}
     We first note that the charge syndromes are restricted to the region $\bx$ due to Theorem~\ref{theorem:onthefly-respects} and Lemma~\ref{lemma:fixed-point}(ii). For any group  $G_{n,i}$ in isolation, the RG decoder admits a correction that is in its $1$-neighborhood. When $Q>252$, from the sparse decomposition of $\bx$ shown in Lemma~\ref{lemma:clustering2}, each group $G_{n,i}$ must have syndromes fitting in a box of size less than $\beta Q^n+2$ (recall that $\beta = 20$) and is separated by distance that is larger than $3 \times \beta Q^n - 2 > 2 \times\mathrm{diam}(G_{n,i}) $ from any group of the same and larger size. Going inductively from smaller to larger levels as in the proof in Ref.~\cite{Bravyi2013quantum}, we find that the decoder must issue an appropriate correction (namely, the one that satisfies $\partial c(n,i) \oplus W_c(n,i) \oplus \sigma^c_{G_{n,i}} = 0$) that is within 1-neighborhood of each cluster.
     Thus, the global RG decoder connects the syndrome associated with every group $G_{n,i}$ inside a 1-neighborhood of $G_{n,i}$.
     Since the boxes together with corrections from the RG decoder are isolated, the homology class of the corrections fits the error configuration, and thus condition (v.b) of Lemma~\ref{lemma:fixed-point} holds.
\end{proof}

Finally, this brings us to the proof of the threshold theorem:

\begin{proof}[Proof of Theorem~\ref{theorem:main}]
Due to Lemmas~\ref{lemma:condition_a} and \ref{lemma:condition_b}, the corrections for both flux and charge sectors fit in a set of disjoint balls.  The conditions (a,b) of Lemma~\ref{lemma:fixed-point} hold for any noise realization satisfying cluster decomposition with $Q = 253> 252$. Thus, Lemma~\ref{lemma:fixed-point} states that there must be no logical failure in the circuit in this case.

The only circumstance when the logical failure can happen is when the $p$-bounded noise does not admit sparse decomposition with the chosen $Q$. By Ref.~\cite{Bravyi2013quantum},  for $p < p_c$ with $p_c = 1/(3 Q)^6$, the probability of this is exponentially suppressed
  \begin{equation} \label{eq:failureprobability}
      \mathbb{P}(\mathrm{fail.}) = p_L \leq C \exp(-O(L^\eta))
  \end{equation}
  with some constants $C, \eta >0$. 
  Summed over all noise realizations, this yields Eq.~\ref{eq:logical-failure} and proves the theorem.
\end{proof}

\subsection{Fault tolerance of the full logical protocol}\label{sec:fault-tolerance}

So far, we have only proven that we can fault-tolerantly store quantum information in the ground-state space of the TQD for an exponentially long time in the absence of domain walls and boundaries as well as spacetime regions with Abelian phases that are generally present in all protocols. 
We now discuss how our arguments can be extended to show fault-tolerance of the full logical protocols proposed in this paper, which must include dealing with spatial boundaries, domain walls, and the regions with Abelian phases at the beginning and the end of the protocol.
In the rest of this section, we present arguments that support our statements and leave rigorous proofs to future work.

\subsubsection{Abelian phase in the protocols}

We first consider the simpler case of protocols where the domain wall between the TQD phase and the Abelian toric codes phase is time-perpendicular (i.e.~a switching occurs in one round in the whole space; note that for every protocol in Sec.~\ref{section:loopsum}, we presented a variation which includes only such domain walls), such as in $\overline{CZ}$ logical measurement protocols. We consider the flux sector only here and discuss the charge sector at the end of this subsection. 

In these protocols, the TQD state is prepared from the toric code state, which is also realized by a noisy circuit. We first present an informal argument showing that the earlier considerations for the absence of logical failure extend to this case, if the just-in-time decoding in the TQD phase is preceded and followed by global RG decoding for the Abelian phase. 
Right before switching to the TQD phase, we perform the usual global RG decoding for the Abelian phase and apply the correction (for the RG decoder to be successful, it might be necessary for the Abelian phase to have existed for $O(d)$ time where $d$ is the spatial distance of the code). Next, the TQD phase is realized by a circuit together with the just-in-time decoder. The residual noise (which will be there in the presence of measurement errors) can be seen by the just-in-time decoder as having started at the beginning of the TQD phase.  After that, the protocol again switches to the Abelian phase, where we can simply continue recording measurement outcomes without applying corrections. At the end, we use the information about the unpaired syndromes left by the just-in-time decoder at the end of the TQD phase together with the syndromes of the new noise in the Abelian phase following the TQD one to perform another round of the usual global RG decoding. With these modifications, the flux and charge worldlines associated with noise and corrections throughout the entire circuit continue being sparsely clustered.

\subsubsection{Domain walls and boundaries}

In other examples, such as unitary $\overline{CCZ}$ and $\overline{T}$ protocols, the regions with the Abelian phase are connected to the TQD region via domain walls that are tilted in spacetime.  In this case, starting from the point where such domain walls appear, we use a (modified) just-in-time decoding in both the Abelian and non-Abelian regions. Namely, the just-in-time protocol now must take the presence of the domain wall into account by appropriately matching the fluxes of specific colors to it. For different flux colors, the decoding must work differently. For an $\langle rp,g\rangle$ domain wall, the decoding for the $r$ fluxes in the TQD and the $p$ fluxes in one of the toric code copies must be performed together (i.e.~viewed as the same type of flux by the decoder) ``ignoring'' the domain wall. Thus, for these sectors the just-in-time decoding works as if the domain wall was absent. On the other hand, the green $g$ fluxes can terminate on the $\langle rp,g\rangle$  domain wall. Therefore, for the green flux sector, the domain wall can be treated as a boundary that is tilted in spacetime. We now explain how the just-in-time decoding must be modified in one flux sector to appropriately account for a boundary where one type of fluxes can terminate.

Consider a TQD circuit with a flux-terminating boundary (domain wall). We call the set of points in spacetime corresponding to the locations on the boundary $\mathcal{Y}$. The spatial coordinates of the boundary at time $t$ are denoted $\mathcal{Y}_t$. For a correction string $C_t$ terminating at the boundary, $\partial C_t\big|_{\mathcal Y}$ is always set to be empty.  We show a modified version of the just-in-time RG decoder which now matches points to the boundary (or a domain wall) when it is appropriate:

\begin{algorithm}[H]
  \caption{Just-in-time RG decoder in the presence of a boundary or a domain wall $\mathcal{Y}$}
  \label{alg:boundary}
  Initialize empty $\Sigma_0$. At each timestep $t\geq 0$
   \begin{algorithmic}[1]
   \State Determine $\mathcal M_t$ from  $m_t$ and  $ C_{t-1}$. 
   \State Update $\Sigma_t \rightarrow \Sigma_{t}\oplus \mathcal{M}_t$;
   \State Initialize $C_{t}$ to be empty and the set $s = \Sigma_t$.
   \State  For each $0 \leq k \leq k_m$:
   \begin{itemize}
       \item[3.1:] remove all points $u = (u_1,u_2,u_t)$ in $s$, s.t. $t - u_t < 2^k$; 
       \item [3.2:] remove each $u \in s$  from $s$ and from $\Sigma_t$,  s.t. $d(u,\mathcal{Y}) \leq 2^k$, if the current coordinate $u$ is outside the spatial domain of the TQD at time $t$;
       \item [3.3:] for any remaining $u \in s$ s.t. $d(u,\mathcal{Y}) \leq 2^k$, find any shortest path for the Pauli correction $C^{(k),i}_{t}$ matching the spatial coordinate of $u$ with the boundary at the current time $\mathcal{Y}_t$  and remove $u$ from $s$ and from $\Sigma_t$;
       \item [3.4:] for each $u \in s$, if  $u' \in s$ exists s.t. $d(u,u') \leq 2^k$, find any shortest path for the Pauli correction $C^{(k),i}_{t}$ matching their spatial coordinates. Update $C_{t} \rightarrow C_{t} \cdot C^{(k),i}_{t}$ and remove $u, u'$ from $s$ and from $\Sigma_t$;
   \end{itemize}
   \State Pass the correction string $C_{t}$ to the circuit;
   \end{algorithmic}
\end{algorithm}
The modification consists of adding steps 3.2 and 3.3. According to 3.2, the noise clusters that are close to the boundary can be ignored if the phase in their spatial support becomes trivial before the correction needs to be applied. Otherwise, according to step 3.3,  the points are preferentially matched to the boundary as long as the cluster is close enough to the boundary in spacetime. However, they are specifically matched to the \emph{current} location of the boundary since the latter may be moving. 

The argument for the existence of the threshold in the presence of a boundary is analogous to the case without boundaries as long as the boundary moves at a slow enough speed. Assume that its speed is $\chi$ per timestep. Consider any cluster $F_{n,i}$ at level $n$. If it is located in the proximity of the boundary in spacetime, in the worst-case scenario the boundary moves away from the cluster by $\chi  (Q^n + 2Q^n + 5)$ in the (upper bounded) time it takes the decoder to make a decision to match.  Therefore, now the padding region can increase by this amount, i.e.~from $2 Q^n + 5$ to $f(Q) \equiv (1+\chi) (2 Q^n + 5) + \chi Q^n $. This effectively changes $\alpha$ in the fault tolerance proof, and thus, the rest of the numerical parameters (which depend on $\alpha$, even though we kept this dependence implicit in our proof) but nothing else. 

Finally, other scenarios involving boundaries include situations when a certain color of fluxes cannot terminate at the boundary (in which case the just-in-time protocol is unchanged and simply does not match to the boundary), or two colors of fluxes can terminate only simultaneously (an example being $\langle rg \rangle $ boundary). In this case,  syndromes corresponding to fluxes of two colors have to be considered simultaneously by the decoder. For syndromes of two different colors $u,v$, we can introduce a boundary-aware distance metric which is computed from the shortest path that connects these two syndromes through the boundaries. If the two such syndromes have to be matched, this is done by a correction string that appropriately changes color at this boundary. 

\subsubsection{Putting it all together}

The fault tolerance of any full protocol comes from combining all the aspects discussed above as well as a few additional points that we address here.

First, we discuss what happens in the charge sector. There, we, as usual, record the measurement outcomes until the end of the protocol. Afterwards, we run the usual RG decoder that must be modified to take into account the possibility of matching to the boundaries and domain walls. Similarly to the case with the trivial protocol, for large enough $Q$, the regions to be corrected form a cluster decomposition such that the RG decoder must succeed.

Another point is that the logical error (see Eq.~\ref{eq:logical-failure}) in actual protocols is suppressed as  $(  
p/p_c)^{O(d^\eta)} $ where $d$ is now the length of the \emph{shortest nontrivial homology class of charge or flux worldlines in spacetime}. This is because we have to compare the size of the largest cluster in the cluster decomposition not only with the spatial dimensions of the manifold but also with the smallest distance between a pair of nontrivial boundaries in spacetime. Specifically, this is the distance between two points that, if they are connected by a cluster of error, affect the logical action of the protocol. Generally, for the protocols in Sec.~\ref{section:loopsum}, all characteristic dimensions in spacetime geometry have to be  $O(L)$, which then gives $d = O(L)$.

Additionally, in some of the protocols (namely, the logical $\overline{CZ}$ and $\overline{XS}$ measurements), the measurement outcomes themselves are used to detect nontrivial homology classes associated with one type of the charge worldlines, such as $C_g$. A protocol experiences a logical failure if we cannot determine the homology class reliably. However, this is the same as the failure of the error correction in the respective charge sector. Thus, the fault tolerance of global measurements (or, in other words, of determining the charge under the global symmetry that is being gauged in the protocol) is the same as fault tolerance in the respective charge sector. 

And finally, since the protocols always end and begin in an Abelian phase, the fault-tolerant initialization and readout is the same as for a standalone Abelian phase, modulo the need to perform decoding and apply corrections immediately before the implementation of the TQD phase in the protocol begins.

\subsection{MBQC-like protocols}
\label{sec:decoding-MBQC}

\begin{figure}[t]
\includegraphics[width=1\columnwidth]{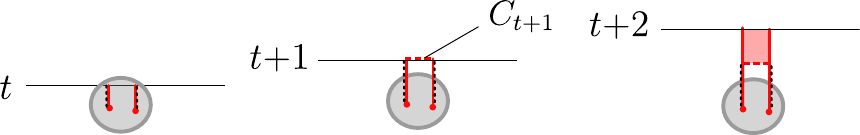}
\caption{
Alternative way of using the output $C_{t+1}$ of the just-in-time decoder (applied to a configuration of two syndrome points caused by an error region in gray) to perform corrections.
Instead of applying a Pauli-$X$ string along $C_{t+1}$, we insert the gauge-fixing membrane, as shaded in red, which is equivalent to changing the circuit such that it is conjugated by the Pauli-$X$ string associated with correction $C_{t+1}$ starting at time $t+1$ and onwards.
After this, we keep measuring $-1$ outcomes at the endpoints of $C_{t+1}$, that is, the flux worldlines continue indefinitely in time direction, however, their properties are ``mitigated'' by the gauge-fixing membrane.
}
\label{fig:MBQC0}
\end{figure}

\begin{figure*}[t]
\includegraphics[width=1\textwidth]{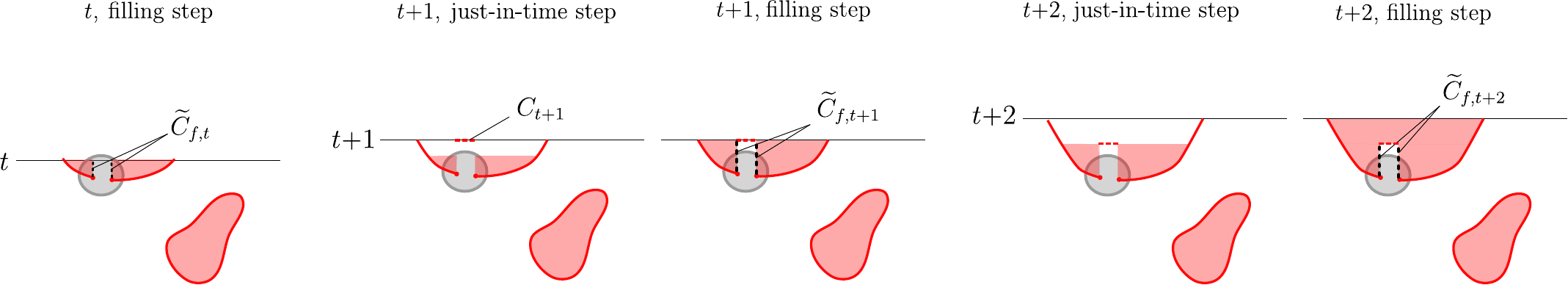}
\caption{
A schematic showing an example of the gauge-fixing membrane filling (shown as red shading) in the presence of noise (shown as a gray region).
At each time step $t$, the decoder finds a string $C_t+\widetilde C_{f,t}$ that closes off the measured (up to time $t$) flux configuration $W_f$, where $C_t$ (shown as red dashed lines) is the output of Algorithm~\ref{alg:greedy}, and $\widetilde C_{f,t}$ (shown as black dashed lines) consists of lines propagating in time direction from any new syndrome point, until it is paired with another syndrome by the decoder and closed off by a $C_t$ string.
Note that, in contrast to discussion earlier in this section, we do not actually apply Pauli-$X$ strings at $C_t$, and we do not measure $-1$ outcomes at $\widetilde C_{f,t}$.
}
\label{fig:MBQC}
\end{figure*}

In the previous subsections, we have restricted ourselves to circuits where the measured flux worldlines can only propagate in the time direction.
We now briefly discuss how to generalize this to other MBQC-like and Floquet-like circuits where flux worldlines can propagate in any direction.
Naively, this constitutes an issue since, even in the absence of noise, we measure a random and dense configuration of flux worldlines.
Since charge worldlines can terminate on the flux worldlines, the fault tolerance proof of the previous subsection does not go through.

Before we show how to solve this problem, we describe an alternative way to perform the corrections for circuits that we discussed up till now, such as in Section~\ref{sec:CZ_minimal} or \ref{square-TQD-sec}, where fluxes propagate \emph{only} in the time direction in the absence of noise.
Instead of applying a (Pauli-$X$) string operator $C_t$ that is determined at step 5 of the just-in-time decoding Algorithm~\ref{alg:greedy}, we instead commute the Pauli-$X$ correction string until the end of the protocol through all following operations in the circuit.
Conjugating the $CCZ$ gates with $X$ yields additional $CZ$ gates in the circuit, and conjugating the $Z$-type measurements at the endpoints of the string $C_t$ will flip all future measurement outcomes at these endpoints to $-1$.
In the path integral, this is the same as inserting a path-integral membrane (by design) as depicted in Fig.~\ref{fig:MBQC0}, which we refer to as inserting the \emph{gauge-fixing membrane}.

While this different way of implementing corrections may seem more complicated, it is precisely the way MBQC-like and Floquet-like protocols must work, such as these derived from considering 3+0D circuits as discussed in Sec.~\ref{sec:3d-2d} and in Refs.~\cite{bombin2018,Brown2020universal}.
In such protocols, if there is no noise, the measured flux configurations  must be all closed loops. In the absence of the noise, all we need to do is to choose any gauge-fixing membrane whose boundary coincides with the measured flux configuration.
In the circuit, the gauge fixing membrane corresponds to conjugating all $CCZ$ gates with Pauli-$X$ operators.

Now, if there is noise in the circuit, then the measured flux worldlines $W_f$ do not have to be closed and so, naively, we do not know where to insert the gauge-fixing membranes. 
In order to determine where the gauge-fixing membrane needs to be inserted, we need to find a loop-closing ``fix'' string, which we obtain in a just-in-time way as a sum of two parts, $\widetilde C_{f,t}+C_t$. The fix string has the same endpoints as $W_f$, $\partial (W_f+\widetilde{C}_{f,t}+C_t)=0$, and we choose the gauge-fixing membrane such that its boundary is given by $W_f+\widetilde{C}_{f,t}+C_t$.

We now describe how $\widetilde C_{f,t}$ and $C_t$ are obtained using the just-in-time decoder.
$C_t$ is given by the correction strings which are the output of the just-in-time decoder in Algorithm~\ref{alg:greedy} (noting that we do not apply a Pauli-$X$ string at $C_t$ here).
$\widetilde C_{f,t}$ consists of timelike flux lines that begin at every flux syndrome point unmatched by the decoder, and propagate straight in the time direction.
When the decoder pairs two syndrome points and outputs an appropriate string $C_t$, this closes off the $\widetilde C_f$ lines from the past timesteps originating from these syndrome points.
This is illustrated in Fig.~\ref{fig:MBQC}.

Finally, we note that in the case where the measured flux worldlines can only propagate in $t$ direction, $\widetilde C_f$ always coincides with the measured flux worldlines $W_f$ originating from the syndrome points, so we only have to start filling in the gauge-fixing membrane after adding the $C_t$ part of the correction, as shown in Figure~\ref{fig:MBQC0}.

\subsection{Comments on decoding logical protocols in practice}
\label{sec:practical-decoding}

So far, we considered one particular way of performing decoding that was chosen specifically to make the proof simple. This leaves vast space for improvement in terms of performance. Here, we propose a more efficient decoding scheme and point out particular areas where improvement is the easiest to achieve.

Below we propose a minimum-weight matching (MWPM) algorithm for the flux sector that is broadly inspired by Bombin's version of the just-in-time algorithm~\cite{bombin2018}. For simplicity, we show its version in the absence of boundaries or domain walls . The modification of the algorithm in the presence of domain walls and boundaries is in line with the discussion in previous subsections. In addition, in the presence of a spatial domain wall, we can use global decoding on the Abelian phase side that appropriately matches defects to the domain wall, followed by the MWPM domain wall-aware decoding on the TQD side from the domain wall.

\begin{algorithm}[H]
  \caption{Just-in-time matching decoder}
  \label{alg:matching}
  Initialize empty $\Sigma_0, \Sigma'_0$. At each timestep $t\geq 0$
   \begin{algorithmic}[1]
   \State Determine $\mathcal M_t$ from  $m_t$ and  $ C_{t-1}$. 
   \State Update $\Sigma_t \rightarrow \Sigma_{t}\oplus \mathcal{M}_t$;
    \State Initialize $C_{t}$  and $ S_t$ to be empty;
   \State Run MWPM on $\Sigma_t$ with smooth boundary at time $t$. Record the spacetime matching outcome in $S_t$;
   \State For every pair $u, u' \in \Sigma_t$ matched in spacetime by $S_t$, run another MWPM round for their spatial coordinates and add the outcome to $C_{t}$; remove $u,u'$ from $\Sigma_t$;
   \State Pass the correction $C_{t}$ to the circuit;
   \end{algorithmic}
\end{algorithm}

Similarly, the charge sector can be decoded by a boundary- and domain-wall aware MWPM at the end of the protocol. In addition, the performance can be improved by explicitly taking into account in the decoding that the charge worldlines can terminate on the flux worldlines, namely treating the measured flux worldlines $W_f$ as the heralded locations where we know the charge strings can terminate.

Apart from the choice of the decoding strategy, the actual performance of any decoder is sensitive to the particular microscopic circuit. It would be interesting to compare different circuits in Sec.~\ref{section:examples} to each other.

We remark that for numerical benchmarking of the non-Clifford protocols, there is no need to perform a full simulation of the circuit involving non-Clifford physical operations.
Given the noise with sufficiently local supports of possible channels, we can compute an effective distribution over flux and charge configurations created by the noise directly relating to measurement outcomes, which are then used by the decoder. 
Given a (sampled) fault configuration, this distribution can be efficiently computed and then sampled from in the simulation.
This computation only has to be performed once for every fault type, yielding a scalable simulation scheme.
In addition, for certain noise models, e.g. i.i.d. single-site $X$ and $Z$ noise, together with probabilistic noise on the measurement outcomes it is possible to obtain an analytic expression for the effective probability distribution of charges and fluxes.\footnote{Specifically, single-qubit Pauli-$X$ errors create a flat distribution over certain flux outcomes in its vicinity. This is in contrast to Clifford circuits where each Pauli error creates a deterministic detection event.}

\section{Dicussion}

In this work, we have shown how to perform non-Clifford gates by introducing domain walls between toric or surface codes and the non-Abelian type-III twisted quantum double phase. This offers an alternative implementation of universal fault-tolerant quantum computation in two dimensions that differs from previous proposals. Given the compatibility of our protocols with the surface code architectures, our logic gates can directly supplement current experimental efforts to realize universal quantum computation with surface codes.

We have presented a number of different specific microscopic implementations of our logical protocols. It will be valuable to further adapt these implementations to run on realistic hardware~\cite{Iqbal2024nonAbelian, Gupta2024encoding}. The path integral framework we have proposed gives us some flexibility to redesign circuits to satisfy hardware constraints. In addition, further work and numerical testing of the just-in-time decoder is essential, as well as careful evaluation of the resource cost of our proposals at a target logical error rate.

An important question that remains open is: what other fault-tolerant logic gates can be achieved by introducing domain walls between different topological phases? 
In this direction, the fault-tolerant non-Clifford gates we have introduced can be in principle generalized to all solvable non-Abelian anyon theories. 
An important related question is whether a version of just-in-time decoding can be used to achieve fault tolerance in these models in the presence of circuit-level noise. 
This presents an opportunity to design fault-tolerant analogs of recent finite-depth adaptive local unitary state preparation results for solvable anyons~\cite{bravyi2022adaptive,Tantivasadakarn2021LRE,Tantivasadakarn2022,Ren2024Efficient,Lyons2024}.

Our results have been derived through a number of different approaches. These include the topological path integral framework, code deformation, and gauging logical operators via measurements. We have also drawn the connection between 2D non-Abelian topological codes and 3D topological codes with transversal non-Clifford gates, thereby illuminating the physical mechanism behind prior proposals to realize non-Clifford logic gates in two dimensions~\cite{bombin2018, Brown2020universal}. It would be interesting if the tools we have developed can be extended to more general quantum error correcting codes, such as different classes of quantum low-density parity-check codes, to find new ways of performing non-Clifford gates on code blocks encoding a large number of logical qubits.
We hope that the present work will serve as a guiding example toward a generalized framework for a universal computation with low-overhead quantum error-correcting codes.

\bigskip
\noindent
\textbf{Note added.} 
Related work~\cite{huang2025generatinglogicalmagicstates} appeared online while we were preparing our manuscript.

\begin{acknowledgements}
We are grateful to S.~Balasubramanian, M.~Barkeshli, P.~Bonderson, J.~Bridgeman, H.~Dreyer, J.~Garre Rubio, A.~Lyons, N.~Schuch, N.~Tantivasadakarn, F.~Verstraete, J.~Wang and G.~Zhu for useful discussions. We would also like to thank M.~Kesselring for fruitful discussions and for suggesting the 3D brickwork geometry used in Section~\ref{sec:3d-2d}.

B.J.B.~thanks the Center for Quantum Devices at the University of Copenhagen for their hospitality.  M.D. was supported by the Walter Burke Institute for Theoretical Physics at Caltech.
Part of this work was done while D.J.W.~was visiting the Simons Institute for the Theory of Computing. 
D.J.W.~was supported in part by the Australian Research Council Discovery Early Career Research Award (DE220100625).
D.J.W.~is currently employed by PsiQuantum. 
A.B.~was supported by the U. S. Army Research Laboratory and the U. S. Army Research Office under contract/grant number W911NF2310255. A.B. and B.J.B. both received support from the U.S. Department of Energy, Office of Science, National Quantum Information Science Research Centers, and the Co-design Center for Quantum Advantage (C2QA) under contract number DE-SC0012704.
J.M.~is supported by the DFG (CRC 183). M.W.~is supported by the Engineering and Physical Sciences Research Council [grant number EP/W032635/1 and EP/S005021/1].
\end{acknowledgements}

\appendix

\section{Loop-sum picture for the TQD model} \label{subsec:loopsum-and-stabs}

Here, we review the \emph{loop sum picture} for the TQD ground states, which is useful for developing a qualitative understanding of our protocols. The loop sum picture presented here can be understood as a continuum version of the lattice model, which nevertheless admits a direct relation to the microscopic realization, where configurations of qubits of any given color that are in the $\ket{1}$ state form closed loops on the lattice (here we refer to states that appear in an expansion of the ground state in the computational basis). 
We start from the well-known description of the codestates of the standard toric code phase in the loop sum picture~\cite{Kitaev1997,Levin2004}.

\begin{figure}[b]
\includegraphics[width=0.9 \columnwidth]{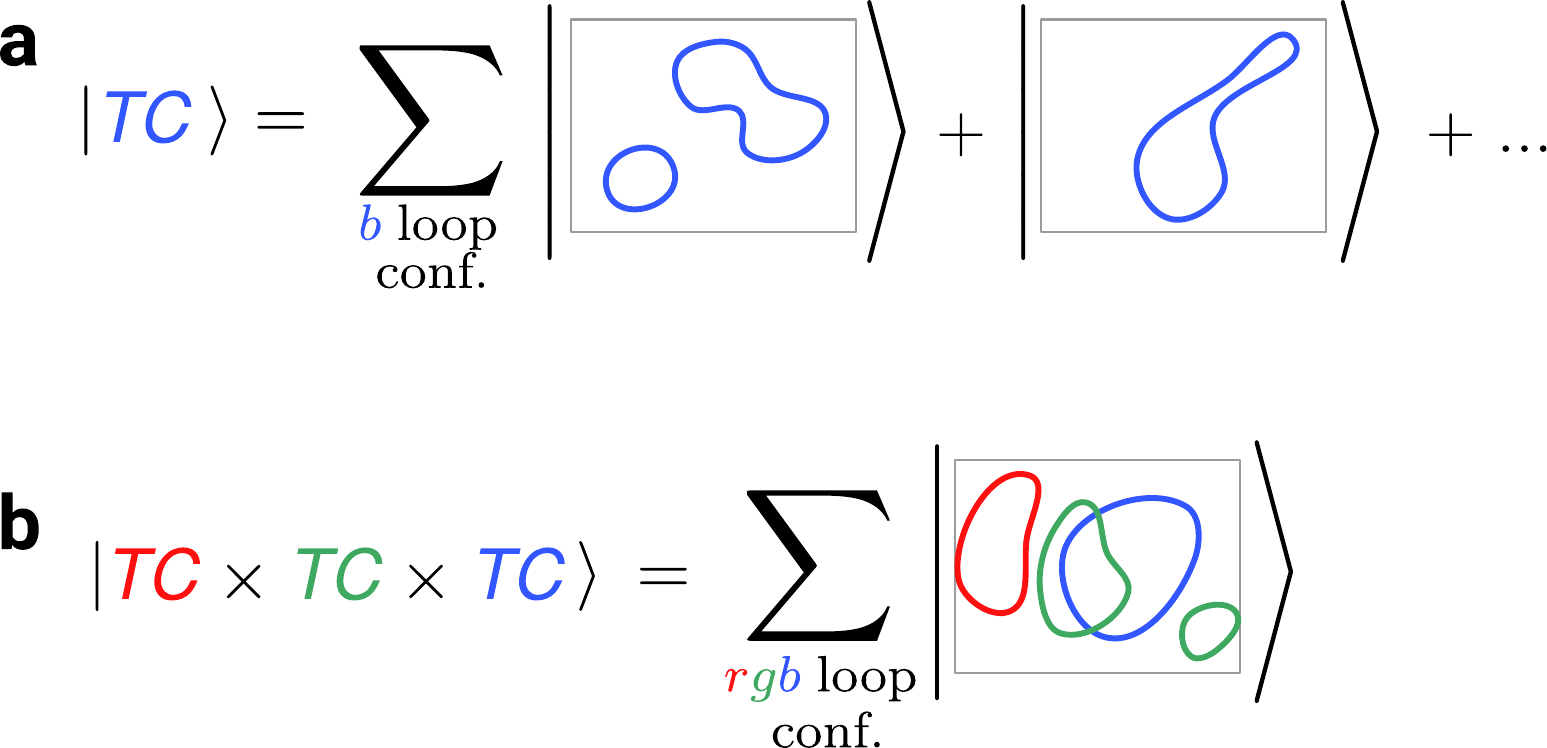}
\caption{Schematic loop sum picture for a ground state of (a) a single toric code and (b) three copies of the toric code labeled red, green, and blue (denoted `TC' in the appropriate color). The three copies of toric code do not have to be supported on the same microscopic lattice, but can be defined on three different superimposed lattices. }
\label{fig:TC_loops}
\end{figure}

The toric code can be defined on an arbitrary two-dimensional cellulation with one qubit on each edge.
A configuration of qubits in the computational basis corresponds to marking all qubits on the Poincar\'e dual cellulation that are in the $\ket{1}$ state (we work with the Poincar\'e dual cellulation unless specified otherwise).
The plaquette stabilizers of the toric code ensure that the codestates have an even number of $\ket{1}$-state qubits around every plaquette, and thus, only closed-loop configurations appear in the expansion of the codestate.  The vertex stabilizers ensure that all possible closed-loop configurations appear with equal amplitudes in the expansion of the codestate. More accurately, all possible closed-loop configurations \emph{within the same cohomology class} appear with equal amplitudes in the expansion of the codestate. 
Thus, the codestates of the toric code in the logical-$Z$ basis are equal-weight superpositions of closed-loop configurations,  as shown in Fig.~\ref{fig:TC_loops}, and otherwise are linear combinations thereof.

The non-Abelian model introduced in Section~\ref{sec:CZ_minimal} is similar to three copies of the toric code defined on three superimposed triangular lattices shown in Fig.~\ref{fig:TC_loops} with the difference that the states do not form an equal-weight superposition (here, by weight we mean the coefficient in front of the state). 
Instead, due to the ``twist", there is a relative phase between any two configurations~\cite{Dijkgraaf_1990,Hu_2013}, as shown in Fig.~\ref{fig:TQD_loops}. If two configurations differ by moving a string of one color over an intersection of strings of the other two colors, their amplitudes differ by a factor of $-1$, as shown in Fig.~\ref{fig:TQD_loops}(b).

\begin{figure}[t]
\includegraphics[width=1.0 \columnwidth]{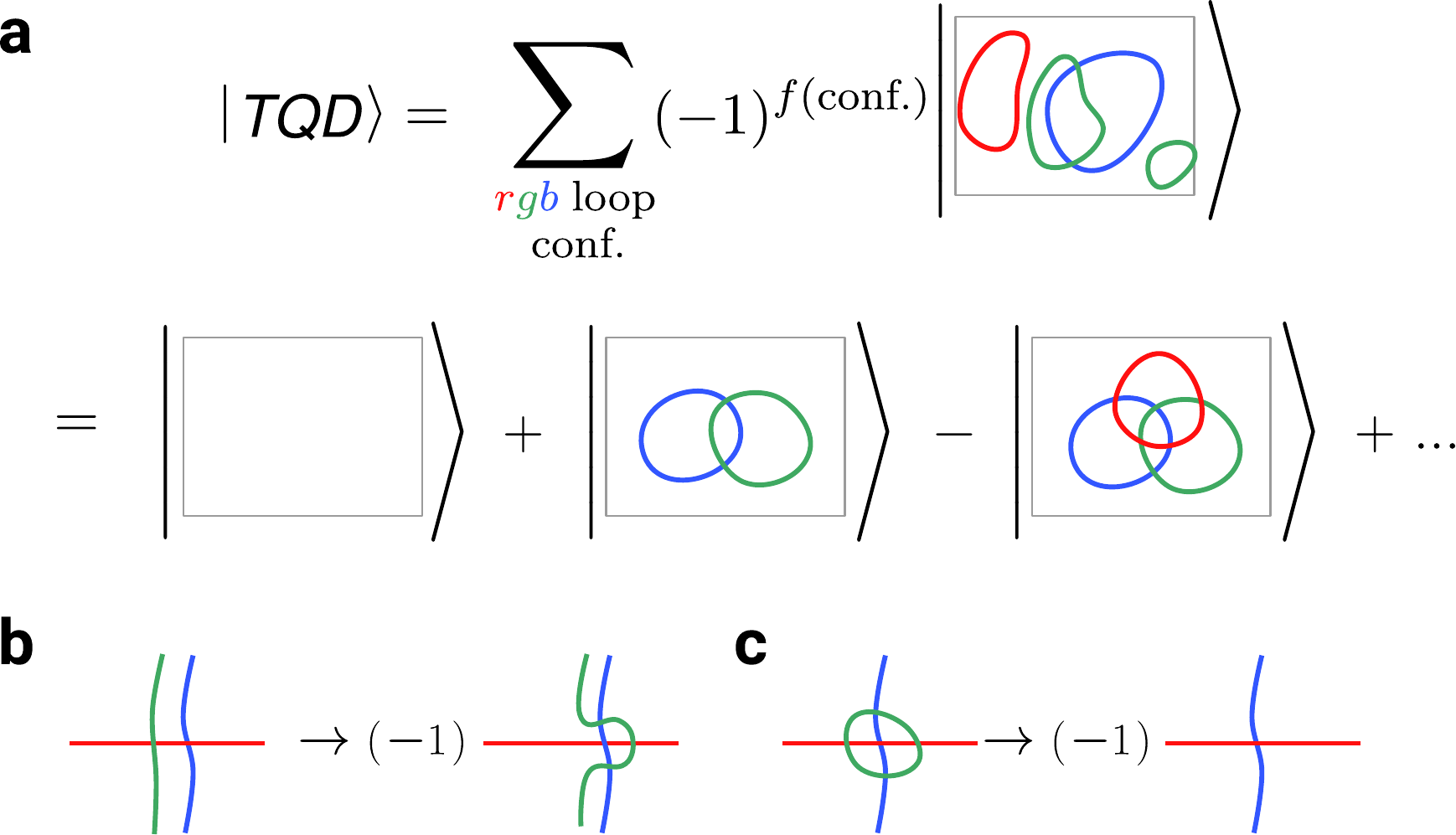}
\caption{(a) Schematic loop sum picture for one of the ground states of the type-III twisted quantum double model (denoted `TQD'). 
It is a superposition of all three-color loop configurations with appropriate complex phases determined by the type-III 3-cocycle function $\omega_{III} (\mathrm{conf.}) = (-1)^{f(\mathrm{conf.})}$. 
The rule that determines the relative phase of the configurations entering the superposition is shown in panel (b). }
\label{fig:TQD_loops}
\end{figure}

We show three superimposed triangular lattices in Fig.~\ref{fig:dual_stab}(a) as well as the Poincar\'e dual of the blue sublattice. In panel (b), we show a closed-loop configuration of three colors.
In Fig.~\ref{fig:dual_stab}(b), we illustrate how the lattice model of Section~\ref{sec:CZ_minimal} corresponds to the closed loop picture.
The stabilizers can be derived from the condition that they stabilize the code state described by the superposition above. 
Similar to the toric code, the plaquette stabilizers ensure that the configurations appearing in the codestate are closed-loop configurations. However, to enforce that the relative amplitudes are changed by the necessary factors, the vertex stabilizers must be modified. 
We now verify that the choice shown in Fig.~\ref{fig:dual_stab}(c) (which are the stabilizers from Sec.~\ref{sec:CZ_minimal} shown on the dual lattice) produces the necessary phase factors. 
Consider the vertex stabilizer in Eq.~\eqref{eq:TQD-star}, where the color $c$ is chosen to be blue without loss of generality. 
When this operator is applied, the Pauli $X$ component of the stabilizer adds a small blue loop to the closed-loop configuration. 
The $CZ$ component of the operator evaluated on a state in the computational basis yields $(-1)^n$ where $n$ is the number of green-red intersections inside the associated blue loop.
An example of this is shown in Fig.~\ref{fig:dual_stab}(d), where adding a small blue loop around a single red-green intersection yields a factor of $(-1)$. This indeed verifies the correct loop statistics.

\begin{figure}[t]
\includegraphics[width = 0.87\columnwidth]{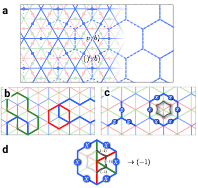}
\caption{ (a) The primal lattice (left) used to define the type-III TQD model and the Poincar\'e dual of the blue sublattice (right).
(b) A three-colored loop configuration shown on the dual lattice.
(c) Stabilizers of the TQD shown on the dual lattice.
(d) A blue vertex stabilizer on the dual lattice showing the $(-1)$ factor due to a red-green loop crossing within its support. \label{fig:dual_stab}}
\end{figure}

\section{Elementary excitations of the TQD model}
\label{sec:excitations}

In this appendix, we provide an informal overview of the excitations of the type-III twisted quantum double which is the non-Abelian model featured in this work.  In the following Appendix, we show how to obtain these excitations from gauging a $\zz_2$ symmetry of a pair of toric codes and discuss more of their properties. For a more in-depth discussion, see Refs.~\cite{dijkgraaf1991quasi,Propitius_1995,Iqbal2024nonAbelian}.

The excitations of the type-III cocycle twisted TQD (which is isomorphic to the $D_4$ quantum double~\cite{Propitius_1995}) can be labeled similarly to three uncoupled copies of the toric code (which we associate with red, green, and blue colors). The generating set is also given by three charges and three fluxes, except that the fluxes are now non-Abelian. There are a total of 22 anyons in this model. 

The electric charges $e_{r,g,b}$ are independent Abelian $\mathbb{Z}_2$ bosons. Together, they generate 8 Abelian superselection sectors, namely $\{1, e_r,e_g,e_b, e_r e_g, e_r e_b, e_g e_b, e_r e_g e_b\}$.  The flux excitations are non-Abelian bosons of quantum dimension $d = 2$, defining 6 sectors $\{ m_r, m_g, m_b, m_{rg}, m_{rb}, m_{gb}\}$. $m_r$ braids nontrivially with $e_r$ (with a phase $-1$) but also with other fluxes.

There are also 6 fermions $\{ f_r, f_g, f_b, f_{rg}, f_{rb}, f_{gb}\}$ that are fusion products as $m_r \times e_r = f_r$ and $m_{rg} \times e_r = m_{rg} \times e_g = f_{rg} $. Finally, there is a semion $s$ (which is contained in the fusion product of three fluxes of different colors) and an anti-semion $\overline{s}$. 

Below are a pair of representative fusion rules for the non-Abelian bosons:
\begin{equation}
    \begin{split}
        &m_r \times e_{g,b} = m_r, \\
        &m_r \times m_r = 1 + e_g + e_b + e_g e_b,
    \end{split}
\end{equation}
and similarly for other colors. The vacuum fusion channel for a pair of, say, $m_r$ fluxes changes to $e_g$ upon braiding with an $m_b$ flux.

\begin{figure}[b]
    \centering
    \includegraphics[width=1\columnwidth]{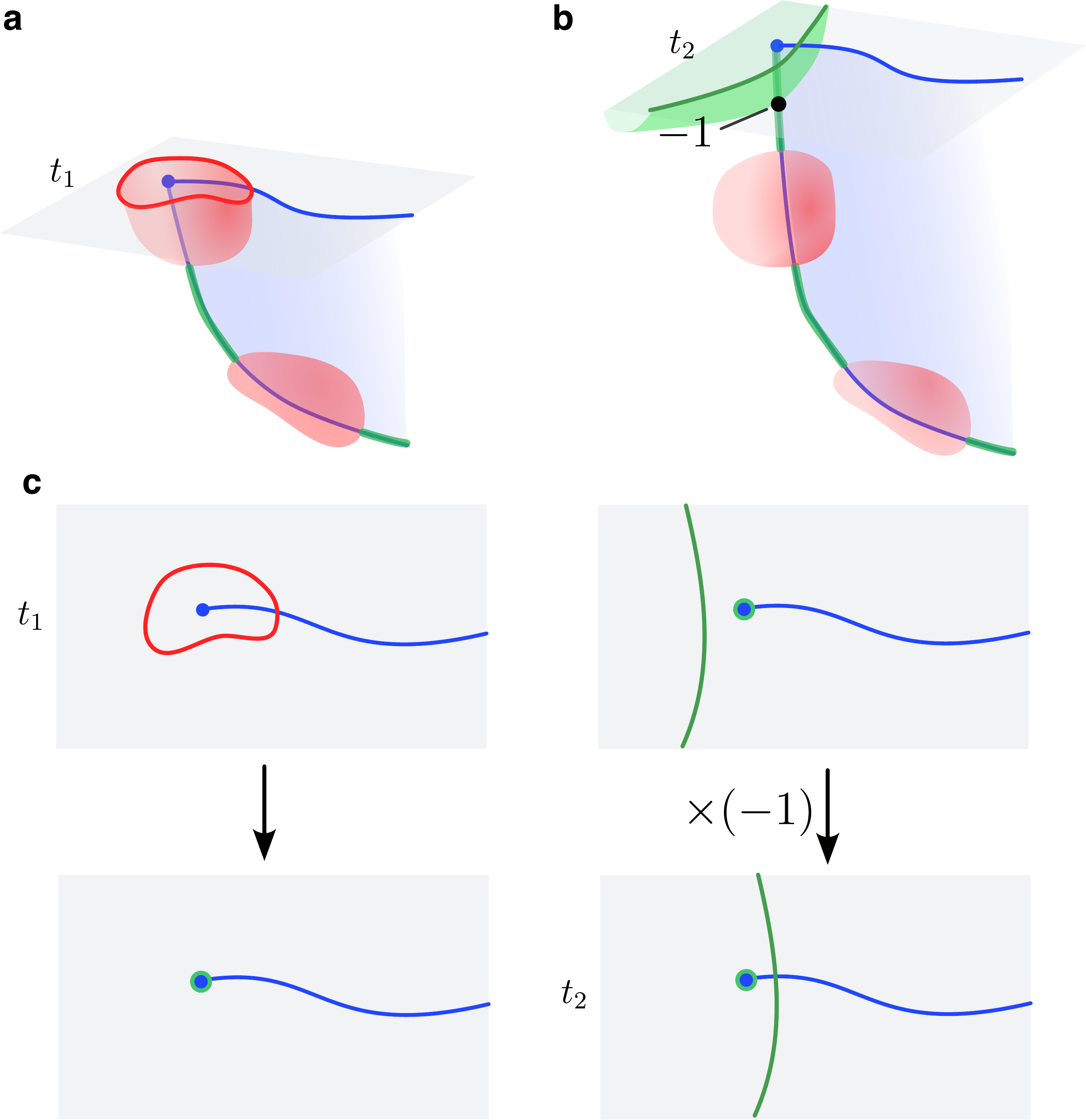}
    \caption{(a) A section taken at time $t_1$ showing a blue $m_b$ anyon worldline in spacetime with an open membrane terminating on it. At intersection points with closed red membranes, additional green strings end that produce $-1$ phases when intersecting with green membranes as shown in (b). The spacetime region in panel (b) displays the same configuration until a later time $t_2 > t_1$. Panel (c) shows 4 time slices between $t_1$ and $t_2$ of the spacetime configuration in (b).  }
    \label{fig:actual-anyons}
\end{figure}

In a circuit realization of topological phases by measurements, if all the measurement outcomes are $+1$, the state is excitation-free. If some of the measurements are violated, an excited state of the model is realized. However, generically, these excitations are not in one-to-one correspondence with specific anyons -- instead, an excitation can be a superposition of several kinds of anyons in the model. In the models from Sec.~\ref{section:examples}, the Clifford (vertex) stabilizers can be associated with $e$ anyons (charges) which are Abelian. However, the states that violate Pauli (usually, $Z$-type)  plaquette stabilizers correspond to superpositions of non-Abelian anyons.  Therefore, we call the violated plaquette measurements  ``flux defects'' (as opposed to ``excitations'' or ``anyons'') which we use in the path integral framework as well as analysis of fault tolerance.

In the closed-loop path integral picture, charge defects are in one-to-one correspondence with Abelian anyons, while flux defects correspond to the loop-shaped termination lines 
of open membranes. However, with some changes we can draw a connection between the flux defects and the non-Abelian anyon excitations. A depiction of a non-Abelian anyon excitation in the path integral picture is shown in Fig.~\ref{fig:actual-anyons}. Panel (a) shows a blue $m_b$ anyon worldline in spacetime that goes up to time $t_1$ and the associated (deformable) blue membrane. There is a green string along the worldline that behaves as a \emph{fluctuating charge} of the third (green) color. This string has the same consequence as the green charge worldline: it introduces a $-1$ weight for each intersection with a green membrane. This is shown in Fig.~\ref{fig:actual-anyons}(b) which depicts a time slice at a later time $t_2 > t_1$. However, the endpoints and the number of such strings are not fixed in the sum over all closed configurations of red membranes in the path integral, which results in different string distributions. The same property holds for $m_b$ excitation worldline and green closed membranes upon exchanging red and green colors. These properties explain the fusion and braiding statistics of the non-Abelian fluxes in the twisted quantum double.

The presence of an open blue membrane can make the path integral non-gauge-invariant. However, for the $m_b$ excitation worldline, this is fixed thanks to the decoration described above. The $-1$ phase due to a possible triple intersection with an open blue membrane is always canceled by the extra $-1$ factor coming from intersections between closed membranes and the strings along the blue anyon worldline, resulting in no overall ambiguity.  

We refer to the Appendix of Ref.~\cite{Iqbal2024nonAbelian} for an interpretation of non-Abelian excitations via gauging one-dimensional SPT states.

\section{Gauging Clifford symmetries}\label{sec:gauging_CZ}

In this section, we first review the algebraic description of gauging an Abelian symmetry of an Abelian anyon theory. 
In general, this can lead to a gauged theory that includes non-Abelian anyons. 
We show explicitly how gauging a $\zz_2$ symmetry on two copies of the toric code can produce $D_4$ anyons, which is equivalent to the anyon model of the type-III non-Abelian TQD. 
We conclude the example by showing how to identify the twisted quantum double model introduced in Sec.\,\ref{sec:CZ_minimal} as a gauged model on the level of the stabilizers that define the code.
This highlights the physics underlying the existing protocols for non-Clifford logical operations on 2D topological codes and enables the design of new procedures.

\subsection{Gauging Abelian symmetries in Abelian anyon models}

The anyon theory realized by a Pauli stabilizer model is always Abelian \cite{Bombin_2012, Haah2021Classification,Ellison2022Pauli}.
The anyon labels themselves form an Abelian group $A$ where the fusion of the anyons is given by the group multiplication.
The braiding is defined by the \textit{topological spin} $\theta: A \to U(1)$ that assigns a complex phase to each anyon type and captures how the wavefunction of a state with an anyon transforms under a $2\pi$ rotation of that anyon.
The trivial anyon, labeled by the identity element $1_A\in A$, has topological spin $\theta_{1_A} = 1$.
For any $a$, if $\theta_a = 1$ we say $a$ is a boson, and if $\theta_a = -1$ we say $a$ is a fermion.

We consider an Abelian symmetry (automorphism) of an Abelian anyon model that acts like some group $G$.
For each group element $g\in G$, there is a map $\rho_g: A\to A$ such that $\rho_g(1_A)=1_A$ and $\rho_g(\rho_h(a))= \rho_{gh}(a)\;\forall {g,h\in G,}a\in A$.
Moreover, an anyon automorphism has to preserve the braiding data, i.e.~${\theta_{\rho_g(a)} = \theta_a } \;\forall g\in G,a\in A$ and be compatible with the fusion, $\rho_g(ab) = \rho_g(a)\rho_g(b)\;\forall g\in G; a,b\in A$.

\begin{figure}
    \centering
    \includegraphics[width=0.8\columnwidth]{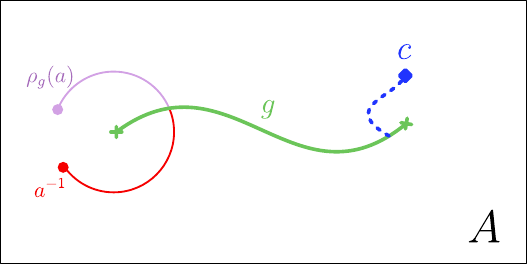}
    \caption{ To any symmetry operation $g\in\Aut(A)$ in an Abelian anyon theory $A$ there is an associated domain wall (green).
    Terminating this domain wall creates twist defects at the endpoints, indicated by crosses. Close to the twist defects, anyons that where previously inequivelant can become identified.
    On the left, we show how the anyon $\rho_g(a)$ can be created together with a $a^{-1}$ anyon close to the defect by applying an operator that crosses the domain wall.
    Without the defect this would not be possible if $\rho_g(a)\neq a$.
    On the right, we show how an individual anyon can be created close to the defect with a local operator if it is the fusion product of $a\times b = c$, where $\rho_g(b)=a^{-1}$.
    This implies that in the vicinity of the twist the total charge of $c$ (possibly fermions) is not conserved, i.e.~they can condense.}
    \label{fig:twist-defects}
\end{figure}

Such a symmetry can be gauged. Before gauging $G$, we need to incorporate the \textit{twist defects} associated to the symmetry $G$ into the theory.
In the literature, this procedure is referred to as constructing the ``$G$-extension of $A$''~\cite{Barkeshli2019symmetry}.
To each group element $g\in G$ we can associate a twist.
In the stabilizer model, twist defects can be introduced at endpoints of an (invertible) domain wall associated with the corresponding group element, see Fig.~\ref{fig:twist-defects}.
The twist defect can \textit{condense} some of the anyons. Namely, for every fermion $c$ of the form $c=a\times b$ with $\rho_g(b) = a^{-1}$ we can create a single $c$ anyon in the vicinity of the twist defect with a local operator, see Fig.~\ref{fig:twist-defects}.
For example, the twist defect in the toric code that exchanges $e$ and $m$ anyons leaves their composite, the fermion $f$, invariant.
In addition, we can apply a local operator close to the twist that creates a single fermion, which is not possible without the twist.

We label the subgroup of anyons that can condense at a $g$-twist by $C_g$.
In the case where all anyons are Abelian the \textit{quantum dimension} of the twist defect $g$ is equal to~\cite{Barkeshli2019symmetry}
\begin{align}\label{eq:quantum_dimension_twist}
    d_g = \sqrt{\abs{C_g}}.
\end{align}
Since $C_g$ is a subgroup of $A$, we can take the quotient $A/C_g$.
This groups the anyons of the original ungauged model $A$ into cosets of $C_g$.
To simplify the notation, we label a coset $aC_g\in A/C_g$ by $a_g$.

Close to the twist defect, anyon $a$ can be transformed into $\rho_g(a)$ by moving it once around the twist.
Clearly, this can be done by applying a local operator and hence, this makes $a$ and $\rho_g(a)$ topologically indistinguishable in the vicinity of the twist.
This decomposes $A/C_g$ into $G$-orbits $[a_g] = \{\rho_x(a)C_g\;|\; x\in G\}$.
These orbits are the (isomorphism classes of) simple objects in the structure that described the symmetric model before gauging.
Similar to anyons in the parent model, before gauging the symmetry, the simple objects can be equipped with a ($G$-graded) \textit{fusion} and \textit{braiding}.
This depends on the details of how the symmetry acts on the anyons of the model.
In the general case, this additional data can lead to an \textit{anomaly}~\cite{Barkeshli2019symmetry} that prevents gauging the symmetry.
However, in the following, we assume that the symmetry is anomaly-free, as it is for the symmetries that we gauge in this paper. 
More specifically, the anomaly-free condition is guaranteed when the symmetry action can be represented by an invertible domain wall within a lattice model realizing the anyon theory, for example by applying a transversal gate to parts of the model.\footnote{Thus, at a microscopic level, this is true for any logical gate that has a (non-anomalous) finite depth local unitary implementation on the associated code. }

Given the data described above, we can construct the labels for the anyons of the gauged model in three steps:
\begin{enumerate}
    \item Decompose $A_g \coloneqq A/C_g$ into $G$-orbits $\{[a_g]\;|\; a_g\in A_g\}$.
    \item For each $G$-orbit $[a_g]$ identify the \textit{stabilizer group} $S_{a_g} = \{h\in G\;|\; \rho_h(a_g)=a_g\}\leq G$ of one of the cosets in the orbit $a_g\in [a_g]$.
    \item Find the irreducible representations of $S_{a_g}$ for each orbit $[a_g]$.
\end{enumerate}
Given these structures, the anyons in the gauged model are labeled by a triple
\begin{align}
    (g,[a_g], \Gamma_a),
\end{align}
where $g\in G$, $[a_g]\subseteq A_g$ a $G$-orbit, and $\Gamma_a$ an irreducible representation of $S_{a_g}$.
Note that since $S_{a_g}$ is Abelian, $\Gamma_a$ can be viewed as group elements of $S_{a_g}$.

To complete the data of the gauged anyon model, we have to also determine the fusion and braiding data.
Both are directly inherited from the data of the symmetry group and associated twist defects.
The quantum dimensions are given by the formula
\begin{align}\label{eq:quantum_dimension_gauged}
    d_{(g,[a_g], \Gamma_a)} = d_g \abs{[a_g]},
\end{align}
where $d_g=\sqrt{\abs{C_g}}$ is the quantum dimension of a $g$-twist in the ungauged theory.
Note that this simple formula only holds for Abelian $G$. For non-Abelian $G$ additional factors appear, related to the conjugacy classes that define the anyon as well as to higher-dimensional irreps of the stabilizer groups $S_{a_g}$.

\begin{figure}
    \centering
    \includegraphics[width=0.65\columnwidth]{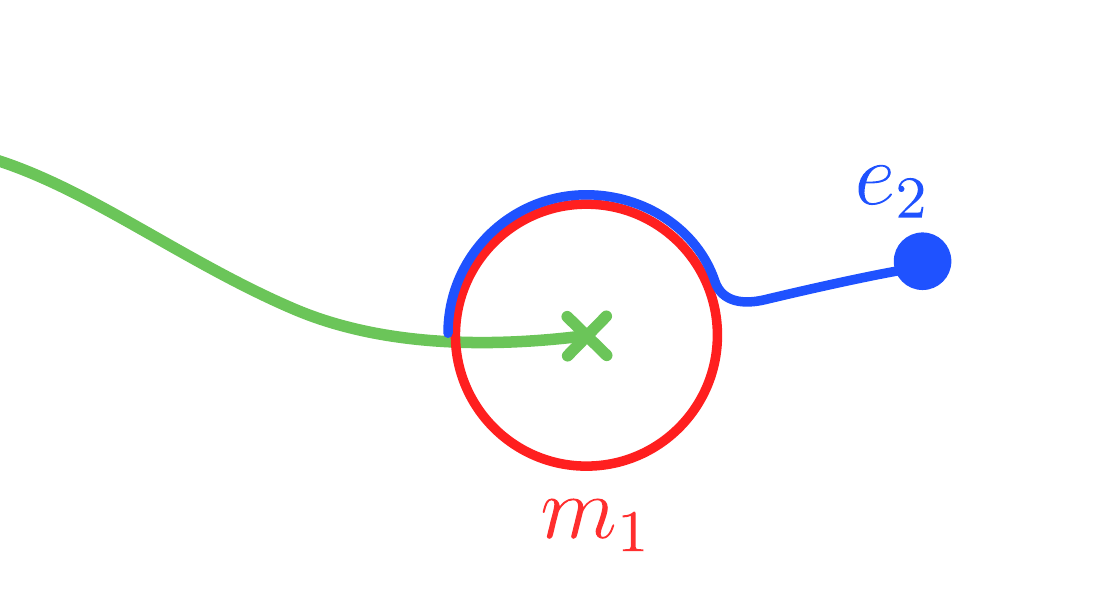}
    \caption{ Abstract depiction of the twist defect associated to the the $\zz_2$ symmetry in two layers of toric code (shown in green).
    We illustrate an anyon from the subgroup of Abelian anyons $C_1=\langle e_1,e_2\rangle$ that can condense at the twist defect.
    There exists a local string operator that creates a single $e_2$ anyon, indicated with the blue dot.
    The operator creating the anyon can be interpreted as creating a pair of $m_1$ anyons (shown in red) and moving one of them over the domain wall attached to the twist defect which acts on as $m_1\mapsto m_1e_2$.
    Fusing it again with the first $m_1$ anyon (closing of the red circle) leaves a single $e_2$ anyon behind.
    }
    \label{fig:Z2-twist-defect-condensation}
\end{figure}

Next, we discuss the topological spins of the anyons in the gauged theory.
They are built from the topological spin of the $g$-twist paired with an anyon in the $[a_g]$ orbit in the ungauged theory\footnote{Note that this quantity depends on the \textit{symmetry fractionalization} \cite{Barkeshli2019symmetry, Tarantino2016} pattern in the ungauged, symmetric, anyon theory. We do not aim to give a complete introduction here but merely show how to, given the complete data associated to the symmetry action $\rho$ within the ungauged theory, obtain the algebraic data describing the gauged anyon theory.}, $\theta_{g, a_g}$ and the irrep $\Gamma_a$.
In particular, we can consider the group character function $\chi_{\Gamma_a}: S_{a_g} \to U(1), g\mapsto \Tr(\Gamma_a(g))$.
The topological spin of the related anyon in the gauged theory is
\begin{align}
    \theta_{(g,[a_g], \Gamma_a)} = \theta_{g, a_g} \chi_{\Gamma_a}(g).
\end{align}
The remaining data of the gauged model, such as the $S$-matrix and the fusion data, are inherited from the data of the $G$-extension of the ungauged model and depend on the details of how $G$ acts on $A$.
In the anomaly-free case this can always be lifted unambiguously to the anyons in the gauged model and the procedure outlined above indeed constructs a proper anyon model.

Upon gauging, the anyons that have quantum dimension larger than 1 are non-Abelian (which can be determined from Eq.\,\eqref{eq:quantum_dimension_gauged}).
For a more in-depth discussion of non-Abelian groups and anyons, we refer the interested reader to Refs.\,\cite{Teo2015gauging, Tarantino2016,Barkeshli2019symmetry}.

\subsection{Non-Abelian TQD from gauging $\zz_2$ symmetry in toric codes}\label{sec:gauging_CZanyons}
Two toric codes are jointly described by the fusion group $A_{TC^{\times 2}} = \langle \{e_i,m_i\;|\; i=1,2\} \rangle\simeq \zz_{2}^{\times 4}$ with topological twist $\theta_{(\vb{e^a}, \vb{m^b})} = (-1)^{a_1b_1+a_2b_2}$, where $(\vb{e^a}, \vb{m^b}) = e_1^{a_1}e_2^{a_2}m_1^{b_1}m_2^{b_2}$ with $a_i,b_i\in \{0,1\}$ labels an anyon in $A_{TC^{\times 2}}$.

We consider a permutation symmetry of $A_{TC^{\times 2}}$ defined by its action on the generators
\begin{align}\label{eq:CZsymmetryTC}
\begin{split}
    e_i \mapsto& e_i, \  i=1,2\\
    m_1 \mapsto& m_1e_2\qcomma  m_2 \mapsto m_2e_1.
\end{split}
\end{align}
It is straightforward to check that it preserves all the topological spins of the model
and squares to the trivial permutation.
As such, it generates a $\zz_2$ symmetry of the anyon model of the two toric codes.

To obtain the gauged theory, we first construct the $G$-extension of $A_{TC^{\times 2}}$.
It has two sectors, ${A_0 = A_{TC^{\times 2}}}$ and $A_1 = A_{TC^{\times 2}}/C_1$, where $C_1$ is the subgroup of bosons that can condense at the non-trivial twist-defect associated to the $\zz_2$ symmetry.
We find that $C_1 = \langle e_1,e_2\rangle$, see Fig.~\ref{fig:Z2-twist-defect-condensation}. 
This leads to $A_1 = \{C_1,m_1C_1,m_2C_1, m_1m_2C_1\}$.
Decomposing both $A_0$ and $A_1$ into $G$-orbits leads to
\begin{subequations}
\begin{align}\label{app:Gorbits-CZsymmetry}
\begin{split}
    A_0 =& \{1\}\sqcup \{e_1\}\sqcup \{e_2\}\sqcup \{e_1e_2\}\sqcup \\
    &\{m_1,m_1e_2\}\sqcup \{m_2,e_1m_2\}\sqcup \{m_1m_2,f_1f_2\}\sqcup\\
    &\{m_1f_2,f_1m_2\}\sqcup \{f_1,f_1e_2\}\sqcup \{f_2,e_1f_2\}\qq{and}
\end{split}\\
    A_1 =& \{C_1\}\sqcup \{m_1C_1\}\sqcup \{m_2C_1\}\sqcup \{m_1m_2C_1\}.
\end{align}    
\end{subequations}
$A_0$ decomposes into 10 orbits and $A_1$ into 4 orbits.
The only possible stabilizer groups are the trivial group $\zz_1\leq \zz_2$ or the full symmetry group $\zz_2$.
The stabilizer groups label the additional ``symmetry charges'' (irreducible representations) carried by the anyons.
The 8 sets containing a single element admit the full stabilizer group and the 6 order-2 sets admit the trivial stabilizer group.
In total, this yields $2\cdot 8 + 6 = 22$ anyons in the gauged theory.
We label an anyon associated to sector $A_{0,1}$, an orbit $\{a\}$ and irrep $j= 0,1$ as $(i,[a], j)$.
The quantum dimensions are obtained by taking the product of the quantum dimension of the associated twist and the cardinality of the orbits.
The Abelian anyons, with quantum dimension 1, are
\begin{align}\label{eq:post-gauge-Abelian}
\begin{split}
    &(0, [1], 0), (0, [1], 1), \\
    &(0, [e_1], 0), (0, [e_1], 1),\\
    &(0, [e_2], 0), (0, [e_2], 1), \\
    &(0, [e_1e_2], 0), (0, [e_1e_2], 1).
\end{split}
\end{align}
They are closed under fusion and form the group $\zz_2^{\times 3}$. In color representation, they can be identified with (for example) $\{1,e_g,e_r,e_r e_g,e_b,e_b e_g,e_r e_b, e_re_be_g \}$, though this is not a unique association, see Sec.~\ref{sec:excitations}. 
All other anyons have quantum dimension 2 and hence, are non-Abelian. This is either because their orbits are of size 2 or because their associated twist defect has quantum dimension 2.

The topological spins of the 22 anyons can also be inferred directly from the data of the anyons and twists that enter an anyon of the gauged model.
We find that all Abelian anyons are bosons.
Moreover, there are three non-Abelian bosons and three non-Abelian fermions associated to anyons in the ungauged toric codes in the trivial sector $A_0$.
Three of the non-Abelian anyons associated to the sector $A_1$ are bosons 
\begin{equation}
    \begin{split}
        &(1,C_1, 0),
        \\
        &(1, m_1C_1, 0),
        \\
        &(1, m_2C_1, 0),
    \end{split}
\end{equation}
which, in the color picture, are translated into $\{m_g, m_r m_g, m_b m_g \}$, see Sec.~\ref{sec:excitations}. Three are fermions 
\begin{equation}
    \begin{split}
        &(1,C_1, 1),
        \\
        &(1, m_1C_1, 1),
        \\
        &(1, m_2C_1, 1),
    \end{split}
\end{equation}
 translated into $\{f_g, m_r f_g, m_b f_g \}$, see Sec.~\ref{sec:excitations}, and two are (anti-)semions, carrying a topological spin of $\pm i$, namely
 \begin{equation}
    \begin{split}
        &(1,m_1m_2C_1, 0),
        \\
        &(1, m_1m_2C_1, 1).
    \end{split}
\end{equation}
In the color picture, these are simply denoted as $s$ and $\overline s$.

Additionally, the anyons inherit their fusion and braiding data from the $G$-extension, i.e. the toric code anyon models together with the twist defects of the $\zz_2$ symmetry.
The fusion and braiding data of the anyons of the gauged theory coincide with the anyons of a quantum double of $\zz_2^{\times 3}$ twisted by a type-III cocycle, respectively the quantum double of the non-Abelian group $D_4$.
We refer to Refs.~\cite{dijkgraaf1991quasi,Propitius_1995,Iqbal2024nonAbelian} for a more in-depth discussion of the braiding and fusion data.

\subsection{Implementation of gauging on-site symmetries}\label{app:onsite-gauging}

We now provide a brief overview of the gauging measurement procedure for a $\mathbb{Z}_2$ on-site symmetry, following Ref.~\cite{Williamson2024Gauging}.

We consider a symmetry action of the form
\begin{align}
U = \prod_{v \in \mathcal{G}} U_v , 
\end{align}
which acts on the vertices of an oriented graph $\mathcal{G}$, equipped with a chosen $\mathbb{Z}_2$-cycle basis with elements labeled by plaquettes~$p$.
The graph should be chosen to admit a sparse cycle basis.
The fault-tolerant gauging measurement procedure is performed as follows:
\begin{enumerate}
    \item Introduce an ancilla qubit initialized in the $\ket{0}$ state on each edge of the graph $\mathcal {G}$. 
    \item Measure local star operators $A_v$, defined below, on every vertex ${v \in \mathcal{G}}$ (this condenses the domain walls associated with the symmetry and results in a deformed intermediate code). 
    \item Repeat $d$ rounds of syndrome extraction for the deformed code. 
    \item Measure out the edge qubits in the $Z$ basis (this ungauges the symmetry and brings us back to the original, un-deformed code). 
    \item Apply an Abelian correction operator $C$, defined below. 
\end{enumerate}
We now describe the deformed code. By construction this code is LDPC even if the symmetry operator that is gauged has total extensive weight.
The deformed code has stabilizer checks $A_v,B_p$, and $\{ \widetilde{s}_j\}$, which we refer to as star, plaquette, and deformed checks, respectively. 
The star checks are
\begin{align}
    A_v = U_v \prod_{e \in \delta v} X_e ,
\end{align}
where $\delta$ denotes a $\mathbb{Z}_2$ coboundary map on $\mathcal{G}$. 
The measured value of the symmetry operator $U$ is given by the product of the measurement outcomes of all vertex checks. 
The new plaquette checks are
\begin{align}
    B_p = \prod_{e \in \partial p} Z_e ,
\end{align}
where $\partial$ denotes a $\mathbb{Z}_2$ boundary map for $\mathcal{G}$. 
To describe the deformed checks $\{ \widetilde{s}_j\}$ we expand the original checks  $\{ s_j\}$ of the (non-deformed) code into a linear combination of operators with definite on-site symmetry charge $\mu$, namely: 
\begin{align}
    s_j = \sum_{\mu} c_j^\mu O_{\mu},
\end{align}
where $c_j^\mu$ are complex coefficients defining the expansion. 
The index $\mu$ runs over the possible charges with respect to each $U_v$ independently.
In fact, since $\{U_v\}_v$ are defined on vertices and generate a $\zz_2$ symmetry, and since $O_\mu$ is a local operators that commutes with the global symmetry $U$, $\mu$ can be considered a $\zz_2$-valued 0-boundary. 
Under this identification we write $\mu=d\phi$ for a $\zz_2$-valued 1-chain $\phi$, where $d$ is the boundary operator of the chain complex defined by $\mathcal{G}$.
The choice of $\phi$ is arbitrary up to adding edges in the support of $B_p$ operators. We choose a $\phi$ with minimal support. 
The deformed check is obtained by replacing each term in the above expansion as follows
\begin{align}
O_{\mu} \mapsto \widetilde{O}_\mu = O_{\mu} \prod_{e \in K} Z_e ,
\end{align}
where $K$ is the support of $\phi$. 
Hence, 
\begin{align}
    \widetilde{s}_j = \sum_{\mu} c_j^\mu \widetilde{O}_{\mu}. 
\end{align}

Upon performing $Z$ measurements on the edges of the graph $\mathcal G$ in step 4, the observed outcomes of these measurements form a $\mathbb{Z}_2$-valued 1-coboundary $h_e$, due to the $B_p$ stabilizer checks of the deformed code. 
This leads to a correction operator
\begin{align}
    C = \prod_{v \in W} X_v ,
\end{align}
where $W$ denotes the support of a $\mathbb{Z}_2$-valued 0-cochain $w$ with coboundary satisfying $(\delta w)_e = h_e$. 

Following the above procedure, gauging a Clifford symmetry of a Pauli stabilizer code can be used to perform logical non-Clifford gates\footnote{Because measurement of a Clifford operator on an appropriate starting state, in general, prepares a magic state; this can then be teleported to produce a non-Clifford unitary gate.} by switching to an intermediate code, that is stabilized by a non-Abelian Clifford stabilizer group, and back. 
This approach was used to derive the procedure in Section~\ref{sec:CZ_minimal}, which can be understood as gauging an anyon-permuting $\mathbb{Z}_2$ symmetry on two copies of the toric code. We explain this for the specific model in the next subsection. 
The gauging procedure also applies to the logical $\overline{XS}$ measurement (i.e.~$\overline{T}$-state preparation) protocol, 
see Appendix~\ref{sec:more_gates}.

\subsection{Gauging a $\zz_2$ symmetry in a microscopic model}
\label{sec:gaugingungauging}

\begin{figure}[t]
\includegraphics[width = \columnwidth]{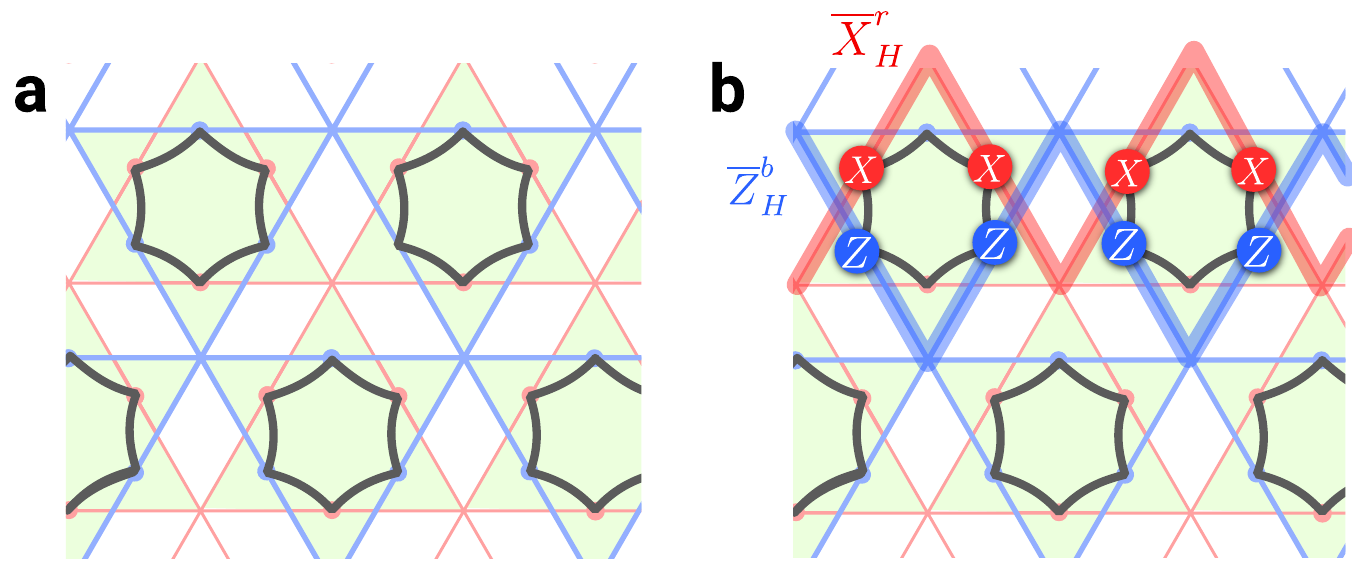}
\caption{\label{Fig:pregauge_model_triangular} (a) The $\mathbb Z_2$ symmetry of red and blue copies of the toric code. Each dark line corresponds to a $CZ$ gate acting between the red and blue qubit at its endpoints. 
(b) shows the action of the corresponding realizations of the symmetry on $\overline X_{H}^r$, giving $\overline X_{H}^r \overline Z_{H}^b$.  }
\end{figure}

In the following, we show how the Clifford stabilizer model introduced in Sec.\,\ref{sec:CZ_minimal} can be thought of as a model obtained from gauging the $CZ$ symmetry (which acts as $\zz_2$ in the way described in \ref{sec:gauging_CZanyons}) in two copies of the toric code on triangular lattices.
In doing so we illustrate how an abstract gauging operation is implemented microscopically by enforcing (projecting or measuring with either postselection or an appropriate correction) local stabilizers whose product is the symmetry operator.
In this method, the global symmetry is enforced via a combination of local symmetries. 

We start with two toric codes on two triangular lattices, shown in red and blue in Fig.\,\ref{Fig:pregauge_model_triangular} that are displaced with respect to each other.
Using the notation from Sec.\,\ref{sec:CZ_minimal}, we start with a code that is stabilized by $A_{v,c}$ and $\mathcal{B}_{f,c}$ for $c=r,b$.
This code has a global $\zz_2$ symmetry generated by products of $CZ$ unitaries acting between qubits on the red and the blue layer.
We depict this symmetry operator and its action on the logical operators in Fig.\,\ref{Fig:pregauge_model_triangular}.

Gauging is performed by adding extra degrees of freedom initialized in the $Z$ basis, e.g. in $\ket{0}$. We choose to place them on the edges of a third triangular lattice, which we label in green, in line with Sec.\,\ref{sec:CZ_minimal}.
The $CZ$ symmetry is gauged by projecting onto the +1 eigenspace $\mathcal{A}_{v,g}$ operators as shown in  Fig.\,\ref{fig:postgauge-cliffordstabs}, which are essentially the domain walls living on the green surrounding domains where the $CZ$ symmetry of the red and blue toric codes has been applied locally. The fact that these operators are stabilizers of the gauged model can be understood as a \textit{proliferation} of domain walls of the $CZ$ symmetry -- i.e. promoting this symmetry to a dynamical gauge field.

The product of all the local operators $\mathcal{A}_{v,g}$ reproduces the global symmetry operator $CZ$ corresponding to the gauged symmetry.
In Fig.\,\ref{Fig:pregauge_model_triangular}, we see that the symmetry operator naturally decomposes into products over groups of 6 qubits.
In Fig.\,\ref{fig:postgauge-cliffordstabs}, we show the local stabilizers that have to be added to the gauged model.
Note that they exactly coincide with the $\mathcal{A}_{v,g}$ Clifford stabilizers of the TQD model in Sec.\,\ref{sec:CZ_minimal}.

\begin{figure}[t]
\includegraphics[width = 1\columnwidth]{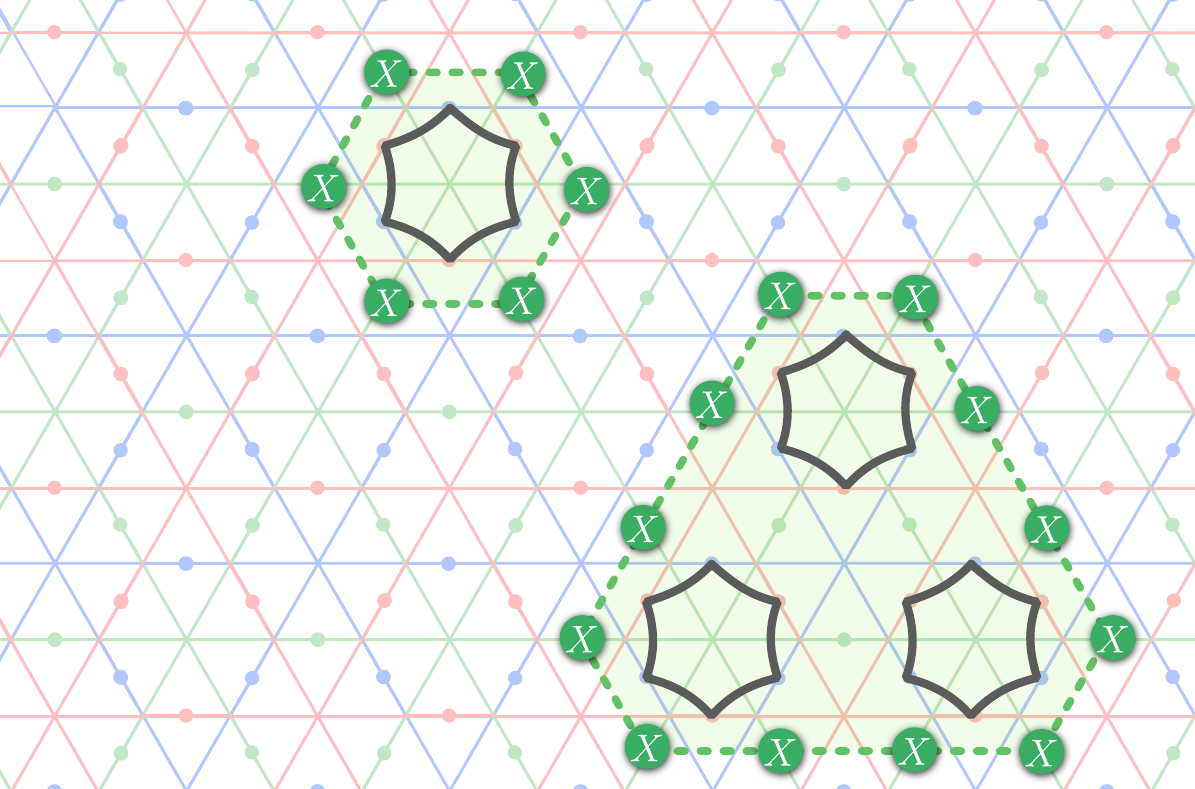}
\caption{\label{fig:postgauge-cliffordstabs} An example of one of the operators that need to be added to the stabilizer group upon gauging two copies of the toric code into the twisted quantum double. These operators consist of the $CZ$-like symmetry shown in Fig.~\ref{Fig:pregauge_model_triangular} applied only in certain domains, surrounded by a green domain wall (seen as a continuous green loop on the dual lattice). }
\end{figure}

Viewed through this lens, we see that the protocol presented in Sec.~\ref{sec:CZ_minimal} is precisely gauging the $CZ$ symmetry where the gauging projection is implemented with quantum measurements, and error correction or post selection.
We also show how the symmetry can be ``ungauged'' using single-qubit Pauli measurements.
This reverts the gauging process while preserving the eigenspace of the gauged symmetry operator.
This way the gauging and ungauging procedure projects onto a definite eigenstate of a logical Clifford operator.

As a side remark, this prescription performs the gauging of symmetry with non-onsite local action (albeit on-site after coarse-graining). For this kind of symmetry action, the usual prescription for gauging that involves the cluster state entangler~\cite{verresen2022efficiently}, would be quite cumbersome. In addition, the usual prescription would prepare the twisted quantum double phase on an entirely new set of qubits, requiring a larger overhead. In contrast, in our prescription, the red and blue qubits are shared between the toric code and the twisted quantum double model.

\section{Additional microscopic examples from gauging}
\label{sec:more_gates}

As we discussed in Section~\ref{sec:GaugingMsmnt} and Appendix~\ref{sec:gauging_CZ}, logical protocols can be implemented via code deformations derived from gauging Clifford symmetries of the input codes. 
We now turn to several additional microscopic examples of logical protocols on codes with open boundary conditions that we derive using this method. We first show a planar preparation of a single $CZ$ state, we then show a version of the $CCZ$ gate, and finally we show a planar model with appropriate boundary conditions to prepare a $T$-state.

\subsection{$CZ$-state preparation with open boundary conditions}

We first discuss a planar version of the $CZ$-magic state preparation scheme. Specifically, we describe how the boundaries are prepared to implement this gate. We also remark that a small implementation of such a $CZ$-state preparation scheme has been demonstrated experimentally in Ref.~\cite{Gupta2024encoding} (there, a slightly different microscopic circuit that is based on a folded color code was used).

We show the lattice geometry in Fig.~\ref{Fig:CZstate}. The figure shows three triangular lattices that each can support a surface code, with qubits on the lattice edges.
Before gauging, the red and the blue copy of the surface code admit a finite-depth logical $\overline{CZ}$ gate, whose expression is 
\begin{equation}
\overline{CZ} \cong \prod_{\textbf{col}(v) = g} \mathcal{A}_{v,g}. 
\end{equation}

We then prepare a gauged code as discussed in Sec.~\ref{sec:CZ_minimal}, except now that the boundary stabilizers are present. For this, we similarly prepare all the green qubits in the $|0\rangle$ state and measure the green star operators, $\mathcal{A}_{v,g}$, as described in Fig~\ref{Fig:CZstate}. 
The product of these measurements returns the value of the parity operator
$
P^g = \prod_v \mathcal{A}_{v,g}
$.
As such, the gauging procedure measures the logical $\overline{CZ}$ operator for the ungauged code. Measurements that project onto the even-charge parity subspace therefore produce a $CZ$-state, assuming we have initialized the red and blue surface code in the logical $\left|\overline{++}\right\rangle$ state.
To initialize, we prepare all the physical qubits on the red and blue lattices in the $|+\rangle$ state and measure the plaquette operators to obtain the logical $|\overline{++}\rangle$ state and the red and blue surface codes. Importantly, before we perform the gauging operation, we have to repeat the plaquette measurements $O(d)$ times to correct the two lattices such that we can reliably recover the $B_p=+1$ subspace.

\begin{figure}[t]
\includegraphics[width = 0.8 \columnwidth]{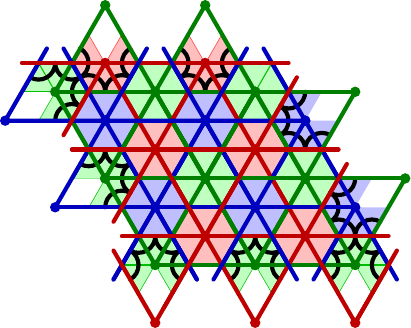}
\caption{ A schematic depiction of the lattice geometry for implementing a logical $\overline{CZ
}$ gauging measurement on two planar surface codes.
We place a qubit on an edge of each color. The surface codes are initialized on red and blue triangular sublattices, which are equipped with appropriate boundary conditions. The stabilizers correspond to 6-qubit Pauli-$X$ vertex terms on red and blue edges and 3-qubit Pauli-$Z$ plaquette terms associated with each triangular plaquette of red or blue color.
The green sub lattice is used as an auxiliary block to implement the gauging.
Upon gauging, we obtain a Clifford stabilizer model with 3-qubit Pauli-$Z$ plaquette stabilizers for each triangle and Clifford stabilizers associated with vertices of all three colors. In the bulk, this model is identical to the one shown in Fig.~\ref{fig:ToricStabilizers}. At the boundary, the $CZ$ parts of stabilizers are shown in black where they differ from the bulk ones. 
}
\label{Fig:CZstate}
\end{figure}

The ungauging procedure is carried out in the same way as described in Sec.~\ref{sec:CZ_minimal}.

\subsection{The $CCZ$ gate}

\begin{figure}[b]
\includegraphics[width = 0.6\columnwidth]{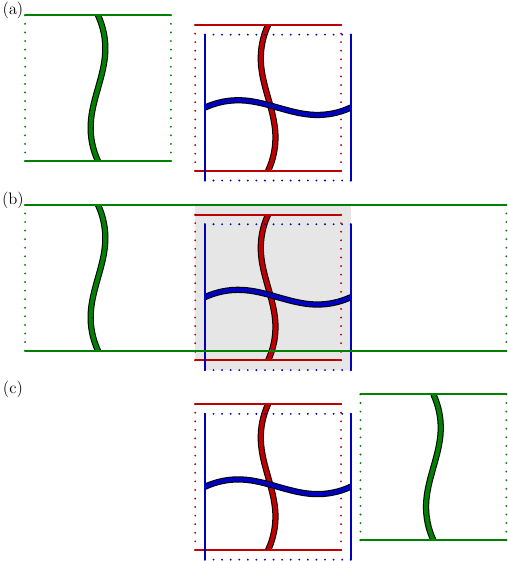}
\caption{\label{Fig:NewCCZ} A constant-depth measurement-based implementation of the logical controlled-controlled phase gate. This accompanies the microscopic details shown in Fig.~\ref{Fig:NewCCZmicroscopic}. Three qubits, colored red, blue, and green, are initialized on surface codes in (a). In step (b), the red and green copies in the middle are gauged, which is done simultaneously to joining them to the green surface code that has been prepared on the left and extending the green surface code to the right. This can be understood as deformation to a new code that consists of an extended region supporting the green surface code and TQD model in the middle.  At the final step, we collapse the left side in the Pauli-$Z$ basis, and ungauge the middle region. This is done by measuring the green qubits in the Pauli-$Z$ basis in the left and middle regions. This completes the $CCZ$ gate with the green logical qubit teleported to the right-hand side, as shown~(c).}
\end{figure}

\begin{figure*}
\includegraphics[width = 0.8 \textwidth]{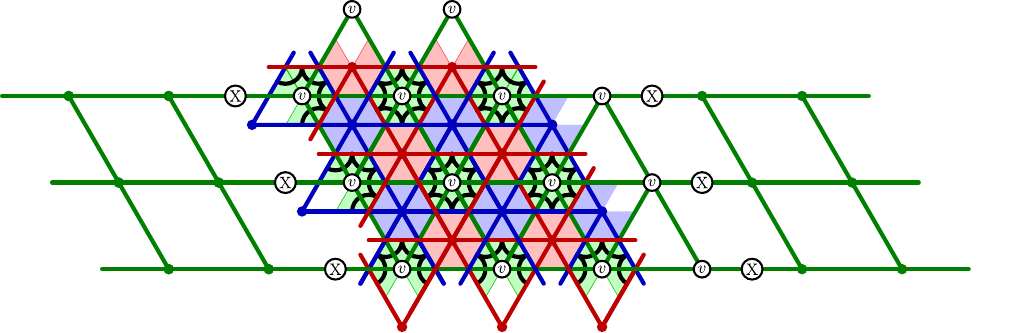}
\includegraphics[width = 0.8 \textwidth]{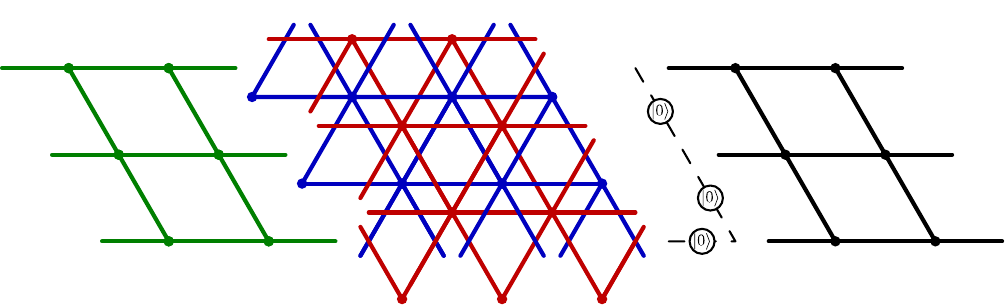}
\caption{The measurement-based CCZ gate by gauging. The system is prepared with one logical qubit on the green surface code to the left of the figure, and two logical qubits are prepared in the red and blue surface codes in the middle of the figure. The auxiliary  (shown in black) surface code on the right is initialized in the logical zero state. The gauging operation merges the central twisted quantum double model with the two adjacent (green and black) surface codes. The product of the marked green vertex operators measure the operator $X_g CZ_{r, b} X_a $. As such, we measure the logical operator needed for the measurement-based $CCZ$ gate, described in the main text, with its inclusion in the stabilizer group of the merged code. \label{Fig:NewCCZmicroscopic}}
\end{figure*}

We now show a version of the protocol for the unitary $CCZ$ gate on three copies of the surface code that we call ``implementation by measurements''.  In Fig.~\ref{Fig:NewCCZ}, we show separate steps of this scheme: the green surface code is deformed over the red and blue surface codes and then brought back. In spacetime this protocol looks like a version shown in Fig.~\ref{fig:protocol2}, except that the spacetime regions are rotated and deformed such that the result matches  In the noise-free case, we can start with the configuration as shown in Fig.~\ref{Fig:NewCCZ} and implement the scheme in just a few steps. In the first step, we prepare a TQD in the middle region with boundary conditions that glue it to an extended region of the green toric code on the left and right from it.
Next, the green code is transported to the other side of the red and blue copy via the ungauging process, where the green qubits that do not maintain the surface code are measured in the Pauli-$Z$ basis.

In Section~\ref{sec:description}, we provided a physical interpretation of the $CCZ$ gate where a green surface code is deformed slowly over a red and blue surface code via measurement. Here, we give an interpretation in terms of measurement-based operations. To show how this gate works, we observe that we can perform a controlled-controlled phase gate on three qubits via measurements together with a single additional qubit prepared in the zero state.  We aim to perform a $CCZ$ operation on the logical qubits indexed $r, g,b$, with the additional qubit indexed $a$, prepared in state $|0\rangle_a$. We begin by measuring $M_1 = \overline{X}_g  \overline{CZ}_{r, b} \overline{X}_a$, and applying a Pauli-$Z$ correction, $\overline{Z}_a$, in case of the $-1$ measurement outcome. Then, we measure the green qubit in the Pauli-$Z$ basis to complete the $CCZ$ operation, while teleporting the green qubit onto the ancilla qubit.
The logical circuit can be depicted via the circuit diagram
\begin{align}
\scalebox{0.8}{ \begin{quantikz}
\lstick{$g$} & \meter[5]{M_1} & & \meter[]{\overline{Z}} \\
\lstick{$r$} & && \rstick{$r$}\\
\lstick{$b$} & && \rstick{$b$}\\
\setwiretype{n} &&\setwiretype{c} & \setwiretype{n} \\
\lstick{$\ket{\overline{0}}$} \setwiretype{q} & & \gate{\overline{Z}}\wire[u][1]{c}  & \rstick{$g$} \\
\end{quantikz}}
\end{align}

In Fig.~\ref{Fig:NewCCZmicroscopic} we show the microscopic details of how the $CCZ$ gate is performed. At the bottom of the figure we show how the system is initialized. The green code is shown to the left and the red and blue codes are shown overlapping in the middle of the figure. An additional surface code region (which we labeled `$a$' above) shown in black is prepared to the right. The qubits of the additional black patch are initialized in the zero state. Next, we perform a gauging operation on the two central code patches, wherein we additionally couple the gauge degrees of freedom to the two adjacent surface codes, i.e., the green logical qubit and the additional logical qubit on the right.
In doing so, we end up measuring the logical operator $ \overline{X}_g  \overline{CZ}_{r, b} \overline{X}_a$ of the initial state. Indeed, this operator is included in the stabilizer group of the new code. Specifically, it is given by the product of all the green vertex (Clifford) stabilizers marked in the top part of Fig.~\ref{Fig:NewCCZmicroscopic}, which are represented by the star operator decoration around each of the green vertices.
As a final step (not shown in Fig.~\ref{Fig:NewCCZmicroscopic}, as it is simply a reflected version of the top panel)  we measure the left code, and the green qubits of the central part in the Pauli-$Z$ basis, to decouple the three surface code copies and to determine the correction we need to apply in the ungauging process. 
This measurement also teleports the green logical qubit onto the additional logical qubit to the right of the system and emulates the motion used in the proposal from Fig.~\ref{fig:protocol2}. As a result, we obtain the $CCZ$ unitary action on the $r,g$ and $b$ qubits up to a Clifford frame $(I \otimes CZ \otimes X)^{m_Z}$ where $m_Z$ is the result of logical $Z$ measurement on the part of the green surface code that was performed during the ungauging procedure.

\subsection{$T$-state preparation}

Finally, we show how we can derive a preparation scheme for a $T$-state, defined as $|T\rangle = T |+\rangle$ where $T = \omega^{|1\rangle\! \langle 1|}$ and $\omega = \exp(i \pi / 4)$ from the gauging approach. The spacetime picture of such scheme is shown in Eq.~\eqref{eq:TGateSplit}; we then derive the miscoscopics from the gauging approach, starting form a color code model on a triangle defined on the lattice shown in Sec~\ref{sec:colorcode}, and obtain the non-Abelian TQD by gauging. We first describe the resulting non-Abelian Clifford stabilizer model. 

In Fig.~\ref{fig:gauged_triangle} we depict the non-Abelian code on a triangle, including the boundary Clifford stabilizers. 
The lattice and stabilizers in the bulk are analogous to Sec.~\ref{sec:colorcode}, except that the plaquettes are now spatially deformed to form hexagons.
The figure also shows how the Clifford stabilizers are defined at each of the boundaries. The boundaries chosen in this geometry are derived from the \emph{color} boundaries of the color code, such that a boundary of color $c$ supports no bricks of color $c$. The left and right boundaries are colored blue and red, respectively, and the bottom boundary is yellow.

\begin{figure}[t]
\includegraphics[width = 0.7 \columnwidth]{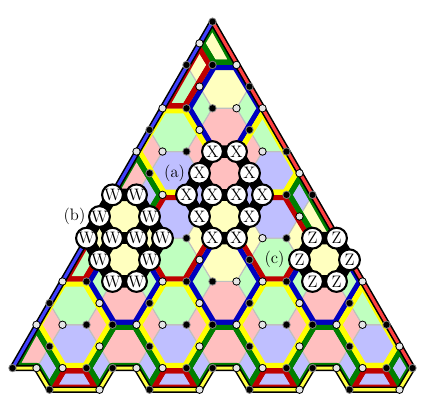}
\caption{Gauged color code on a triangle. Qubits shown in black and white, the stabilizers are associated with bricks and plaquettes. Decorated Clifford stabilizers are associated to the bricks, see, e.g.~(a) and~(b), and Pauli-Z type stabilizers are associated to the ``faces'' which are the intersections of top and bottom bricks,~(c). On the boundary the stabilizers are simply the truncations of the bulk stabilizers. The left and right boundaries are blue and red, respectively, and the bottom boundary is yellow.}
\label{fig:gauged_triangle}
\end{figure}

Following the argument in Sec.~\ref{sec:colorcode} we can show that the gauged model presented in Fig.~\ref{fig:gauged_triangle} is obtained by gauging a symmetry in a model that is equivalent to a 2D color code on a triangle. This color code model is obtained by condensing the green electric charges by measuring all of the green $ZZ$-edge terms of the lattice shown in Fig.~\ref{fig:gauged_triangle}. Specifically, given this color code model the twisted quantum double is obtained by gauging a transversal $\overline{XS}$ gate of such a Pauli model.
On the triangle, the color code encodes a single logical qubit on which the symmetry acts with a logical $XS$ gate.
Gauging this symmetry performs a logical measurement of the associated operator and hence, can be used to prepare a $T$-state, see Sec.~\ref{sec:T-state-macroscopic} and Appendix~\ref{app:onsite-gauging}. See also the appendix of Ref.~\cite{Prem2019gauging} for a similar construction based on gauging the transversal $H$ gate of the 2D color code.

Finally, ungauging this model is performed by measuring the green $Z$ strings similarly to the explanation in Sec.~\ref{sec:3d-2d}. This teleports the logical state back into the ungauged color code.

\section{Toric code phase from the path integral}
 \label{appendix:pathint}

Here, we show how to relate the circuit for the square-lattice toric code to the path integral.  The argument is not specific for a given lattice and can be extended to any spacetime cellulation. We provide only the basic structure and refer the reader to Ref.~\cite{Bauer2023} for a more comprehensive description.  The path integral for the toric code can be written as
\begin{align}
Z&=\sum_{ \text{confs. } \vec{c}}\ \ \prod_{\text{faces $f$}} \omega_f(\vec c)
\nonumber \\
&= \sum_{  \vec{c}} \prod_{f} \delta\Big({\sum_{\partial f} \vec{c} = 0 \operatorname{mod} 2}\Big)\, .
\label{eq:toric_code_path_integral1}
\end{align}
That is, the weight of the path integral 
$$w_f(\vec {c} ) = \delta\Big({\sum_{\partial f} \vec{c} = 0 \operatorname{mod} 2}\Big) ,$$ 
enforces the constraint that the variables on the edges $\partial f$ around the boundary of each plaquette $f$ must have even total parity.

We now explicitly show that applying projectors $\frac12(1+XXXX)$ for every vertex $v$ and $\frac12(1+ZZZZ)$ for every horizontal face $f$ at each timestep realizes the same action between input and output states as the expression for the path integral, upon appropriate identification between $\zz_2$ variables and qubits.

We consider a cubic lattice in spacetime and define the time direction to be along one of the axes of the cubic lattice. We call the edges that are parallel to the time direction \emph{timelike} and the other edges \emph{spacelike}.  Note that the application of $\frac12(1+XXXX)$ at each vertex of the spatial lattice corresponds to a sum over all possible configurations of operators where either $1$ or the $XXXX$ operator is applied at each vertex. We associate the spacelike edges with data qubits for the toric code at corresponding time $t$, and timelike edges with whether the $1$ or the $XXXX$ operator is acting at a given vertex (to which the timelike edge is incident starting at time~$t$).  Here, the timelike edges remain ``virtual'' $\zz_2$ variables, while below, they are treated as ancilla qubits for vertex measurements.%

To compute the global operator resulting from this circuit, we sum over all qubit configuration histories compatible with the actions of the projectors.
Let $g_{t,e}$ denote the value of the qubit at the spacelike square lattice edge $e$ before the time step $t$. Let the $h_{t,v}$ be the variable associated with the timelike edge incident to vertex $v$ that tells us whether the $1 $ or $XXXX$ operator is applied to the edges adjacent to the square-lattice vertex $v$ during time step $t$. We have
$$g_{t+1,e}=g_{t,e}+h_{v_0,t}+h_{v_1,t},$$ where $v_0$ and $v_1$ are the vertices adjacent to the edge $e$, since the qubit configuration at time $t+1$ is obtained from applying $X$ to that at time $t$, according to the variables $h$ at $v_0$ and $v_1$.
The projector $\frac12(1+ZZZZ)$ at a spacelike plaquette $f$ projects onto the space of configurations satisfying
\begin{equation}
g_{t,e_0}+g_{t,e_1}+g_{t,e_2}+g_{t,e_3}=0\;,
\end{equation}
where $e_0$, $e_1$, $e_2$, and $e_3$ denote the edges of $f$.
Hence, the overall operator implemented by the circuit can be written as a sum
\begin{equation}
\label{eq:tc_circuit_path_integral}
\begin{multlined}
\sum_{\vec g, \vec h}  \left (\prod_{\substack{\text{times $t$},\\\text{edges $e$}}} \left (\delta_{g_{t+1,e}=g_{t,e}+h_{v_0,t}+h_{v_1,t}=0} \right ) \right. \times \\
\left.\prod_{\substack{\text{times $t$},\\\text{faces $f$}}} \left (\delta_{g_{t,e_0}+g_{t,e_1}+g_{t,e_2}+g_{t,e_3}=0} \right ) \right)\, .
\end{multlined}
\end{equation}
the second product in the brackets already looks identical to Eq.~\eqref{eq:toric_code_path_integral1} on the spacelike faces. The first product is, in fact, identical to Eq.~\eqref{eq:toric_code_path_integral1} on timelike faces. Thus, the expression that we obtain for the action of the circuit of projectors is identical to  the toric code path integral from Eq.~\eqref{eq:toric_code_path_integral1} with our chosen identification of variables to qubits.
Hence, the evaluation of the $+1$-post-selected toric code circuit is the same as the evaluation of the toric code path integral on a cubic lattice.

When we replace the projectors with measurements, the $-1$ measurement outcome of an $XXXX$-operator at vertex $v$ at time $t$ corresponds to a factor of $(-1)^{h_{v,t}}$ in the path integral at the timelike edge associated with the variable $h_{v,t}$. The $-1$ measurement outcome of the $ZZZZ$-plaquette operator replaces the parity-even constraint with the parity-odd constraint on that plaquette in the path integral. These are associated with flux and charge wordlines, respectively, as we discus in the main text.

\section{Corners and boundaries of the TQD phase}
\label{app:boundaries-DWs-corners}
In this Appendix, we systematically derive the path integral weights that occur at the boundaries and corners of the TQD path integral.
A rigorous approach to this proceeds by defining microscopic path integrals on spacetime triangulations with boundaries, where membranes are represented as cellular cocycles, and the weights are associated to the highest-dimensional simplices.
Then demanding either invariance under local changes in the triangulation (such as Pachner moves) or invariance under gauge transformations that locally change the cellular cocycles.
This yields consistency conditions for the weights that are related to group cohomology that can be solved to determine the weights.
For example, the weight associated to a bulk tetrahedron is a group 3-cocycle, and the weight associated to a boundary triangle is a group 2-cochain that trivializes the bulk 3-cocycle on a subgroup.

Here, we use the simpler continuum closed-membrane picture, where the weights are associated to intersections between lines or membranes.
In this picture, deriving consistency conditions requires us to assume a convention for when membranes or loops are intersecting.
Thus, this method is slightly less rigorous, but it brings us to the correct answer much faster.

We start by discussing the bulk weights associated with the configurations of red, green and blue membranes.
The only special points in the bulk are the triple intersections of the three different membrane types which we associate with a phase $\eta$.
This weight must be a phase to define an path integral associated with a unitary action (i.e.~quantum mechanics is unitary). Changing the orientation of a membrane configuration (i.e.~reflecting it) should correspond to complex conjugating all the weights.
Since the triple intersection is reflection invariant, we have $\eta=\overline\eta$, which has two solutions $\eta=1$ and $\eta=-1$.
$\eta=1$ corresponds to three uncoupled copies of the toric code, whereas $\eta=-1$ is the TQD path integral that we consider here.

Next, we consider the boundaries.
We start by arguing that there exists no topological boundary where membranes of all three colors are allowed to terminate.
Such a boundary would have red, green, and blue termination lines at the boundary.
We would have to associate some weight $\eta_{rg}$ to every intersection of a red and green termination line inside the boundary, and similarly, weights $\eta_{rb}$ and $\eta_{gb}$ to red-blue and green-blue intersections.
Demanding that the weights are invariant under local deformations means that we have the following equation:
\begin{equation}
\begin{gathered}
\raisebox{-0.5\height}{\includegraphics[width = 0.7\columnwidth]{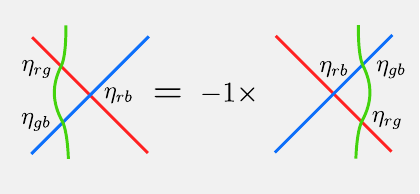}}\\
\eta_{rg} \eta_{rb}\eta_{gb} = -\eta_{rg} \eta_{rb}\eta_{gb} .
\;
\end{gathered}
\end{equation}
Here and below the light gray background represents the boundary in spacetime. 
The two sides show two different patterns of termination lines at the boundary.
To match both sides, the attached bulk membranes (which are not shown) must include a triple intersection on one side, hence the $-1$ factor.
The equation has no non-zero solution, and thus there is no topological boundary where all three membranes terminate.
In the cellular cohomology derivation, the analogous argument is that terminating all membranes corresponds to having the full subgroup of the bulk $\zz_2^{\times 3}$ gauge group on the boundary, and the bulk weight is a cohomologically non-trivial group cocycle.

We next look at the $\langle r,g\rangle$ boundary where the red and green membranes can terminate.
There is a weight $\eta_{rg}$ associated to every intersection of a red and green termination line inside the boundary:
\begin{equation}
\raisebox{-0.5\height}{\includegraphics[width = 0.37\columnwidth]{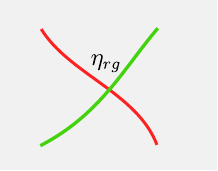}}
\;
\end{equation}
Such red-green intersections are invariant under reflection: in other words, there is no way to distinguish between a ``left-handed'' and a ``right-handed'' intersection.
Thus, unitarity implies that $\bar{\eta}_{rg}=\eta_{rg}$.
So we have two possible solutions, $\eta_{rg}=1$ and $\eta_{rg}=-1$.
These two solutions correspond to two different boundary conditions, however, they correspond to the same topological boundary phase.
This is because one can be obtained from the other by adding a blue membrane along the boundary (``coating'' the boundary with it), which is slightly displaced toward the bulk.
Then, at every red-green boundary intersection, there is a bulk triple-intersection between the attached red and green bulk membranes and the blue coating membrane.
This way, the additional $-1$ factor in $\eta_{rb}$ can be emulated by triple intersections with the help of blue coating membrane.
Since the boundaries are equivalent, we pick the $\eta_{rg}=1$ one.

Next, we consider the $\langle rg,rb\rangle$ boundary.
There can now be both $rg$ and $rb$ termination lines on the boundary.
However, it does not suffice to keep track of where red and green membranes terminate simultaneously, we also need to keep track of whether the red membrane attached to the termination line from the bulk is on the right or on the left from the green membrane attached to this line when looking along the $rg$ ($rb$) termination line.
To see this better, we can imagine that the red and green membranes do not terminate at exactly the same line, but at two $r$ and a $g$ termination lines that are infinitesimally shifted away from each other.
In order to depict different scenarios diagrams, we need to be able to swap the red and the slightly shifted green line, which causes the two attached membranes to cross at a line in the bulk:
\begin{equation}
    \vcenter{\hbox{\includegraphics[width=0.85\linewidth]{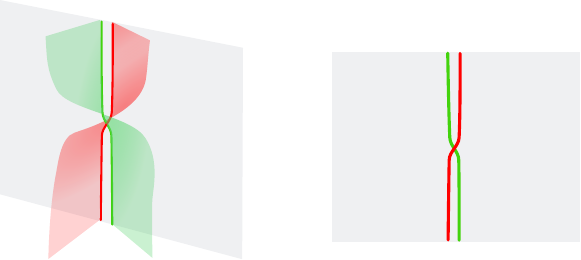}}}
\end{equation}
For the $\langle rg,rb\rangle$ boundary, there are three different types of special points to which we can associate weights. These are the swapping points of $rg$, as well as $rb$ termination lines, and intersections between $rb$ and $rg$ termination lines:
\begin{equation}
    \vcenter{\hbox{\includegraphics[width=0.65\linewidth]{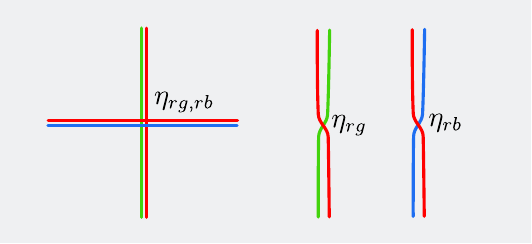}}}
\end{equation}
In space, these weights correspond to the following processes of exchanging two pair termination points on the boundary (which we show by a solid black line in space):
\begin{equation}
    \vcenter{\hbox{\includegraphics[width=0.75\linewidth]{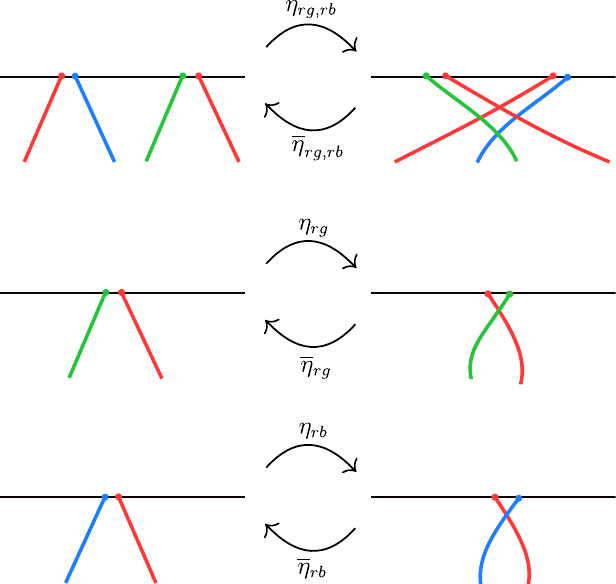}}}
\end{equation}
Note that since $rg$ and $rb$ termination lines consist of infinitesimally separated lines, the above points are not reflection invariant, and can be distinguished from their orientation-reversed versions.
Due to unitarity, the orientation-reversed versions are associated with the complex conjugated weights.
These weights have to fulfill the following consistency condition:
\begin{equation}
    \vcenter{\hbox{\includegraphics[width=0.65\linewidth]{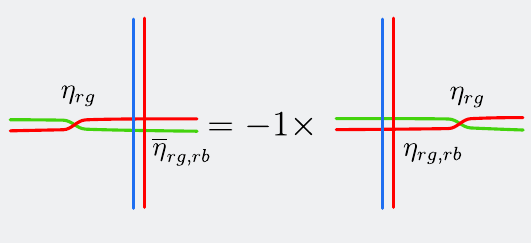}}}
\end{equation}
which gives $  \overline{\eta}_{rg,rb}=-\eta_{rg,rb}$.
The global $-1$ sign occurs because the two sides differ by a triple intersection point between the attached green and blue membranes with the red membrane attached to the vertical red line.
All other consistency conditions are trivial or equivalent to the one above.
The solutions are given by $\eta_{rg,rb}=\pm i$, with $\eta_{rg}$ and $\eta_{rb}$ being arbitrary.
Each of these solutions defines a different boundary condition, but they are all phase equivalent, so we choose the boundary with $\eta_{rg}=\eta_{rb}=1$ and $\eta_{rg,rb}=i$.
To see the equivalence, we first argue that $\eta_{rg}$ can be changed arbitrarily by adding a phase of $\epsilon_{rg}$ to every endpoint of every termination line in every diagram, or $\overline{\epsilon_{rg}}$ if the ordering of red-green is clockwise.
This changes the weight $\eta_{rg}'= \epsilon_{rg}^2 \eta_{rg}$, but it does not affect any of the consistency conditions.
In group cohomology language, the analogous statement is that the two different group 2-cochains are cohomologically equivalent and differ by the boundary of a 1-cochain $\epsilon$.
The same holds for $\eta_{rb}$.
Finally, the $\eta_{rg,rb}=i$ and $\eta_{rg,rb}=-i$ boundaries are equivalent under adding a red membrane that ``coats'' the boundary as
this leads to an additional bulk triple intersection for every boundary $rg$-$rb$ intersection.

Finally, boundaries like $\langle\rangle$ or $\langle r\rangle$ where one or no membrane types are allowed to terminate do not have any non-trivial intersections and therefore there are no non-trivial weights.

Next, we briefly consider the domain walls between the TQD and one or two toric codes.
The domain walls that we use in our protocols simply couple the termination lines of the toric code membranes on one side to the TQD membranes termination lines on the other side of the domain wall.
The weights are the same as for the according boundaries of the TQD.

Finally, we describe the corners that separate different boundary conditions of the TQD.
Since there are $11$ distinct boundaries, there are $11\cdot 10/2 = 55$ different boundary pairs between which we can insert corners. 
For each pair of boundaries, there is a finite number of different superselection sectors of corners.
There exists a standard method for classifying corners in topological fixed-point models by a dimensional reduction or ``folding trick'' to 1+1D via compactifying the bulk into a thin slab with the two boundaries at the top and bottom~\cite{Kitaev2012Models,Bridgeman2018Fusing,Bridgeman2019b,Bridgeman2019a}.
We do not calculate the classification of corners here, but we do derive the path integral weights for several of them from consistency conditions in the closed-membrane picture.

In our fault-tolerant protocols, we only consider corners that are ``simple'' and are fully specified by path-integral weights associated to the points where termination lines cross over the corner from one boundary to another.

As a first simple example, consider corners between the $\langle g,b\rangle$ boundary and $\langle r,b\rangle$ boundary.
Only the blue termination lines can cross over the corner, and the crossing has an associated weight $\eta_c$:
\begin{equation}
    \vcenter{\hbox{\includegraphics[width=0.5\linewidth]{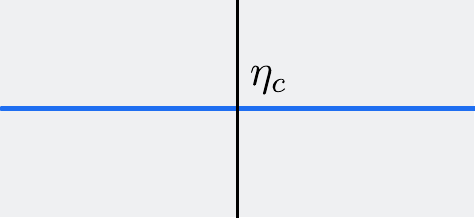}}}
\end{equation}
Here and below, we depict a corner as a black line on a gray background (which denotes both neighboring boundaries). Since the crossing point is invariant under orientation reversal, we have $\eta_c=\overline\eta_c$.
There are no further consistency conditions, so there are two solutions, $\eta_c=\pm 1$.
These two solutions define two distinct corners which are in fact also in different superselection sectors.
However, the $+1$ corner can be turned into a $-1$ corner by ``coating'' it with a blue charge anyon worldline, that is, we put the worldline on the corner and then slightly shift it into the bulk.
The $-1$ factor we get from the blue termination line crossing the $-1$ corner is then emulated by the $-1$ factor we get when the coating anyon worldline crosses the attached blue membrane in the bulk.
Which of the two corners we choose in the global topology of some fault-tolerant protocol changes the resulting logical operation.
However, this change simply corresponds to a Pauli operator, usually a Pauli-$Z$ operation acting on one of the logical qubits.

The corners between all $\langle g,b\rangle$, $\langle r,g\rangle$, and $\langle g,b\rangle$ in our protocols in Section~\ref{sec:global_topologies} were all implicitly $+1$ corners.
If, for example, we instead took the front right corner between the $\langle g,b\rangle$ and the $\langle r,g\rangle$ boundaries in Eq.~\eqref{eq:czmeasure_global_protocol} to be a $-1$ corner, as a result, this would swap the two measurement outcomes of the logical $CZ$ measurement (and not apply a Pauli logical operator).
Note that it is in principle possible to define a corner where the blue termination lines are not allowed to cross.
However, this corner is not irreducible but rather is the direct sum of the $+$ and $-$ corners.
If we used such a corner as a spatial corner in a protocol, it would carry a local logical qubit which is not topologically protected.

We next look at the corners between the $\langle rg,rb\rangle$ boundary and the $\langle r,g\rangle$ boundary. Note that when crossing over the corner, termination lines are allowed to split. For example, a red-green termination line on one side may split into a separate red and green termination line on the other side.
There is a weight $\eta_{c,rg}$ assigned to the points on the corner where an $rg$ termination line on one side splits into separate $r$ and $g$ termination lines on the other side.
\begin{equation}
\vcenter{\hbox{\includegraphics[width = 0.45\columnwidth]{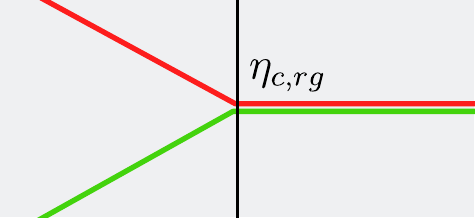}}}
\;
\end{equation}
Due to the infinitesimal shift between the red and the green part of the termination line, the point above is not orientation-reversal invariant.
However, since crossing the red and green termination lines on the left does not produce any non-trivial weights, get a consistency condition
\begin{equation} \label{fig:G16}
    \includegraphics[width=1\linewidth]{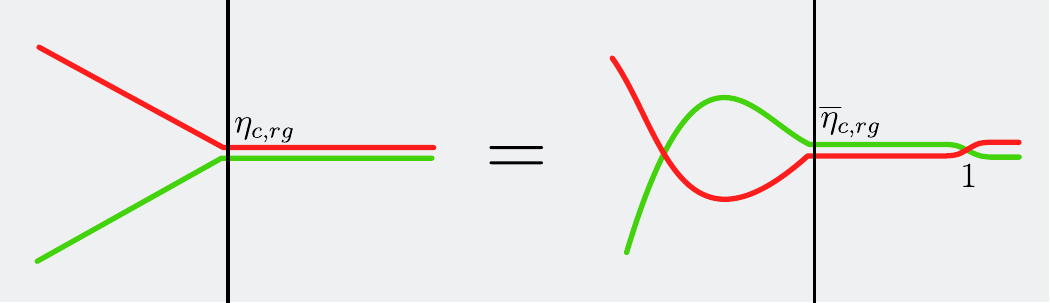}
    \;
\end{equation}
and the two possible solutions are again $\eta_{c,rg}=\pm 1$.
The $-1$ boundary differs from the $+1$ boundary by coating with either a red or by a green charge worldline.
Again, we use the $+1$ corner in our protocols, and using the $-1$ corner instead would only lead to additional logical Pauli operators.

Finally, we consider the corners between the $\langle rg,rb\rangle$ boundary and the $\langle g,b\rangle$ boundary.
There is a weight $\mu$ at every point on the corner where an $rg$ and a $rb$ termination line on one side can terminate at the same corner point as a $g$ and a $b$ termination line on the other side:
\begin{equation}
\vcenter{\hbox{\includegraphics[width = 0.5\columnwidth]{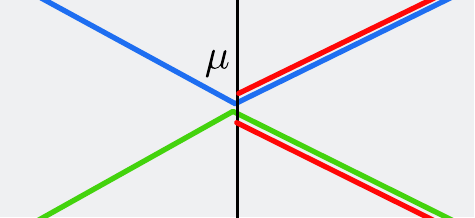}}}
\label{eq:eighthroot}
\end{equation}
In space, this weight corresponds to the process depicted below of moving termination points across the corner (in space, we schematically show the corner as a black dot on the boundary):
\begin{equation}
\vcenter{\hbox{\includegraphics[width = 0.8\columnwidth]{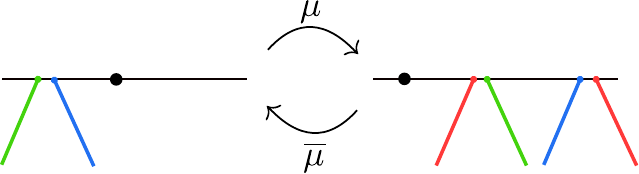}}}
\end{equation}
The weight $\mu$ has to fulfill the following consistency condition:
\begin{equation}
\begin{gathered}
\includegraphics[width = 1\columnwidth]{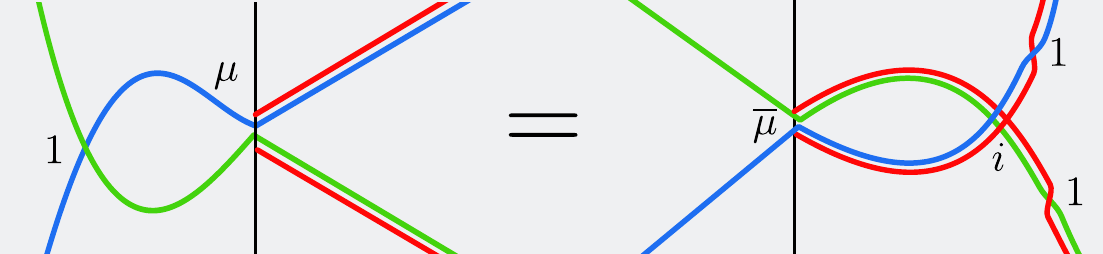}\\
\overline\mu=i\mu\;.
\end{gathered}
\end{equation}
There are two solutions to this equation, namely $\mu=\pm \omega = \pm e^{2\pi i\frac18}$.
The $-\omega$ corner differs from the $+\omega$ corner by coating with either a green or a blue charge anyon worldline.
So again, it does not matter which corner we choose, and we pick the $+\omega$ corner.

In addition to corners between boundaries, our protocols also involve corners that are the termination of a domain wall between the type-III TQD and one or two copies of the toric code. 
For these corners, we can ignore the attached toric code and the path integral weights are the same as for the according corners between two TQD boundaries.

Finally, our protocols also involve 0-dimensional defects in spacetime, which are events where multiple corners meet.
In general, such 0-dimensional defects are classified by vectors in a vector space.
This vector space is the ground state space of the model (i.e.~the logical space of the code) on a 2-dimensional topology, which is obtained by taking the intersection of the global topology with a small sphere that encloses the 0-dimensional point. 
For the protocol to be fault-tolerant, this ground state space must be 1-dimensional.
In fact this is not only true for the 0-dimensional defects but for any point on a corner, boundary, and in the bulk.
So the choices of 0-dimensional defects only differ by a number.
In a fault-tolerant protocol, this number is fixed by the fact that the circuits must be trace-preserving, or in other words, these numbers are only global prefactors which are irrelevant in quantum mechanics.

As an aside, note that there are also more interesting corners that do not just assign weights to crossing termination lines, and which we do not use in our protocols.
For example, consider the following ``non-Abelian'' corner between the $\langle r,g\rangle$ boundary and itself:
(1) the corner is a fixed termination line for the blue membranes;
(2) we sum over all green string configurations inside the corner whose boundary points are given by the points where the red termination lines cross the corner (that is, for a connected corner segment, there are two possible opposite string configurations);
(3) for every green termination line crossing the green string on the corner, we get a factor of $-1$.
\begin{equation}
\vcenter{\hbox{\includegraphics[width = 0.55\columnwidth]{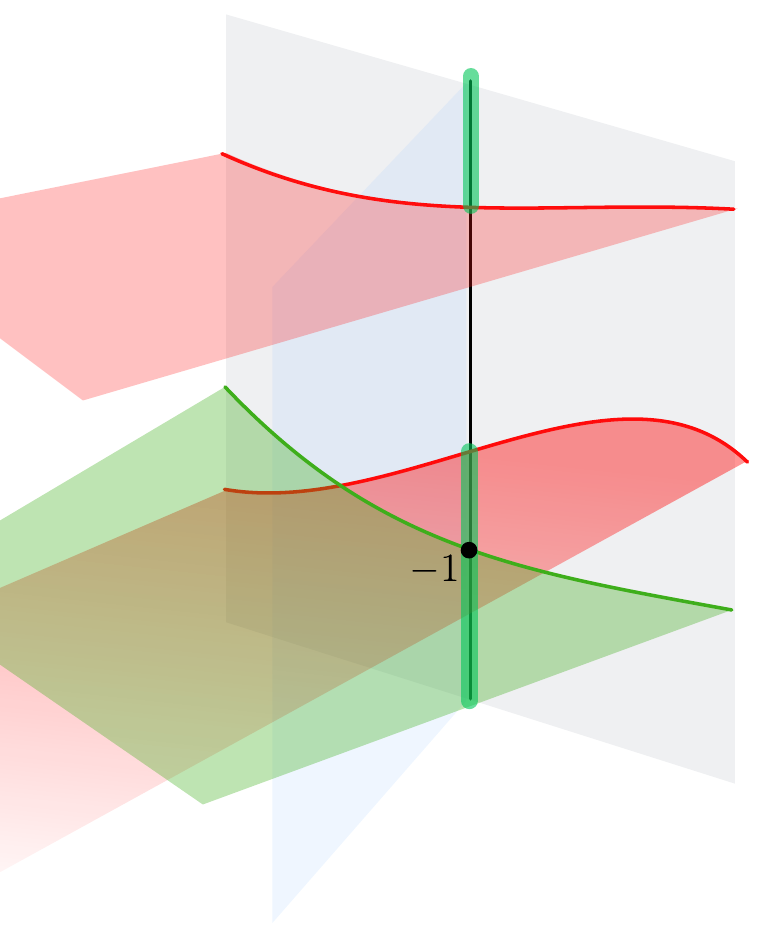}}}
\end{equation}

\section{Non-Abelian TQD model from copies of the 3D toric code}
\label{sec:3d-2d-toric-codes}

Here, to complement the discussion in Sec.~\ref{sec:3d-2d}, we present a similar protocol for alternative microscopic realization of the TQD model from three-dimensional toric codes that support a transversal controlled-controlled-$Z$ gate.
We start with a cubic lattice with three qubits on each edge, indexed by $\{R,B,G\}$, with periodic boundary conditions.
We denote the set of vertices with $V$ and the set of faces with $F$.
Moreover, we bi-color the cubic volumes such that no two cubes of the same color share a face.
We label the set of cubes with the first color, white, with $C_0$ and the cubes with the second color, gray, with $C_1$. Their union, the set of all cubes, is labeled $C$.

The 3D code we start with is defined by a stabilizer group
\begin{align}\label{eq:SABC}
    S^{RBG} = \langle S^R, S^B, S^G \rangle,
\end{align}
where $S^R$ only acts non-trivially on qubits indexed by $R$, $S^B$ on qubits indexed by $B$ and $S^G$ on qubits indexed by $G$.
Each of the three stabilizer groups define a three-dimensional toric code on the qubits $A$, $B$ and $C$ individually.

The code on qubits $R$ is defined via
\begin{equation}\label{eq:SB_bulk}
 \begin{split}
    &S^R = \Big\langle A_{c_0}^R , B_{v}^R \, \big| c_0 \in C_0 , v \in V \Big\rangle, \\ &A_{c_0}^R = \prod_{e\in c_0} X_e^R, \ \ B_{v}^R = \prod_{e\in \delta v \cap C_1} Z_e
 \end{split}
\end{equation}
where, in slight abuse of notation, $e\in c$ denotes the set of edges surrounding the cube $c$.
$S^R$ has two types of generators. The $X$ stabilizer generators are of weight $12$ and act on the qubits surrounding a cube in $C_0$ and the $Z$ stabilizer generators are of weight 3 and act on the qubits in the corner of cubes in $C_1$.
Note that this code is the same as a three-dimensional surface code defined on a triangulation with degree 12 vertices.
In Ref.~\cite{Vasmer2019three} this code was introduced as the `rectified cubic code' and used in Ref.~\cite{Brown2020universal} to realize a logical CCZ gate on three 2D surface codes.

The stabilizer group $S^B$ is defined in the same way as $S^R$ but with the roles of the colors of the cubes reversed, i.e. $C_0\leftrightarrow C_1$, meaning that in $S^B$ the cubes in $C_1$ define the $X$ stabilizers and the cubes in $C_0$ the $Z$ stabilizers.

For the third subset of qubits, $C$, we pick a different microscopic realization of the toric code, defined by the stabilizer group
\begin{equation}\label{eq:SC_bulk}
 \begin{split}
    &S^G = \langle A_v^G , B_f^G  \;|\;
    v\in V, f\in F \rangle, \\
    &A_v^G = \prod_{e \in \delta v} X^G_e,\ \  B_f^G = \prod_{e\in \partial f} Z_e^G 
 \end{split}
\end{equation}
where $\partial f$ denotes the edges that contribute to the boundary of $f\in F$ and $\delta v$ the edges that have $v$ in their boundary.
Note that this simply defines a conventional three-dimensional toric code on a cubic lattice.

We illustrate the stabilizer generators of $S^{RBG}$ in Fig.~\ref{fig:3Dcubic-codes}.
For the rest of this section, we denote the single-qubit Pauli $P$ operator acting at the edge $e$ of the $A$ sublattice as $P_e^R$ (and similarly for $B,C$). 

$S^{RBG}$ defined above has a transversal 3-qubit controlled-controlled-$Z$ gate, that acts on every triple of qubits assigned to the same edge via $CCZ\ket{abc} = (-1)^{abc}\ket{abc}$, where $a,b,c\in\{0,1\}$ label the computational basis states on a system of three qubits.
For the rest of this section we denote the transversally applied $CCZ$ gate with $\overline{CCZ} = \prod_{e} CCZ_e$.
That $\overline{CCZ}$ is a logical gate of the 3D code defined above can be verified by conjugating the Pauli stabilizer group with the transversal gate, $\Tilde{S}^{RBG} =\overline{CCZ} S^{RBG} \overline{CCZ}$.
This maps the generators above to Clifford stabilizers where the $X$ stabilizers acquire additional $CZ$ terms.
By direct calculation we find that any such Clifford stabilizer $\Tilde{A}\in \Tilde{S}^{RBG}$ preserves the Pauli stabilizer group, i.e. $\Tilde{A}^\dagger S^{RBG} \Tilde{A} = S^{RBG}$ and hence is a Clifford stabilizer that acts like the logical identity on the codespace.\footnote{The code defined by $S^{RBG}$ has a macroscopic distance proportional to the linear size of the system, and $\Tilde{S}^{RBG}$ is locally generated.}
Any thin slice of a 3D model obtained by introducing boundaries perpendicular to one of the three spatial directions has the same type of Clifford stabilizers in the bulk (regardless to whether the full code has a logical non-Clifford gate), i.e. for any $X$ stabilizer $S_X$ that is unmodified when introducing the boundaries there is an associated Clifford stabilizer obtained by conjugating $S_X$ with a transversal $CCZ$ gate.

\begin{figure}[b]
    \centering
    \includegraphics[width=\linewidth]{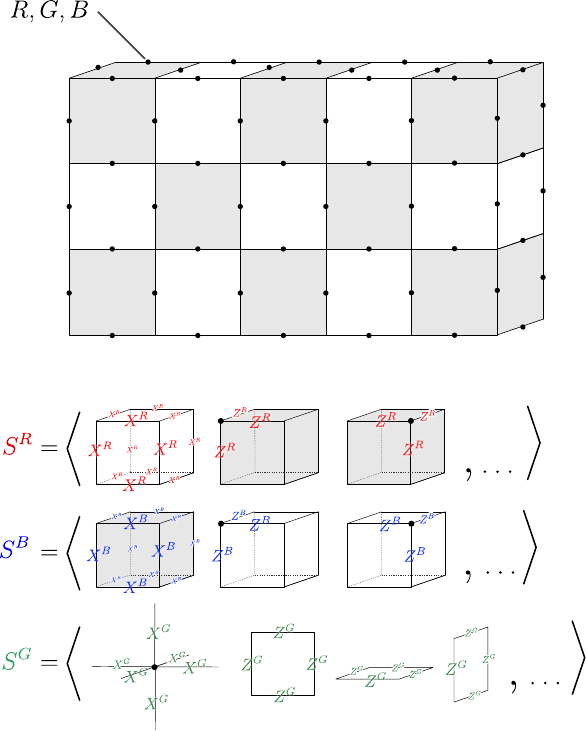}
    \caption{We define three copies of toric codes on a square lattice on which we bi-colore the volumes.
    On each edge, we place is a triple of qubits $R\times B\times G$ and define a stabilizer group $S^{RBG} = \langle S^R, S^B, S^G\rangle$, whose generators are shown on the right.
    The coloring is only necessary to define $S^R$ and $S^B$. $S^G$ defines a usual toric code associating $X$ stabilizers to vertices and $Z$ stabilizers to the faces of the cubic lattice.}
    \label{fig:3Dcubic-codes}
\end{figure}

\subsection{The model in a thin 3D slab}

We now consider the 3D model on a thin slab, namely a manifold with the topology of $\mathbb{T}^2\times [0,1]$ by introducing boundaries perpendicular to the $110$ direction, as depicted in Fig.~\ref{fig:thin3D-stabs}.
Let $Q$ be the qubits within the thin slab, which we assume to have thickness 1 (originating from a single layer of ``rotated'' cubes)
The boundaries are introduced by measuring all qubits in the complement of the slab, $Q^c$ in the single-qubit $X$ basis.
This creates two boundaries of the type labeled by $\langle r,g,b\rangle$ in Fig.~\ref{fig:thin3D-TQFT}.
The stabilizer group of the resulting thin 3D model is generated by $X$-type stabilizers in the bulk, as well as truncated $X$-type stabilizers close to the boundary and all $Z$-type stabilizers that are fully supported in $Q$.
Since we chose the $110$ direction to be the ``thin'' direction, the cubes touch neighboring cubes along a face in one direction and along a single edge in the other direction, see Fig.~\ref{fig:thin3D-stabs}.

The model also admits Clifford stabilizers inherited from the 3D code it was derived from.
For each white cube $c_0$ in the thin slab, the code admits a Clifford stabilizer
\begin{align}
    \widetilde{A}_{c_0}^R = \prod_{e\in c_0} X_e^R CZ^{BG}_e,
\end{align}
for each grey cube $c_1$ in the thin slab, it admits a Clifford stabilizer
\begin{align}
\widetilde{A}_{c_1}^B = \prod_{e\in c_1} X_e^B CZ^{RG}_e,    
\end{align}
and for each six-valent vertex in the thin slab, the code admits the Clifford stabilizer
\begin{align}
    \widetilde{A}^G_v = \prod_{e\in\delta v} X^G_e CZ^{RB}_e.
\end{align}

\begin{figure}[t]
    \centering
    \includegraphics[width=0.8\linewidth]{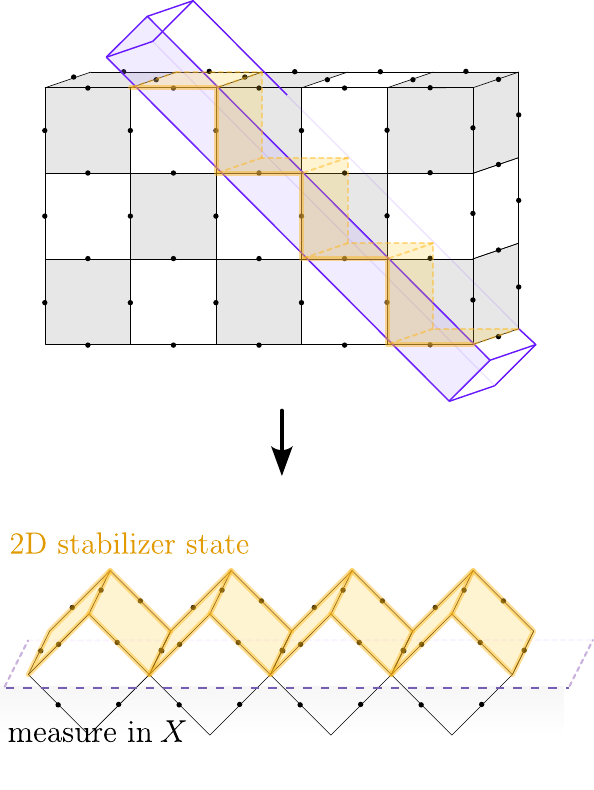}
    \caption{We define a thin 3D model on the qubits in a thin slab that we choose perpendicular to the $110$ direction of the cubic lattice, indicated in blue.
    Measuring all qubits outside of the thin slab in the $X$ direction gives a code defined on a thin slab of qubits $Q$, shown on the right.
    The qubits on the orange edges are part of the top boundary of the thin model, which we label by $Q_t$.
    They form the qubits on which the non-Abelian code is supported after the initialization protocol.}
    \label{fig:thin3D-stabs}
\end{figure}

We now consider two different examples of a 2D code obtained from this 3D code.
In the first example, we simply perform single-qubit Pauli measurements on the qubits of the bulk and bottom boundary, which prepares a codestate on the qubits of the top boundary equivalent to three independent 2D toric codes.
In the second example, before measuring out the bulk and bottom qubits we apply a transversal $CCZ$ gate to the bottom and bulk qubits.
This presents an alternative realization of the 2D code with non-Abelian topological order equivalent to the non-Abelian TQD discussed in the main text.

\subsection{Initializing an Abelian 2D code from 3D toric codes}\label{subsec:init-Ab}

We start with a 3D toric code on a thin slab described in the previous section.
We sort the qubits into two subsets: qubits that are part of the top boundary, $Q_t$, and qubits in the complement, $Q_t^c$.
For the purposes of this section, we chose $Q_t$ as indicated in Fig.~\ref{fig:thin3D-stabs}.\footnote{The bipartition of $Q=Q_t\sqcup Q_t^c$ is not canonical. There is some freedom in the assignment of qubits to the top boundary.
For example, in the context of a repeating protocol that periodically switches between a thin 3D code and a 2D code on its boundary (as in Refs.~\cite{bombin2018, Brown2020universal}) a natural choice for $Q_t$ would only include qubits that are measured out in the next period of switching from the 3D code to the 2D code.
We expect that the protocol works for any choice of $Q_t$ that is not cleanable in the 3D code.}
The qubits in $Q_t$ live on the edges of a 2D square lattice.

Starting with the stabilizer group $S^{RBG}$ of the thin model, we measure the qubits in $Q_t^c$ in the single-qubit Pauli-$X$ basis.
The post-measurement state is stabilized by $S_t^{RBG} = S_t^R\otimes S_t^B\otimes S_t^G$, which factorizes over the qubits $R\cap Q_t$, $B\cap Q_t$ and $G\cap Q_t$, respectively.

Recall that the lattice formed by the boundary qubits $Q_t$ is a square lattice, consisting of horziontal and vertical edges.
We find that $S_t^R$ contains single-qubit $X$-type stabilizers on every other vertical edge.
Since these qubits are in a product state, disentangled from the rest, we can remove these edges from the square lattice. This results in a hexagonal lattice on which $S_t^R$ reduces to the usual toric code stabilizer group on its dual, triangular, lattice.
We illustrate the effective toric code in Fig.~\ref{fig:2D-3D-effective-triangular-TC}. The effective toric code given by $S_t^B$ is found analogously upon switching gray and white cubes and thus lives on a triangular lattice that is shifted by half a lattice site with respect to the lattice defining $S_t^R$.

Lastly, we find that after measuring $Q_t^c$ in the $X$ basis $S_t^G$ is the stabilizer group of a 2D toric code on the square lattice formed by $Q_t$.
Each (4-valent) vertex is associated with an $X$-type stabilizer $A_v^G$ and every (square) plaquette is assocaited with a $Z$-type stabilizer, and $B_f^G$ as defined in Eq.~\eqref{eq:SC_bulk}.

Thus, we identified the 2D code on $Q_t$ to be equivalent to three decoupled two-dimensional toric codes.
This is useful when analyzing the non-Abelian model in the next subsection.

\begin{figure}
    \centering
    \includegraphics[width=0.85\linewidth]{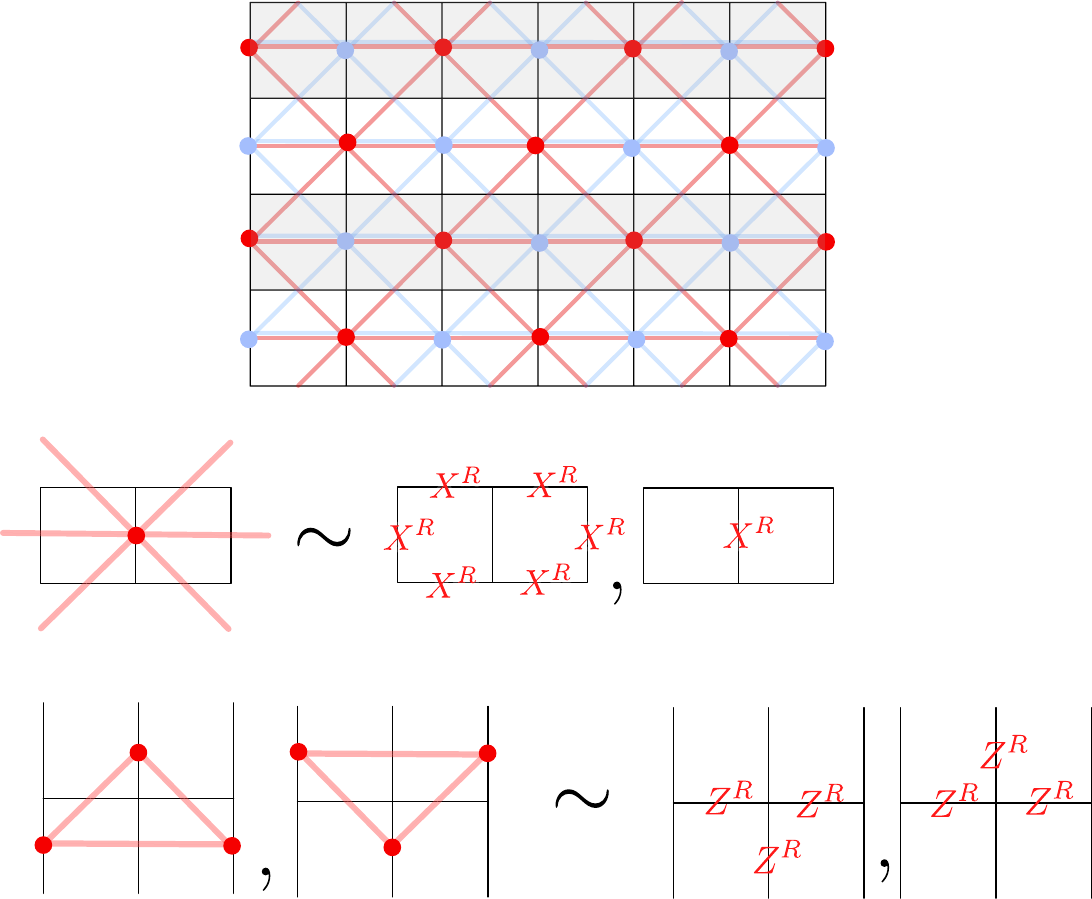}
    \caption{
    Measuring all qubits except the top boundary qubits $Q_t$ in the $X$ basis yields a code defined on a square lattice.
    On both sets of qubits $R$ and $B$ the resulting stabilizer group $S^R$ and $S^B$ is equivalent to toric codes defined on triangular lattices since the stabilizer group contains some single-qubit Pauli operators.
    The lattices are shown in red and green, layed over the square lattice formed by $Q_t$.
    They are shifted horizontally with respect to each other and we can place the qubits associated to the diagonal edges on top of each other.
    The code defined by $S^R\otimes S^B$ admits a non-trivial transversal logical gate $\overline{CZ}$ that acts with a $CZ$ gate on all pairs of qubits on diagonal edges of the red, respectively green, lattice.
    The $G$ qubits are stabilized by a stabilizer group $S^G$ of a toric code defined on the (black) square lattice with plaquette and vertex stabilizers of weight 4.}
    \label{fig:2D-3D-effective-triangular-TC}
\end{figure}

\subsection{Initializing a non-Abelian 2D code from 3D toric codes}\label{subsec:init-nonAb}
To initialize the non-Abelian code, we apply the $\overline{CCZ}$ gate to the qubits in $Q_t^c$ before measuring them in the $X$ basis; the resulting model has Clifford stabilizers.
In Sec.~\ref{subsec:phase-nonAb-TC} we show that this code realizes the non-Abelian topological order corresponding to the type-III twisted TQD by identifying it as a code obtained from gauging an appropriate anyon-permuting $\zz_2$ symmetry in two 2D toric codes (which was discussed in Appendix~\ref{sec:gauging_CZ}).

We start with a thin 3D code in the $+1$-eigenspace of all Pauli-$Z$ stabilizers.
This can be achieved either by post-selection or by error correction in the 3D code.
In this case, the 3D model on the thin slab admits Clifford stabilizers $\widetilde{A}_{c_0}^R, \widetilde{A}_{c_1}^B$ for every white cube $c_0$ and every gray cube $c_1$ that are fully contained in $Q$, as well as $\widetilde{A}_v^G$ for every six-valent bulk-vertex.
Applying $CCZ$ transversally to every triple of qubits on the same edge in $Q_t^c$ maps these Clifford stabilizers to stabilizers that act with Pauli-$X$ on the qubits in $Q_c^t$, namely
\begin{subequations}\label{app:eq:mixed-stabs-thin3D}
\begin{align}
    \widetilde{A}_{c_0}^R &\stackrel{\eval{\overline{CCZ}}_{Q_t^c}}{\longmapsto} \prod_{e\in c_0\cap Q_t^c} X_e^R \prod_{e\in c_0\cap Q_t} X_e^R CZ_e^{BG},\\
    \widetilde{A}_{c_1}^B &\stackrel{\eval{\overline{CCZ}}_{Q_t^c}}{\longmapsto} \prod_{e\in c_1\cap Q_t^c} X_e^B \prod_{e\in c_1\cap Q_t} X_e^B CZ_e^{RG},\\
    \widetilde{A}_{v}^G &\stackrel{\eval{\overline{CCZ}}_{Q_t^c}}{\longmapsto} \prod_{e\in \delta v\cap Q_t^c} X_e^G \prod_{e\in\delta v\cap Q_t} X_e^G CZ_e^{RB}.
\end{align}
\end{subequations}
In contrast, the $X$ stabilizers that have overlap with $Q_t^c$ are mapped to stabilizers that act with a Clifford unitary on these qubits.

Next, we measure the qubits in $Q_t^c$ in the $X$ basis. This preserves the stabilizers shown in Eq.~\eqref{app:eq:mixed-stabs-thin3D} since they commute with the measurements. By discarding the part supported on the qubits in $Q_t^c$ (which can be done because these qubits are measured out in the $X$-basis), we obtain Clifford stabilizers solely supported on $Q_t$.
Together with the $X$- and $Z$-type stabilizers of the thin 3D slab that have already been supported only on $Q_t$, they generate the Clifford stabilizer group of the post-measurement (2D) state.
We show the stabilizer generators of the non-Abelian code in Fig.~\ref{fig:nonAbelianABC}.
In slight abuse of notation we refer to the Clifford stabilizers that only act on $Q_t$ as $\widetilde{A}_{c_0}^R, \widetilde{A}_{c_1}^B$ and $\widetilde{A}_{v}^G$.

\subsection{The phase of the non-Abelian code}\label{subsec:phase-nonAb-TC}
In the following we argue that the Clifford-stabilized code defined above can be obtained by gauging a $\zz_2$ symmetry on two toric codes that is realized by a product of on-site $CZ$ gates, see App.~\ref{sec:gauging_CZ}. 
This shows that the anyons of the code are equivalent to those in the quantum double of $D_4$ and the non-Abelian type-III twisted quantum double of $\zz_2^{\otimes 3}$ considered in this paper.

We start with the 2D Abelian model described in Sec.~\ref{subsec:init-Ab}.
Consider the two decoupled codes defined by the stabilizer group $S_t^R\otimes S_t^B$.
The stabilizer generators can be identified with vertex and plaquette terms of two toric codes defined on triangular lattices, see Fig.~\ref{fig:2D-3D-effective-triangular-TC}.
We refer to these two triangular lattices as $R$- and $B$-sublattices.
Since the lattices are shifted horizontally with respect to each other, the qubits on the diagonal edges of $A$ and $B$ can be identified one-to-one.
Specifically, let $D_A$ be the set of (qubits associated to) diagonal edges of the $A$-lattice and, similarly for the $D_B$ for the $B$-lattice.
For each edge $e\in D_A$ there exists a unique edge in $D_B$ that crosses it. Hence, $e\in D_A$ can be considered as a label for both an element in $D_A$ and $D_B$.

We define the global Clifford unitary
\begin{align}
    \overline{CZ} = \prod_{e\in D_A} CZ_e^{RB}
\end{align}
that acts with a product of $CZ$ gates on the pairs of qubits on the associated diagonal edges.
We find that $\overline{CZ}$ and the $X$-type stabilizers commute up to $Z$ stabilizers.
This shows that $\overline{CZ}$ is a symmetry of the code $S_t^R\otimes S_t^B$.
The action it has on the code space can be inferred from its commutation relation with the logical Pauli operators, which are the anyon string operators along non-trivial (dual) cycles of the $R$- and $B$-lattices.

Let $\overline{L}_m^R$ be a $m$-anyon string operator acting with products of Pauli $X$ operators along a non-trivial cycle of the dual $A$-lattice.
We find that
\begin{align}
    \overline{CZ}\;\overline{L}_m^R \;\overline{CZ} = \overline{L}_m^R \overline{L}_e^B,
\end{align}
where $\overline{L}_e^B$ is an $e$ string operator that acts with Pauli $Z$ operators along a cycle of the $B$-lattice parallel to $\overline{L}_m^R$.
By symmetry the same holds for the string operators along inequivalent cycles and when $A$ and $B$ are interchanged.
This shows that $\overline{CZ}$ acts as the anyon-permuting $\zz_2$ symmetry discussed in Appendix~\ref{sec:gauging_CZ}.

\begin{figure}
    \centering
    \includegraphics[width=\linewidth]{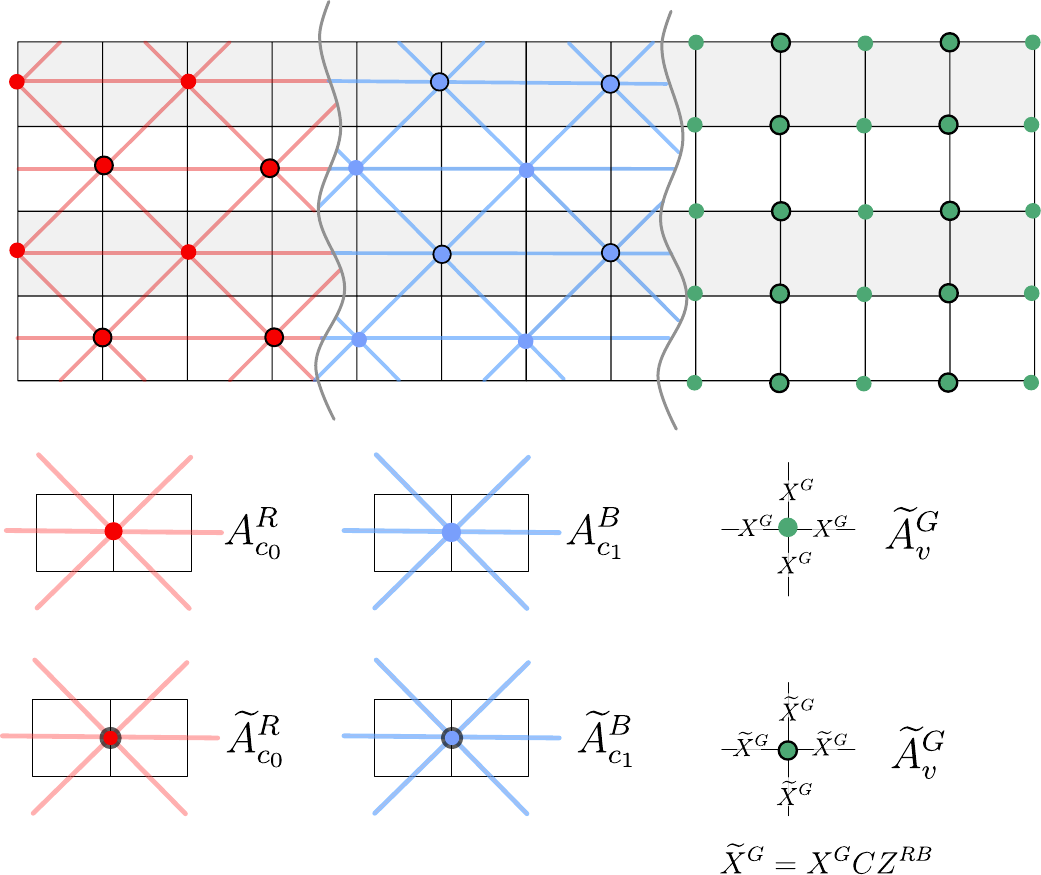}
    \caption{The Clifford stabilizer generators of the non-Abelian code obtained using the transversal $CCZ$ gate in three 3D toric codes. The model also has triangular plaquette Pauli-$Z$ stabilizers for $R,B$ qubits on respective sublattices and square plaquette $Z$ stabilizers on $G$ sublattice. 
    This model can be viewed as a model obtained by gauging a $\zz_2$ symmetry in the two toric codes shown in Fig.~\ref{fig:2D-3D-effective-triangular-TC}.
    The product over all vertex stabilizers $\widetilde{A}_v^G$ inherited from the stabilizers on the $C$ qubits exactly reproduces the $\overline{CZ}$ symmetry of $S^R\otimes S^B$.}
    \label{fig:nonAbelianABC}
\end{figure}

Having identified an on-site symmetry in two triangular toric codes, we turn our attention back to the Clifford-stabilized code from Sec.~\ref{subsec:init-nonAb} and argue that it can be obtained from gauging this symmetry.
Consider the vertex stabilizers of the non-Abelian code associated with the qubits of the $G$ sublattice.
We label them with $\widetilde{A}_{v}^G$.
If $v$ is a bulk vertex of the thin 3D model, then $\widetilde{A}_{v}^G$ is a Clifford stabilizer composed of a product of $X^G CZ^{RB}$. For $v$ a boundary vertex of the thin 3D model, $\widetilde{A}_{v}^G$ corresponds to a Pauli-$X$ stabilizer.
We find that
\begin{align}
    \prod_v \widetilde{A}_{v}^G = \overline{CZ}.
\end{align}
Hence, an eigenstate of all $\widetilde{A}_{v}^G$ stabilizers individually is automatically an eigenstate of $\overline{CZ}$.
We can initialize a given eigenstate of each $\widetilde{A}_{v}^G$ by first fixing the $+1$ eigenstate of all $Z$ stabilizers in $S_t^{RBG}$, by e.g. measuring them and applying suitable $X$ corrections (or via post-selection), and then measuring each of the $\widetilde{A}_{v}^G$ stabilizers.
Given that the measurement was applied to the ``flux-free'' state in the $+1$ eigenspace of all $Z$-type stabilizers, we can apply $Z$-type Pauli corrections to resolve the $-1$ measurement outcomes.
Thus, we obtain the non-Abelian model defined above by gauging the symmetry of the $A$ and $B$ toric codes. At the same time, as we know from Appendix~\ref{sec:gauging_CZ}, this must  produce the state with non-Abelian topological order whose anyons are described by the non-Abelian type-III twisted  quantum double.

\vfill
\bibstyle{plain}
\bibliography{ref}

\end{document}